\documentclass[smallextended,envcountsect]{svjour3}
\usepackage[paperwidth=170mm,paperheight=240mm,textwidth=119mm,textheight=194mm,centering,headheight=12pt,headsep=5mm,footskip=8mm]{geometry}
\usepackage{amsmath,amssymb}
\usepackage{etoolbox}
\usepackage{setspace}
\usepackage{booktabs,graphicx}
\usepackage{subcaption}
\usepackage{placeins}
\usepackage{algorithm}
\usepackage{algorithmic}
\usepackage{tikz}
\usetikzlibrary{positioning,fit,backgrounds,calc,shapes.misc,arrows.meta}
\usepackage[hidelinks]{hyperref}
\hypersetup{pdfauthor={Chang Liu and Kun Xu}}


\newcommand{\bx}{\boldsymbol x}
\newcommand{\bxi}{\boldsymbol\xi}
\newcommand{\bu}{\boldsymbol u}
\newcommand{\bc}{\boldsymbol c}
\newcommand{\bq}{\boldsymbol q}
\newcommand{\bn}{\boldsymbol n}
\newcommand{\bj}{\boldsymbol j}
\newcommand{\bpsi}{\boldsymbol\psi}
\newcommand{\bU}{\mathbf U}
\newcommand{\bF}{\mathbf F}
\newcommand{\bI}{\mathbf I}
\newcommand{\bA}{\mathbf A}
\newcommand{\bM}{\mathbf M}
\newcommand{\avg}[1]{\langle #1\rangle}
\newcommand{\avgh}[1]{\langle #1\rangle_h}
\newcommand{\dd}{\,\mathrm d}
\newcommand{\cD}{\mathcal D}
\newcommand{\cS}{\mathcal S}
\newcommand{\cX}{\mathbb X}
\newcommand{\sNS}{\boldsymbol\sigma_{\rm NS}}
\newcommand{\qNS}{\bq_{\rm NS}}
\newcommand{\Ma}{\mathrm{Ma}}
\newcommand{\Rey}{\mathrm{Re}}
\newcommand{\vk}{\varkappa}
\newcommand{\Fl}[1]{\bF[#1]}
\newlength{\figw}
\extrafloats{150}
\makeatletter
\newcommand{\SMref}[1]{S\@nameuse{smfig@#1}}
\@namedef{smfig@fig:shock_dist}{1}
\@namedef{smfig@fig:couette_residual}{2}
\@namedef{smfig@fig:couette_scan}{3}
\@namedef{smfig@fig:cavity_kn1_fields}{4}
\@namedef{smfig@fig:cavity_kn0075_fields}{5}
\@namedef{smfig@fig:cavity_rarefied_profiles}{6}
\@namedef{smfig@fig:cavity_re100}{7}
\@namedef{smfig@fig:cavity_residual}{8}
\@namedef{smfig@fig:cavity_scan}{9}
\@namedef{smfig@fig:cyl_kn1_profiles}{10}
\@namedef{smfig@fig:cyl_kn1_dist_log}{11}
\@namedef{smfig@fig:cyl_kn1em2_fields}{12}
\@namedef{smfig@fig:cyl_kn1em2_profiles}{13}
\@namedef{smfig@fig:cyl_kn1em4_fields}{14}
\@namedef{smfig@fig:cyl_kn1em4_profiles}{15}
\@namedef{smfig@fig:cyl_residual}{16}
\@namedef{smfig@fig:x38_kn1_fields}{17}
\@namedef{smfig@fig:x38_kn1_profiles}{18}
\@namedef{smfig@fig:x38_kn1_dist_lin}{19}
\@namedef{smfig@fig:x38_kn1_dist_log}{20}
\@namedef{smfig@fig:x38_kn1em2_fields}{21}
\@namedef{smfig@fig:x38_kn1em4_profiles}{22}
\@namedef{smfig@fig:apollo_kn1_fields}{23}
\@namedef{smfig@fig:apollo_kn1_profiles}{24}
\@namedef{smfig@fig:apollo_kn1_dist}{25}
\@namedef{smfig@fig:apollo_kn1em2_fields}{26}
\@namedef{smfig@fig:apollo_kn1em2_profiles}{27}
\@namedef{smfig@fig:apollo_kn1em4_profiles}{28}
\@namedef{smfig@fig:apollo_kn1em4_fields}{29}
\@namedef{smfig@fig:apollo_residual}{30}
\makeatother

\journalname{Journal of Scientific Computing}

\renewcommand{\makeheadbox}{}
\makeatletter
\patchcmd{\@maketitle}{\hrule\@height}{\vskip}{}{\PackageError{title-layout}{Title rule patch failed}{Check the document class title definition.}}
\makeatother
\begin{document}
\raggedbottom
\title{An implicit multiscale solver for steady kinetic equations based on wave--particle decomposition}
\titlerunning{An implicit multiscale solver based on WPD}
\author{Chang Liu \and Kun Xu}
\authorrunning{C. Liu and K. Xu}
\institute{Chang Liu (corresponding author) \at
Institute of Applied Physics and Computational Mathematics, Beijing, P.R. China\\
\email{liuchang@iapcm.ac.cn}
\and Kun Xu \at
Department of Mathematics and Department of Mechanical and Aerospace Engineering, Hong Kong University of Science and Technology, Clear Water Bay, Kowloon, Hong Kong\\
Shenzhen Research Institute, Hong Kong University of Science and Technology, Shenzhen, China\\
\email{makxu@ust.hk}}
\date{}
\maketitle

\begin{abstract}
Steady kinetic computations require accurate transport on meshes that do not resolve the mean free path and efficient removal of slowly decaying macroscopic errors. We develop a closed wave--particle decomposition (WPD) and a residual-based implicit solver for this problem. The wave accumulates only the Maxwellian along a local characteristic horizon and is determined by the conservative variables; the particle carries the kinetic complement and collision remainder. Under moment compatibility, the closed system is equivalent to the kinetic equation. For the Bhatnagar--Gross--Krook and Shakhov models, a hydrodynamic expansion gives the wave and particle contributions to the Euler and Navier--Stokes fluxes. The implicit iteration couples macroscopic prediction to microscopic correction through a discrete endpoint term and synchronizes the conservative moments after each accepted correction. Its stationary states satisfy the target discrete kinetic equation independently of the iteration controls. Formal analysis gives the smooth fixed-mesh fluid limit. Linear model analysis relates convergence to the horizon-to-relaxation ratio, incomplete macroscopic solves and explicit collision remainders. One- and two-dimensional computations of a normal shock, internal flows and hypersonic external flows test these properties across rarefied and continuum regimes. The results show kinetic fidelity and substantial reductions in iteration counts and wall time relative to the conventional iterative scheme near the continuum limit. Collision-model comparisons and parameter studies support the predicted convergence mechanisms.

\keywords{Kinetic equations \and Wave--particle decomposition \and Multiscale method \and Macroscopic--microscopic coupling \and Steady-state solver \and Asymptotic-preserving scheme}
\subclass{65M08 \and 65M12 \and 76P05}
\end{abstract}

\section{Introduction}\label{sec:intro}
Steady kinetic computations must resolve non-equilibrium distributions in rarefied regions while recovering fluid transport where the mean free path is much smaller than the mesh size. Collision-dominated regions also produce slowly decaying macroscopic errors in conventional kinetic iterations. We address these two requirements through a closed wave--particle decomposition and an implicit macroscopic--microscopic iteration.

Direct simulation Monte Carlo (DSMC) is widely used for rarefied gas flows, but collision sampling and statistical noise become costly near the continuum limit~\cite{bird1994}. Hybrid particle--continuum methods couple DSMC and fluid solvers across selected interfaces~\cite{schwartzentruber2006}. Deterministic discrete-velocity methods~\cite{yang1995,mieussens2000} and discrete-ordinates methods~\cite{adams2002} avoid sampling noise, but their conventional source iterations can converge slowly when collisions dominate transport.

Asymptotic-preserving (AP) schemes recover a consistent macroscopic discretization at fixed mesh parameters as the kinetic scale vanishes~\cite{jin1999}. Implicit--explicit (IMEX) integration and collision penalization provide AP schemes without a fully implicit nonlinear kinetic solve~\cite{pareschi2005,dimarco2013imex,filbet2010}; Navier--Stokes consistency additionally requires control of the first-order nonequilibrium contribution~\cite{bennoune2008}. For steady problems, diffusion synthetic acceleration~\cite{adams2002}, high-order/low-order coupling~\cite{chacon2017holo} and the general synthetic iterative scheme (GSIS)~\cite{su2020,su2020sisc,zhu2021gsis,su2021cmame} use macroscopic corrections to accelerate slowly decaying kinetic errors. GSIS retains explicit Navier--Stokes terms and distribution-dependent higher-order corrections in its macroscopic equations. Newton-based iterations provide another route to steady Boltzmann solutions~\cite{zhang2026newton}.

The unified gas-kinetic scheme (UGKS) couples transport and collisions in the numerical flux through a local integral solution~\cite{xu2010,xu2014direct}; DUGKS uses a characteristic reconstruction at cell interfaces~\cite{guo2013dugks}. Implicit UGKS and related discrete-velocity solvers combine macroscopic prediction with distribution updates~\cite{zhu2016implicit,zhu2019implicit,xu2022iugks,yuan2021multi}. The unified gas-kinetic wave--particle method (UGKWP) combines an analytic wave with sampled particles and reduces particle work near equilibrium~\cite{liu2020ugkwp,zhu2019ugkwp,li2020ugkwp,guo2026representation}. These methods motivate deriving the macroscopic closure and implicit coupling from a common wave--particle system at the equation level.

The WPD formulation of Liu and Xu~\cite{liu_wpd_I} separates the characteristic integral solution into an accumulated relaxation target and its kinetic complement, establishing the kernel identities, wave equation and an explicit scheme. Accumulating a non-BGK target makes the wave depend on non-conserved moments, such as the Shakhov heat flux. Here the wave accumulates only the Maxwellian and the particle retains the collision remainder. For BGK the two accumulation choices coincide.

The first contribution is a closed WPD system for the conservative variables and particle distribution. We establish its equivalence with the kinetic equation and derive the hydrodynamic expansion of both components. Their contributions recover the Euler and Navier--Stokes fluxes, with the ratio $\eta=\ell/\tau$ controlling the decomposition. The formulation admits conservative collision operators; the solver and computations here use the BGK and Shakhov models.

The second contribution is a residual-based implicit iteration derived from the closed WPD system. Macroscopic prediction enters the microscopic correction through a discrete endpoint term; each accepted correction is followed by moment synchronization and reconstruction of the decomposition. The implicit WPD solver is denoted IWPD, with WP retained as the compact label in figures and subscripts. We prove stationary-state consistency under discrete moment compatibility and examine the formal fluid limit. Linear model analysis connects error decay to the particle weight $(1+\eta)e^{-\eta}$, incomplete macroscopic solves and explicit collision remainders.

Shock and internal-flow profiles test kinetic fidelity and hydrodynamic behaviour. Residual histories, collision-model comparisons and parameter scans examine the convergence mechanisms. Cylinder and X38 flows extend the tests to body-fitted meshes, and cavity results on the grids of a published GSIS study provide an additional method comparison.

Section~\ref{sec:theory} presents the closed WPD system and its hydrodynamic expansion. Section~\ref{sec:algorithm} constructs the implicit iteration, and Section~\ref{sec:analysis} examines stationary-state consistency and convergence. Section~\ref{sec:results} reports the numerical verification, followed by the conclusions.

\section{Closed wave--particle decomposition}\label{sec:theory}
We define the wave and particle components, assemble the closed WPD system and establish kinetic equivalence. Its hydrodynamic expansion then identifies how the two components contribute to the macroscopic fluxes. The kernel identities and wave equation follow~\cite{liu_wpd_I}.

Throughout, $\cD=\partial_t+\bxi\cdot\nabla_{\bx}$ denotes the transport derivative, $\avg{\cdot}$ the velocity integral, $\bpsi$ the collision invariants, $\Fl{Z}=\avg{\bpsi\bxi^TZ}$ the flux tensor of a distribution $Z$, and a superscript $h$ a discrete operator.

\subsection{Kinetic equation and wave--particle components}\label{sec:model}
We consider a monatomic gas described by the distribution function $f(\bx,\bxi,t)$ of position $\bx$, molecular velocity $\bxi\in\mathbb R^3$ and time $t$, nondimensionalized so that the pressure of a gas with density $\rho$ and temperature $T$ is $p=\rho T$. The distribution satisfies a kinetic equation
\begin{equation}\label{eq:kinetic}
 \cD f=Q[f],\qquad \avg{\bpsi Q[f]}=\boldsymbol 0,
\end{equation}
whose collision operator $Q$ conserves mass, momentum and energy. Conservation is sufficient for the exact decomposition below; the fluid expansion additionally requires a collision model with the stated Chapman--Enskog limit. For a conservative state $\bU=(\rho,\rho\bu,\rho E)^T$ let $M[\bU]=\rho(2\pi T)^{-3/2}e^{-|\bc|^2/(2T)}$ be its Maxwellian, so that $\avg{\bpsi M[\bU]}=\bU$, where $\bpsi=(1,\bxi^T,|\bxi|^2/2)^T$, $\bu$ is the macroscopic velocity and $\bc=\bxi-\bu$ the peculiar velocity. With any state $\bU$ and any positive relaxation time $\tau[\bU]$ the collision term splits into a relaxation part and a remainder,
\begin{equation}\label{eq:general_collision}
 Q[f]=\frac{M-f}{\tau}+R,\qquad R=Q[f]-\frac{M-f}{\tau},\qquad \avg{\bpsi R}=\frac{\avg{\bpsi f}-\bU}{\tau},
\end{equation}
with $M=M[\bU]$, and the remainder has zero conservative moments when $\bU=\avg{\bpsi f}$. The collision model enters the wave--particle equations only through $R$. The Boltzmann collision integral is the prime example, and the kinetic models of relaxation type, for which $R$ is known in closed form, are the models treated below. They relax the distribution toward a target $S$ that depends on $f$ only through its moments,
\begin{equation}\label{eq:relaxation}
 Q[f]=\frac{S-f}{\tau},\qquad \tau=\frac{\mu(T)}{p},\qquad R=\frac{S-M}{\tau},
\end{equation}
where $\mu$ is the viscosity, and the target has the density, momentum and energy of $f$ and equals the local Maxwellian $M=M[\avg{\bpsi f}]$ when $f$ is a Maxwellian. The BGK model~\cite{bgk1954} takes $S=M$, so that $R=0$, and the Shakhov model~\cite{shakhov1968}
\begin{equation}\label{eq:shakhov}
 S=M\Big[1+(1-\Pr)\frac{\bc\cdot\bq}{5pT}\Big(\frac{|\bc|^2}{T}-5\Big)\Big],\qquad \bq=\frac12\avg{|\bc|^2\bc f},
\end{equation}
corrects the heat flux $\bq$ so that the Navier--Stokes limit has the Prandtl number $\Pr$; it reduces to the BGK model for $\Pr=1$. When the flow depends on only $d<3$ spatial coordinates, the molecular velocity splits into $d$ resolved components $\bxi_r$ and $K=3-d$ transverse components $\bxi_t$, and the transverse components can be integrated out~\cite{chu1965}. Writing $f^{\rm 3D}$ for the original distribution, this yields the reduced pair
\begin{equation}\label{eq:reduced}
 f=(f_1,f_2),\qquad f_1=\int_{\mathbb R^K}f^{\rm 3D}\dd\bxi_t,\qquad f_2=\int_{\mathbb R^K}|\bxi_t|^2f^{\rm 3D}\dd\bxi_t,
\end{equation}
of functions of $\bxi_r\in\mathbb R^d$, which we denote again by $\bxi$. The conservative variables and the heat flux of a pair are
\begin{equation}\label{eq:moments}
 \begin{aligned}
 &\bU=\avg{\bpsi f}=\int_{\mathbb R^d}\Big[\big(1,\bxi,\tfrac12|\bxi|^2\big)^Tf_1+\big(0,\boldsymbol 0,\tfrac12\big)^Tf_2\Big]\dd\bxi=(\rho,\rho\bu,\rho E)^T,\\
 &\bq=\frac12\int_{\mathbb R^d}\bc\big(|\bc|^2f_1+f_2\big)\dd\bxi,
 \end{aligned}
\end{equation}
with $E=|\bu|^2/2+3T/2$. For reduced pairs, $\bpsi$ denotes the linear moment map displayed in \eqref{eq:moments}; products with $\bpsi$ below use this map, rather than applying the full three-dimensional invariant vector to each component separately. The pair satisfies two relaxation equations \eqref{eq:relaxation}, $\cD f_r=(S_r-f_r)/\tau$ for $r=1,2$, with the reduced Maxwellian $M=(M_1,KTM_1)$, where $M_1=\rho(2\pi T)^{-d/2}e^{-|\bc|^2/(2T)}$, and with the reduced target $S=M$ of the BGK model or the reduced Shakhov target
\begin{equation}\label{eq:reduced_target}
 \begin{aligned}
 &S_1=M_1\Big[1+\frac{(1-\Pr)\bc\cdot\bq}{5pT}\Big(\frac{|\bc|^2}{T}+K-5\Big)\Big],\\
 &S_2=KTM_1\Big[1+\frac{(1-\Pr)\bc\cdot\bq}{5pT}\Big(\frac{|\bc|^2}{T}+K-3\Big)\Big].
 \end{aligned}
\end{equation}
For every $\bq$ the pair $S-M$ has zero conservative moments and the heat flux of $S$ is $(1-\Pr)\bq$. The remainder $R=(S-M)/\tau$ is thus the Prandtl-number correction of the Shakhov model, and the model-specific formulas below reduce to BGK when $\Pr=1$. We write $S=S[f]$ when the dependence of the target on the moments of $f$ matters.

Let $\varepsilon$ denote the Knudsen number and write $\tau=\varepsilon\bar\tau$ with $\bar\tau=\mathcal O(1)$. In a smooth region the Chapman--Enskog~\cite{chapman1970} expansion of the reduced model is
\begin{equation}\label{eq:ce}
 f=M+\varepsilon f^{(1)}+\mathcal O(\varepsilon^2),\qquad f^{(1)}=S^{(1)}-\bar\tau\,\cD^{(0)}M,\qquad \cD^{(0)}=\partial_t^{(0)}+\bxi\cdot\nabla_{\bx},
\end{equation}
where $\partial_t^{(0)}$ is the time derivative determined by the Euler equations and $S-M=\varepsilon S^{(1)}+\mathcal O(\varepsilon^2)$, so that $\varepsilon S^{(1)}$ is the first-order part of $\tau R$. The moments of \eqref{eq:ce} are the Navier--Stokes equations $\partial_t\bU+\nabla_{\bx}\cdot(\bF_{\rm E}+\bF_{\rm R}+\bF_{\rm S})=\boldsymbol 0$, in which the Euler flux $\bF_{\rm E}$, the relaxation flux $\bF_{\rm R}$ and the correction flux $\bF_{\rm S}$ are the fluxes of $M$, of $-\varepsilon\bar\tau\cD^{(0)}M$ and of $\varepsilon S^{(1)}$, respectively,
\begin{equation}\label{eq:ns_split}
 \begin{aligned}
 &\bF_{\rm E}=\begin{pmatrix}(\rho\bu)^T\\ \rho\bu\bu^T+p\bI\\ ((\rho E+p)\bu)^T\end{pmatrix},\\
 &\bF_{\rm R}=\begin{pmatrix}\boldsymbol0^T\\ \sNS\\ (\sNS\bu+\Pr\,\qNS)^T\end{pmatrix},\qquad
 \bF_{\rm S}=\begin{pmatrix}\boldsymbol0^T\\ \mathbf0\\ ((1-\Pr)\qNS)^T\end{pmatrix},
 \end{aligned}
\end{equation}
with the viscous stress and the Fourier heat flux
\begin{equation}\label{eq:ns_closure}
 \sNS=-\mu\Big[\nabla_{\bx}\bu+(\nabla_{\bx}\bu)^T-\tfrac23(\nabla_{\bx}\cdot\bu)\bI\Big],\qquad
 \qNS=-\kappa\nabla_{\bx}T,\qquad \kappa=\frac{5\mu}{2\Pr}.
\end{equation}
The relaxation flux thus carries the full viscous stress and the fraction $\Pr$ of the heat flux, and the correction flux carries the remaining fraction. For the BGK model the correction flux vanishes and the relaxation flux carries the complete Navier--Stokes fluxes with unit Prandtl number.

\paragraph{Wave and particle components}\label{sec:wave_particle}
Let $\bx_s=\bx-\bxi s$, $t_s=t-s$ be the backward characteristic through $(\bx,t)$, and let a subscript $s$ denote evaluation at $(\bx_s,t_s)$. Along the characteristic we introduce the cumulative collision frequency and the collision kernel
\begin{equation}\label{eq:kernel}
 \nu(s)=\int_0^s\frac{\dd r}{\tau_r},\qquad \mathcal K(s)=\frac{e^{-\nu(s)}}{\tau_s}.
\end{equation}
For a relaxation model, $e^{-\nu(s)}$ is the collisionless survival factor and $\mathcal K(s)$ is its collision kernel. For a general collision operator, these are the integrating factors of the chosen relaxation splitting. Let $\ell(\bx,t)\ge0$ be the local kinetic horizon, measured in time. Written as $\cD f=(M-f)/\tau+R$, the kinetic equation has along the characteristic the integral form $f=\int_0^\ell(M+\tau R)_s\mathcal K(s)\dd s+e^{-\nu(\ell)}f_\ell$, and splitting its source into the Maxwellian and the remainder defines the wave and the particle,
\begin{equation}\label{eq:wp_def}
 W=\int_0^{\ell}M_s\mathcal K(s)\dd s,\qquad P=f-W=e^{-\nu(\ell)}f_\ell+\int_0^{\ell}(\tau R)_s\mathcal K(s)\dd s.
\end{equation}
Here $\tau R=S-M$ for a relaxation model. The wave is a functional $W[\bU]$ on the backward characteristic segment because $M$ and $\tau$ depend on the conservative state. The particle carries the attenuated incoming distribution and the accumulated collision remainder (Figure~\ref{fig:schematic}(a)). The ratio $\eta=\ell/\tau$ controls the decomposition.

\begin{figure}[!htb]
\centering
\begin{subfigure}[t]{\linewidth}\centering
\begin{tikzpicture}[font=\footnotesize, >={Stealth[length=2.0mm,width=1.5mm]}, x=1.42cm, y=3.3cm]
  \useasboundingbox (-0.45,-0.50) rectangle (4.75,1.30);
  \def\eh{1.3}
  \fill[blue!13] (0,0) -- plot[domain=0:\eh, samples=40, smooth] (\x,{exp(-\x)}) -- (\eh,0) -- cycle;
  \fill[orange!20] (\eh,0) -- plot[domain=\eh:4.2, samples=50, smooth] (\x,{exp(-\x)}) -- (4.2,0) -- cycle;
  \draw[thick] plot[domain=0:4.2, samples=80, smooth] (\x,{exp(-\x)});
  \draw[dashed, black!70] (\eh,0) -- (\eh,0.90);
  \draw[->] (0,0) -- (4.5,0) node[below, xshift=-1mm] {$s/\tau$};
  \draw[->] (0,0) -- (0,1.14) node[above, xshift=7mm] {$\tau \mathcal K(s)=e^{-s/\tau}$};
  \draw (0,0.02) -- (0,-0.02) node[below] {$0$};
  \draw (\eh,0.02) -- (\eh,-0.02) node[below] {$\eta=\ell/\tau$};
  \draw (0.03,1) -- (-0.03,1) node[left] {$1$};
  \node[text=blue!60!black, align=center] at (0.60,0.21) {wave\\ $1-e^{-\eta}$};
  \node[text=orange!80!black, align=center] (pl) at (2.75,0.50) {particle\\ $e^{-\eta}$};
  \draw[->, orange!80!black] (pl.south west) -- (1.95,0.075);
  \node[black!75] at (\eh,0.98) {horizon};
  \draw[<-, black!65] (0,-0.29) -- (4.2,-0.29);
  \fill[black!65] (0,-0.29) circle (1.1pt);
  \node[black!75, anchor=north west, inner sep=1pt] at (0,-0.33) {$(\bx,t)$};
  \node[black!75, anchor=north east, inner sep=1pt] at (4.2,-0.33) {$\bx_s=\bx-\bxi s$};
\end{tikzpicture}
\caption{Split of the integral solution at the horizon.}
\end{subfigure}\par\medskip
\begin{subfigure}[t]{\linewidth}\centering
\begin{tikzpicture}[font=\footnotesize, >={Stealth[length=2.0mm,width=1.5mm]}, x=0.74cm, y=3.3cm]
  \useasboundingbox (-0.75,-0.50) rectangle (9.2,1.30);
  \def\ec{1.678}
  \draw[black!25] (0,0.5) -- (8,0.5);
  \draw[dashed, black!70] (\ec,0) -- (\ec,0.5);
  \draw[thick, blue!60!black, densely dashed] plot[domain=0:8, samples=80, smooth] (\x,{1-exp(-\x)});
  \draw[thick, blue!60!black] plot[domain=0:8, samples=80, smooth] (\x,{1-(1+\x)*exp(-\x)});
  \draw[thick, orange!80!black] plot[domain=0:8, samples=80, smooth] (\x,{(1+\x)*exp(-\x)});
  \draw[->] (0,0) -- (8.6,0) node[below, xshift=-1mm] {$\eta$};
  \draw[->] (0,0) -- (0,1.14);
  \foreach \t in {0,2,4,6,8} \draw (\t,0.02) -- (\t,-0.02) node[below] {$\t$};
  \draw (0.06,1) -- (-0.06,1) node[left] {$1$};
  \draw (0.06,0.5) -- (-0.06,0.5) node[left] {$\tfrac12$};
  \node[black!75, anchor=west, inner sep=1.5pt] at (\ec,0.11) {$\eta_c$};
  \draw[thick, blue!60!black, densely dashed] (4.25,0.74) -- (5.05,0.74) node[right, text=blue!60!black] {$A_0=1-e^{-\eta}$};
  \draw[thick, blue!60!black] (4.25,0.57) -- (5.05,0.57) node[right, text=blue!60!black] {$A_1=1-\beta$};
  \draw[thick, orange!80!black] (4.25,0.40) -- (5.05,0.40) node[right, text=orange!80!black] {$\beta=(1+\eta)e^{-\eta}$};
  \node[black!75, anchor=north west, inner sep=1pt] at (0,-0.33) {kinetic};
  \node[black!75, anchor=north east, inner sep=1pt] at (8.4,-0.33) {Navier--Stokes};
  \draw[<->, black!65] (0,-0.29) -- (8.4,-0.29);
\end{tikzpicture}
\caption{Transport shares of wave and particle.}
\end{subfigure}
\caption{The wave--particle decomposition. (a) For a frozen relaxation time the kernel of the integral solution is the probability density of the time $s$ elapsed since the last collision. The horizon splits it into the fraction $1-e^{-\eta}$ that defines the wave and the fraction $e^{-\eta}$ that the particle carries together with the remainder of the collision term. (b) Near equilibrium (the transport shares \eqref{eq:shares}) the wave carries the fractions $A_0$ of the Euler flux and $A_1$ of the relaxation flux, and the particle share $\beta$ of the relaxation transport falls from one in the kinetic limit to zero in the Navier--Stokes limit, with $\beta=1/2$ at $\eta_c\approx1.68$.}
\label{fig:schematic}
\end{figure}

\subsection{Closed equations and kinetic equivalence}\label{sec:closed_system}
For a component $Z\in\{W,P\}$ let $\bU_Z=\avg{\bpsi Z}$ be its conservative state, and let
\begin{equation}\label{eq:component_moments}
 \begin{aligned}
 &\rho_Z=\int_{\mathbb R^d}Z_1\dd\bxi,\qquad
 \bj_Z=\int_{\mathbb R^d}\bc\,Z_1\dd\bxi,\qquad
 p_Z=\frac13\int_{\mathbb R^d}\big(|\bc|^2Z_1+Z_2\big)\dd\bxi,\\
 &\boldsymbol\sigma_Z=\int_{\mathbb R^d}\bc\bc^TZ_1\dd\bxi-p_Z\bI,\qquad
 \bq_Z=\frac12\int_{\mathbb R^d}\bc\big(|\bc|^2Z_1+Z_2\big)\dd\bxi
 \end{aligned}
\end{equation}
be its density, diffusion flux relative to $\bu$, pressure, stress and heat flux. Since $f=W+P$, the moments of $f$ are the sums of those of the components, so that $\bj_W+\bj_P=\boldsymbol 0$, while $\boldsymbol\sigma=\boldsymbol\sigma_W+\boldsymbol\sigma_P$ and $\bq=\bq_W+\bq_P$ are the stress and the heat flux of $f$. 

Since $W=W[\bU]$, the total conservation law and the particle equation close in $(\bU,P)$. Written out, the WPD multiscale equations are
\begin{equation}\label{eq:closed}
\boxed{\begin{aligned}
 &\partial_t\rho+\nabla_{\bx}\cdot(\rho\bu)=0,\\
 &\partial_t(\rho\bu)+\nabla_{\bx}\cdot\big(\rho\bu\bu^T+p\bI
       +\boldsymbol\sigma_W+\boldsymbol\sigma_P\big)=\boldsymbol0,\\
 &\partial_t(\rho E)+\nabla_{\bx}\cdot\big[(\rho E+p)\bu
       +(\boldsymbol\sigma_W+\boldsymbol\sigma_P)\bu+\bq_W+\bq_P\big]=0,\\
 &\partial_tP+\bxi\cdot\nabla_{\bx}P
       =R+\frac{P_e-P}{\tau}.
\end{aligned}}
\end{equation}
The closure is explicit:
\begin{equation}\label{eq:wp_closure}
\begin{aligned}
 &W[\bU]=\int_0^\ell\frac{M[\bU_s]}{\tau_s}e^{-\nu(s)}\dd s,
 \qquad f=W[\bU]+P,\qquad \bU=\avg{\bpsi f},\\
 &P_e=\frac{\tau}{\tau_\ell}M[\bU_\ell]e^{-\nu(\ell)}
       \big(1-\partial_t\ell-\bxi\cdot\nabla_{\bx}\ell\big),\\
 &R=Q[W+P]-\frac{M[\bU]-W-P}{\tau}.
\end{aligned}
\end{equation}
For a relaxation model, $R=(S[W+P]-M[\bU])/\tau$. Here $\tau=\tau[\bU]$, and the wave and particle stresses and heat fluxes are the moments in \eqref{eq:component_moments}. Thus the first three equations contain a wave flux determined by the macroscopic state and a correction supplied by $P$; the last equation retains the velocity-resolved kinetic information. The conservative moment constraint in \eqref{eq:wp_closure} is essential for steady solution and is enforced by the acceptance step of the implicit algorithm. The collision remainder $R$ is distinguished throughout from the iteration residuals $R_U$ and $R_P$ introduced in Section~\ref{sec:residuals}.

\paragraph{Regularity conditions}
\begin{enumerate}\renewcommand{\theenumi}{(A\arabic{enumi})}\renewcommand{\labelenumi}{\theenumi}
\item\label{A1} The reduced distribution $f$ is continuously differentiable along characteristics, has $\rho>0$, $T>0$ and finite moments $\avg{(1+|\bxi|^3)|f_r|}<\infty$, $r=1,2$, and $\tau$ satisfies $0<\tau_{\min}\le\tau\le\tau_{\max}$ and is continuously differentiable along characteristics.
\item\label{A2} The horizon $\ell\ge0$ is Lipschitz continuous, so that $\cD\ell$ exists almost everywhere, and the backward characteristic segment of duration $\ell$ lies in the domain where $f$ is defined; where it would leave the domain through the initial time or an inflow boundary at the flight time $s_b<\ell$, the horizon is replaced by $s_b$, which depends on $\bxi$, and $f_\ell$ by the prescribed data.
\item\label{A3} In a near-equilibrium region the expansion \eqref{eq:ce} holds, $\partial_tM=\partial_t^{(0)}M+\mathcal O(\varepsilon)$, and the second derivative of $M$ along the characteristic within the horizon is bounded by a function $C(\bxi)$ with $\avg{(1+|\bxi|^3)C}<\infty$.
\end{enumerate}

\begin{lemma}[Kernel identities]\label{lem:kernel}
Under \ref{A1}, for fixed $s$ the kernel satisfies
\begin{equation}\label{eq:kernel_identity}
 \cD \mathcal K(s)+\partial_s\mathcal K(s)=-\frac{\mathcal K(s)}{\tau},\qquad \int_0^{\ell}\mathcal K(s)\dd s=1-e^{-\nu(\ell)}.
\end{equation}
Here $\cD$ acts on $(\bx,t)$ at fixed $s$. The kernel also gives the componentwise bound $0\le W\le(1-e^{-\nu(\ell)})\sup_{0\le s\le\ell}M_s$.
\end{lemma}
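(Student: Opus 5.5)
The three claims follow directly from the definitions $\nu(s)=\int_0^s\dd r/\tau_r$ and $\mathcal K(s)=e^{-\nu(s)}/\tau_s$ in \eqref{eq:kernel}. Here $\tau_r=\tau(\bx-\bxi r,t-r)$, so every quantity is a function of $(\bx,t,s)$.

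\textbf{First identity.} The key observation is that the operator $\cD+\partial_s$ annihilates any function of the form $g(\bx-\bxi s,t-s)$. Indeed, $\cD=\partial_t+\bxi\cdot\nabla_{\bx}$ acting at fixed $s$ gives $(\partial_t g+\bxi\cdot\nabla g)_s$, while $\partial_s$ gives $-(\partial_t g+\bxi\cdot\nabla g)_s$. Applying this to $1/\tau$ yields $(\cD+\partial_s)(1/\tau_s)=0$.

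For $\nu$, I write $\nu(s)=\int_0^s (1/\tau)(\bx-\bxi r,t-r)\dd r$.
\begin{itemize}
\item Differentiating under the integral, using \ref{A1} (continuous differentiability of $\tau$ along characteristics and $\tau\ge\tau_{\min}>0$), gives $\cD\nu(s)=\int_0^s \cD(1/\tau_r)\dd r$.
\item By the observation above, $\cD(1/\tau_r)=-\partial_r(1/\tau_r)$, so the integral equals $1/\tau_0-1/\tau_s$.
\item Since $\partial_s\nu=1/\tau_s$, we obtain $(\cD+\partial_s)\nu(s)=1/\tau_0=1/\tau$.
\end{itemize}
By the product and chain rules,
\[
(\cD+\partial_s)\mathcal K(s)=\frac{e^{-\nu(s)}}{\tau_s}\bigl(-(\cD+\partial_s)\nu\bigr)+e^{-\nu(s)}(\cD+\partial_s)\frac{1}{\tau_s}=-\frac{\mathcal K(s)}{\tau}.
\]
The one delicate point is that $\tau$ is only assumed differentiable \emph{along characteristics}, so $\cD$ must be interpreted as the derivative along the line $(\bx+\bxi h,t+h)$. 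This is exactly the combination appearing above, so no spatial gradient of $\tau$ alone is needed.

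\textbf{Second identity.} Since $\mathcal K(s)=\partial_s\bigl(-e^{-\nu(s)}\bigr)$, the fundamental theorem of calculus gives $\int_0^\ell\mathcal K\dd s=1-e^{-\nu(\ell)}$, using $\nu(0)=0$.

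\textbf{Bound on $W$.} The kernel satisfies $\mathcal K\ge0$, and for the reduced pair each component of $M_s$ is nonnegative. Hence
\[
0\le W=\int_0^\ell M_s\mathcal K(s)\dd s\le \sup_{0\le s\le\ell}M_s\int_0^\ell\mathcal K\dd s,
\]
componentwise in $\bxi$. The second identity then gives the stated bound. When \ref{A2} truncates the horizon at $s_b$, the same argument applies with $\ell$ replaced by $s_b$.
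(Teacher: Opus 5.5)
Your proposal is correct and follows the paper's proof. The paper likewise uses $\cD a_s=-\partial_s a_s$ along the backward characteristic to obtain $\cD\nu(s)=\tau^{-1}-\tau_s^{-1}$, then applies the product rule, and gets the integral identity and the bound on $W$ from $\mathcal K(s)=\partial_s\bigl(-e^{-\nu(s)}\bigr)\ge0$; your remark that (A1) only needs to supply derivatives along the characteristic line is a valid clarification.
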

\begin{proof}
The identity follows from $\cD a_s=-\partial_s a_s$ along the backward characteristic, which gives $\cD\nu(s)=\tau^{-1}-\tau_s^{-1}$. The integral and positivity bounds follow directly from the nonnegative kernel.
\end{proof}

\begin{lemma}[Wave and particle equations]\label{lem:wave_particle}
Let \ref{A1}--\ref{A2} hold and let $\bU$ be a continuously differentiable field. Then $W=W[\bU]$ satisfies the wave equation
\begin{equation}\label{eq:wave_equation}
 \cD W=\frac{M-W}{\tau}-\cS_\ell,\qquad \cS_\ell=\frac{M_\ell}{\tau_\ell}e^{-\nu(\ell)}\big(1-\cD\ell\big),
\end{equation}
identically in $\bU$, where $M=M[\bU]$ and $\tau=\tau[\bU]$. If moreover $f$ solves \eqref{eq:kinetic} and $\bU=\avg{\bpsi f}$, the particle $P=f-W$ satisfies
\begin{equation}\label{eq:particle_equation}
 \cD P=R+\frac{P_e-P}{\tau},\qquad P_e=\tau\cS_\ell ,
\end{equation}
with the remainder $R$ of \eqref{eq:general_collision}.
\end{lemma}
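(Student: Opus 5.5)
We apply the transport derivative $\cD$ directly to $W[\bU]=\int_0^{\ell}M_s\mathcal K(s)\dd s$. Here the upper limit $\ell(\bx,t)$ depends on $(\bx,t)$, and the integrand depends on $(\bx,t)$ through the backward characteristic point $(\bx_s,t_s)$. By the Leibniz rule, which is valid almost everywhere because $\ell$ is Lipschitz by \ref{A2} and $M_s,\tau_s$ are continuously differentiable along characteristics by \ref{A1} and the regularity of $\bU$, we obtain
\begin{equation*}
 \cD W=M_\ell\mathcal K(\ell)\,\cD\ell+\int_0^{\ell}\big[(\cD M_s)\mathcal K(s)+M_s\,\cD\mathcal K(s)\big]\dd s .
\end{equation*}
Two ingredients then reduce this expression. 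The first is the characteristic identity $\cD a_s=-\partial_s a_s$, which holds for any field $a$ evaluated at $(\bx-\bxi s,t-s)$ and which we already used in the proof of Lemma~\ref{lem:kernel}. The second is the kernel identity \eqref{eq:kernel_identity}, namely $\cD\mathcal K(s)=-\partial_s\mathcal K(s)-\mathcal K(s)/\tau$. Substituting both, the integrand becomes $-\partial_s\big(M_s\mathcal K(s)\big)-M_s\mathcal K(s)/\tau$.

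Integrating the total derivative in $s$ gives $M_0\mathcal K(0)-M_\ell\mathcal K(\ell)=M/\tau-M_\ell\mathcal K(\ell)$, since $\nu(0)=0$ and hence $\mathcal K(0)=1/\tau$. The remaining term integrates to $-W/\tau$. Collecting terms,
\begin{equation*}
 \cD W=\frac{M-W}{\tau}-M_\ell\mathcal K(\ell)\big(1-\cD\ell\big),
\end{equation*}
and $M_\ell\mathcal K(\ell)=M_\ell e^{-\nu(\ell)}/\tau_\ell$, which is exactly $\cS_\ell$. Nothing in this computation uses the kinetic equation, so the identity holds for any $C^1$ field $\bU$. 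Where \ref{A2} truncates the horizon at the boundary flight time $s_b$, the same computation applies with $\ell$ replaced by $s_b$. The boundary term is then handled identically, and $\cD s_b=1$ holds for the initial-time truncation, for example, so $\cS_\ell$ simply vanishes there; we would remark on this rather than treat it separately.

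For the particle equation we take $\bU=\avg{\bpsi f}$ and subtract the wave equation from the kinetic equation written in the relaxation split \eqref{eq:general_collision}, $\cD f=(M-f)/\tau+R$. This gives
\begin{equation*}
 \cD P=\frac{M-f}{\tau}+R-\frac{M-W}{\tau}+\cS_\ell=R+\frac{\tau\cS_\ell-P}{\tau},
\end{equation*}
using $f-W=P$. This is \eqref{eq:particle_equation} with $P_e=\tau\cS_\ell$.

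The main obstacle is justifying the differentiation under the integral sign and at the moving endpoint. This needs an integrable dominating function for $(\cD M_s)\mathcal K(s)$, which we would get from $\tau\ge\tau_{\min}$ and the $C^1$ bounds on $\bU$ over the compact segment $s\in[0,\ell]$. It also needs a.e.\ differentiability of the composition with the Lipschitz $\ell$, which we would handle by Rademacher's theorem together with the chain rule for $\int_0^{\ell}g(s,\cdot)\dd s$ with $g$ continuous.
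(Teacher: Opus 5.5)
Your proof is correct and follows the paper's argument: you differentiate the wave integral with its moving endpoint, use $\cD a_s=-\partial_s a_s$ and the kernel identity \eqref{eq:kernel_identity} to turn the integrand into an exact $s$-derivative (the paper's ``integration by parts''), and subtract the resulting wave identity from $\cD f=(M-f)/\tau+R$. Your remarks on the Leibniz rule at the Lipschitz endpoint and on $\cD s_b=1$ for boundary-limited horizons add details that the paper leaves implicit.
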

\begin{proof}
Differentiating \eqref{eq:wp_def}, using \eqref{eq:kernel_identity} and integrating by parts gives
\begin{equation*}
 \cD W=\frac{M-W}{\tau}-M_\ell\mathcal K(\ell)(1-\cD\ell).
\end{equation*}
Subtracting this identity from $\cD f=(M-f)/\tau+R$ yields \eqref{eq:particle_equation}.
\end{proof}

The wave equation holds identically for $W[\bU]$. The exchange $\cS_\ell$ accounts for the moving endpoint of the characteristic integral, and $P_e=\tau\cS_\ell$ is the endpoint state in the particle relaxation term. Both are nonnegative where $1-\cD\ell\ge0$ and vanish at a boundary-limited flight time $\ell=s_b$, for which $\cD s_b=1$.

\begin{theorem}[Kinetic equivalence of the closed WPD system]\label{thm:equivalence}
Under \ref{A1}--\ref{A2}, every solution $f$ of \eqref{eq:kinetic} defines a solution $(\bU,P)$ of \eqref{eq:closed}--\eqref{eq:wp_closure} through $\bU=\avg{\bpsi f}$ and $P=f-W[\bU]$. Conversely, every solution of this closed system satisfying the moment constraint reconstructs a solution $f=W[\bU]+P$ of the kinetic equation. This equivalence holds for every admissible horizon and for steady as well as time-dependent solutions.
\end{theorem}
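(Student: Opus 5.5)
The proof has two directions, and both rest on Lemma~\ref{lem:wave_particle}. The only other ingredient is that the conservative moments of the particle equation, combined with those of the wave equation, reproduce the macroscopic conservation laws.

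\emph{Forward direction.} Let $f$ solve \eqref{eq:kinetic} and set $\bU=\avg{\bpsi f}$. By \ref{A1}, $\bU$ is differentiable, so the wave $W=W[\bU]$ is well defined by \eqref{eq:wp_closure}. Under \ref{A2} the characteristic segment stays in the domain, with boundary truncation at $s_b$ where needed. Define $P=f-W$. Lemma~\ref{lem:wave_particle} then gives the particle equation \eqref{eq:particle_equation} directly, with $P_e=\tau\cS_\ell$. This agrees with the expression for $P_e$ in \eqref{eq:wp_closure}, since $\cD\ell=\partial_t\ell+\bxi\cdot\nabla_{\bx}\ell$. The remainder $R$ there coincides with \eqref{eq:general_collision} because $f=W+P$.

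For the first three equations of \eqref{eq:closed}, take $\avg{\bpsi\,\cdot}$ of $\cD f=Q[f]$ and use $\avg{\bpsi Q[f]}=\boldsymbol0$. The finite third moments in \ref{A1} justify differentiating under the integral. This gives $\partial_t\bU+\nabla_{\bx}\cdot\Fl{f}=\boldsymbol0$. Next, write the flux in terms of peculiar velocity, $\bxi=\bu+\bc$. Using \eqref{eq:component_moments} and $\bj_W+\bj_P=\boldsymbol0$, the flux $\Fl{f}=\Fl{W}+\Fl{P}$ splits into the Euler part plus $\boldsymbol\sigma_W+\boldsymbol\sigma_P$ and $\bq_W+\bq_P$. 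Here I need $\rho=\rho_W+\rho_P$, $p=p_W+p_P$ and zero total diffusion flux, which follow from $\bU=\bU_W+\bU_P$. This is a routine moment computation in the reduced variables of \eqref{eq:moments}.

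\emph{Converse direction.} Let $(\bU,P)$ solve \eqref{eq:closed}--\eqref{eq:wp_closure} with $\bU=\avg{\bpsi(W[\bU]+P)}$, and set $f=W+P$. The wave equation \eqref{eq:wave_equation} holds identically in $\bU$ by the first part of Lemma~\ref{lem:wave_particle}. Adding it to the particle equation, the endpoint terms cancel because $P_e/\tau=\cS_\ell$. This leaves
\[
\cD f=\frac{M[\bU]-f}{\tau}+R=Q[W+P]=Q[f],
\]
by the definition of $R$ in \eqref{eq:wp_closure}. So $f$ solves \eqref{eq:kinetic}.

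The moment constraint is what makes $M[\bU]$ the Maxwellian of $f$ itself. Without it, $M$ would be built from an inconsistent state, and $R$ would not have vanishing moments (cf.\ \eqref{eq:general_collision}). With the constraint, the conservation equations in \eqref{eq:closed} are consistent consequences of the reconstructed kinetic equation rather than independent conditions, so they impose nothing further.

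\emph{Uniformity and main obstacle.} Nothing above uses a particular $\ell$ beyond \ref{A2}, and nothing uses $\partial_t\neq0$. For steady solutions, drop the time derivatives; Lemma~\ref{lem:wave_particle} still applies with $\cD=\bxi\cdot\nabla_{\bx}$. So the equivalence holds for every admissible horizon and for steady and time-dependent solutions alike.

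The step I expect to need the most care is the regularity bookkeeping. On the boundary-truncated set where $\ell=s_b(\bxi)$, we have $\cD s_b=1$, so $\cS_\ell=0$. There $W$ must still satisfy \eqref{eq:wave_equation} with the prescribed inflow or initial data carried in $P$. Since $\ell$ is only Lipschitz, these identities hold almost everywhere along characteristics. I would handle this by working in the integral (mild) form along each characteristic and invoking Lemma~\ref{lem:kernel} to justify the integration by parts.
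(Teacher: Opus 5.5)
Your proposal is correct and follows essentially the same route as the paper. For the forward direction you subtract the wave identity of Lemma~\ref{lem:wave_particle} and take conservative moments, and for the converse you add that identity back, with the moment constraint used to identify $\bU$ with $\avg{\bpsi f}$ so that the flux takes the expanded form in \eqref{eq:closed}; your peculiar-velocity flux bookkeeping and your treatment of the boundary-truncated horizon fill in details the paper leaves implicit.
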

\begin{proof}
For a kinetic solution, subtracting the wave identity \eqref{eq:wave_equation} gives the particle equation, and taking conservative moments gives the first three equations of \eqref{eq:closed}. Conversely, adding the wave identity to the last equation of \eqref{eq:closed} gives $\cD f=(M[\bU]-f)/\tau+R=Q[f]$ by \eqref{eq:wp_closure}. The moment constraint identifies the reconstructed macroscopic state with $\bU$, so its flux has the expanded form in \eqref{eq:closed}.
\end{proof}

The role of the moment constraint is especially important for steady computation. With $W=W[\bU]$, the moments of the steady particle equation already imply $\nabla_{\bx}\cdot(\Fl{W}+\Fl{P})=\boldsymbol0$ by \eqref{eq:moment_equations}, even if the provisional state $\bU$ differs from $\avg{\bpsi f}$. A macroscopic flux balance alone therefore cannot enforce their agreement. In the algorithm it propagates the conservative variables across the domain, while the microscopic correction and acceptance restore moment compatibility.

\begin{corollary}[Horizon invariance]\label{cor:horizon_invariance}
Let $0\le\ell_1\le\ell_2$ be two horizons and $(W_i,P_i)$ the decompositions of the same solution $f$. Then
\begin{equation}\label{eq:horizon_invariance}
 P_1-P_2=W_2-W_1=\int_{\ell_1}^{\ell_2}M_s\mathcal K(s)\dd s\ \ge0,
\end{equation}
and if \eqref{eq:kinetic} has a unique solution for the given data, the solutions of \eqref{eq:closed} for $\ell_1$ and $\ell_2$ with compatible data have the same $\bU$ and the same total distribution.
\end{corollary}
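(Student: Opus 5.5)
The plan is to treat the two assertions separately. The identity \eqref{eq:horizon_invariance} comes directly from the definitions \eqref{eq:wp_def}. The statement about the closed systems then follows from Theorem~\ref{thm:equivalence} together with uniqueness.

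For the identity, fix the solution $f$ and set $\bU=\avg{\bpsi f}$. The Maxwellian $M_s=M[\bU_s]$, the relaxation time $\tau_s$, the cumulative frequency $\nu(s)$ and the kernel $\mathcal K(s)$ in \eqref{eq:kernel} are all evaluated along the same backward characteristic. None of them depends on the horizon; $\ell$ enters only as the upper limit of integration. Hence
\begin{equation*}
 W_2-W_1=\int_0^{\ell_2}M_s\mathcal K(s)\dd s-\int_0^{\ell_1}M_s\mathcal K(s)\dd s=\int_{\ell_1}^{\ell_2}M_s\mathcal K(s)\dd s .
\end{equation*}
Since $P_i=f-W_i$ with the same $f$, we get $P_1-P_2=W_2-W_1$. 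Nonnegativity follows because $M_s\ge0$ and $\mathcal K(s)=e^{-\nu(s)}/\tau_s>0$ by \ref{A1}, which is the same positivity already used in Lemma~\ref{lem:kernel}. Both horizons must be admissible in the sense of \ref{A2}. If boundary truncation replaces $\ell_i$ by $s_b$, the same computation applies with the truncated limits. In the case $\ell_1\le s_b<\ell_2$ the difference is simply $\int_{\ell_1}^{s_b}$. This case needs a brief remark, since the ordering $\ell_1\le\ell_2$ is preserved under truncation by $\min(\cdot,s_b)$.

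For the second assertion, let $(\bU_i,P_i)$ solve \eqref{eq:closed}--\eqref{eq:wp_closure} with horizon $\ell_i$ and satisfy the moment constraint. By the converse direction of Theorem~\ref{thm:equivalence}, each $f_i=W_{\ell_i}[\bU_i]+P_i$ solves \eqref{eq:kinetic}. These solutions have the same data; "compatible data" will be interpreted as exactly this condition. Uniqueness then gives $f_1=f_2=f$. The moment constraint gives $\bU_i=\avg{\bpsi f_i}=\avg{\bpsi f}$, so $\bU_1=\bU_2$. Applying the first part to this common $f$ recovers the relation between the $P_i$.

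The only delicate point is making "compatible data" precise. The initial and inflow data for $P$ depend on $\ell$, while the data for $f$ do not. I would define compatibility as requiring that $W_{\ell_i}[\bU]+P_i$ reproduce the same prescribed kinetic data. With this definition, uniqueness for \eqref{eq:kinetic} transfers directly, and the rest is bookkeeping.
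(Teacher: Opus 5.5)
Your proposal is correct and follows the paper's proof: the identity comes from the definition \eqref{eq:wp_def} applied to the same $f$ (kernel and Maxwellian independent of $\ell$, positivity for the sign), and the second claim comes from the converse direction of Theorem~\ref{thm:equivalence} combined with uniqueness. Your treatment of boundary truncation, where the upper limit becomes $\min(\ell_2,s_b)$, and your explicit definition of ``compatible data'' are refinements that the paper leaves implicit.
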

\begin{proof}
The first relation is the definition \eqref{eq:wp_def} applied to the same $f$, and the second follows from Theorem~\ref{thm:equivalence} and uniqueness.
\end{proof}

The equivalence applies to conservative collision operators through the split \eqref{eq:general_collision}. The BGK and Shakhov models provide the relaxation scales and fluid expansions used in the transport analysis and numerical realization below.

For $\eta=0$ the wave vanishes and \eqref{eq:closed} is the kinetic equation for $P=f$, whereas as $\eta\to\infty$ the free-transport term disappears from the particle, which retains only the accumulated remainder and vanishes for the BGK model. In between, $\eta$ parametrizes a continuous family of equivalent representations of the same dynamics, and near equilibrium this spectrum can be made explicit (Figure~\ref{fig:schematic}(b)).

The component fluxes and moment balances are expanded below. Their sum yields the total conservation law when $\bU=\avg{\bpsi f}$,
\begin{equation}\label{eq:total_conservation}
 \begin{aligned}
 &\partial_t\bU+\nabla_{\bx}\cdot\big(\Fl{W}+\Fl{P}\big)=\boldsymbol 0,\\
 &\Fl{W}+\Fl{P}=\begin{pmatrix}(\rho\bu)^T\\ \rho\bu\bu^T+p\bI+\boldsymbol\sigma\\ ((\rho E+p)\bu+\boldsymbol\sigma\bu+\bq)^T\end{pmatrix}.
 \end{aligned}
\end{equation}
The component moments contain their exact stresses and heat fluxes; no fluid constitutive relation has been imposed. The exchange cancels in their sum.

\paragraph{Component fluxes and moment balances}\label{app:component_balance}
For the component moments defined in \eqref{eq:component_moments}, the expanded fluxes retain the diffusion relative to the total velocity and make cancellation of the horizon exchange explicit.

Writing $(\rho E)_Z=\tfrac12\int_{\mathbb R^d}(|\bxi|^2Z_1+Z_2)\dd\bxi$, the state and flux of each component are explicitly
\begin{equation}\label{eq:component_flux}
\begin{aligned}
 \bU_Z&=\big(\rho_Z,(\rho_Z\bu+\bj_Z)^T,(\rho E)_Z\big)^T,\\
 \Fl{Z}&=\begin{pmatrix}
       (\rho_Z\bu+\bj_Z)^T\\
       \rho_Z\bu\bu^T+p_Z\bI+\bu\bj_Z^T+\bj_Z\bu^T+\boldsymbol\sigma_Z\\
       \big[((\rho E)_Z+p_Z)\bu+\boldsymbol\sigma_Z\bu+\bq_Z
                  +\tfrac12|\bu|^2\bj_Z\big]^T
      \end{pmatrix}.
\end{aligned}
\end{equation}
These are the extended fluid fluxes of the two components, including their relative diffusion. Taking the moments of \eqref{eq:wave_equation} and \eqref{eq:particle_equation} with $M=M[\bU]$ and the remainder $R$ of \eqref{eq:general_collision} formed with this state, which for a relaxation model is $(S[f]-M[\bU])/\tau$, and using $\avg{\bpsi M[\bU]}=\bU$, $\avg{\bpsi R}=(\avg{\bpsi f}-\bU)/\tau$ and $\avg{\bpsi P_e}=\tau\avg{\bpsi\cS_\ell}$, we obtain the moment equations
\begin{equation}\label{eq:moment_equations}
 \begin{aligned}
 &\partial_t\bU_W+\nabla_{\bx}\cdot\Fl{W}=\frac{\bU-\bU_W}{\tau}-\avg{\bpsi\cS_\ell},\\
 &\partial_t\bU_P+\nabla_{\bx}\cdot\Fl{P}=\frac{\avg{\bpsi f}-\bU}{\tau}+\avg{\bpsi\cS_\ell}-\frac{\bU_P}{\tau}.
 \end{aligned}
\end{equation}

Adding the two balances gives the total conservation law \eqref{eq:total_conservation} under the moment constraint $\bU=\avg{\bpsi f}$.

\subsection{Wave--particle decomposition in the hydrodynamic limit}\label{sec:spectrum}

\begin{lemma}[Relaxation weights]\label{lem:weights}
For a relaxation time frozen along the characteristic within the horizon and $k\ge0$,
\begin{equation}\label{eq:weights}
 \int_0^{\ell}\frac{s^k}{k!}\,\frac{e^{-s/\tau}}{\tau}\dd s=\tau^kA_k(\eta),\qquad
 A_k(\eta)=1-e^{-\eta}\sum_{j=0}^{k}\frac{\eta^j}{j!},
\end{equation}
and $0\le A_{k+1}(\eta)\le A_k(\eta)\le1$. In particular $A_0=1-e^{-\eta}$, $A_1=1-(1+\eta)e^{-\eta}$ and $A_2=1-(1+\eta+\eta^2/2)e^{-\eta}$.
\end{lemma}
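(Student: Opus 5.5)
The plan is to reduce the integral to the incomplete gamma function by the substitution $u=s/\tau$. With $\tau$ frozen, $\dd s=\tau\dd u$ and the upper limit becomes $\eta=\ell/\tau$, so
\begin{equation*}
 \int_0^{\ell}\frac{s^k}{k!}\frac{e^{-s/\tau}}{\tau}\dd s=\tau^k\int_0^{\eta}\frac{u^k}{k!}e^{-u}\dd u .
\end{equation*}
It therefore suffices to show $I_k(\eta):=\int_0^\eta \frac{u^k}{k!}e^{-u}\dd u=A_k(\eta)$.

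I will prove this by induction on $k$. The base case is immediate: $I_0=1-e^{-\eta}=A_0$. For $k\ge1$, integrate by parts with $\frac{u^k}{k!}$ differentiated and $e^{-u}$ integrated. This gives
\begin{equation*}
 I_k=-\frac{\eta^k}{k!}e^{-\eta}+I_{k-1}.
\end{equation*}
Substituting the inductive hypothesis $I_{k-1}=A_{k-1}$ yields $I_k=1-e^{-\eta}\sum_{j\le k}\eta^j/j!=A_k$. As an independent check, the derivative of $A_k$ telescopes to $\eta^ke^{-\eta}/k!$, and $A_k(0)=0$.

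The ordering follows from the same recursion. We have $A_k-A_{k+1}=e^{-\eta}\eta^{k+1}/(k+1)!\ge0$ because $\eta\ge0$. Next, $A_k\ge0$ because it is the integral of a nonnegative integrand. Finally, $A_k\le A_0=1-e^{-\eta}\le1$; alternatively, $A_k\le\int_0^\infty u^ke^{-u}/k!\dd u=1$. The three displayed special cases are just $k=0,1,2$ of the general formula.

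No step is a real obstacle here. The only point needing care is the interpretation of ``frozen'': $\tau_s\equiv\tau$ on $[0,\ell]$, so that $\nu(s)=s/\tau$ and $\mathcal K(s)=e^{-s/\tau}/\tau$ as in \eqref{eq:kernel}. This is exactly what justifies the substitution.
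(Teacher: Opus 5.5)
Your proposal is correct and follows essentially the paper's route: substitute $u=s/\tau$, then integrate by parts to get the recursion $A_k=A_{k-1}-e^{-\eta}\eta^k/k!$. For the ordering, you read monotonicity off that recursion and use nonnegativity of the integrand together with $A_0\le1$, whereas the paper interprets $A_k$ as the Poisson probability of at least $k+1$ collisions with mean $\eta$; both arguments are immediate.
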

\begin{proof}
Integration by parts after $v=s/\tau$ gives $A_k=A_{k-1}-e^{-\eta}\eta^k/k!$. The ordering follows from the Poisson interpretation: $A_k$ is the probability of at least $k+1$ collisions for mean $\eta$.
\end{proof}

\begin{theorem}[Hydrodynamic expansion of the wave and particle components]\label{thm:spectrum}
Let \ref{A1}--\ref{A3} hold, let the relaxation time be frozen along the characteristic within the horizon, $\tau=\varepsilon\bar\tau$, and let $\eta$ be locally constant. Then, for the BGK and Shakhov models the following equalities hold up to remainders of order $\varepsilon^2$, uniformly in $\eta\in[0,\infty)$:
\begin{equation}\label{eq:spectrum}
 \begin{aligned}
 W&=A_0(\eta)M-\varepsilon A_1(\eta)\,\bar\tau\cD^{(0)}M,\\
 P&=e^{-\eta}M+\varepsilon\big[S^{(1)}-(1+\eta)e^{-\eta}\,\bar\tau\cD^{(0)}M\big],\\
 \Fl{W}&=A_0\bF_{\rm E}+A_1\bF_{\rm R},\\
 \Fl{P}&=e^{-\eta}\bF_{\rm E}+(1+\eta)e^{-\eta}\bF_{\rm R}+\bF_{\rm S},
 \end{aligned}
\end{equation}
and $\bU_W=A_0\bU$, $\bU_P=e^{-\eta}\bU$. To this order, the sum of the two fluxes is the Navier--Stokes flux $\bF_{\rm E}+\bF_{\rm R}+\bF_{\rm S}$ for every $\eta$.
\end{theorem}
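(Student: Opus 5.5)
The plan is to expand the wave definition \eqref{eq:wp_def} along the backward characteristic in powers of $s$ and then integrate against the frozen kernel with Lemma~\ref{lem:weights}. The particle and the fluxes will follow by subtraction from the Chapman--Enskog expansion \eqref{eq:ce} and by taking moments.

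With $\tau$ frozen, $\mathcal K(s)=e^{-s/\tau}/\tau$. Along the characteristic $\frac{d}{ds}M_s=-\cD M$ evaluated at $(\bx_s,t_s)$, so Taylor's theorem gives
\[
M_s=M-s\,\cD M+r(s),\qquad |r(s)|\le \tfrac12 s^2C(\bxi),
\]
where the bound on $r$ comes from \ref{A3}. Inserting this into $W=\int_0^\ell M_s\mathcal K\dd s$ and applying Lemma~\ref{lem:weights} yields
\[
W=A_0M-\tau A_1\cD M+\mathcal O(\tau^2A_2C).
\]
Since $A_2\le1$ and $\tau^2=\varepsilon^2\bar\tau^2$, the remainder is $\mathcal O(\varepsilon^2)$ uniformly in $\eta$, and it stays integrable against $(1+|\bxi|^3)$. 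Next, \ref{A3} lets us replace $\partial_tM$ by $\partial_t^{(0)}M+\mathcal O(\varepsilon)$. This gives $\tau\cD M=\varepsilon\bar\tau\cD^{(0)}M+\mathcal O(\varepsilon^2)$, which is the stated formula for $W$. Then $P=f-W$ combined with \eqref{eq:ce} gives
\[
P=(1-A_0)M+\varepsilon\big[S^{(1)}-(1-A_1)\bar\tau\cD^{(0)}M\big],
\]
and $1-A_0=e^{-\eta}$, $1-A_1=(1+\eta)e^{-\eta}$ give the particle formula. An alternative cross-check is to expand $P=e^{-\eta}f_\ell+\int_0^\ell(S-M)_s\mathcal K\dd s$ directly; this reproduces the same expression, with $S^{(1)}$ appearing through $e^{-\eta}\varepsilon S^{(1)}+A_0\varepsilon S^{(1)}$.

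For the moments, $\avg{\bpsi M}=\bU$ and $\avg{\bpsi\,\cD^{(0)}M}=\partial_t^{(0)}\bU+\nabla\cdot\bF_{\rm E}=0$ by the Euler equations. The latter is the Chapman--Enskog solvability condition, and together with $\avg{\bpsi S^{(1)}}=0$ it gives $\bU_W=A_0\bU$ and $\bU_P=e^{-\eta}\bU$. For the fluxes, $\Fl{M}=\bF_{\rm E}$, the flux of $-\varepsilon\bar\tau\cD^{(0)}M$ is $\bF_{\rm R}$, and the flux of $\varepsilon S^{(1)}$ is $\bF_{\rm S}$, all by the definitions following \eqref{eq:ns_split}. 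Linearity of $\Fl{\cdot}$ then gives both flux lines. Summing uses $A_0+e^{-\eta}=1$ and $A_1+(1+\eta)e^{-\eta}=1$, which recovers the Navier--Stokes flux.

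The main obstacle is the uniformity in $\eta$ and the control of the remainder. I would bound the Taylor remainder by $\tfrac12C(\bxi)\int_0^\ell s^2\mathcal K\dd s=C\tau^2A_2(\eta)\le C\tau^2$ and use $\avg{(1+|\bxi|^3)C}<\infty$ to pass the bound through the flux moments. One further point needs care: the $\mathcal O(\varepsilon)$ error in $\partial_tM$ is multiplied by $\tau A_1\le\tau$, so it also contributes only at $\mathcal O(\varepsilon^2)$. Finally, the freezing of $\tau$ and the local constancy of $\eta$ are used so that the weights are exactly the $A_k$ and so that no gradient of $\ell$ enters at this order.
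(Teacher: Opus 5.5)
Your proposal is correct and follows the paper's proof essentially step for step. Both arguments Taylor-expand $M_s$ along the backward characteristic, integrate against the frozen kernel with the weights $A_k$ to get a remainder of order $\tau^2A_2(\eta)\le\tau^2$, replace $\tau\cD M$ by $\varepsilon\bar\tau\cD^{(0)}M$ via \ref{A3}, subtract from \eqref{eq:ce} to obtain $P$, and read off the moments and fluxes from the Euler solvability condition and the definitions of $\bF_{\rm E}$, $\bF_{\rm R}$, $\bF_{\rm S}$; your sharper constant ($C\tau^2A_2$ instead of the paper's $2C\tau^2A_2$) and the cross-check through $P=e^{-\eta}f_\ell+\int_0^\ell(S-M)_s\mathcal K\dd s$ are harmless additions.
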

\begin{proof}
We integrate the Taylor expansion of the Maxwellian along the characteristic, whose remainder is bounded by \ref{A3}, $|M_s-M+s\,\cD M|\le Cs^2$, against the kernel $e^{-s/\tau}/\tau$ over $[0,\ell]$. By the weights \eqref{eq:weights},
\begin{equation*}
 \big|W-A_0M+A_1\tau\,\cD M\big|\le2C\tau^2A_2(\eta)\le2C\tau^2,
\end{equation*}
uniformly in $\eta$, and by \ref{A3}, $\tau\cD M=\varepsilon\bar\tau\cD^{(0)}M+\mathcal O(\varepsilon^2)$, which gives the expansion of $W$. Subtracting it from \eqref{eq:ce} and using $1-A_0=e^{-\eta}$ and $1-A_1=(1+\eta)e^{-\eta}$ gives the expansion of $P$. The conservative moments of $\cD^{(0)}M$ vanish, because they are the Euler equations, whence $\bU_W=A_0\bU$ and $\bU_P=e^{-\eta}\bU$ up to $\mathcal O(\varepsilon^2)$. Finally, the fluxes of $M$, $-\varepsilon\bar\tau\cD^{(0)}M$ and $\varepsilon S^{(1)}$ are $\bF_{\rm E}$, $\bF_{\rm R}$ and $\bF_{\rm S}$ by Section~\ref{sec:model}.
\end{proof}

\begin{corollary}[Wave and particle transport shares]\label{cor:transport_shares}
Under the hypotheses of Theorem~\ref{thm:spectrum}, with $\tau=\mu/p$ and up to $\mathcal O(\varepsilon^2)$,
\begin{equation}\label{eq:shares}
 \begin{aligned}
 &\boldsymbol\sigma_W=A_1\sNS,\qquad &&\bq_W=A_1\Pr\,\qNS,\\
 &\boldsymbol\sigma_P=\beta\,\sNS,\qquad &&\bq_P=\beta\Pr\,\qNS+(1-\Pr)\qNS,\qquad \beta(\eta)=(1+\eta)e^{-\eta}.
 \end{aligned}
\end{equation}
As $\eta\to0$ the particle carries all three fluxes, whereas as $\eta\to\infty$ the wave carries $\bF_{\rm E}+\bF_{\rm R}$ and the particle only $\bF_{\rm S}$, and the conservation part of \eqref{eq:closed} becomes the Navier--Stokes system with the Prandtl number of the model. The particle share $\beta=1-A_1$ of the relaxation transport decreases from $1$ to $0$ and equals $1/2$ at $\eta_c\approx1.68$.
\end{corollary}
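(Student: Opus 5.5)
The corollary is a direct read-off from the flux expansions in Theorem~\ref{thm:spectrum}. I would compare the block structure of $\Fl{W}$ and $\Fl{P}$ with the component flux formula \eqref{eq:component_flux}, and identify the stress and heat flux of each component.

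First, I would use $\bU_W=A_0\bU$ and $\bU_P=e^{-\eta}\bU$ up to $\mathcal O(\varepsilon^2)$. The momentum entry gives $\rho_Z\bu+\bj_Z=\rho_Z\bu$ with $\rho_W=A_0\rho$ and $\rho_P=e^{-\eta}\rho$, so the diffusion fluxes $\bj_W$ and $\bj_P$ are $\mathcal O(\varepsilon^2)$. Consequently, in \eqref{eq:component_flux} the terms $\bu\bj_Z^T+\bj_Z\bu^T$ and $\tfrac12|\bu|^2\bj_Z$ drop to this order.

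Next, I would read off the stresses and heat fluxes. The momentum row of $\Fl{W}=A_0\bF_{\rm E}+A_1\bF_{\rm R}$ is $A_0(\rho\bu\bu^T+p\bI)+A_1\sNS$. Matching it with $\rho_W\bu\bu^T+p_W\bI+\boldsymbol\sigma_W$ requires $p_W=A_0p$ and $\boldsymbol\sigma_W=A_1\sNS$. The split between $p_W$ and the trace of $\boldsymbol\sigma_W$ is fixed because $\sNS$ is trace-free; equivalently, $p_W$ is determined by the energy entry $(\rho E)_W=A_0\rho E$ together with $\bj_W=\mathcal O(\varepsilon^2)$, since $p_Z=\tfrac23[(\rho E)_Z-\tfrac12\rho_Z|\bu|^2]+\mathcal O(\varepsilon^2)$. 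The energy row of $\Fl{W}$ is $A_0(\rho E+p)\bu+A_1(\sNS\bu+\Pr\,\qNS)$. Subtracting $((\rho E)_W+p_W)\bu+\boldsymbol\sigma_W\bu$ leaves $\bq_W=A_1\Pr\,\qNS$. For the particle I would repeat this with coefficients $e^{-\eta}$ and $\beta=(1+\eta)e^{-\eta}$, adding the $\bF_{\rm S}$ contribution $(1-\Pr)\qNS$ to the heat flux. Here $\tau=\mu/p$ ensures that $\bF_{\rm R}$ carries exactly $\sNS$ and $\Pr\,\qNS$ as in \eqref{eq:ns_split}.

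The limiting statements are then elementary. Since $A_0,A_1\to0$ and $\beta\to1$ as $\eta\to0$, the particle carries all three fluxes. Since $A_0,A_1\to1$ and $e^{-\eta},\beta\to0$ as $\eta\to\infty$, the wave carries $\bF_{\rm E}+\bF_{\rm R}$ and the particle only $\bF_{\rm S}$. The conservation part of \eqref{eq:closed}, with fluxes $\boldsymbol\sigma_W+\boldsymbol\sigma_P=\sNS$ and $\bq_W+\bq_P=\qNS$, then becomes Navier--Stokes with $\kappa=5\mu/(2\Pr)$. Monotonicity follows from $\beta'(\eta)=-\eta e^{-\eta}<0$ with $\beta(0)=1$ and $\beta(\infty)=0$. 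The crossing point solves $(1+\eta)e^{-\eta}=1/2$; it is unique by monotonicity, and a numerical evaluation gives $\eta_c\approx1.678$.

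The main obstacle is the identification step: making sure the moment decomposition in \eqref{eq:component_moments} is consistent with the expanded fluxes, in particular that $\bj_Z$ vanishes to the stated order. I would handle this by appealing directly to the vanishing conservative moments of $\cD^{(0)}M$ and of $S^{(1)}$, which is already used in the proof of Theorem~\ref{thm:spectrum}. Uniformity in $\eta$ is inherited from that theorem.
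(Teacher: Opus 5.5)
Your proof is correct, but it reaches the shares by a slightly different route than the paper.

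The paper takes the stress and heat-flux moments directly of the distribution expansions of $W$ and $P$ in \eqref{eq:spectrum}. These moments are taken relative to the fixed total velocity $\bu$, so they are linear in the distribution. $M$ contributes neither stress nor heat flux, $-\varepsilon\bar\tau\cD^{(0)}M$ contributes $\sNS$ and $\Pr\,\qNS$, and $\varepsilon S^{(1)}$ contributes only the heat flux $(1-\Pr)\qNS$. The coefficients $A_1$ and $\beta$ therefore carry over at once.

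You instead start from the flux lines of \eqref{eq:spectrum} and invert the component-flux formula \eqref{eq:component_flux}. That forces you to show $\bj_Z=\mathcal O(\varepsilon^2)$ and to recover $p_Z$ separately before you can isolate the stress and heat flux. This costs an extra step. In return, it checks that the block structure of $\Fl{W}$ and $\Fl{P}$ is consistent with the component moments of \eqref{eq:component_moments}, including the vanishing relative diffusion.

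One justification should be dropped: the claim that the split between $p_W$ and $\boldsymbol\sigma_W$ is fixed because $\sNS$ is trace-free. In the reduced setting $d<3$ used throughout, $\boldsymbol\sigma_Z=\int\bc\bc^TZ_1\dd\bxi-p_Z\bI$ is a $d\times d$ tensor with nonzero trace in general, because part of $p_Z$ comes from $Z_2$. Likewise the $d\times d$ tensor $\sNS$ of \eqref{eq:ns_closure} has trace $-\tfrac{2(3-d)}{3}\mu\nabla_{\bx}\cdot\bu$. So taking traces of the momentum row does not separate $p_W$ from $\boldsymbol\sigma_W$; that argument works only for $d=3$.

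What you need is your second argument: determine $p_Z$ from the energy entry together with $\bj_Z=\mathcal O(\varepsilon^2)$. It is not merely an equivalent alternative but the step that actually does the work. With it, the rest of your proof goes through: the limits, the monotonicity via $\beta'(\eta)=-\eta e^{-\eta}$, and $\eta_c\approx1.68$.
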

\begin{proof}
These shares follow by taking the stress and heat-flux moments of \eqref{eq:spectrum}; monotonicity follows from $\beta'(\eta)=-\eta e^{-\eta}$.
\end{proof}

Substituting the wave shares into \eqref{eq:closed} gives, through first order in $\varepsilon$,
\begin{equation}\label{eq:macroscopic_spectrum}
\left\{\begin{aligned}
 &\partial_t\rho+\nabla_{\bx}\cdot(\rho\bu)=0,\\
 &\partial_t(\rho\bu)+\nabla_{\bx}\cdot\big(\rho\bu\bu^T+p\bI
       +A_1(\eta)\sNS+\boldsymbol\sigma_P\big)=\boldsymbol0,\\
 &\partial_t(\rho E)+\nabla_{\bx}\cdot\big[(\rho E+p)\bu
       +\big(A_1(\eta)\sNS+\boldsymbol\sigma_P\big)\bu
       +A_1(\eta)\Pr\,\qNS+\bq_P\big]\!=0.
\end{aligned}\right.
\end{equation}
The momentum and energy fluxes in \eqref{eq:macroscopic_spectrum} omit terms of order $\varepsilon^2$ under the smoothness assumptions of Theorem~\ref{thm:spectrum}; the exact equations remain \eqref{eq:closed}. These equations are coupled to the same particle equation. At $\eta=0$, $W=0$, $P_e=M$ and $P=f$, so the kinetic equation carries the full transport. As $\eta\to\infty$ in a smooth continuum region, $A_1\to1$, $\boldsymbol\sigma_P\to\boldsymbol0$ and $\bq_P\to(1-\Pr)\qNS$ to this order, giving the complete Navier--Stokes transport. At intermediate scales the particle moments supply the complement continuously. No interface or switch between kinetic and fluid models is introduced.

The hypothesis of a frozen relaxation time simplifies the weights but is not essential. If $\tau$ varies along the characteristic, the wave acquires the additional term $\tfrac12\eta^2e^{-\eta}(\cD\tau)M$, which vanishes in both limits of $\eta$; the particle compensates it, and the sum of the two fluxes remains the Navier--Stokes flux.

The closed system also determines the implicit coupling. With $P$ fixed, the conservation law advances $\bU$ through the wave flux; with $W$ fixed, the particle equation is a relaxation--transport equation with an explicit remainder. At finite $\eta$, the particle contains the equilibrium share $e^{-\eta}M$, which must follow the macroscopic prediction. The endpoint correction in Section~\ref{sec:residuals} transfers this change to the particle update.

\section{Implicit wave--particle iteration}\label{sec:algorithm}
IWPD advances the discrete closed system by $N_U$ macroscopic updates with the particle fixed, followed by $N_P$ microscopic updates with the wave fixed during each update. These are called W and P updates. Both use delta--residual form. The endpoint correction transfers the W prediction into the first P update, and acceptance restores moment compatibility. Figure~\ref{fig:iwpd_flow} shows this coupling; the velocity-resolved realization uses the BGK and Shakhov models of Section~\ref{sec:model}.

\begin{figure}[!htb]
\centering
\begin{tikzpicture}[font=\small, node distance=6mm,
  box/.style={draw=black!60, rounded corners=2pt, align=center, text width=10.4cm, inner sep=5pt},
  flow/.style={->, thick, >=Stealth}]
\node[box] (state) {Accepted state $(\bU^k,P^k)$: reconstruct $W_h[\bU^k]$\\ evaluate $R_U^k$, $R_P^k$ and $P_e^{h,k}$};
\node[box, fill=blue!7, below=of state] (wave) {Macroscopic prediction: $N_U$ W updates with $P^k$ fixed\\ $\bU^k\longrightarrow\widetilde\bU^{k+1}$};
\node[box, below=of wave] (endpoint) {Endpoint correction\\ $\Theta^k=[P_e^h(\widetilde\bU^{k+1})-P_e^h(\bU^k)]/\widehat\tau$};
\node[box, fill=orange!9, below=of endpoint] (particle) {Microscopic correction: $N_P$ P updates\\ solve for $\Delta P$ with $R_P+\Theta$; use $\Theta=0$ after the first update};
\node[box, below=of particle] (accept) {After each P update, synchronize the moments\\ $f=\widetilde W+P+\theta\Delta P$,\quad $\bU=\avgh{\bpsi f}$,\quad $P=f-W_h[\bU]$};
\draw[flow] (state)--(wave);
\draw[flow] (wave)--(endpoint);
\draw[flow] (endpoint)--(particle);
\draw[flow] (particle)--(accept);
\draw[flow] (accept.west)--++(-0.65,0) |- (state.west);
\end{tikzpicture}
\caption{IWPD coupling. Each outer iteration predicts the macroscopic state, corrects the particle residual through the endpoint, and synchronizes the moments after each microscopic update. Iteration stops when both residual norms satisfy the tolerance.}
\label{fig:iwpd_flow}
\end{figure}

\subsection{Discrete formulation and residuals}\label{sec:setting}
Physical space is covered by finite-volume cells $\Omega_i$, $i\in\mathcal I$, of volume $|\Omega_i|$, centre $\bx_i$ and size $\Delta_i$, the smallest cell width. The faces of $\Omega_i$ form the set $\mathcal F_i$. A face $\Gamma$ has area $|\Gamma|$, midpoint $\bx_\Gamma$ and unit normal $\bn_\Gamma$, outward for $\Omega_i$, and $i_\Gamma$ denotes the cell across $\Gamma$. For a velocity $\bxi$ we write $\xi_n=\bxi\cdot\bn_\Gamma$. The reduced velocity space is represented by nodes $\bxi_j$, $j\in\mathcal J$, with quadrature weights $\varpi_j$, and $\bpsi_j$ is the $(d+2)\times2$ matrix mapping a reduced pair to its conservative moments as in \eqref{eq:moments}. A discrete pair $Z=(Z_{i,j})$ has the discrete moments $\avg{\bpsi Z}_{h,i}=\sum_{j\in\mathcal J}\varpi_j\bpsi_jZ_{i,j}$, and the discrete state space is $\cX=(\mathbb R^{d+2})^{\mathcal I}\times(\mathbb R^2)^{\mathcal I\times\mathcal J}$, whose elements are pairs $X=(\bU,P)$ of cell states and nodal particle values. The discrete Maxwellian $M_i=M_h[\bU_i]$ of a cell is the reduced Maxwellian of the state $\bU_i$ on the velocity nodes; its discrete moments equal $\bU_i$ up to the quadrature error of the velocity grid, and exactly after a correction of the form \eqref{eq:lift}. The discrete target $S^h(f)$ is the target of the model evaluated with the quadrature moments of $f$, that is, $M_h[\avgh{\bpsi f}]$ for the BGK model and \eqref{eq:reduced_target} for the Shakhov model.

Derivatives of Maxwellians are represented by micro-slopes~\cite{xu2001}. For a coefficient vector $\mathbf a=(a_0,\mathbf a_1,a_2)\in\mathbb R^{d+2}$ and a temperature $T$ let
\begin{equation}\label{eq:micro_poly}
 \Phi[\mathbf a]=(\Phi_1[\mathbf a],\Phi_2[\mathbf a]),\qquad
 \Phi_r[\mathbf a]=a_0+\mathbf a_1\cdot\bxi+a_2\Big(\frac{|\bxi|^2}{2}+\frac{(K+2r-2)T}{2}\Big),
\end{equation}
for $r=1,2$, and write $M\Phi[\mathbf a]=(M_1\Phi_1[\mathbf a],M_2\Phi_2[\mathbf a])$. If $\bU$ has the derivative $\partial\bU$ in some direction, the derivative of the reduced Maxwellian in that direction is $M\Phi[\mathbf a]$, where the micro-slope $\mathbf a$ solves the linear system $\avg{\bpsi M\Phi[\mathbf a]}=\partial\bU$ of Gaussian moments. The time slope $\mathbf a_t$ follows from the slopes $\mathbf a_\alpha$ in the mesh directions through the compatibility condition
\begin{equation}\label{eq:gks_compatibility}
 \avg{\bpsi M\big(\Phi[\mathbf a_t]+\textstyle\sum_{\alpha=1}^{d}\xi_\alpha\Phi[\mathbf a_\alpha]\big)}=\boldsymbol 0,
\end{equation}
which takes the time derivative of $\bU$ from the Euler equations.

In cell $i$ the horizon and the horizon-to-relaxation ratio are
\begin{equation}\label{eq:discrete_horizon}
 \ell_i=C_\ell\frac{\Delta_i}{|\bu_i|+a_i},\qquad \eta_i=\frac{\ell_i}{\tau_i},
\end{equation}
with the sound speed $a_i$ and the horizon number $C_\ell\ge0$. The horizon is thus a multiple of the time in which the fastest acoustic signal crosses the cell, and $\eta_i$ is of the order of the cell size measured in mean free paths. Where the mesh resolves the mean free path, $\eta_i$ is small and the particle, then nearly the distribution itself, is advanced as in a discrete-velocity method; where the cells are large compared with the mean free path, the wave carries the transport through a Navier--Stokes flux. The decomposition thus adapts to the local resolution without a switching criterion. The horizon is tied to the cell because the discrete wave represents the Maxwellian along the characteristic by the data of one cell; all computations use $C_\ell=1$, to assess the method with a common horizon convention across the test cases. At a face, $\ell_\Gamma$ is the distance-weighted average of the two adjacent horizons, $\tau_\Gamma$ the relaxation time of the face state and $\eta_\Gamma=\ell_\Gamma/\tau_\Gamma$. The discrete wave is the first-order truncation of \eqref{eq:spectrum} with $\eta$ frozen in the cell,
\begin{equation}\label{eq:discrete_wave}
 W_h[\bU]_{i,j}=M_{i,j}\Big[A_0(\eta_i)-A_1(\eta_i)\,\tau_i\big(\Phi_j[\mathbf a_{t,i}]+\textstyle\sum_{\alpha=1}^{d}\xi_{\alpha,j}\Phi_j[\mathbf a_{\alpha,i}]\big)\Big],
\end{equation}
in which the transport derivative of the Maxwellian is expressed through the micro-slopes $\mathbf a_{\alpha,i}$ of the limited cell gradient of $\bU$ and the time slope $\mathbf a_{t,i}$ of the gas-kinetic scheme~\cite{xu2001}, defined above in \eqref{eq:micro_poly}--\eqref{eq:gks_compatibility}. By the proof of Theorem~\ref{thm:spectrum}, for a relaxation time frozen within the horizon, the truncation error is bounded for smooth solutions by $2C\tau^2A_2(\eta)$, of order $\ell^2\eta$ for small and $\tau^2$ for large $\eta$. Up to the quadrature error, the discrete moments of $M_i$ equal $\bU_i$ and those of the bracketed derivative vanish by the compatibility condition \eqref{eq:gks_compatibility} of the time slope, so that $\avg{\bpsi W_h[\bU]}_{h,i}=A_0(\eta_i)\,\bU_i$, the discrete form of $\bU_W=A_0\bU$ in Theorem~\ref{thm:spectrum}.

\paragraph{Numerical fluxes and transport operators}\label{sec:fluxes}
The wave is transported by the gas-kinetic Navier--Stokes flux~\cite{xu2001}, whose equilibrium and non-equilibrium parts are weighted by $A_{0,\Gamma}=A_0(\eta_\Gamma)$ and $A_{1,\Gamma}=A_1(\eta_\Gamma)$. Its macroscopic form at a face is
\begin{equation}\label{eq:gks_flux}
 \bF^h_{W,\Gamma}=A_{0,\Gamma}\Big[\bF_{\rm E}(\bU_\Gamma)+\frac{\Delta t}{2}\partial_t\bF_{\rm E}(\bU_\Gamma)\Big]\bn_\Gamma+A_{1,\Gamma}\bF_{\rm R}(\bU_\Gamma)\bn_\Gamma,
\end{equation}
where $\bU_\Gamma$ is the face state reconstructed to second order, $\bF_{\rm R}$ is evaluated with $\tau_\Gamma$ and the face gradients, $\partial_t\bF_{\rm E}$ is the Euler time derivative of the flux, and $\Delta t$ is the averaging interval of the gas-kinetic flux, set by an explicit CFL number $C_{\rm exp}$ on the velocity grid. For $A_{0,\Gamma}=A_{1,\Gamma}=1$ it supplies the Euler and relaxation contributions to the second-order gas-kinetic flux; the Shakhov heat-flux correction remains in the particle. The velocity-resolved form $\phi^h_{W,\Gamma,j}$ of the wave flux is corrected by a Maxwellian-weighted combination of the collision invariants, so that its discrete moments equal \eqref{eq:gks_flux} exactly, and the particle is transported node by node by a second-order upwind flux $\phi_{P,\Gamma,j}$. Their velocity-resolved forms are given below. 

Both components are reconstructed to second order with van Leer limited slopes~\cite{vanleer1979}, which form the cell gradients $\nabla_h\bU_i$ and $\nabla_hP_{i,j}$. At a face $\Gamma$ let $\bU_\Gamma$ be the average of the two values extrapolated to $\bx_\Gamma$ from the adjacent cells, $M_\Gamma$ its Maxwellian, $\mathbf a_n$ and $\mathbf a_\parallel$ the micro-slopes of the normal and tangential difference quotients of $\bU$ at the face, $\xi_n$ and $\xi_\parallel$ the normal and tangential velocity components and $\mathbf a_t$ the time slope from \eqref{eq:gks_compatibility}. The velocity-resolved wave flux of the face is
\begin{equation}\label{eq:face_wave}
 \begin{aligned}
 \phi_{W,\Gamma}(\bxi)=\xi_nM_\Gamma\Big[&A_{0,\Gamma}-A_{1,\Gamma}\tau_\Gamma\big(\xi_n\Phi[\mathbf a_n]+\xi_\parallel\Phi[\mathbf a_\parallel]\big)\\
 &+\Big(\frac{A_{0,\Gamma}\Delta t}{2}-A_{1,\Gamma}\tau_\Gamma\Big)\Phi[\mathbf a_t]\Big],
 \end{aligned}
\end{equation}
that is, the average of the gas-kinetic distribution over the interval $\Delta t$. Its moments are Gaussian moments of $M_\Gamma$ and give \eqref{eq:gks_flux} with $\partial_t\bF_{\rm E}=\avg{\bpsi\bxi^T\Phi[\mathbf a_t]M_\Gamma}$. At the node $\bxi_j$ the flux is corrected by a Maxwellian-weighted combination of the collision invariants,
\begin{equation}\label{eq:lift}
 \phi^h_{W,\Gamma,j}=\phi_{W,\Gamma}(\bxi_j)+\operatorname{diag}(M_{1,\Gamma,j},M_{2,\Gamma,j})\,\bpsi_j^T\boldsymbol\lambda_\Gamma,\qquad
 \avgh{\bpsi\phi^h_{W,\Gamma}}=\bF^h_{W,\Gamma}.
\end{equation}
Here $M_{r,\Gamma,j}=M_{r,\Gamma}(\bxi_j)$, $r=1,2$. The $d+2$ coefficients $\boldsymbol\lambda_\Gamma$ are defined by the linear system on the right and remove the quadrature error of the velocity grid. The particle is transported node by node with a second-order upwind flux. With the upwind cell $c=i$ for $\xi_{n,j}>0$ and $c=i_\Gamma$ otherwise, this flux reads
\begin{equation}\label{eq:sn_face}
 \phi_{P,\Gamma,j}=\xi_{n,j}\Big[P_{c,j}+(\bx_\Gamma-\bx_c)\cdot\nabla_hP_{c,j}-\frac{\Delta t}{2}\xi_{n,j}\,\bn_\Gamma\cdot\nabla_hP_{c,j}\Big],
\end{equation}
whose last term matches the time averaging of the wave flux. The CIS transports the full distribution with the same flux. At a boundary face the extrapolation is taken from the interior cell, and at a diffuse wall the entering wave and particle are Maxwellians of the wall state whose densities make the normal mass flux of each component vanish.

The transport operators and the macroscopic particle flux are
\begin{equation}\label{eq:transport_ops}
 \begin{aligned}
 &D^h_W(W)_{i,j}=\frac{1}{|\Omega_i|}\sum_{\Gamma\in\mathcal F_i}|\Gamma|\phi^h_{W,\Gamma,j},\qquad
 D^h_P(P)_{i,j}=\frac{1}{|\Omega_i|}\sum_{\Gamma\in\mathcal F_i}|\Gamma|\phi_{P,\Gamma,j},\\
 &\bF^h_{P,\Gamma}=\avgh{\bpsi\phi_{P,\Gamma}},
 \end{aligned}
\end{equation}
with every face oriented outward for $\Omega_i$. The operator $D^h_W$ depends on $\bU$ through $W_h[\bU]$ and the face states, whereas $D^h_P$ depends on $P$ alone.

\begin{lemma}[Discrete flux compatibility]\label{lem:flux_compatibility}
For every $(\bU,P)\in\cX$ the velocity-resolved transport and the macroscopic fluxes satisfy, at interior and boundary faces alike,
\begin{equation}\label{eq:flux_compatibility}
 \avg{\bpsi\big[D^h_W(W_h[\bU])+D^h_P(P)\big]}_{h,i}=\nabla_h\cdot\big(\bF^h_W+\bF^h_P\big)\big|_i,
\end{equation}
where $\nabla_h\cdot\bF^h|_i=|\Omega_i|^{-1}\sum_{\Gamma\in\mathcal F_i}|\Gamma|\bF^h_\Gamma$ is the finite-volume divergence of the face fluxes.
\end{lemma}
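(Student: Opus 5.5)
The plan is to prove \eqref{eq:flux_compatibility} face by face: both sides are finite-volume divergences with the same geometric weights. By \eqref{eq:transport_ops}, the left side in cell $i$ is
\[
|\Omega_i|^{-1}\sum_{\Gamma\in\mathcal F_i}|\Gamma|\,\avgh{\bpsi(\phi^h_{W,\Gamma}+\phi_{P,\Gamma})},
\]
because the discrete moment $\avgh{\bpsi\,\cdot\,}_{h,i}=\sum_j\varpi_j\bpsi_j(\cdot)_{i,j}$ is a finite linear map. It therefore commutes with the finite sum over faces and with the factors $|\Gamma|/|\Omega_i|$, which do not depend on $j$. It thus suffices to show, for each face $\Gamma\in\mathcal F_i$ with the outward orientation used in both operators, that
\[
\avgh{\bpsi\phi^h_{W,\Gamma}}=\bF^h_{W,\Gamma}\qquad\text{and}\qquad \avgh{\bpsi\phi_{P,\Gamma}}=\bF^h_{P,\Gamma}.
\]

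The particle identity holds by definition: $\bF^h_{P,\Gamma}$ is defined in \eqref{eq:transport_ops} as exactly this quadrature moment of the nodal upwind flux \eqref{eq:sn_face}. The wave identity is the defining condition of the correction \eqref{eq:lift}. The step that needs an actual argument is that $\boldsymbol\lambda_\Gamma$ exists. The system for it reads
\[
\mathbf G_\Gamma\boldsymbol\lambda_\Gamma=\bF^h_{W,\Gamma}-\avgh{\bpsi\phi_{W,\Gamma}},\qquad \mathbf G_\Gamma=\sum_j\varpi_j\bpsi_j\operatorname{diag}(M_{1,\Gamma,j},M_{2,\Gamma,j})\bpsi_j^T .
\]
The matrix $\mathbf G_\Gamma$ is a weighted Gram matrix with positive weights $\varpi_jM_{r,\Gamma,j}$, so it is symmetric positive semidefinite. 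It is positive definite if the collision invariants $1,\bxi,\tfrac12|\bxi|^2$ (with the $f_2$ component contributing to the energy row) are linearly independent on the velocity nodes. This requires $\rho_\Gamma,T_\Gamma>0$, and I would state it as the standing assumption on the velocity grid; any grid with at least $d+2$ nodes not all lying on one quadric suffices. Under that assumption $\boldsymbol\lambda_\Gamma$ is unique and the wave identity holds exactly, with no dependence on the quadrature error.

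Boundary faces need a separate check, because there the fluxes are built from wall Maxwellians rather than from the extrapolated face state. The plan is to define $\bF^h_{W,\Gamma}$ and $\bF^h_{P,\Gamma}$ at a boundary face as the discrete moments of whatever nodal fluxes enter the boundary face (the wall Maxwellians fixed by zero normal mass flux, or the interior extrapolation). With that convention the same per-face identities hold there by construction.

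The main point to watch is consistency of conventions rather than analysis. The quantity $\bF^h_{W,\Gamma}$ on the right of the lemma must be the same quantity used in the lift at every face, including boundary faces. The orientation of $\bn_\Gamma$ must also be the same in both sums, so that each face contributes with matching sign. Once these are fixed, summing the two per-face identities with the weights $|\Gamma|/|\Omega_i|$ gives \eqref{eq:flux_compatibility}.
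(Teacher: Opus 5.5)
Your proposal is correct and follows the paper's proof. Both reduce the identity to the per-face relations $\avgh{\bpsi\phi^h_{W,\Gamma}}=\bF^h_{W,\Gamma}$, which holds by the lift \eqref{eq:lift}, and $\avgh{\bpsi\phi_{P,\Gamma}}=\bF^h_{P,\Gamma}$, which holds by definition in \eqref{eq:transport_ops}, and then sum over the faces with the weights $|\Gamma|/|\Omega_i|$. Your Gram-matrix argument for the existence of $\boldsymbol\lambda_\Gamma$ and your convention for boundary faces only make explicit what the paper assumes; it notes after \ref{D2} that the local lifting system must be nonsingular.
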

\begin{proof}
The discrete moments of $\phi^h_{W,\Gamma}$ equal $\bF^h_{W,\Gamma}$ by the correction \eqref{eq:lift}, and $\bF^h_{P,\Gamma}$ is defined as the moments of $\phi_{P,\Gamma}$. Summing over the faces of the cell gives \eqref{eq:flux_compatibility}.
\end{proof}

\paragraph{Residuals and endpoint correction}\label{sec:residuals}
For a general conservative discretization $Q^h$ of the collision operator, the microscopic residual has the form
\begin{equation}\label{eq:general_residual}
 R_P=Q^h[f]-D^h_W(W)-D^h_P(P),\qquad
 R_U=-\nabla_h\cdot(\bF_W^h+\bF_P^h)=\avgh{\bpsi R_P}.
\end{equation}
The last identity requires conservative collision moments and compatible transport fluxes. Thus the residual coupling applies at the kinetic-equation level. In the realization studied here, $Q^h[f]=(S^h(f)-f)/\tau$. At a state $X=(\bU,P)$ with $W=W_h[\bU]$, $f=W+P$, relaxation time $\tau$ and discrete Maxwellian $M$, the residuals and the discrete endpoint are
\begin{equation}\label{eq:residuals}
 \begin{aligned}
 &R_U(X)=-\nabla_h\cdot\big(\bF^h_W(\bU)+\bF^h_P(P)\big),\\
 &R_P(X)=\frac{S^h(f)-M}{\tau}-\frac{P}{\tau}+\frac{M-W}{\tau}-D^h_W(W)-D^h_P(P),\\
 &P^h_e(\bU)=M-W-\tau D^h_W(W).
 \end{aligned}
\end{equation}
The particle residual is the residual of the discrete particle equation. Its third and fourth terms are the residual of the discrete steady wave equation, which, multiplied by $\tau$, defines the discrete endpoint, the counterpart of $P_e=\tau\cS_\ell$ in \eqref{eq:particle_equation}. With the discrete remainder $R^h=(S^h(f)-M)/\tau$, and under the conservation of the discrete target (conditions \ref{D1}--\ref{D2} below), the residuals satisfy the identities
\begin{equation}\label{eq:residual_moments}
 \begin{aligned}
 &R_P=R^h+\frac{P^h_e-P}{\tau}-D^h_P(P)=\frac{S^h(f)-f}{\tau}-D^h_W(W)-D^h_P(P),\\
 &\avgh{\bpsi R_P(X)}=R_U(X),
 \end{aligned}
\end{equation}
the second by the compatibility identity \eqref{eq:flux_compatibility}. The macroscopic residual therefore vanishes whenever the particle residual does, which is the discrete counterpart of the moment dependence discussed after Theorem~\ref{thm:equivalence}. The constraint $\bU=\avgh{\bpsi f}$, which the steady system does not contain, is supplied by the acceptance step of Section~\ref{sec:p_update}, and the discrete steady system $R_U=\boldsymbol 0$, $R_P=0$ is thus the discrete kinetic equation
\begin{equation}\label{eq:discrete_kinetic}
 D^h_W\big(W_h[\bU]\big)+D^h_P(P)=\frac{S^h(f)-f}{\tau},\qquad f=W_h[\bU]+P .
\end{equation}
The complement $P=f-W_h[\bU]$ retains the part of the distribution omitted by the wave approximation. The stationary equation therefore discretizes the full kinetic equation, while $C_\ell$ and the averaging interval $\Delta t$ affect the transport discretization and its error. They are distinct from the iteration controls in Theorem~\ref{thm:stationary}.

The iteration is monitored by the discrete $L^2$ norms of the two residuals,
\begin{equation}\label{eq:residual_norms}
 \|R_U\|=\Big(\sum_{i\in\mathcal I}|\Omega_i|\,\big|R_{U,i}\big|^2\Big)^{1/2},\qquad
 \|R_P\|=\Big(\sum_{i\in\mathcal I}|\Omega_i|\sum_{j\in\mathcal J}\varpi_j\,|R_{P,i,j}|^2\Big)^{1/2},
\end{equation}
where $|R_{P,i,j}|^2=R_{P,1,i,j}^2+R_{P,2,i,j}^2$ sums the two reduced components and $|R_{U,i}|$ is the Euclidean norm of the residuals of the conservation laws in cell $i$, in the nondimensional variables of Section~\ref{sec:model} without further scaling. By \eqref{eq:residual_moments} the moments of $R_P$ are $R_U$, and $\|R_P\|$ measures in addition the non-equilibrium part of the residual, which $\|R_U\|$ does not see. For the CIS, $R_P$ is the residual of the discrete kinetic equation for the full distribution and $R_U$ its moments, so that both methods are monitored by the same quantities.

Once the W iteration has produced the predicted variables $\widetilde\bU$, the endpoint is evaluated again, and the endpoint correction
\begin{equation}\label{eq:endpoint_correction}
 \Theta=\frac{P^h_e(\widetilde\bU)-P^h_e(\bU)}{\widehat\tau}
\end{equation}
is added to the particle residual, where $1/\widehat\tau$ is the implicit relaxation rate of the P update, with $\widehat\tau=\tau$ for the BGK and the Shakhov model. By \eqref{eq:residual_moments}, $R_P+\Theta$ is then the particle residual with the endpoint taken at the prediction; without this correction the P update would relax the particle toward the endpoint of the old state and undo part of the prediction. The correction vanishes at a stationary state, and the P updates after the first, which start from a state whose endpoint is current, need no such term.

\subsection{Macroscopic prediction and microscopic correction}\label{sec:w_update}
A W update advances the conservative variables while the particle and its flux $\bF^h_P$ are held fixed. Starting from $\bU^{(0)}=\bU$, the $n$th update forms the residual $R_U^{(n-1)}=-\nabla_h\cdot(\bF^h_W(\bU^{(n-1)})+\bF^h_P)$, in which only the wave flux is recomputed, and advances the conservation law in pseudo-time by the backward Euler method in delta form,
\begin{equation}\label{eq:macro_delta}
 \begin{aligned}
 &\frac{|\Omega_i|}{\Delta t_{\rm num}}\Delta\bU_i+\sum_{\Gamma\in\mathcal F_i}|\Gamma|\,\Delta\bF_\Gamma=|\Omega_i|R^{(n-1)}_{U,i},\\
 &\Delta\bF_\Gamma=\frac12\Big[\big(\Delta\bF_{{\rm E},i}+\Delta\bF_{{\rm E},i_\Gamma}\big)\bn_\Gamma+r_\Gamma\big(\Delta\bU_i-\Delta\bU_{i_\Gamma}\big)\Big],
 \end{aligned}
\end{equation}
where $\Delta t_{\rm num}$ is set by an implicit CFL number $C_{\rm imp}$ and the flux increment through a face is approximated by the Rusanov-type splitting of the Euler-flux increments $\Delta\bF_{{\rm E},m}=\bF_{\rm E}(\bU_m+\Delta\bU_m)-\bF_{\rm E}(\bU_m)$ of the two cells. The increments are computed by $m_U$ forward--backward point-relaxation pairs; \ref{app:implicit_sweeps} gives the diagonal coefficient, spectral radius and sweep equations. The state is then updated as $\bU^{(n)}=\bU^{(n-1)}+\theta_U\,\Delta\bU^{[m_U]}$, with $\theta_U\in\{1,\frac12,\frac14,\dots\}$ selected by backtracking to maintain positive density and temperature. When enabled, the distribution safeguard checks $W_h[\bU^{(n)}]+P$ on resolved velocity ordinates, together with moment admissibility and wall outflow. An inadmissible direction is retried using the local physical residual. The face states, wave fluxes and discrete wave are then rebuilt. We write $\bU^{(n)}=\mathcal T_W(\bU^{(n-1)};P)$ for one update and $\widetilde\bU=\mathcal T_W^{N_U}(\bU;P)$ for the prediction of the W iteration.

\paragraph{Microscopic correction and moment synchronization}\label{sec:p_update}
Each P update holds fixed the wave $\widetilde W=W_h[\widetilde\bU]$ associated with its current conservative state. The first update uses the macroscopic prediction; subsequent updates use the state and decomposition refreshed at the preceding acceptance. Linearized about the accepted state, the implicit particle equation reads
\begin{equation}\label{eq:particle_increment}
 \bA_P\Delta P=R_P+\Theta,\qquad
 \bA_P=\Big(\frac{1}{\Delta t_{\rm num}}+\frac{1}{\widehat\tau}\Big)\bI+D^h_{\rm up},
\end{equation}
where $D^h_{\rm up}$ is the first-order upwind discretization of $\bxi\cdot\nabla_{\bx}$, which preconditions the increment of the second-order residual, and $\Theta$ is the correction \eqref{eq:endpoint_correction} in the first P update of an outer iteration and zero afterwards. The remainder $R$ of the collision term, which for the Shakhov model depends on the heat flux of the current distribution, is kept explicit in $R_P$, and the implicit rate $1/\widehat\tau$ bounds the stiffness of the relaxation. For each velocity node, $m_P$ lower--upper symmetric Gauss--Seidel (LU--SGS) pairs approximate this solve from a zero initial increment, as specified in \ref{app:implicit_sweeps}. The accepted state is then formed from the total distribution,
\begin{equation}\label{eq:acceptance}
 f=\widetilde W+P+\theta\,\Delta P^{[m_P]},\qquad \bU\leftarrow\avgh{\bpsi f},\qquad P\leftarrow f-W_h[\bU].
\end{equation}
The discrete backtracking choice $\theta\in\{1,\frac12,\frac14,\dots\}$ used in macroscopic acceptance is replaced here by a continuous cell-wise factor computed from the admissible decrement ratios. When distribution limiting is enabled, it protects $f$ on the resolved velocity ordinates; signed-distribution configurations bypass this check. The limit $\theta\to0$ retains the base distribution $\widetilde W+P$, whose admissibility is checked before acceptance. Because the particle carries conservative moments, the acceptance synchronizes the conservative variables with the corrected total distribution and refreshes the decomposition without changing $f$. We write $(\bU,P)\leftarrow\mathcal T_P(\widetilde\bU,P;\Theta)$ for one P update including the acceptance.

In a closed domain with diffuse walls the steady equations do not determine the total mass, which is fixed by the initial state and which the approximate implicit updates do not conserve exactly. For the cavity and for Couette flow the conservative variables and the particle are therefore multiplied by a common factor at the end of each outer iteration, so that the total mass keeps its initial value, and the wave is rebuilt; the CIS and NS computations are rescaled in the same way. The factor tends to one as the iteration converges, and the constraint $\bU=\avgh{\bpsi f}$, which the rescaling perturbs in proportion to the deviation of the factor from one, is restored by the next acceptance.

\begin{algorithm}[tbp]
\caption{Implicit WPD iteration (IWPD)}
\label{alg:wp}
\begin{algorithmic}
\STATE Given $\bU^0$, set $P^0=M_h[\bU^0]-W_h[\bU^0]$.
\FOR{$k=0,1,\dots$, until the residual norms $\|R_U\|$ and $\|R_P\|$ of \eqref{eq:residual_norms} are below the tolerance}
\STATE evaluate $R_U$, $R_P$ and $P^h_e$ at $(\bU^k,P^k)$;
\STATE W iteration: $\widetilde\bU^{k+1}=\mathcal T_W^{N_U}(\bU^k;P^k)$;
\STATE form the endpoint correction $\Theta^k$ by \eqref{eq:endpoint_correction};
\STATE P iteration: apply $\mathcal T_P$ $N_P$ times, the first time with $\Theta^k$ and afterwards with $\Theta=0$, which gives $(\bU^{k+1},P^{k+1})$.
\ENDFOR
\end{algorithmic}
\end{algorithm}

\subsection{Implementation}\label{app:implicit_sweeps}
The W and P increments use forward--backward sweeps started from zero. Boundary neighbour increments are set to zero, so the prescribed boundary data enter through the residuals.

\subsubsection{Macroscopic point relaxation}
Since $\sum_\Gamma|\Gamma|\bn_\Gamma=\boldsymbol 0$ on a closed cell, the contribution of $\Delta\bF_{{\rm E},i}$ cancels, and \eqref{eq:macro_delta} becomes
\begin{equation}\label{eq:macro_increment}
 d_i\,\Delta\bU_i+\sum_{\Gamma\in\mathcal F_i}G_{i,\Gamma}\big(\Delta\bU_{i_\Gamma}\big)=|\Omega_i|R^{(n-1)}_{U,i},
\end{equation}
with the diagonal coefficient, the neighbour term and the face spectral radius
\begin{equation}\label{eq:neighbour_term}
 \begin{aligned}
 &d_i=\frac{|\Omega_i|}{\Delta t_{\rm num}}+\frac12\sum_{\Gamma\in\mathcal F_i}r_\Gamma|\Gamma|,\qquad
 r_\Gamma=\max_{m\in\{i,i_\Gamma\}}\big(|\bu_m\cdot\bn_\Gamma|+a_m\big)+\frac{2\chi_\Gamma\mu_\Gamma}{\rho_\Gamma\Delta_\Gamma},\\
 &G_{i,\Gamma}(\boldsymbol\delta)=\frac{|\Gamma|}{2}\Big[\big(\bF_{\rm E}(\bU_{i_\Gamma}+\boldsymbol\delta)-\bF_{\rm E}(\bU_{i_\Gamma})\big)\bn_\Gamma-r_\Gamma\boldsymbol\delta\Big],
 \end{aligned}
\end{equation}
where $\mu_\Gamma$ and $\rho_\Gamma$ are the viscosity and the density of the face state, $\Delta_\Gamma$ is the mesh size at the face, and $\chi_\Gamma\in[0,1]$ weights the viscous part of the spectral radius. System \eqref{eq:macro_increment} is solved approximately by $m_U$ point-relaxation pairs. Let $\partial\Omega_i^\mp$ denote the faces whose neighbours precede and follow $\Omega_i$ in a fixed ordering of the cells. Starting from $\Delta\bU^{[0]}=\boldsymbol 0$, the $q$th pair is a forward sweep
\begin{equation}\label{eq:pr_sweeps}
 d_i\Delta\bU_i^{[q-1/2]}=|\Omega_i|R^{(n-1)}_{U,i}-\sum_{\Gamma\in\partial\Omega_i^-}G_{i,\Gamma}\big(\Delta\bU^{[q-1/2]}_{i_\Gamma}\big)-\sum_{\Gamma\in\partial\Omega_i^+}G_{i,\Gamma}\big(\Delta\bU^{[q-1]}_{i_\Gamma}\big),
\end{equation}
followed by a backward sweep, which produces $\Delta\bU^{[q]}$ from $\Delta\bU^{[q-1/2]}$ with the roles of $\partial\Omega_i^-$ and $\partial\Omega_i^+$ interchanged, that is, one LU--SGS step. Neighbour increments at boundary faces are set to zero, so that boundary data enter through the residual only. 
\subsubsection{Microscopic LU--SGS}
For cell $i$ and node $j$ the row of \eqref{eq:particle_increment} is
\begin{equation}\label{eq:particle_row}
 \begin{aligned}
 &d_{i,j}\Delta P_{i,j}+\sum_{\Gamma\in\mathcal F_i}c_{i,\Gamma,j}\Delta P_{i_\Gamma,j}=R_{P,i,j}+\Theta_{i,j},\\
 &d_{i,j}=\frac{1}{\Delta t_{\rm num}}+\frac{1}{\widehat\tau_i}+\sum_{\Gamma\in\mathcal F_i}\frac{(\xi_{n,j})^+|\Gamma|}{|\Omega_i|},\qquad
 c_{i,\Gamma,j}=\frac{(\xi_{n,j})^-|\Gamma|}{|\Omega_i|},
 \end{aligned}
\end{equation}
with $x^\pm=(x\pm|x|)/2$, so that only inflow faces couple a cell to its upwind neighbours. As in the W update, neighbour increments at boundary faces are set to zero. Let $\bA_P=\mathbf D+\bA_P^-+\bA_P^+$ be the splitting into the diagonal part and the parts that couple a cell to the cells preceding and following it in the cell ordering. One kinetic LU--SGS pair, a forward sweep followed by a backward sweep carried out independently for every node, applies $\bM_P^{-1}=(\mathbf D+\bA_P^+)^{-1}\mathbf D\,(\mathbf D+\bA_P^-)^{-1}$, and the increment is obtained by $m_P$ preconditioned iterations started from $\Delta P^{[0]}=0$,
\begin{equation}\label{eq:particle_lusgs}
 \Delta P^{[q]}=\Delta P^{[q-1]}+\bM_P^{-1}\big(R_P+\Theta-\bA_P\Delta P^{[q-1]}\big),\qquad q=1,\dots,m_P .
\end{equation}

\paragraph{Computational cost}\label{sec:implementation}
Algorithm~\ref{alg:wp} specifies the outer iteration, with configuration W$N_U$/\allowbreak PR$m_U$/\allowbreak P$N_P$ and $m_P$ stated separately. The macroscopic solver receives the fixed particle flux, and the microscopic solver receives the wave transport and endpoint correction. The collision remainder is explicit in the P update. IWPD stores three additional velocity-resolved arrays relative to CIS---the wave, particle and endpoint---and its kinetic sweeps are independent across velocity nodes.

The cost of one outer iteration is approximately
\begin{equation}\label{eq:cost}
 C_{\rm outer}\approx N_U\big(m_UC_{\rm sweep}+C_{\rm wave}\big)+N_P\big(C_{\rm res}+m_PC_{\rm kin}\big),
\end{equation}
where the macroscopic sweeps $C_{\rm sweep}$ are independent of the velocity grid and the rebuild of the wave $C_{\rm wave}$ requires one evaluation of the discrete wave per velocity node, whereas the residual evaluation $C_{\rm res}$ and the kinetic sweeps $C_{\rm kin}$ transport both components on the velocity grid. The macroscopic part is comparable in cost with $N_U$ implicit steps of a Navier--Stokes computation with the same solver, and the kinetic part grows with the number of velocity nodes like an iteration of the CIS. The choice of the controls is examined in Section~\ref{sec:controls}.

\section{Consistency and convergence analysis}\label{sec:analysis}
The analysis establishes the relation between the closed formulation and the implicit iteration. The stationary-state result is exact under discrete moment compatibility. The fluid-limit and prediction-transfer arguments are formal analyses for smooth states. Convergence factors are derived for linear model problems and compared with the numerical histories in Section~\ref{sec:convergence}.

\paragraph{Discrete collision conditions}
\begin{enumerate}\renewcommand{\theenumi}{(D\arabic{enumi})}\renewcommand{\labelenumi}{\theenumi}
\item\label{D1} The discrete target conserves the quadrature moments, $\avgh{\bpsi(S^h(f)-f)}=\boldsymbol 0$ for every discrete pair $f$.
\item\label{D2} The discrete Maxwellian satisfies $\avgh{\bpsi M_h[\bU]}=\bU$ and has zero discrete heat flux. At prescribed conservative moments it is the unique fixed point of $S^h$. For the discrete Shakhov target, the heat-flux moment is $(1-\Pr)$ times that of $f$, with $\Pr>0$.
\item\label{D3} The horizons satisfy $\ell_i\ge\ell_{\min}>0$ independently of $\varepsilon$.
\end{enumerate}

These conditions specify the discrete operators used in the analysis. Condition \ref{D1} is enforced by conservative moment matching. In condition \ref{D2}, the conservative moments can likewise be matched by a lift of the form \eqref{eq:lift}, provided the local lifting system is nonsingular; the equilibrium and Shakhov heat-flux identities are additional requirements, satisfied to velocity-quadrature accuracy for sampled analytic targets. The exact statements below use these identities, with quadrature defects accounted for in their numerical realization. Condition \ref{D3} follows from \eqref{eq:discrete_horizon} with $C_\ell>0$ on a fixed mesh when cell velocities and sound speeds remain bounded.

\subsection{Stationary-state consistency}\label{sec:stationary}
The initial state of Algorithm~\ref{alg:wp} and every accepted state satisfy the constraint $\bU=\avgh{\bpsi f}$ with $f=W_h[\bU]+P$, by the choice of $P^0$ and by the acceptance \eqref{eq:acceptance}; we call such states consistent. A consistent state $X\in\cX$ is stationary if every W update and every P update of one outer iteration started from $X$ produces a zero increment. Since the acceptance leaves a consistent state with zero increments unchanged, a stationary state is a fixed point of the outer iteration, and a superscript $*$ marks quantities evaluated at such a state.

Two properties of the updates establish the stationary-state result. A W update must return a zero increment when the macroscopic residual vanishes, which is evident from \eqref{eq:pr_sweeps}, since the sweeps start from $\Delta\bU^{[0]}=\boldsymbol 0$ and $G_{i,\Gamma}(\boldsymbol 0)=\boldsymbol 0$. A P update must return a zero increment only when its right-hand side vanishes, which follows for the boundary-lagged upwind increment operator from its M-matrix structure.

\begin{lemma}[Invertibility of the particle increment]\label{lem:particle_increment}
With incoming boundary increments held fixed during a sweep, for every node $j$ the matrix $\bA_P$ of \eqref{eq:particle_row} is a nonsingular M-matrix, the LU--SGS splitting $\bA_P=\bM_P-\mathbf E_P$ with $\mathbf E_P=\bA_P^-\mathbf D^{-1}\bA_P^+$ is a regular splitting with $\rho(\bM_P^{-1}\mathbf E_P)<1$, and for every $m_P\ge1$ the increment of \eqref{eq:particle_lusgs} is
\begin{equation}\label{eq:increment_formula}
 \Delta P^{[m_P]}=\big(\bI-\mathbf G_P^{m_P}\big)\bA_P^{-1}(R_P+\Theta),\qquad \mathbf G_P=\bM_P^{-1}\mathbf E_P,
\end{equation}
with $\bI-\mathbf G_P^{m_P}$ nonsingular. Hence $\Delta P^{[m_P]}=0$ if and only if $R_P+\Theta=0$.
\end{lemma}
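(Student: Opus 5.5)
Fix a velocity node $j$; the matrix $\bA_P$ acts on the cell index only, so all statements are per node. The plan has three parts: M-matrix structure, regular splitting, and the closed form of the iterates.

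\emph{M-matrix.} From \eqref{eq:particle_row}, the diagonal entries are $d_{i,j}>0$ and the off-diagonal entries $c_{i,\Gamma,j}=(\xi_{n,j})^-|\Gamma|/|\Omega_i|\le0$, so $\bA_P$ is a Z-matrix. Boundary neighbour increments are held fixed, i.e.\ set to zero, so boundary faces contribute no off-diagonal entries. For strict diagonal dominance I will bound the off-diagonal column sums. For a face $\Gamma$ between $i$ and $i_\Gamma$, the entry $|c_{i,\Gamma,j}|$ in row $i$, column $i_\Gamma$, equals $(\xi_{n',j})^+|\Gamma|/|\Omega_i|$ with the normal $n'$ outward from $i_\Gamma$. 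I will weight by volumes, i.e.\ consider $\mathbf V\bA_P$ with $\mathbf V=\operatorname{diag}(|\Omega_i|)$. Then the column sum of the off-diagonal magnitudes in column $i_\Gamma$ is at most $\sum_{\Gamma'\in\mathcal F_{i_\Gamma}}(\xi_{n,j})^+|\Gamma'|$. This is exactly the outflow part of $|\Omega_{i_\Gamma}|d_{i_\Gamma,j}$, and the strictly positive term $|\Omega_{i_\Gamma}|(1/\Delta t_{\rm num}+1/\widehat\tau)$ is left over. So $\mathbf V\bA_P$ is strictly column diagonally dominant with positive diagonal, hence a nonsingular M-matrix. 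Since $\mathbf V$ is a positive diagonal matrix, $\bA_P$ is a nonsingular M-matrix as well.

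\emph{Regular splitting.} Write $\bM_P=(\mathbf D+\bA_P^-)\mathbf D^{-1}(\mathbf D+\bA_P^+)=\mathbf D+\bA_P^-+\bA_P^++\bA_P^-\mathbf D^{-1}\bA_P^+$, so $\mathbf E_P=\bM_P-\bA_P=\bA_P^-\mathbf D^{-1}\bA_P^+$. This is nonnegative, as a product of two nonpositive matrices with a positive diagonal matrix between them. The triangular factors $\mathbf D+\bA_P^\pm$ are Z-matrices with positive diagonal. Each inherits strict column diagonal dominance after the $\mathbf V$ scaling, since deleting off-diagonal entries preserves it. Hence both are nonsingular M-matrices with nonnegative inverses. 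Therefore $\bM_P^{-1}=(\mathbf D+\bA_P^+)^{-1}\mathbf D(\mathbf D+\bA_P^-)^{-1}\ge0$, and $\bA_P=\bM_P-\mathbf E_P$ is a regular splitting. Varga's theorem, which applies because $\bA_P^{-1}\ge0$, then gives $\rho(\mathbf G_P)<1$.

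\emph{Increment formula.} Set $b=R_P+\Theta$. Since $\bM_P^{-1}\bA_P=\bI-\mathbf G_P$, the recursion \eqref{eq:particle_lusgs} reads $\Delta P^{[q]}=\mathbf G_P\Delta P^{[q-1]}+\bM_P^{-1}b$. Starting from $\Delta P^{[0]}=0$, this gives
\begin{equation*}
\Delta P^{[m]}=\sum_{k=0}^{m-1}\mathbf G_P^k\bM_P^{-1}b=(\bI-\mathbf G_P^{m})(\bI-\mathbf G_P)^{-1}\bM_P^{-1}b=(\bI-\mathbf G_P^{m})\bA_P^{-1}b .
\end{equation*}
Every eigenvalue of $\mathbf G_P^{m}$ has modulus at most $\rho(\mathbf G_P)^m<1$, so $1$ is not an eigenvalue and $\bI-\mathbf G_P^{m}$ is nonsingular. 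Composed with the nonsingular $\bA_P^{-1}$, this shows $\Delta P^{[m_P]}=0$ if and only if $b=0$.

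The main obstacle is the diagonal-dominance step. Rows of $\bA_P$ are not dominant in general, because a cell may have more inflow than outflow flux weight for a discrete velocity. The argument must therefore use columns weighted by cell volumes, and it must check carefully that each inflow coefficient in a row is matched to an outflow term in the diagonal of the neighbouring cell. The positive margin $1/\Delta t_{\rm num}+1/\widehat\tau$ then gives strictness without any mesh condition.
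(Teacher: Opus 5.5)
Your proof is correct and follows the paper's structure: the Z-matrix sign pattern plus strict diagonal dominance, the factorization $\bM_P=(\mathbf D+\bA_P^-)\mathbf D^{-1}(\mathbf D+\bA_P^+)$ giving $\mathbf E_P=\bA_P^-\mathbf D^{-1}\bA_P^+\ge0$, Varga's regular-splitting theorem, and the geometric sum for the iterates.

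The only real difference is the dominance step, and there your stated reason for avoiding rows is mistaken. On a closed cell, $\sum_{\Gamma\in\mathcal F_i}\xi_{n,j}|\Gamma|=\bxi_j\cdot\sum_{\Gamma\in\mathcal F_i}|\Gamma|\bn_\Gamma=0$. So for a fixed discrete velocity, the inflow weight $\sum_\Gamma|(\xi_{n,j})^-||\Gamma|$ and the outflow weight $\sum_\Gamma(\xi_{n,j})^+|\Gamma|$ are exactly equal. A cell can never have more inflow than outflow.

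The signed row sum of $\bA_P$ is therefore exactly $1/\Delta t_{\rm num}+1/\widehat\tau$ in an interior cell. In a boundary cell it is at least that, because only nonpositive neighbour entries are dropped while the outflow terms stay on the diagonal. This is the paper's argument: $\bA_P$ itself is strictly row diagonally dominant, with no rescaling.

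Your volume-weighted column argument is nevertheless valid. Each inflow coefficient in row $i$, multiplied by $|\Omega_i|$, equals the outflow term of the same face in $|\Omega_{i_\Gamma}|d_{i_\Gamma,j}$. This route buys a little generality. It uses only that adjacent cells share $|\Gamma|$ and see opposite normals, not the closure identity $\sum_\Gamma|\Gamma|\bn_\Gamma=\boldsymbol 0$, so it would survive face data that do not close exactly. The row argument is shorter and gives dominance of the unscaled matrix directly.

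Finally, your step showing that the triangular factors $\mathbf D+\bA_P^\pm$ are M-matrices needs no dominance at all. A triangular Z-matrix with positive diagonal automatically has a nonnegative inverse.
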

\begin{proof}
The diagonal entries $d_{i,j}$ are positive and the off-diagonal entries $c_{i,\Gamma,j}$ are nonpositive. Since $\sum_{\Gamma\in\mathcal F_i}|\Gamma|\bn_\Gamma=\boldsymbol 0$,
\begin{equation*}
 \sum_{\Gamma\in\mathcal F_i}\frac{\big[(\xi_{n,j})^++(\xi_{n,j})^-\big]|\Gamma|}{|\Omega_i|}=\sum_{\Gamma\in\mathcal F_i}\frac{\xi_{n,j}|\Gamma|}{|\Omega_i|}=0,
\end{equation*}
so that the row sum of $\bA_P$ over the interior faces is at least $1/\Delta t_{\rm num}+1/\widehat\tau_i>0$, and boundary faces remove nonpositive neighbour entries while retaining outgoing contributions on the diagonal. This argument concerns the boundary-lagged operator; a fully coupled diffuse-reflection increment also couples velocity nodes and requires a separate matrix analysis. Thus $\bA_P$ is a strictly diagonally dominant Z-matrix and hence a nonsingular M-matrix. The triangular factors $\mathbf D+\bA_P^-$ and $\mathbf D+\bA_P^+$ are M-matrices with nonnegative inverses, so that $\bM_P^{-1}\ge0$, and $\mathbf E_P=\bM_P-\bA_P=\bA_P^-\mathbf D^{-1}\bA_P^+\ge0$. The splitting is therefore regular and, since $\bA_P^{-1}\ge0$, convergent, $\rho(\mathbf G_P)<1$, by the theorem on regular splittings~\cite{varga2000}. Iterating \eqref{eq:particle_lusgs} with $\mathbf G_P=\bI-\bM_P^{-1}\bA_P$ gives
\begin{equation*}
 \Delta P^{[m]}=\sum_{q=0}^{m-1}\mathbf G_P^q\bM_P^{-1}(R_P+\Theta)=(\bI-\mathbf G_P^{m})(\bI-\mathbf G_P)^{-1}\bM_P^{-1}(R_P+\Theta),
\end{equation*}
which is \eqref{eq:increment_formula} because $(\bI-\mathbf G_P)^{-1}\bM_P^{-1}=\bA_P^{-1}$, and $\bI-\mathbf G_P^m$ is nonsingular because $\rho(\mathbf G_P)<1$.
\end{proof}

\begin{theorem}[Stationary states]\label{thm:stationary}
Let \ref{D1} hold. Then a consistent state $X^*=(\bU^*,P^*)$ is stationary if and only if $R_P(X^*)=0$, that is, if and only if $f^*=W_h[\bU^*]+P^*$ solves the discrete kinetic equation \eqref{eq:discrete_kinetic} with $\bU^*=\avgh{\bpsi f^*}$. In that case $R_U(X^*)=\boldsymbol 0$. The set of stationary states does not depend on $\Delta t_{\rm num}$, $\widehat\tau$, $r_\Gamma$, $\chi_\Gamma$, $N_U$, $m_U$, $N_P$ or $m_P$.
\end{theorem}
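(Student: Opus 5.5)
The argument runs in both directions. In each direction we start from a consistent state and follow one outer iteration of Algorithm~\ref{alg:wp}, using two facts stated just before the theorem. First, a W update returns a zero increment whenever $R_U=\boldsymbol 0$, because the sweeps \eqref{eq:pr_sweeps} start from zero and $G_{i,\Gamma}(\boldsymbol 0)=\boldsymbol 0$. Second, Lemma~\ref{lem:particle_increment} says $\Delta P^{[m_P]}=0$ if and only if $R_P+\Theta=0$.

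\emph{Sufficiency.} Let $X^*$ be consistent with $R_P(X^*)=0$.
\begin{enumerate}
\item Condition \ref{D1} and the flux compatibility of Lemma~\ref{lem:flux_compatibility} give $\avgh{\bpsi R_P}=R_U$ in \eqref{eq:residual_moments}, so $R_U(X^*)=\boldsymbol 0$.
\item Hence the first W update has zero increment. Since $\theta_U\Delta\bU=\boldsymbol 0$, the state is unchanged, the residual is still zero, and by induction all $N_U$ updates give zero increment. The prediction is $\widetilde\bU=\bU^*$.
\item Therefore $\Theta=[P^h_e(\bU^*)-P^h_e(\bU^*)]/\widehat\tau=0$ by \eqref{eq:endpoint_correction}.
\item The first P update solves with right-hand side $R_P+\Theta=0$, so by Lemma~\ref{lem:particle_increment} its increment is zero.
\item Acceptance \eqref{eq:acceptance} then gives $f=W_h[\bU^*]+P^*$. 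Consistency returns $\bU^*$ and $P^*$, so the state is reproduced and the later P updates (with $\Theta=0$) are zero as well.
\end{enumerate}

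\emph{Necessity.} Let $X^*$ be stationary, so every update in one outer iteration has zero increment.
\begin{enumerate}
\item The W increments are zero, so $\widetilde\bU=\bU^*$ and hence $\Theta=0$.
\item The first P increment is zero, so Lemma~\ref{lem:particle_increment} gives $R_P(X^*)+\Theta=0$, i.e.\ $R_P(X^*)=0$.
\end{enumerate}
Step~2 uses the lemma's full characterization: the LU--SGS increment is $(\bI-\mathbf G_P^{m_P})\bA_P^{-1}(R_P+\Theta)$ with a nonsingular prefactor.

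\emph{Characterization.} By the second form of $R_P$ in \eqref{eq:residual_moments}, $R_P=0$ is exactly \eqref{eq:discrete_kinetic} with $f^*=W_h[\bU^*]+P^*$. Consistency supplies $\bU^*=\avgh{\bpsi f^*}$. We should check that the first identity in \eqref{eq:residual_moments} is pure algebra from \eqref{eq:residuals}, i.e.\ that the $M$ terms cancel without invoking \ref{D2}; it is.

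\emph{Independence of the controls.} The condition $R_P(X^*)=0$ together with consistency involves only $W_h$, $D^h_W$, $D^h_P$, $S^h$ and $\tau$. None of these contains $\Delta t_{\rm num}$, $\widehat\tau$, $r_\Gamma$, $\chi_\Gamma$, $N_U$, $m_U$, $N_P$ or $m_P$; the listed controls enter only $\bA_P$, $d_i$ and $G_{i,\Gamma}$. Lemma~\ref{lem:particle_increment} holds for every positive $\Delta t_{\rm num}$ and $\widehat\tau$ and every $m_P\ge1$, so the equivalence itself is uniform in the controls. A subtle point is that the averaging interval $\Delta t$ enters the fluxes \eqref{eq:face_wave} and \eqref{eq:sn_face}. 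That is fine because it is a discretization parameter, not one of the listed controls.

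\emph{Main obstacle.} The delicate step is the necessity direction. There we must ensure that a zero accepted increment really means $\Delta P^{[m_P]}=0$, not merely $\theta\Delta P=0$ with $\theta=0$ from limiting. I would interpret stationarity as zero computed increments, as the definition states, and note that the backtracking factor is applied after the increment is formed.
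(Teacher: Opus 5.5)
Your proposal is correct and follows the paper's proof essentially step for step: zero W increments give $\widetilde\bU=\bU^*$ and hence $\Theta=0$; the nonsingular increment map of Lemma~\ref{lem:particle_increment} turns a zero first P increment into $R_P(X^*)=0$, and conversely; the identity $R_U=\avgh{\bpsi R_P}$, which follows from \ref{D1} and flux compatibility, gives the vanishing macroscopic residual; and consistency makes the acceptance return $X^*$ unchanged. Two of your remarks make explicit points the paper leaves implicit, and both are accurate: the first identity in \eqref{eq:residual_moments} is pure algebra and needs no \ref{D2}, and stationarity must be read as a zero computed increment $\Delta P^{[m_P]}$ rather than a zero accepted increment, since a limiter with $\theta=0$ could produce the latter.
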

\begin{proof}
Let $X^*$ be stationary. Since the W updates produce no increment, the prediction equals $\bU^*$ and the correction \eqref{eq:endpoint_correction} vanishes. Since the first P update produces no increment either, $R_P(X^*)=0$ by the nonsingular increment map \eqref{eq:increment_formula}, which is \eqref{eq:discrete_kinetic}, and $R_U(X^*)=\avgh{\bpsi R_P(X^*)}=\boldsymbol 0$ follows from \eqref{eq:residual_moments} and \ref{D1}. Conversely, if $R_P(X^*)=0$, then $R_U(X^*)=\boldsymbol 0$ by the same identity. The sweeps \eqref{eq:pr_sweeps} of every W update then start from a zero increment with a zero right-hand side and return a zero increment, the correction vanishes, every P update has a zero increment by the nonsingular increment map \eqref{eq:increment_formula}, and the acceptance \eqref{eq:acceptance} returns $X^*$ unchanged, because $X^*$ is consistent. The listed quantities enter only the increment operators, which multiply residuals that vanish at a stationary state.
\end{proof}

By \eqref{eq:residual_moments}, $R_P$ is the residual of the discrete kinetic equation \eqref{eq:discrete_kinetic} and contains none of the controls, so that a state that passes the stopping test of Algorithm~\ref{alg:wp} is an approximate solution of this equation, whatever the controls.

\subsection{Hydrodynamic consistency and prediction transfer}\label{sec:ap}
For smooth stationary states and positive horizons, the following formal expansion identifies the limit equations as $\varepsilon\to0$ on a fixed mesh. We write $f^{(1),h}$, $W^{(1),h}$ and $S^{(1),h}$ for the first-order terms in $\varepsilon$ of the discrete distribution, wave and target at the stationary state.

\begin{proposition}[Formal fluid limit]\label{prop:formal_fluid_limit}
Let \ref{D1}--\ref{D3} hold, let $\tau=\varepsilon\bar\tau$, and let the stationary solution be smooth on a fixed mesh of cell size $\Delta_i$. Then, formally as $\varepsilon\to0$,
\begin{equation}\label{eq:discrete_ce}
 f^*=M^*+\varepsilon f^{(1),h}+\mathcal O(\varepsilon^2),\qquad P^*=\varepsilon S^{(1),h}+\mathcal O(\varepsilon\Delta_i)+\mathcal O(\varepsilon^2),
\end{equation}
and the quadrature moments of \eqref{eq:discrete_kinetic} reduce to the finite-volume Euler equations at leading order and to the gas-kinetic Navier--Stokes scheme \eqref{eq:gks_flux} with $A_0=A_1=1$ and Prandtl number $\Pr$ at first order, up to terms of order $\varepsilon\Delta_i$.
\end{proposition}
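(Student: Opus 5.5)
We expand the discrete kinetic equation \eqref{eq:discrete_kinetic} formally in $\varepsilon$ at a smooth stationary state on a fixed mesh. Throughout we use that, by \ref{D3}, $\eta_i=\ell_i/(\varepsilon\bar\tau_i)\ge\ell_{\min}/(\varepsilon\bar\tau_{\max})\to\infty$, so that $A_0(\eta_i),A_1(\eta_i)$ and $A_0(\eta_\Gamma),A_1(\eta_\Gamma)$ all equal $1$ up to exponentially small terms $\mathcal O(\eta^ke^{-\eta})=o(\varepsilon^n)$ for every $n$.

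\emph{Leading order.} We first multiply \eqref{eq:discrete_kinetic} by $\tau=\varepsilon\bar\tau$. The transport terms are bounded on the fixed mesh, so at $\mathcal O(1)$ we obtain $S^h(f^{(0)})=f^{(0)}$. Since $\avgh{\bpsi f^{(0)}}=\bU^*$ by consistency, \ref{D2} gives $f^{(0)}=M_h[\bU^*]=M^*$.

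\emph{First order.} Next we collect the $\mathcal O(\varepsilon)$ terms, which gives
\[
 f^{(1),h}=S^{(1),h}-\bar\tau\big(D^h_W(W_h)+D^h_P(P)\big)^{(0)}.
\]
For the particle, we write $P^*=f^*-W_h[\bU^*]$ and insert \eqref{eq:discrete_wave} with $A_0=A_1=1$. This yields
\[
 P^*=\varepsilon\Big[S^{(1),h}-\bar\tau\big(D^h_W+D^h_P\big)^{(0)}+\bar\tau\big(\Phi[\mathbf a_t]+\textstyle\sum_\alpha\xi_\alpha\Phi[\mathbf a_\alpha]\big)M^*\Big]+\mathcal O(\varepsilon^2).
\]
The leading particle is $P^{(0)}=M^*-W^{(0)}=0$, so the transport at order zero is $D^h_W(W^{(0)})$. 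This is the cell divergence of the face Maxwellian fluxes. For smooth data it equals the cell-centred $\bxi\cdot\nabla M^*$, expressed through the limited micro-slopes and the compatible time slope, up to $\mathcal O(\Delta_i)$. Here we use $\partial_t$ from \eqref{eq:gks_compatibility}, together with the fact that the stationary Euler balance holds to leading order. The two bracketed derivative terms therefore cancel up to $\mathcal O(\Delta_i)$, leaving $P^*=\varepsilon S^{(1),h}+\mathcal O(\varepsilon\Delta_i)+\mathcal O(\varepsilon^2)$.

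\emph{Moments.} We then take quadrature moments of \eqref{eq:discrete_kinetic}. By \ref{D1} and Lemma~\ref{lem:flux_compatibility} this gives $\nabla_h\cdot(\bF^h_W+\bF^h_P)=\boldsymbol 0$.
\begin{itemize}
\item At $\mathcal O(1)$, the flux \eqref{eq:gks_flux} with $A_0=1$ reduces to the finite-volume Euler flux of $\bU_\Gamma$, including its $\Delta t/2$ time-averaging term. The particle flux $\bF^h_P$ is $\mathcal O(\varepsilon)$.
\item At $\mathcal O(\varepsilon)$, the wave contributes $A_1\bF_{\rm R}$ with $A_1=1$, evaluated with $\tau_\Gamma$. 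This carries the full stress and the fraction $\Pr$ of the heat flux.
\item Also at $\mathcal O(\varepsilon)$, the particle flux is the upwind flux \eqref{eq:sn_face} of $\varepsilon S^{(1),h}$ plus $\mathcal O(\varepsilon\Delta_i)$. By \ref{D2} its only nonzero conservative-flux moment is the energy component, which carries the heat-flux share $(1-\Pr)$ of the Shakhov target. To first order in the mesh, this is the discrete $(1-\Pr)\qNS$, i.e. $\bF_{\rm S}$.
\end{itemize}
Summing the wave and particle contributions recovers the gas-kinetic Navier--Stokes scheme with Prandtl number $\Pr$, up to $\mathcal O(\varepsilon\Delta_i)$.

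\emph{Main obstacle.} The hardest step is the cancellation in the particle expansion. It requires showing that the discrete transport of the equilibrium wave, $D^h_W(W_h)$ with face reconstructions, limiters and the lift \eqref{eq:lift}, agrees with the cell-centred micro-slope derivative in \eqref{eq:discrete_wave} to $\mathcal O(\Delta_i)$. We would attempt this by Taylor expansion of smooth face states, assuming the limiters are inactive in smooth regions, and by noting that the lift coefficients $\boldsymbol\lambda_\Gamma$ are of quadrature order. A secondary point is identifying the particle's heat-flux moment with the discrete Fourier flux. This depends on $S^{(1),h}$ containing $-(1-\Pr)$-weighted $\bar\tau\cD^{(0)}M$ information through $\bq$, which follows by applying \ref{D2} to the heat-flux moment of $f^{(1),h}$.
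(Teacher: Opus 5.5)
Your proposal is correct and follows essentially the same formal Chapman--Enskog argument as the paper: the leading-order fixed point of $S^h$ via \ref{D2}, exponentially small $e^{-\eta_i}$ via \ref{D3}, the first-order term $f^{(1),h}=S^{(1),h}-\bar\tau D^h_W(M^*)$ whose moments give the finite-volume Euler system by \eqref{eq:flux_compatibility}, and the particle reduced to $\varepsilon S^{(1),h}$ up to the $\mathcal O(\Delta_i)$ consistency error between $D^h_W(M^*)$ and the micro-slope derivative, with the time slope vanishing at leading order in the steady state. Two minor points in your heat-flux remark: applying \ref{D2} to $f^{(1),h}$ makes the heat flux of $S^{(1),h}$ equal to $(1-\Pr)/\Pr$ times that of $-\bar\tau D^h_W(M^*)$ (this is where $\Pr>0$ is used), not a $-(1-\Pr)$ weight, and the vanishing stress moment of $S^h-M$ comes from the odd structure of the Shakhov correction rather than from \ref{D2} itself.
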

\begin{proof}
Equation \eqref{eq:discrete_kinetic} reads $f^*=S^h(f^*)-\varepsilon\bar\tau[D^h_W(W^*)+D^h_P(P^*)]$. At leading order $f^*=S^h(f^*)$, whose unique solution at fixed conservative moments is the discrete Maxwellian by \ref{D2}; hence $f^*=M^*+\mathcal O(\varepsilon)$. By \ref{D3}, $e^{-\eta_i}$ is smaller than any power of $\varepsilon$, so that \eqref{eq:discrete_wave} gives $W^*=M^*+\varepsilon W^{(1),h}+\mathcal O(\varepsilon^2)$ with $W^{(1),h}=-\bar\tau[\cD^{(0)}M]^h$, the transport derivative formed with the micro-slopes, while $P^*=\mathcal O(\varepsilon)$ enters the transport only at the next order. At first order, $f^{(1),h}=S^{(1),h}-\bar\tau\,D^h_W(M^*)$, where $D^h_W(M^*)$ is the leading-order part of $D^h_W(W^*)$; its solvability condition, the vanishing of the quadrature moments of $D^h_W(M^*)$, is the finite-volume Euler system by the compatibility identity \eqref{eq:flux_compatibility}. The particle is $P^*=\varepsilon(f^{(1),h}-W^{(1),h})=\varepsilon S^{(1),h}$ up to $\varepsilon$ times the consistency error between $D^h_W(M^*)$ and $[\cD^{(0)}M]^h$, which is of the order of the cell size and whose time-derivative part vanishes at leading order in a steady state. The gas-kinetic wave flux then supplies $\bF_{\rm E}+\bF_{\rm R}$ as in Theorem~\ref{thm:spectrum}, and the flux of the $\mathcal O(\varepsilon)$ particle supplies $\bF_{\rm S}$ with an error of order $\varepsilon\Delta_i$.
\end{proof}

This is the asymptotic-preserving property in the sense of consistency: on meshes that need not resolve the mean free path, the wave flux supplies the Euler and relaxation fluxes without quadrature error, and the particle supplies the flux of the remainder, for the Shakhov model the Prandtl-number correction.

\paragraph{Formal transfer of the macroscopic prediction}\label{sec:transfer}
For one outer iteration with $N_P=1$, write $\widetilde M=M_h[\widetilde\bU^{k+1}]$ for the Maxwellian of the macroscopic prediction. At smooth near-equilibrium states, locally fixed $\eta$ and $\widehat\tau=\tau$, the endpoint correction satisfies
\begin{equation}\label{eq:endpoint_transfer}
 W_h[\bU]=(1-e^{-\eta})M+\mathcal O(\varepsilon),\qquad
 \tau\Theta^k=e^{-\eta}(\widetilde M-M^k)+\mathcal O(\varepsilon).
\end{equation}
It therefore supplies the equilibrium change missing from the wave. For an exact particle solve, multiplying \eqref{eq:particle_increment} by $\tau$ gives
\begin{equation}\label{eq:transfer_solve}
 \big[\bI+\tau(\Delta t_{\rm num}^{-1}\bI+D^h_{\rm up})\big]\Delta P
       =\tau R_P^k+\tau\Theta^k.
\end{equation}
If $\tau R_P^k=\mathcal O(\varepsilon)$, $\tau/\Delta t_{\rm num}=\mathcal O(\varepsilon)$, the transport of the smooth predicted increment is bounded, and the inverse in \eqref{eq:transfer_solve} is uniformly bounded, then the exact solve gives $\Delta P=e^{-\eta}(\widetilde M-M^k)+\mathcal O(\varepsilon)$. With an inexact solve the same conclusion requires an $\mathcal O(\varepsilon)$ error in this increment. Under these conditions and with the acceptance factor equal to unity, acceptance gives
\begin{equation}\label{eq:hard_pull}
 f^{k+1}=\widetilde M+\mathcal O(\varepsilon),\qquad
 \bU^{k+1}=\widetilde\bU^{k+1}+\mathcal O(\varepsilon).
\end{equation}
This transfer result links the accuracy of the microscopic increment to that of the macroscopic prediction. Keeping $\eta$ finite under mesh refinement can make $\tau/\Delta_i$ of order one, so the transport matrix cannot then be discarded as an $\mathcal O(\varepsilon)$ perturbation in operator norm. On a fixed mesh with positive horizons, $\eta\to\infty$ and the equilibrium endpoint correction becomes exponentially small.

\subsection{Model convergence analysis}\label{sec:rates}
We quantify the convergence rate in the simplest model that retains transport, relaxation and the horizon split, namely the steady scalar kinetic equation
\begin{equation}\label{eq:model_problem}
 \bxi\cdot\nabla_{\bx}f=\frac{r-f}{\tau},\qquad r=\frac{1}{4\pi}\int_{|\bxi|=1}f\dd\bxi.
\end{equation}
This single-invariant model isolates the transport mechanism at constant relaxation time $\tau$, with molecular velocities of unit speed and isotropic directions. The angular average $r$ of the distribution is its only collision invariant and takes the place of the conservative variables. For a periodic perturbation of wavenumber $\vk$ let $\zeta\in[-1,1]$ be the cosine between $\bxi$ and the wave vector, so that the angular average is $\frac12\int_{-1}^1\cdot\,\dd\zeta$, and put $z=\vk\tau$ and $\eta=\ell/\tau$, with $i$ the imaginary unit. In this model the W iteration is converged with $P$ fixed and the P update is inverted exactly with $\Delta t_{\rm num}\to\infty$, $\widehat\tau=\tau$, $N_P=1$ and $\theta=1$, whereas the CIS obtains $f^{k+1}$ from $\bxi\cdot\nabla_{\bx}f^{k+1}+f^{k+1}/\tau=r^k/\tau$.

\begin{theorem}[Convergence factors]\label{thm:factors}
In the model problem one CIS iteration multiplies the error amplitude of the density mode by $g_{\rm CIS}$ and one outer iteration of IWPD by $g_{\rm WP}$, where
\begin{equation}\label{eq:factors}
 g_{\rm CIS}(z)=\frac{\arctan z}{z},\qquad
 g_{\rm WP}(z,\eta)=\Big|1-g_{\rm CIS}(z)+\frac{\varphi_P(z,\eta)}{\varphi_W(z,\eta)}\Big|,
\end{equation}
with the flux responses of the particle and of the wave
\begin{equation}\label{eq:flux_responses}
 \varphi_P=\frac{e^{-\eta}}{2}\int_{-1}^{1}\frac{\zeta e^{-iz\eta\zeta}}{1+iz\zeta}\dd\zeta,\qquad
 \varphi_W=\frac12\int_{-1}^{1}\frac{\zeta\big(1-e^{-\eta}e^{-iz\eta\zeta}\big)}{1+iz\zeta}\dd\zeta .
\end{equation}
With $\beta(\eta)=(1+\eta)e^{-\eta}$, the factors satisfy, for $0<z\le1$ and $\beta(\eta)+3z^2/5<1$,
\begin{equation}\label{eq:bounds}
 g_{\rm CIS}(z)\ge1-\frac{z^2}{3},\qquad
 g_{\rm WP}(z,\eta)\le\frac{z^2}{3}+\frac{\beta(\eta)}{1-\beta(\eta)-3z^2/5}.
\end{equation}
\end{theorem}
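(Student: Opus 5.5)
The plan is to reduce both iterations to scalar Fourier-mode computations on a single density mode. For the CIS I take $f=F(\zeta)e^{i\vk\bn\cdot\bx}$ and $r^k=\hat r^k e^{i\vk\bn\cdot\bx}$. The CIS step gives $(1+iz\zeta)F=\hat r^k$, so $\hat r^{k+1}=\tfrac12\int_{-1}^1(1+iz\zeta)^{-1}\dd\zeta\,\hat r^k$. Symmetrizing $\zeta\to-\zeta$ turns the integrand into $1/(1+z^2\zeta^2)$, and integrating gives $\arctan z/z$. The lower bound $g_{\rm CIS}\ge1-z^2/3$ for $0<z\le1$ then follows from the alternating Taylor series of $\arctan$, truncated after the second term.

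For IWPD I first compute the Fourier symbols of the objects in \eqref{eq:wp_def} for the model \eqref{eq:model_problem}. With frozen $\tau$ and unit speed, $W[r]=\int_0^\ell r(\bx-\bxi s)e^{-s/\tau}\tau^{-1}\dd s$ has symbol $\hat r\,(1-e^{-\eta}e^{-iz\eta\zeta})/(1+iz\zeta)$. Its flux $\tfrac12\int\zeta\,\hat W\dd\zeta$ is therefore $\varphi_W\hat r$. The endpoint $P_e=\tau\cS_\ell$ has symbol $e^{-\eta}e^{-iz\eta\zeta}\hat r$. For a steady particle driven by this endpoint, the flux is $\varphi_P\hat r$.

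I then follow one outer iteration in the order of Algorithm~\ref{alg:wp}.
\begin{enumerate}
\item The converged W iteration enforces the discrete flux balance $i\vk(\varphi_W\tilde r+\hat F_P^k)=0$, so $\tilde r=-\hat F_P^k/\varphi_W$.
\item With the endpoint correction $\Theta$, the exact P solve ($\Delta t_{\rm num}\to\infty$, $\widehat\tau=\tau$) yields
\[
(1+iz\zeta)\hat P^{k+1}=e^{-\eta}e^{-iz\eta\zeta}\tilde r+\big(r[\tilde W+P^k]-\tilde r\big),
\]
where the remainder $R=(r[f]-r)/\tau$ is evaluated explicitly at $f=\tilde W+P^k$.
\item Acceptance gives $\hat r^{k+1}=\avg{\tilde W+P^{k+1}}$.
\end{enumerate}
Using consistency of the previous accepted state, $P^k=f^k-W[r^k]$, I expect the error to lie in the one-dimensional invariant family spanned by the endpoint profile. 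On this family $\hat F_P^k$ and $\avg{P^k}$ are multiples of the error amplitude, and I will verify this by induction. Collecting terms, the averages of $1/(1+iz\zeta)$ produce $g_{\rm CIS}$, and the flux balance produces $\varphi_P/\varphi_W$. This should give $g_{\rm WP}=|1-g_{\rm CIS}+\varphi_P/\varphi_W|$ as in \eqref{eq:factors}.

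\textbf{Main obstacle.} This bookkeeping step is where I expect the main difficulty. The explicit remainder couples $P^k$ back into the update, and the identity $\avg{P_e\text{-response}}$ versus flux response must be matched carefully. If the invariant-family ansatz fails, my fallback is to write the iteration as a $2\times2$ map on $(\hat r,\text{particle amplitude})$ and show that one eigenvalue is zero.

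For the bounds \eqref{eq:bounds} I use the triangle inequality, $g_{\rm WP}\le|1-g_{\rm CIS}|+|\varphi_P|/|\varphi_W|$, with $1-g_{\rm CIS}\le z^2/3$. I expand $1/(1+iz\zeta)=1-iz\zeta+\dots$ and keep the $s$-representation $\varphi_W=\tfrac12\int\zeta\int_0^{\eta}e^{-s(1+iz\zeta)}\dd s\,\dd\zeta$ in the variable $s/\tau$. The odd-in-$\zeta$ part gives $\varphi_W=-\tfrac{iz}{3}\big[A_1(\eta)+E_W\big]$, since $\int_0^\eta se^{-s}\dd s=A_1$. Similarly $\varphi_P=-\tfrac{iz}{3}\big[\beta(\eta)+E_P\big]$, since $1-A_1=\beta$ collects the tail beyond the horizon. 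The error terms come from the even expansion $\cos(z s\zeta)=1-\dots$ and carry the factor $\tfrac12\int\zeta^4=\tfrac15$, relative to $\tfrac13$. I would try to bound them by $|E_W|\le3z^2/5$ and $\beta+|E_P|\le\beta$ by a sign argument, using $|\cos-1|$, $|\sin x-x|$ bounds and $\int_0^\infty s^3e^{-s}\dd s$-type weights.

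This gives $|\varphi_W|\ge\tfrac z3(1-\beta-3z^2/5)$ and $|\varphi_P|\le\tfrac z3\beta$, and hence the stated bound under $\beta+3z^2/5<1$. Getting the constant exactly $3/5$ uniformly in $\eta$ is the delicate estimate. If needed, I would bound the remainder of $\tfrac12\int\zeta^2 s\big(1-\mathrm{Re}\,e^{-izs\zeta}\dots\big)$ directly, using $1-\cos x\le x^2/2$ together with $\tfrac12\int\zeta^4=\tfrac15$.
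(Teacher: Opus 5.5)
\paragraph{Gap in step 2.} Your CIS factor, the wave and endpoint symbols, their flux responses and the converged W balance are all correct. The problem is step~2, which is not the P update of Algorithm~\ref{alg:wp}; with it, the computation does not close to \eqref{eq:factors}.

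In the algorithm, the residual $R_P^k$, remainder included, is evaluated at the accepted state $(r^k,P^k)$. That state is consistent, $r^k=\avg{f^k}$, so the remainder of this single-invariant model vanishes there. The correction $\Theta$ only moves the endpoint to the prediction, and the exact solve is $(1+iz\zeta)(P^k+\Delta P)=e^{-\eta}e^{-iz\eta\zeta}\,\widetilde r^{k+1}$.

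Your additional term is the remainder re-evaluated at the inconsistent predicted state. By consistency it equals $-(1-\avg{w})(\widetilde r^{k+1}-r^k)\neq0$, where $w$ is your wave symbol. Carried through, it gives $(1+iz\zeta)f^{k+1}=\avg{\widetilde W+P^k}$, hence $r^{k+1}=g_{\rm CIS}\big(\avg{w}\widetilde r^{k+1}+(1-\avg{w})r^k\big)$. The resulting factor is $|(1-\avg{w})g_{\rm CIS}-\avg{w}\Lambda|$ with $\Lambda=1-g_{\rm CIS}+\varphi_P/\varphi_W$. As $z\to0$ this tends to $|e^{-\eta}-(1-e^{-\eta})\beta/(1-\beta)|$, not $\beta/(1-\beta)$.

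\paragraph{How to close the argument.} Use the Fourier form of the steady wave identity, $(1+iz\zeta)w+e^{-\eta}e^{-iz\eta\zeta}=1$. With the correct right-hand side it gives $(1+iz\zeta)f^{k+1}=\widetilde r^{k+1}$ exactly. So the P update is a CIS step driven by the predicted density, $r^{k+1}=g_{\rm CIS}\widetilde r^{k+1}$, and $P^k$ drops out entirely. The accepted particle $P^{k+1}=f^{k+1}-wr^{k+1}$ has flux $(\varphi_W+\varphi_P-g_{\rm CIS}\varphi_W)\widetilde r^{k+1}$, and the W balance then yields $\widetilde r^{k+2}=-\Lambda\widetilde r^{k+1}$. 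No invariant-family induction or $2\times2$ map is needed. The coupling you flag as the main obstacle is an artefact of the misplaced remainder.

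\paragraph{The bounds.} Your route to \eqref{eq:bounds} is sound and differs from the paper's. The paper obtains $|\varphi_W|$ from the closed form $\varphi_W+\varphi_P=-i(z-\arctan z)/z^2$ and the triangle inequality. In your $s$-representation, the cosine part integrates to zero against $\zeta$, so the relevant remainder is $0\le x-\sin x\le x^3/6$. With $\frac12\int_{-1}^1\zeta^4\dd\zeta=\frac15$ and $\int_0^\eta s^3e^{-s}\dd s=6A_3(\eta)\le6$, you get $|\varphi_W|\ge\frac z3\big(A_1-\frac{3z^2}{5}A_3\big)$. This is slightly sharper than needed and valid for every $z>0$. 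For $\varphi_P$, what you need is $|\beta+E_P|\le\beta$, not a sign argument. It follows from $|\sin x|\le|x|$ on $s\ge\eta$, exactly as in the paper.
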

\begin{proof}
Scale time by $\tau$, and let $r^k$ be the amplitude of the accepted state and $\widetilde r^{k+1}$ that of the prediction. For an amplitude $r$ the wave operator and the endpoint are
\begin{equation*}
 \begin{gathered}
 W=w\,r,\qquad w=\frac{1-e^{-\eta}e^{-iz\eta\zeta}}{1+iz\zeta},\\
 P_e=e^{-\eta}e^{-iz\eta\zeta}\,r,\qquad (1+iz\zeta)\,w+\frac{P_e}{r}=1 .
 \end{gathered}
\end{equation*}
The CIS gives $f^{k+1}=r^k/(1+iz\zeta)$, whose angular average is $g_{\rm CIS}r^k$. In the WPD-based iteration the particle residual is $R_P^k=r^k-(1+iz\zeta)f^k$, and the P update with the fixed wave $w\widetilde r^{k+1}$ solves
\begin{equation*}
 (1+iz\zeta)\,\Delta P=R_P^k+\big(\widetilde r^{k+1}-r^k\big)e^{-\eta}e^{-iz\eta\zeta}.
\end{equation*}
By the identity above, $f^{k+1}=w\widetilde r^{k+1}+P^k+\Delta P=\widetilde r^{k+1}/(1+iz\zeta)$, and hence $r^{k+1}=g_{\rm CIS}\widetilde r^{k+1}$. The converged W iteration balances the wave flux of the next prediction against the flux of the fixed particle,
\begin{equation*}
 \varphi_W\,\widetilde r^{k+2}+\frac12\int_{-1}^1\zeta P^{k+1}\dd\zeta=0,\qquad P^{k+1}=f^{k+1}-w\,r^{k+1}.
\end{equation*}
Since $\frac12\int\zeta(1+iz\zeta)^{-1}\dd\zeta=\varphi_W+\varphi_P$, the particle flux equals $(\varphi_W+\varphi_P-g_{\rm CIS}\varphi_W)\widetilde r^{k+1}$, and therefore
\begin{equation*}
 \widetilde r^{k+2}=-\Big(1-g_{\rm CIS}+\frac{\varphi_P}{\varphi_W}\Big)\widetilde r^{k+1}.
\end{equation*}
For the bounds, the alternating series of $\arctan z$ gives $g_{\rm CIS}\ge1-z^2/3$. In $\varphi_P$ the odd parts of the integrand cancel,
\begin{equation*}
 \varphi_P=-\frac{i}{2}e^{-\eta}\int_{-1}^{1}\frac{\zeta\sin(z\eta\zeta)+z\zeta^2\cos(z\eta\zeta)}{1+z^2\zeta^2}\dd\zeta,
\end{equation*}
and $|\sin x|\le|x|$ yields $|\varphi_P|\le\beta z/3$. Finally, $\varphi_W+\varphi_P=-i(z-\arctan z)/z^2$, whose modulus is at least $z/3-z^3/5$ for $0<z\le1$, so that $|\varphi_W|\ge(z/3)(1-\beta-3z^2/5)$, and the triangle inequality gives \eqref{eq:bounds}.
\end{proof}

\begin{corollary}[Scale dependence of the bound]\label{cor:scale_dependence}
Within its stated range of validity, the bound in \eqref{eq:bounds} decreases with $\eta$, and for $z\to0$ it tends to $\beta/(1-\beta)$, which is smaller than one precisely for $\eta>\eta_c\approx1.68$. This limit is attained by $g_{\rm WP}$ itself, so that for $\eta<\eta_c$ the model iteration with a converged W iteration amplifies long waves. On a fixed mesh of cell size $\Delta$ a resolved mode has $\vk\le\pi/\Delta$, so that $z=\mathcal O(\varepsilon)$ while $\eta=\mathcal O(\varepsilon^{-1})$. Consequently the CIS reduces the error by $1-\mathcal O(\varepsilon^2)$ per iteration and needs $\mathcal O(\varepsilon^{-2})$ iterations, whereas the WPD-based iteration contracts by $\mathcal O(\varepsilon^2)+\mathcal O(\beta(\eta))$ per outer iteration, with $\beta$ decaying exponentially in $\eta$.
\end{corollary}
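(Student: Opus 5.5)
The argument works directly from the bound \eqref{eq:bounds} and the exact factor \eqref{eq:factors} of Theorem~\ref{thm:factors}, in four steps: monotonicity, the long-wave limit, attainment of that limit by $g_{\rm WP}$ itself, and the fixed-mesh scaling.

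\emph{Monotonicity.} The only $\eta$-dependence of the bound $z^2/3+\beta/(1-\beta-3z^2/5)$ is through $\beta(\eta)$. The map $b\mapsto b/(c-b)$ with $c=1-3z^2/5$ is increasing on $0\le b<c$. Corollary~\ref{cor:transport_shares} gives $\beta'(\eta)=-\eta e^{-\eta}\le0$. Hence the bound is nonincreasing in $\eta$ on the validity range $\beta+3z^2/5<1$, which is itself an interval in $\eta$ because $\beta$ decreases.

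\emph{Long-wave limit of the bound.} Letting $z\to0$ gives $\beta/(1-\beta)$. This is smaller than one exactly when $\beta<1/2$, that is, for $\eta>\eta_c$, since $\beta$ is strictly decreasing with $\beta(\eta_c)=1/2$ by Corollary~\ref{cor:transport_shares}.

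\emph{Attainment by $g_{\rm WP}$.} I expand the exact factor for small $z$ at fixed $\eta$. We have $1-g_{\rm CIS}=z^2/3+\mathcal O(z^4)$. For the flux responses \eqref{eq:flux_responses}, Taylor expanding in $z$ gives
\[
\varphi_P=-\tfrac{i}{2}e^{-\eta}z\int_{-1}^1\zeta^2(\eta+1)\dd\zeta+\mathcal O(z^3)=-\tfrac{i}{3}\beta z+\mathcal O(z^3),
\]
where the odd parts vanish as in the proof of Theorem~\ref{thm:factors}. The proof of that theorem also gives $\varphi_W+\varphi_P=-i(z-\arctan z)/z^2=-iz/3+\mathcal O(z^3)$, so $\varphi_W=-\tfrac{i}{3}(1-\beta)z+\mathcal O(z^3)$. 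The leading terms of the ratio $\varphi_P/\varphi_W$ are real, and it tends to $\beta/(1-\beta)$. Therefore $g_{\rm WP}\to\beta/(1-\beta)$ as $z\to0$.

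The one delicate point is the case $\beta\ge1/2$, where the validity range of \eqref{eq:bounds} is empty. There I use the exact limit instead of the bound. It gives $g_{\rm WP}\ge1$ in the limit, with $g_{\rm WP}>1$ for $\eta<\eta_c$, so the iteration amplifies long waves. I would state carefully that the ratio's leading coefficient is real, so no cancellation of modulus occurs.

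\emph{Fixed-mesh scaling.} With $\tau=\varepsilon\bar\tau$ and $\vk\le\pi/\Delta$, we get $z=\vk\tau=\mathcal O(\varepsilon)$. By \ref{D3}, $\eta=\ell/\tau\ge\ell_{\min}/(\varepsilon\bar\tau)$, so $\eta=\mathcal O(\varepsilon^{-1})$ grows without bound and $\beta(\eta)$ decays exponentially. The CIS lower bound $g_{\rm CIS}\ge1-z^2/3=1-\mathcal O(\varepsilon^2)$ applies. Combined with $g_{\rm CIS}=1-z^2/3+\mathcal O(z^4)$, this means that reducing the error by a fixed factor takes $\log(\cdot)/(z^2/3)=\mathcal O(\varepsilon^{-2})$ iterations. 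For IWPD, the validity condition holds for small $\varepsilon$, and \eqref{eq:bounds} gives a contraction of $\mathcal O(\varepsilon^2)+\mathcal O(\beta)$ per outer iteration, since $1-\beta-3z^2/5\to1$.
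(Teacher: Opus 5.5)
Your proposal is correct and follows the paper's own proof step for step: monotonicity of the bound through $\beta'(\eta)=-\eta e^{-\eta}$, the threshold $\beta/(1-\beta)<1\iff\beta<1/2$, attainment of the limit from $\varphi_P=-i\beta z/3+\mathcal O(z^3)$ and $\varphi_W+\varphi_P=-iz/3+\mathcal O(z^3)$ as in the proof of Theorem~\ref{thm:factors}, and the fixed-mesh claims from \eqref{eq:bounds} with $z\le\pi\varepsilon\bar\tau/\Delta$ and $\eta=\ell/(\varepsilon\bar\tau)$. One side remark is wrong, though harmless: for $\eta>0$ one has $\beta<1$, so when $\beta\ge1/2$ the validity range $\beta+3z^2/5<1$ of \eqref{eq:bounds} is not empty but still contains all sufficiently small $z$; the bound there is merely at least one rather than inapplicable, which is why the amplification claim for $\eta<\eta_c$ has to rest, as you correctly have it, on the exact limit of $g_{\rm WP}$.
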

\begin{proof}
The share $\beta$ decreases since $\beta'(\eta)=-\eta e^{-\eta}$, and $\beta/(1-\beta)<1$ if and only if $\beta<1/2$. The limit of $g_{\rm WP}$ follows from $\varphi_P=-i\beta z/3+\mathcal O(z^3)$ and $\varphi_W+\varphi_P=-iz/3+\mathcal O(z^3)$ in the proof of Theorem~\ref{thm:factors}. The remaining statements are \eqref{eq:bounds} with $z=\vk\tau\le\pi\varepsilon\bar\tau/\Delta$ and $\eta=\ell/(\varepsilon\bar\tau)$.
\end{proof}

Two extensions connect the model of Theorem~\ref{thm:factors} with the practical iteration. First, the W iteration consists of a fixed number $N_U$ of updates, a single one below $\eta_c$, where a converged W iteration would amplify long waves. Second, the remainder of the collision term, for the Shakhov model the heat-flux correction, is evaluated at the accepted state. The following estimates account for these features within their respective model settings.

\begin{proposition}[Incomplete macroscopic solution]\label{prop:incomplete_macro}
In the model problem, let the W iteration consist of $N_U$ updates of a linear iteration for the macroscopic balance, started from the accepted amplitude, each of which reduces the error of the prediction by a factor $\varrho_W\in[0,1)$. Then one outer iteration multiplies the amplitude of the prediction by a factor $g$ with
\begin{equation}\label{eq:inexact}
 g\le g_{\rm WP}+\varrho_W^{N_U}\big(g_{\rm CIS}+g_{\rm WP}\big)\le g_{\rm WP}+\varrho_W^{N_U}\big(1+g_{\rm WP}\big).
\end{equation}
\end{proposition}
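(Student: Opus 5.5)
The proof reuses the amplitude recursion from the proof of Theorem~\ref{thm:factors}. The only change is that the converged prediction is replaced by an inexact one. Fix an outer iteration and let $\widetilde r^{k+1}$ be the prediction entering it. Write $r^\star$ for the exact solution of the macroscopic balance with the particle $P^{k+1}$ held fixed, that is, the value $\widetilde r^{k+2}$ produced by a converged W iteration. By the proof of Theorem~\ref{thm:factors},
\begin{equation*}
 r^\star=-\Big(1-g_{\rm CIS}+\frac{\varphi_P}{\varphi_W}\Big)\widetilde r^{k+1},\qquad |r^\star|=g_{\rm WP}\,|\widetilde r^{k+1}|.
\end{equation*}
This uses only the P update and the acceptance, which are unchanged, so $r^{k+1}=g_{\rm CIS}\widetilde r^{k+1}$ continues to hold.

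Next I account for the $N_U$ updates. They start from the accepted amplitude $r^{k+1}$ and are linear, each contracting the error to the fixed point $r^\star$ by $\varrho_W$. Hence
\begin{equation*}
 \widetilde r^{k+2}-r^\star=E\,(r^{k+1}-r^\star),\qquad |E|\le\varrho_W^{N_U}.
\end{equation*}
The triangle inequality then gives
\begin{equation*}
 |\widetilde r^{k+2}|\le|r^\star|+\varrho_W^{N_U}\big(|r^{k+1}|+|r^\star|\big)=\big[g_{\rm WP}+\varrho_W^{N_U}(g_{\rm CIS}+g_{\rm WP})\big]|\widetilde r^{k+1}|,
\end{equation*}
which is the first inequality in \eqref{eq:inexact}. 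The second inequality follows from $0<g_{\rm CIS}(z)=\arctan z/z\le1$.

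The main point requiring care is whether the per-update contraction can be applied to the error relative to $r^\star$. The fixed point of the W iteration depends on the frozen particle flux, which in turn depends on $\widetilde r^{k+1}$ through $P^{k+1}$. I would argue that within one outer iteration $P^{k+1}$ is fixed. The macroscopic iteration is then an affine map in the single amplitude variable, with fixed point $r^\star$, and its linear part has modulus at most $\varrho_W$, which is exactly what the hypothesis asserts. Since the factor $g$ is measured on the prediction amplitude from one outer iteration to the next, the estimate closes without tracking $P$ separately.
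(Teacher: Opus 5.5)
Your proposal is correct and is essentially the paper's own argument: the paper likewise takes the converged prediction $\widetilde r^{k+2}_*=-\Lambda\widetilde r^{k+1}$ with $|\Lambda|=g_{\rm WP}$ from the proof of Theorem~\ref{thm:factors}, starts the inexact W iteration from the accepted amplitude $r^{k+1}=g_{\rm CIS}\widetilde r^{k+1}$, bounds the error after $N_U$ updates by $\varrho_W^{N_U}(g_{\rm CIS}+g_{\rm WP})|\widetilde r^{k+1}|$ via the triangle inequality, and obtains the second bound from $g_{\rm CIS}\le1$. Your remark that $r^{k+1}=g_{\rm CIS}\widetilde r^{k+1}$ and the fixed point $r^\star$ do not depend on whether the incoming prediction was converged is a correct clarification that the paper leaves implicit.
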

\begin{proof}
By the proof of Theorem~\ref{thm:factors} the converged W iteration gives $\widetilde r^{k+2}_*=-\Lambda\widetilde r^{k+1}$ with $|\Lambda|=g_{\rm WP}$, and the inexact iteration starts from the accepted amplitude $r^{k+1}=g_{\rm CIS}\widetilde r^{k+1}$. After $N_U$ updates,
\begin{equation*}
 \big|\widetilde r^{k+2}-\widetilde r^{k+2}_*\big|\le\varrho_W^{N_U}\big|r^{k+1}-\widetilde r^{k+2}_*\big|\le\varrho_W^{N_U}\big(g_{\rm CIS}+g_{\rm WP}\big)\big|\widetilde r^{k+1}\big|,
\end{equation*}
whence $|\widetilde r^{k+2}|\le[g_{\rm WP}+\varrho_W^{N_U}(g_{\rm CIS}+g_{\rm WP})]\,|\widetilde r^{k+1}|$, and the second inequality follows from $g_{\rm CIS}\le1$.
\end{proof}

\begin{proposition}[Explicit collision remainder]\label{prop:collision_remainder}
Assume the moment and heat-flux identities in \ref{D1}--\ref{D2}. Consider a spatially uniform state on a periodic mesh with uniform $\eta$, whose distribution $f$ differs from the Maxwellian $M$ of its conservative variables, let $\widehat\tau=\tau$, and let \eqref{eq:particle_increment} be solved exactly. One P update of Algorithm~\ref{alg:wp}, and likewise one CIS iteration, maps $f$ to $(1-\omega)f+\omega S^h(f)$ with $\omega=\Delta t_{\rm num}/(\Delta t_{\rm num}+\tau)$. For the BGK model the deviation $f-M$ is multiplied by $1-\omega$, which vanishes as $\Delta t_{\rm num}/\tau\to\infty$; for the Shakhov model the heat flux $\bq$ is multiplied by $1-\omega\Pr$, which tends to $1-\Pr$ independently of $\eta$.

In particular, the homogeneous relaxation factors are
\begin{equation}\label{eq:remainder_factor}
 g_{\rm BGK}=1-\omega,\qquad g_{q,\rm Shakhov}=|1-\omega\Pr|,
 \qquad \omega=\frac{\Delta t_{\rm num}}{\Delta t_{\rm num}+\tau}.
\end{equation}
\end{proposition}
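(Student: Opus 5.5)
In a spatially uniform state on a periodic mesh every transport term vanishes, so the plan is to reduce the P update to a local algebraic relaxation and read off the factors. First I will check that all transport operators vanish. With uniform $\bU$ and $P$ the limited slopes $\nabla_h\bU_i$, $\nabla_hP_{i,j}$ are zero, so the micro-slopes $\mathbf a_\alpha$ and, by \eqref{eq:gks_compatibility}, the time slope $\mathbf a_t$ vanish. Then \eqref{eq:discrete_wave} gives $W_h[\bU]=A_0(\eta)M$. On a periodic mesh every face flux in \eqref{eq:face_wave}, \eqref{eq:lift} and \eqref{eq:sn_face} is the same on opposite faces, and $\sum_\Gamma|\Gamma|\bn_\Gamma=\boldsymbol 0$ gives $D^h_W(W)=D^h_P(P)=0$. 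Likewise, for a uniform increment $D^h_{\rm up}\Delta P=0$. Hence $P_e^h(\bU)=M-W=e^{-\eta}M$, and $R_U=\boldsymbol 0$ by \eqref{eq:residual_moments} and \ref{D1}.

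Next I will examine one outer iteration. Since $R_U=\boldsymbol 0$, the W sweeps \eqref{eq:pr_sweeps} return zero increments, so $\widetilde\bU=\bU$ and $\Theta=0$ by \eqref{eq:endpoint_correction}. By \eqref{eq:residual_moments}, $R_P=(S^h(f)-f)/\tau$. The exact solve of \eqref{eq:particle_increment} with $\widehat\tau=\tau$ and uniform data gives
\[
\Delta P=\frac{S^h(f)-f}{\tau(\Delta t_{\rm num}^{-1}+\tau^{-1})}=\omega\,(S^h(f)-f).
\]
With acceptance factor unity, \eqref{eq:acceptance} yields $f^{\rm new}=W+P+\Delta P=(1-\omega)f+\omega S^h(f)$. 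By \ref{D1}, $\avgh{\bpsi f^{\rm new}}=\avgh{\bpsi f}$, so the conservative state, and with it $M$ and $\tau$, stay unchanged. The CIS iteration with the same implicit pseudo-time solve satisfies the same relation, because its transport terms vanish as well.

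The factors then follow. For BGK, $S^h(f)=M_h[\avgh{\bpsi f}]=M$, so $f^{\rm new}-M=(1-\omega)(f-M)$. For Shakhov I will take the heat-flux moment. This moment is linear in the distribution once the conservative moments are fixed, because $\bu$ and $T$ are fixed. By \ref{D2}, $S^h(f)$ has heat flux $(1-\Pr)\bq$, so
\[
\bq^{\rm new}=(1-\omega)\bq+\omega(1-\Pr)\bq=(1-\omega\Pr)\bq.
\]
As $\Delta t_{\rm num}/\tau\to\infty$ we have $\omega\to1$, so the factor tends to $1-\Pr$ whatever $\eta$ is. This is because $\eta$ enters only through $W$, which cancels from $f^{\rm new}$. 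This gives \eqref{eq:remainder_factor}.

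The main obstacle is the first step, namely justifying that every discrete transport operator and every slope vanishes, including the lift \eqref{eq:lift} and the time-averaging term. A second point is that the acceptance factor must be taken as $\theta=1$. I will state this explicitly, since the computed $\theta$ could otherwise reduce $\omega$. The quadrature-exact moment and heat-flux identities assumed in \ref{D1}--\ref{D2} remove any remaining defect.
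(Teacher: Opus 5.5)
Your proposal is correct and follows the paper's own proof: transport vanishes on the uniform periodic state, so $R_U=\boldsymbol 0$, $\widetilde\bU=\bU$, $\Theta=0$ and $R_P=(S^h(f)-f)/\tau$; the exact solve then gives $\Delta P=\omega(S^h(f)-f)$, and conservation under (D1) together with the heat-flux identity of (D2) yields the BGK and Shakhov factors. Your additional steps fill in details the paper leaves implicit: the slopes and time slope vanish, each face flux including the lift is linear in $\bn_\Gamma$ so that $\sum_\Gamma|\Gamma|\bn_\Gamma=\boldsymbol 0$ cancels it, the heat flux is linear at fixed $\bu$, and the acceptance factor is $\theta=1$, which the paper assumes tacitly.
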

\begin{proof}
For a uniform state the transport terms vanish, so that $R_U=\boldsymbol 0$, the W iteration leaves $\bU$ unchanged, $\Theta=0$ and, by \eqref{eq:residuals}, $R_P=(S^h(f)-f)/\tau$. Equation \eqref{eq:particle_increment} then gives $\Delta P=\omega(S^h(f)-f)$, that is, $f\leftarrow(1-\omega)f+\omega S^h(f)$. For the BGK model $S^h(f)=M$, because the conservative variables do not change, so that $f-M\leftarrow(1-\omega)(f-M)$. The heat flux of the Shakhov target is $(1-\Pr)$ times the heat flux of the distribution it is built from, whence $\bq\leftarrow(1-\omega\Pr)\bq$. The CIS solves the same relaxation equation.
\end{proof}

The remainder factor \eqref{eq:remainder_factor} identifies a floor of the convergence factor set by the explicitly treated remainder. Once the macroscopic iteration has removed the transport modes, the part of the remainder that depends on the non-equilibrium moments of $f$ leaves a mode that contracts only at its own rate, which is $1-\Pr=1/3$ per P update for the Shakhov model with $\Pr=2/3$ and does not exist for the BGK model. The floor is a property of the explicit treatment of the remainder, which the CIS shares, and not of the decomposition, and Section~\ref{sec:convergence} tests this mechanism by computing the same flows with both models.

\section{Numerical verification}\label{sec:results}
The numerical assessment follows the two contributions of the method. Profiles and distributions test whether the WPD representation retains kinetic behaviour and recovers fluid transport, while convergence factors, collision-model comparisons and parameter scans test the implicit coupling and the mechanisms identified in Section~\ref{sec:analysis}. Wall times quantify the gain over CIS, and cavity calculations on the grids used in a published GSIS study provide a comparison with an established accelerated solver. CIS supplies the rarefied-flow reference; continuum comparisons use the reported NS solutions and the independent cavity benchmark of Ghia et al.~\cite{ghia1982}. The common-grid comparisons assess consistency of the discrete fields and distributions, while the refinement study examines iterative performance as the spatial resolution changes.

\subsection{Setup and comparison criteria}\label{sec:assessment}
The test cases are a normal shock, planar Couette flow, the lid-driven cavity and hypersonic flows around a circular cylinder and the X38 section, with $\varepsilon$ from $1$ to $10^{-4}$ (Table~\ref{tab:settings}). Unless stated otherwise, the model is the reduced Shakhov model \eqref{eq:reduced_target} for a monatomic gas with $\gamma=5/3$ and $\Pr=2/3$, with $d=1$ and $K=2$ for the shock and $d=2$ and $K=1$ otherwise; the viscosity follows the variable-hard-sphere law of argon, $\mu=\mu_{\rm ref}(T/T_{\rm ref})^{0.81}$ with $T_{\rm ref}=0.5$ and $\mu_{\rm ref}=0.73103\,\varepsilon$, the relaxation time is $\tau=\mu/p$ and the horizon number $C_\ell=1$. The external flows have the free-stream Mach number $\Ma$, the free-stream state $(\rho_\infty,T_\infty)=(1,0.5)$ and the wall temperature $T_w=0.5$. IWPD denotes the WPD-based multiscale solver of Algorithm~\ref{alg:wp} with the controls W$N_U$/PR$m_U$/P$N_P$ and $m_P=1$, CIS the conventional iterative scheme for the full reduced distribution with one kinetic LU--SGS pair per iteration, and NS the gas-kinetic Navier--Stokes reference computation on the same mesh.
The cavity at $\Rey=100$ and $1000$ employs a sixth-order central reconstruction with one-sided stencils next to the walls, and all other computations the second-order van Leer reconstruction of Section~\ref{sec:fluxes}, with \eqref{eq:gks_flux} for the wave and a gas-kinetic flux for NS and the upwind flux \eqref{eq:sn_face} for the particle of IWPD and the distribution of the CIS. IWPD and CIS use the same mesh, velocity grid and Courant--Friedrichs--Lewy (CFL) numbers, and all computations, NS included, run with 64 OpenMP threads on one node, to the tolerance $10^{-7}$ for the residual norms \eqref{eq:residual_norms} unless stated otherwise. The normal shock is the exception, since its steady state is fixed only up to a translation.

The agreement with a reference solution is measured by the relative difference $E_{L_2}(q)=\|q_{\rm WP}-q_{\rm ref}\|_2/\|q_{\rm ref}\|_2$, where $q$ is a macroscopic field, a profile or a sampled distribution and the Euclidean norm is taken over the corresponding cells, samples or velocity nodes; the values are printed in the figures and collected in Table~\ref{tab:accuracy} for the external flows. Efficiency is quantified by the ratios $S_{\rm iter}=N_{\rm CIS}/N_{\rm WP}$ of the numbers of outer iterations and $S_t=t_{\rm CIS}/t_{\rm WP}$ of the solver wall times, evaluated at the common residual threshold of each comparison, and in the continuum regime by the ratio $t_{\rm WP}/t_{\rm NS}$ of the wall times to the residual $10^{-7}$. The threshold is the tolerance for the internal flows, $\|R_U\|=10^{-6}$ for the external flows and $\|R_U\|=10^{-5}$ for the cylinder at $\varepsilon=10^{-4}$, and is marked by a grey line in the residual figures. The NS computation shares the mesh, reconstruction, implicit macroscopic solver and CFL numbers with IWPD but carries no velocity-space unknowns; its prescribed update counts and point-relaxation pairs define an implementation-specific cost reference for the macroscopic solver alone, without separate NS parameter optimization. The plotted residual is $\|R_U\|$ unless $\|R_P\|$ is stated. The supplementary material shows, in Figures~\SMref{fig:shock_dist}--\SMref{fig:apollo_residual}, the cases and quantities for which the main text gives numbers only.

\begin{table}[!htb]
\centering\small\setlength{\tabcolsep}{2pt}
\caption{Discretizations and controls. Velocity grids refer to the reduced velocity space; the CFL column gives the explicit and the implicit CFL numbers. The Reynolds numbers $100$ and $1000$ correspond to $\varepsilon=2.05\times10^{-3}$ and $2.05\times10^{-4}$. The cavity meshes in parentheses belong to the refinement study of Section~\ref{sec:efficiency}, which uses W24/PR6/P1.}
\label{tab:settings}
\resizebox{\linewidth}{!}{\begin{tabular}{llllll}
\toprule
Case & Regime & Physical mesh & Velocity grid & IWPD control & CFL\\
\midrule
Shock & \shortstack{$\varepsilon=1$,\\ $\Ma=10$} & \shortstack{176 cells\\ on $[-80,40]$} & \shortstack{100\\ on $[-26.4,26.4]$} & W1/PR4/P1 & 0.25/1000\\
Couette & \shortstack{$\varepsilon=1,0.6,0.2,10^{-2}$,\\ $10^{-4}$} & $4\times100$ & \shortstack{$32^2$\\ on $[-6,6]^2$} & \shortstack{W1/PR6/P1,\\ W128/PR2/P1} & 0.02/1000\\
Cavity & $\varepsilon=1,0.075$ & $64^2$ & \shortstack{$64^2$\\ on $[-5,5]^2$} & W1/PR6/P1 & 0.2/1000\\
Cavity & $\Rey=100$ & \shortstack{$64^2$\\ ($150^2$)} & \shortstack{$16^2$\\ on $[-5,5]^2$} & W24/PR4/P1 & 0.2/150\\
Cavity & $\Rey=1000$ & \shortstack{$100^2$\\ ($200^2$, $300^2$)} & \shortstack{$16^2$\\ on $[-5,5]^2$} & W24/PR3/P1 & 0.2/150\\
Cylinder & \shortstack{$\varepsilon=1$,\\ $\Ma=5$} & $129\times80$ O-grid & \shortstack{$96^2$\\ on $[-9,9]^2$} & W1/PR2/P1 & 0.2/1000\\
Cylinder & \shortstack{$\varepsilon=10^{-2}$,\\ $10^{-4}$} & $129\times40$ O-grid & \shortstack{$96^2$,\\ $64^2$\\ on $[-9,9]^2$} & \shortstack{W1,\\ W120/PR2/P1} & 0.2/1000\\
X38 & \shortstack{$\varepsilon=1$,\\ $\Ma=5$} & $48\times31$ O-grid & \shortstack{$96^2$\\ on $[-9,9]^2$} & W1/PR2/P1 & 0.2/1000\\
X38 & \shortstack{$\varepsilon=10^{-2}$,\\ $10^{-4}$} & $48\times20$ O-grid & \shortstack{$96^2$,\\ $64^2$\\ on $[-9,9]^2$} & \shortstack{W1,\\ W120/PR2/P1} & 0.2/1000\\
\bottomrule
\end{tabular}}
\end{table}

\subsection{Kinetic and hydrodynamic behaviour}\label{sec:test_summary}
The shock tests the non-equilibrium distribution and its transport moments. Couette flow follows the transition from rarefied slip to continuum shear, and the cavity tests fluid behaviour on meshes that do not resolve the mean free path. The external-flow fields and distributions are examined in Section~\ref{sec:external}.

\subsubsection{Normal shock}\label{sec:res_shock}
The Mach-10 normal shock at $\varepsilon=1$ connects the Rankine--Hugoniot states $(\rho,u,T)=(1,9.13,0.5)$ and $(3.88,2.35,16.06)$, where $u$ is the velocity normal to the shock, with relaxation times $1.46$ and $0.19$; the initial discontinuity is centred at $x=-16$ with width $2$. A free-standing shock is determined only up to a translation, and its centre keeps oscillating slightly about the steady position, so that the residual levels off once the profile shape is converged. The two methods are therefore compared on the converged profiles, each plotted against the distance from its own density midpoint (Figure~\ref{fig:shock_macro}). The normalized density, velocity and temperature of IWPD and CIS differ by $3.44\times10^{-4}$ to $4.37\times10^{-4}$ and the heat flux and the normal stress, which reach their extrema upstream of the density midpoint, by $1.95\times10^{-3}$. The reduced distributions at four positions across the shock (Figure~\SMref{fig:shock_dist}) differ by $6.39\times10^{-4}$ to $1.66\times10^{-3}$. Upstream of the midpoint they consist of a narrow peak at the upstream velocity and a broad hot population transported from downstream, and downstream the relaxed distribution still carries a shoulder of the upstream beam. IWPD captures both structures, neither of which a single Maxwellian can represent.

\begin{figure}[!htb]
\centering\singlespacing\setlength{\figw}{\linewidth}
\begin{subfigure}[t]{0.4641\figw}\centering
\raisebox{0.0000\figw}{\includegraphics[width=0.4641\figw]{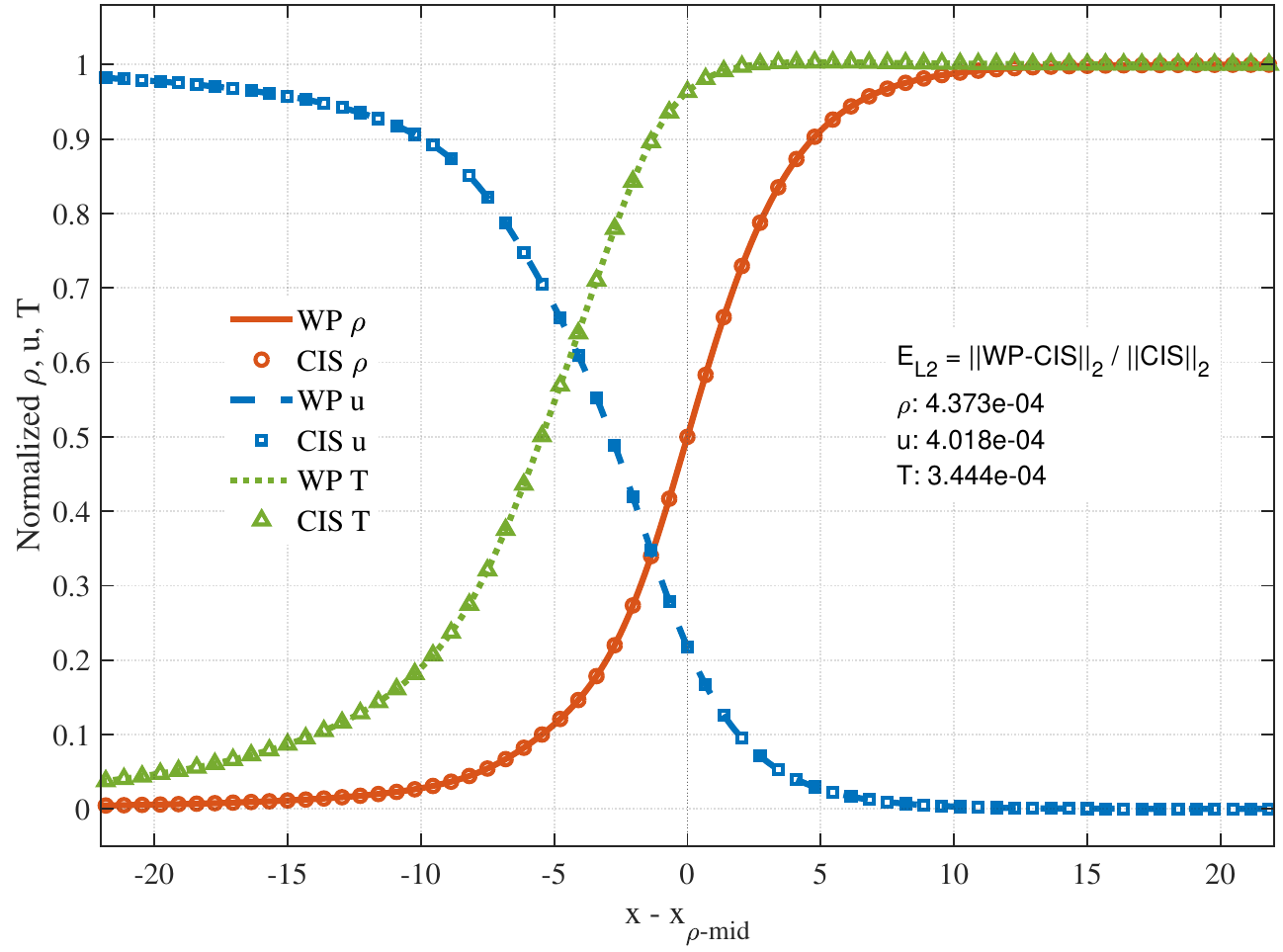}}
\caption{Density, velocity and temperature.}
\end{subfigure}\hspace{0.0200\figw}%
\begin{subfigure}[t]{0.5109\figw}\centering
\raisebox{0.0000\figw}{\includegraphics[width=0.5109\figw]{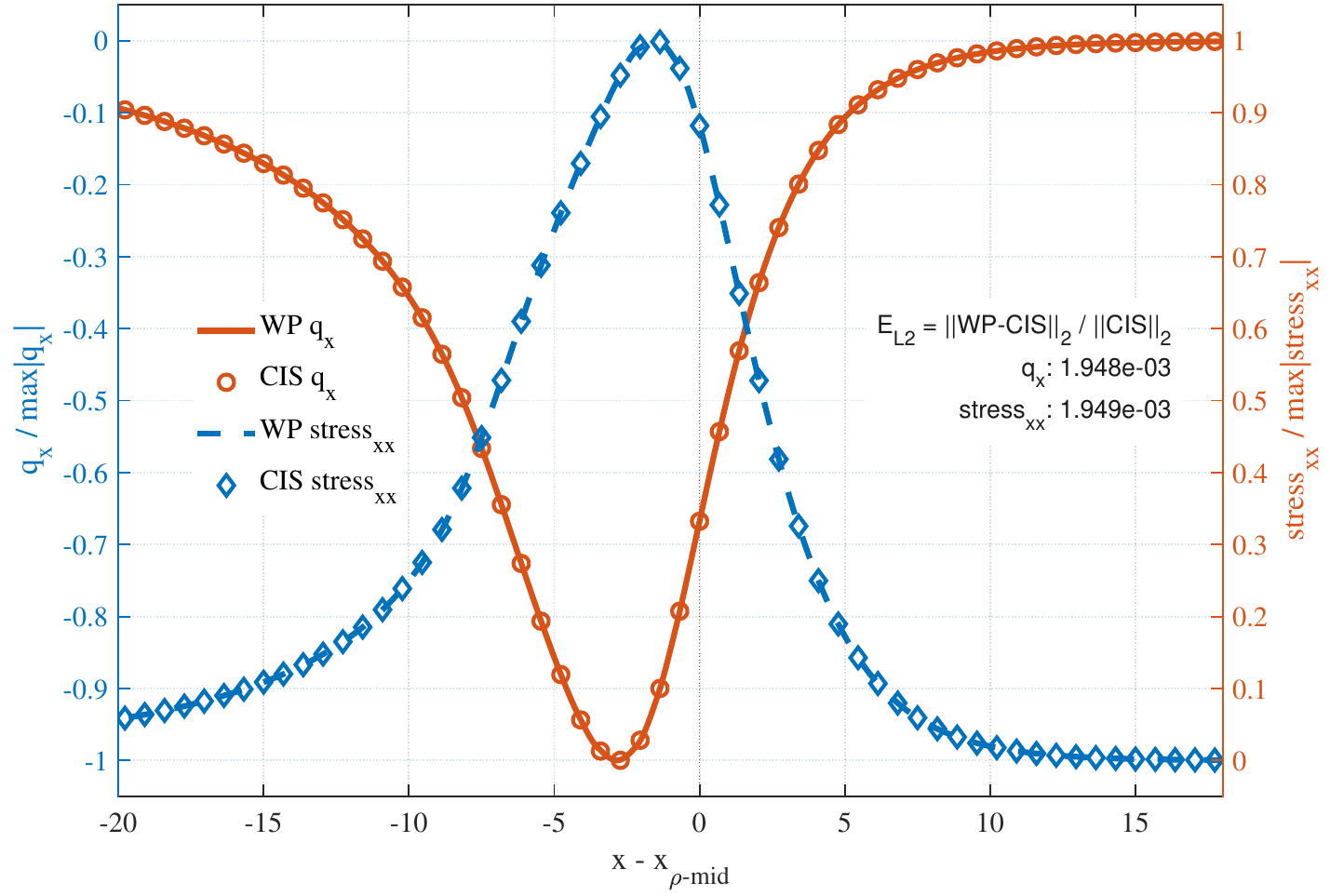}}
\caption{Heat flux and normal stress.}
\end{subfigure}
\caption{Normal shock at $\mathrm{Ma}=10$ and $\varepsilon=1$: normalized macroscopic profiles plotted against the distance from the density midpoint. Lines denote IWPD and symbols CIS; the relative differences $E_{L_2}$ are given in the panels.}\label{fig:shock_macro}
\end{figure}

\subsubsection{Couette flow}\label{sec:res_couette}
Planar Couette flow is computed between diffuse walls at $y=0$ and $y=1$ with temperature $0.5$, the upper wall moving in the $x$ direction with speed $0.15$, on a mesh that is periodic in $x$, with the gas initially at rest at $\rho=1$ and $T=0.5$. The profiles in Figure~\ref{fig:couette_profiles} span the transition from strongly rarefied to near-continuum flow. For $\varepsilon\geq10^{-2}$ IWPD and CIS differ by $2.96\times10^{-8}$ to $2.94\times10^{-6}$ in velocity and by $7.59\times10^{-11}$ to $9.27\times10^{-9}$ in temperature. The velocity slip at the walls, pronounced at $\varepsilon=1$, decreases with the Knudsen number, and at $\varepsilon=10^{-4}$ the velocity profile is the linear continuum solution, and the two methods differ by $8.73\times10^{-4}$ in velocity and $1.56\times10^{-5}$ in temperature. 

\begin{figure}[!htb]
\centering\singlespacing\setlength{\figw}{\linewidth}
\begin{subfigure}[t]{0.4880\figw}\centering
\raisebox{0.0030\figw}{\includegraphics[width=0.4880\figw]{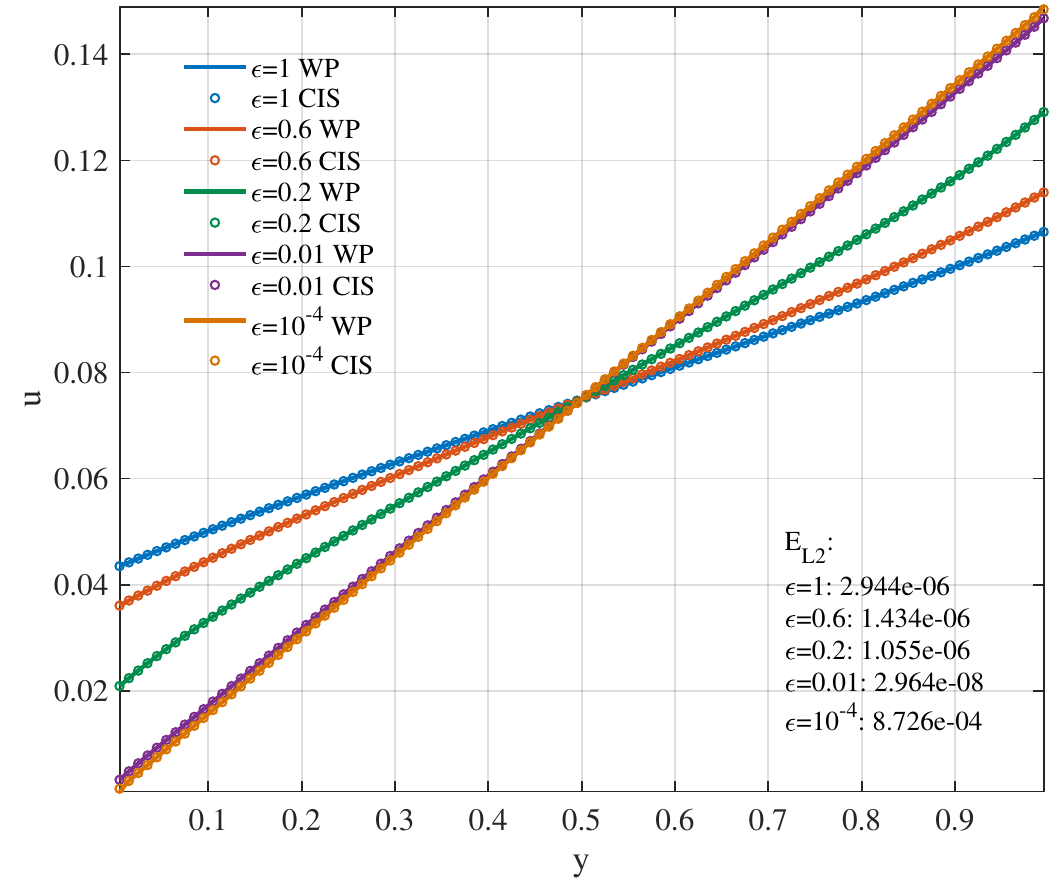}}
\caption{Velocity.}
\end{subfigure}\hspace{0.0200\figw}%
\begin{subfigure}[t]{0.4870\figw}\centering
\raisebox{0.0000\figw}{\includegraphics[width=0.4870\figw]{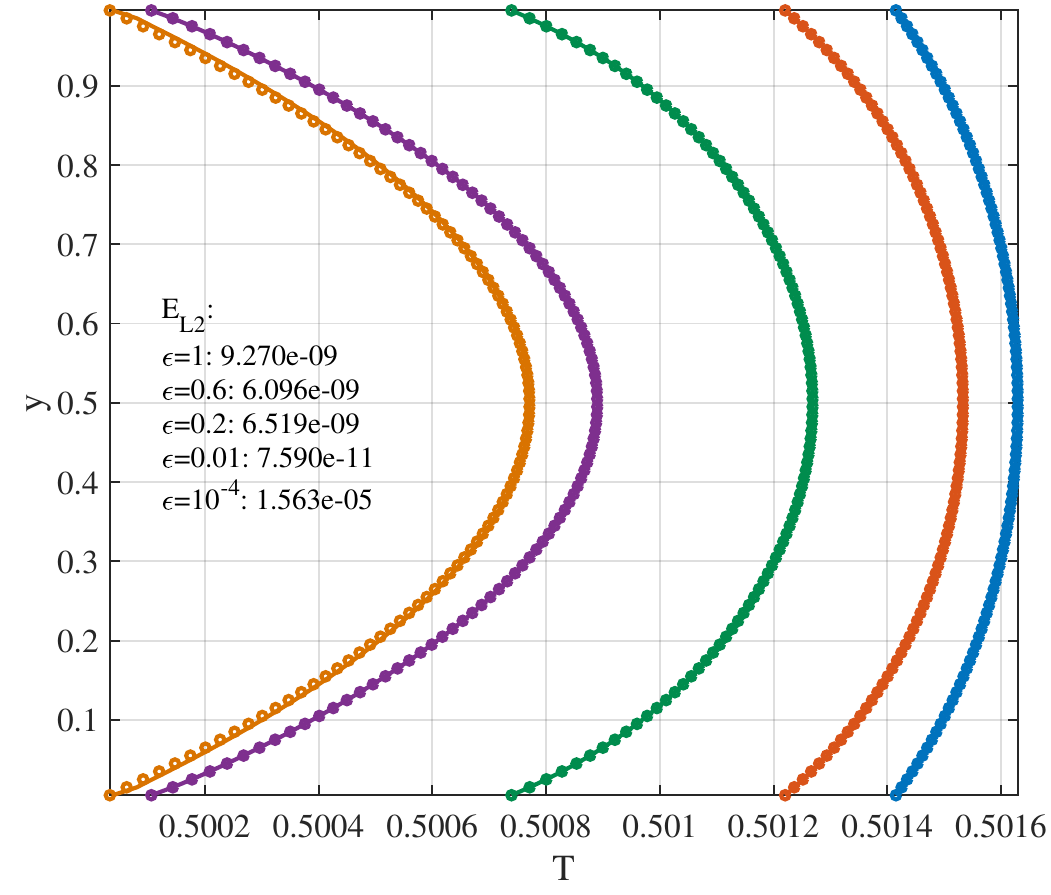}}
\caption{Temperature.}
\end{subfigure}
\caption{Couette flow at $\varepsilon=1$, 0.6, 0.2, $10^{-2}$ and $10^{-4}$: velocity and temperature profiles across the channel. Lines denote IWPD and symbols CIS; the legend of (a) applies to both panels, and the relative differences for each Knudsen number are listed in the panels.}\label{fig:couette_profiles}
\end{figure}

\subsubsection{Lid-driven cavity}\label{sec:res_cavity}
\begin{figure}[!htb]
\centering\singlespacing\setlength{\figw}{\linewidth}
\begin{subfigure}[t]{0.4869\figw}\centering
\raisebox{0.0000\figw}{\includegraphics[width=0.4869\figw]{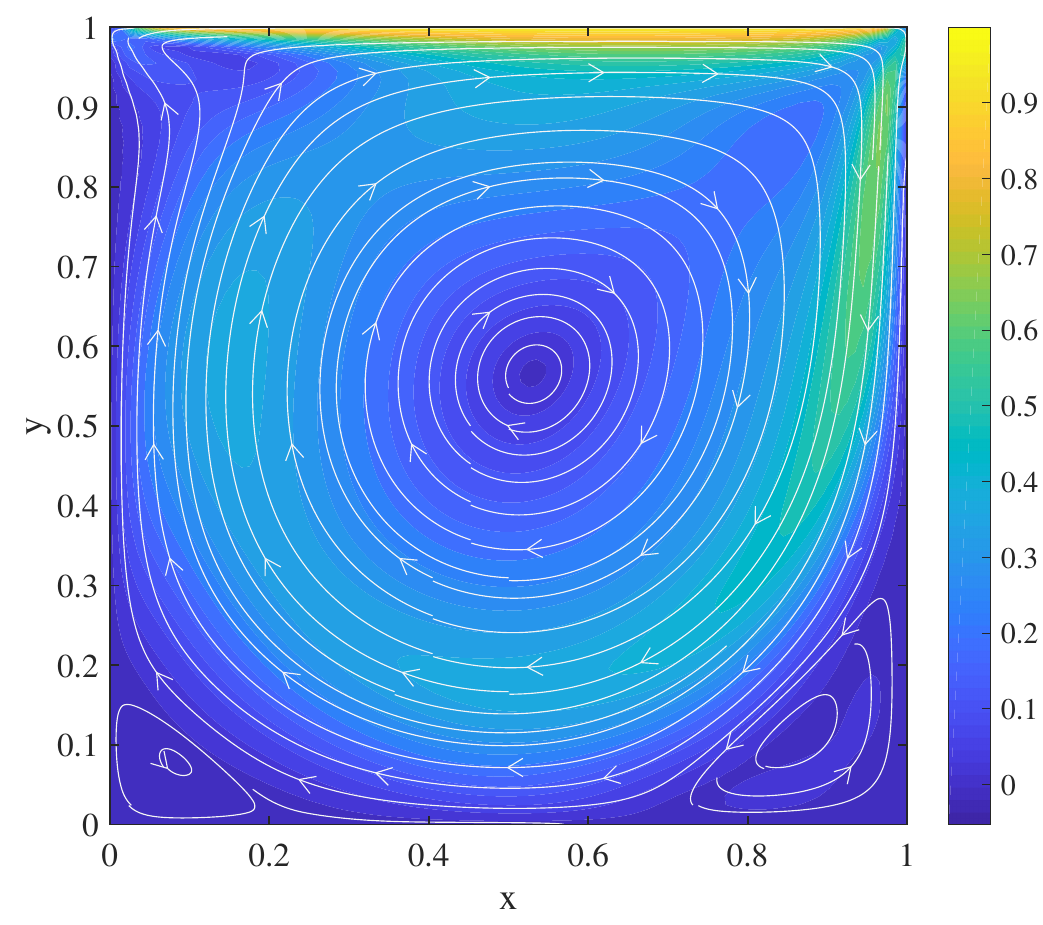}}
\caption{Speed and streamlines.}
\end{subfigure}\hspace{0.0200\figw}%
\begin{subfigure}[t]{0.4881\figw}\centering
\raisebox{0.0000\figw}{\includegraphics[width=0.4881\figw]{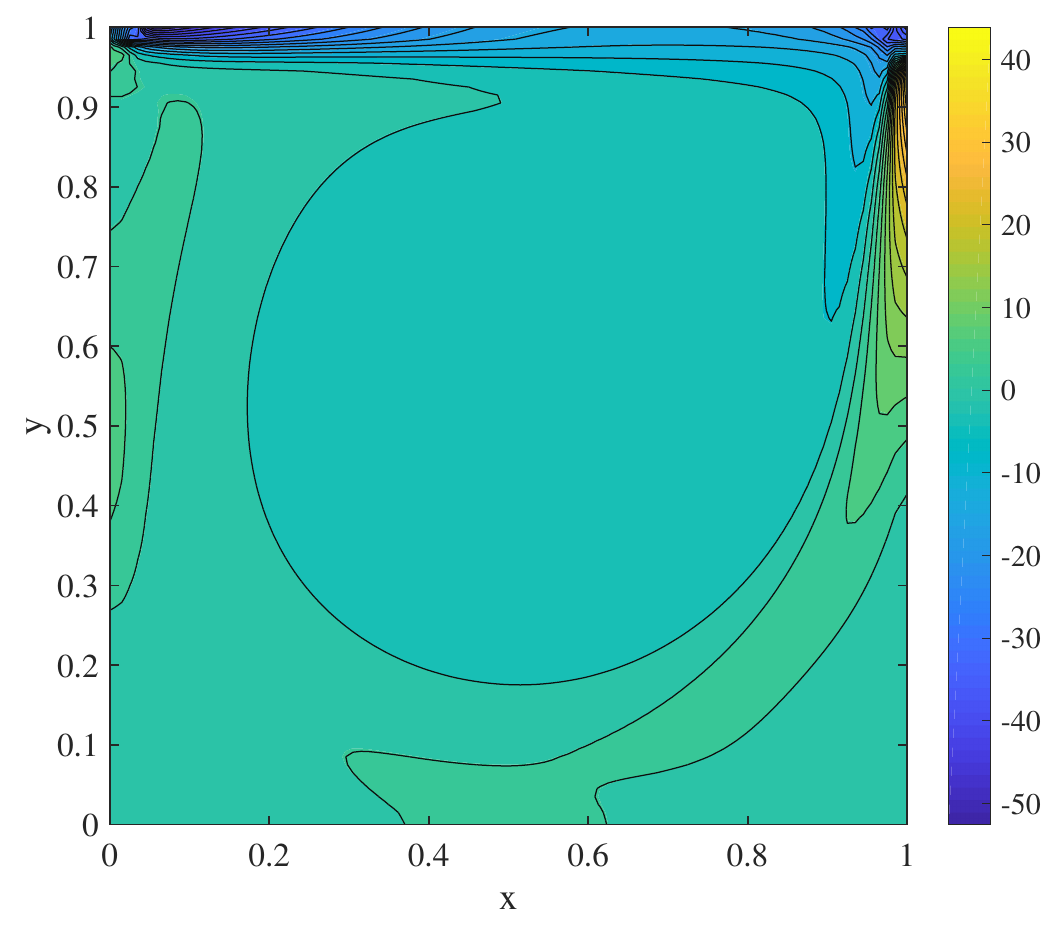}}
\caption{Vorticity.}
\end{subfigure}
\par\smallskip
\begin{subfigure}[t]{0.4812\figw}\centering
\raisebox{0.0000\figw}{\includegraphics[width=0.4812\figw]{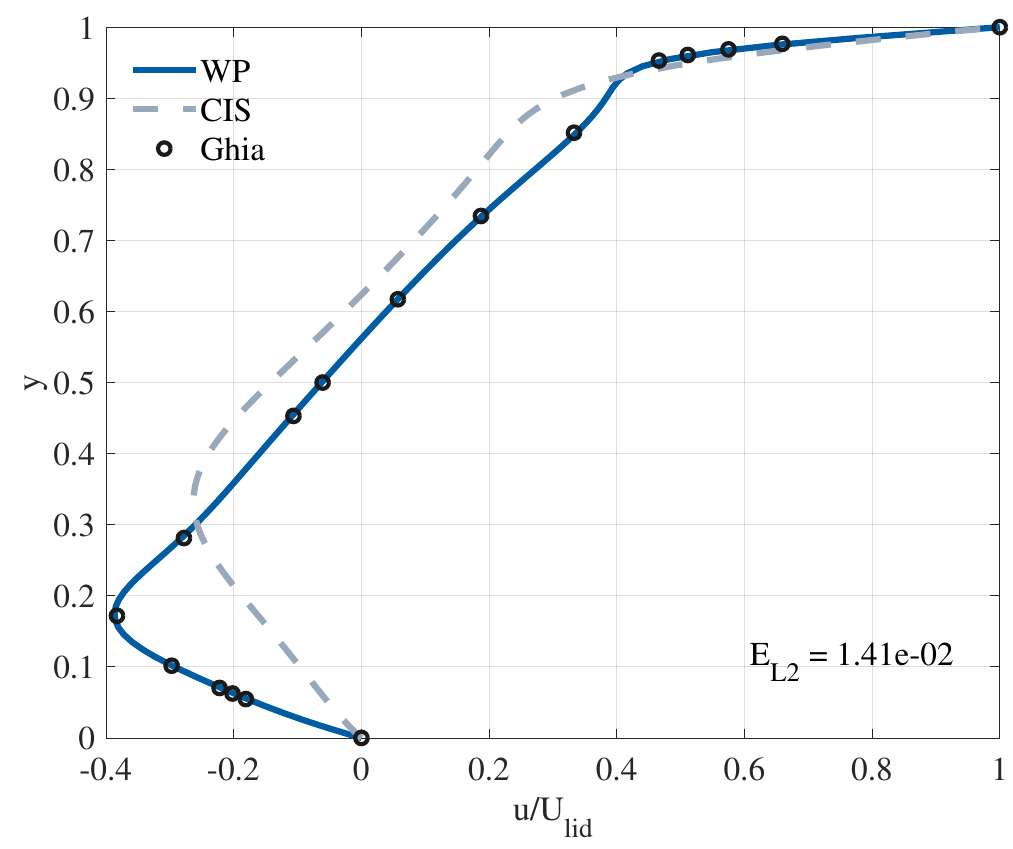}}
\caption{$u/U_{\rm lid}$ on the vertical centreline.}
\end{subfigure}\hspace{0.0200\figw}%
\begin{subfigure}[t]{0.4938\figw}\centering
\raisebox{0.0070\figw}{\includegraphics[width=0.4938\figw]{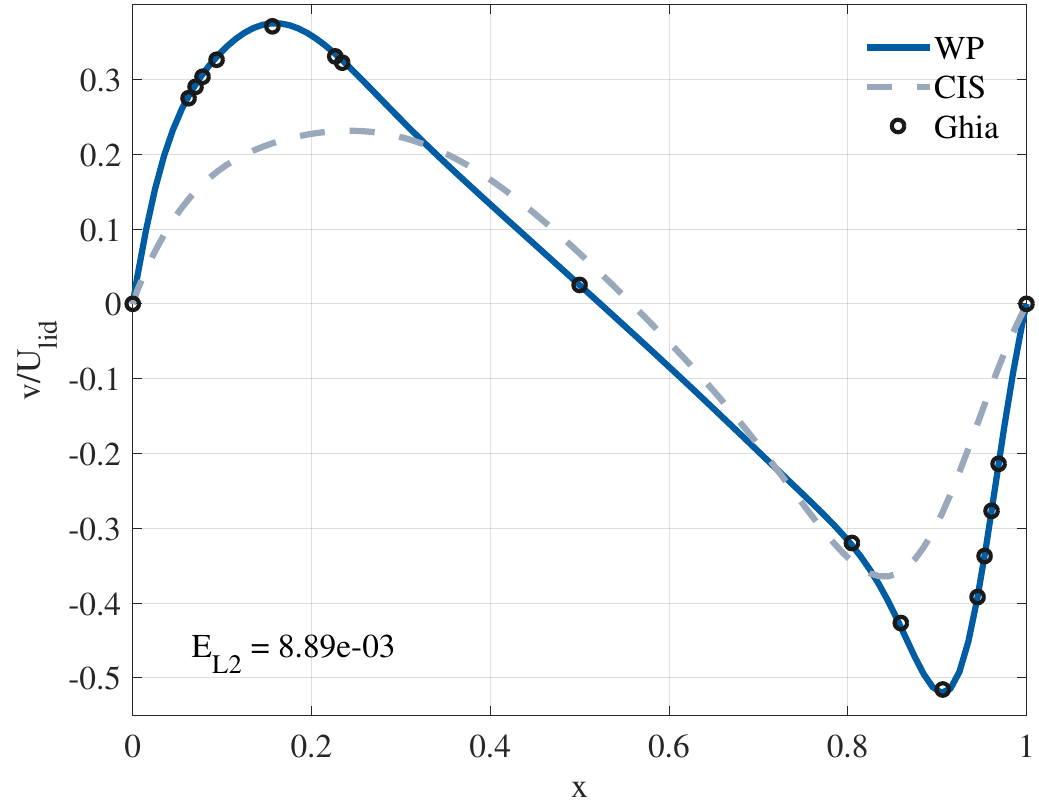}}
\caption{$v/U_{\rm lid}$ on the horizontal centreline.}
\end{subfigure}
\caption{Lid-driven cavity at $\mathrm{Re}=1000$: (a,b) fields of IWPD (filled contours) and NS (lines, streamlines in (a)); (c,d) centreline profiles of IWPD (solid) and CIS (dashed) compared with the benchmark data of Ghia et al.~\cite{ghia1982} (symbols); the panels give the relative difference $E_{L_2}$ between IWPD and the benchmark data.}\label{fig:cavity_re1000}
\end{figure}

The lid-driven cavity is the unit square with stationary diffuse walls at temperature $0.5$ and a lid moving with speed $U_{\rm lid}=0.15$, and the Reynolds number $\Rey=\rho U_{\rm lid}L/\mu_{\rm ref}$ is based on the cavity length $L=1$. At $\varepsilon=1$ and $0.075$ (Figures~\SMref{fig:cavity_kn1_fields}--\SMref{fig:cavity_rarefied_profiles}) the IWPD and CIS fields differ by less than $5\times10^{-8}$ and the centreline velocity profiles by $2.46\times10^{-8}$ to $1.08\times10^{-7}$. The continuum cases $\Rey=100$ and $1000$ are computed with a $16^2$ velocity grid on meshes that do not resolve the mean free path and therefore probe the formal expansion in Section~\ref{sec:ap}. The IWPD speed fields differ from NS by $8.08\times10^{-3}$ and $1.54\times10^{-2}$, and the streamlines reproduce the primary vortex and, at $\Rey=1000$, the two secondary vortices in the lower corners (Figure~\ref{fig:cavity_re1000} and Figure~\SMref{fig:cavity_re100}). The centreline profiles of $u$ and $v$ pass through the benchmark points of Ghia et al.~\cite{ghia1982}, from which they differ by $1.87\times10^{-3}$ to $1.41\times10^{-2}$, whereas the CIS profiles computed on the same mesh and velocity grid depart from the benchmark. On these meshes the two methods thus converge to different discrete solutions. The kinetic transport of the CIS acts on the full distribution, on a mesh and a velocity grid that are too coarse for it, whereas in IWPD it acts only on the particle, which is of order $\varepsilon$ (the formal expansion in Section~\ref{sec:ap}), while the wave supplies the Euler and relaxation fluxes in closed form. Thus the comparison tests the transport discretization as well as iterative convergence.

\subsubsection{Hypersonic external flows}\label{sec:external}
The cylinder tests the non-equilibrium distribution in detail, while X38 extends the verification to a more complex section. Table~\ref{tab:accuracy} reports field differences from CIS at $\varepsilon=1$ and $10^{-2}$ and from the NS reference at $\varepsilon=10^{-4}$. Wall profiles use values in the cells adjacent to the surface. Additional results for the Apollo section, including collision-model comparisons, are collected in Supplementary Section~S6.

\begin{table}[!htb]
\centering\footnotesize
\caption{External flows: relative differences $E_{L_2}$ between the macroscopic fields of IWPD and of the reference, which is CIS at $\varepsilon=1$ and $10^{-2}$ and NS at $\varepsilon=10^{-4}$.}
\label{tab:accuracy}
\begin{tabular}{llrrrr}
\toprule
Case & $\varepsilon$ & Density & Speed & Temperature & Pressure\\
\midrule
Cylinder & $1$       & $1.94\times10^{-5}$ & $2.75\times10^{-6}$ & $8.93\times10^{-6}$ & $7.11\times10^{-6}$\\
         & $10^{-2}$ & $4.44\times10^{-4}$ & $1.89\times10^{-4}$ & $8.67\times10^{-4}$ & $4.86\times10^{-4}$\\
         & $10^{-4}$ & $4.12\times10^{-4}$ & $2.85\times10^{-4}$ & $1.06\times10^{-3}$ & $7.77\times10^{-4}$\\
X38      & $1$       & $2.66\times10^{-6}$ & $8.31\times10^{-7}$ & $2.84\times10^{-6}$ & $2.99\times10^{-6}$\\
         & $10^{-2}$ & $5.71\times10^{-5}$ & $1.38\times10^{-5}$ & $8.40\times10^{-5}$ & $6.26\times10^{-5}$\\
         & $10^{-4}$ & $4.42\times10^{-4}$ & $2.10\times10^{-4}$ & $9.03\times10^{-4}$ & $7.89\times10^{-4}$\\
\bottomrule
\end{tabular}
\end{table}

\paragraph{Circular cylinder}\label{sec:res_cylinder}
\begin{figure}[!htb]
\centering\singlespacing\setlength{\figw}{\linewidth}
\begin{subfigure}[t]{0.4856\figw}\centering
\raisebox{0.0000\figw}{\includegraphics[width=0.4856\figw]{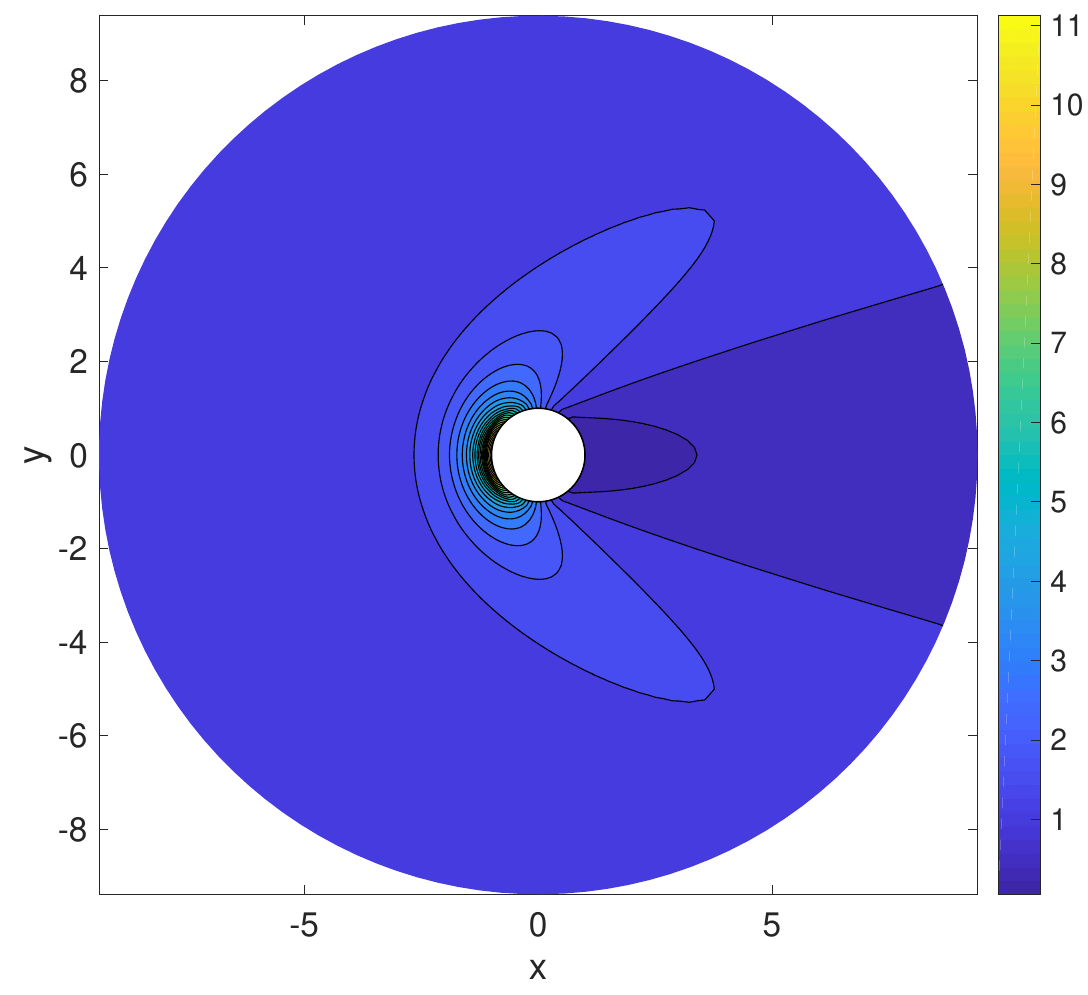}}
\caption{Density.}
\end{subfigure}\hspace{0.0200\figw}%
\begin{subfigure}[t]{0.4894\figw}\centering
\raisebox{0.0000\figw}{\includegraphics[width=0.4894\figw]{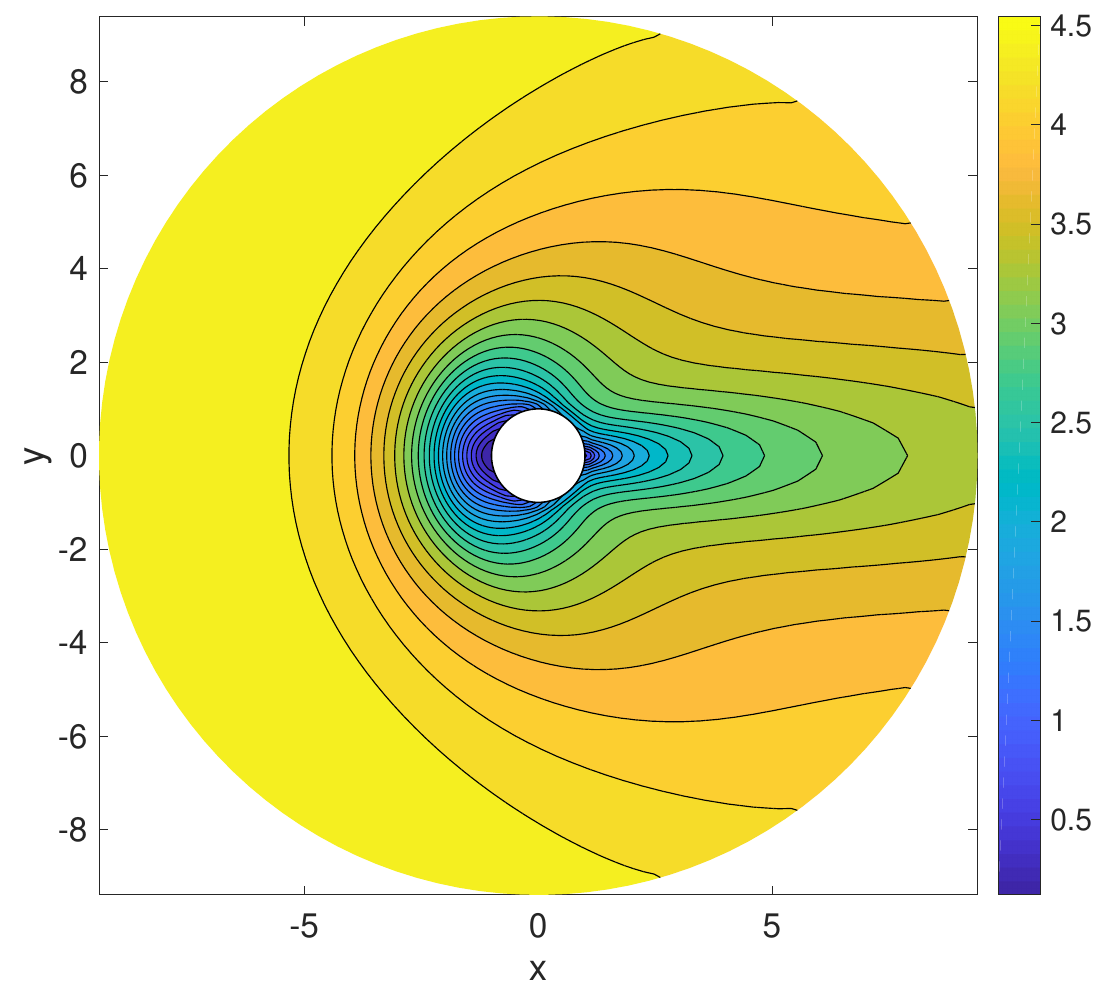}}
\caption{Speed.}
\end{subfigure}
\par\smallskip
\begin{subfigure}[t]{0.4894\figw}\centering
\raisebox{0.0000\figw}{\includegraphics[width=0.4894\figw]{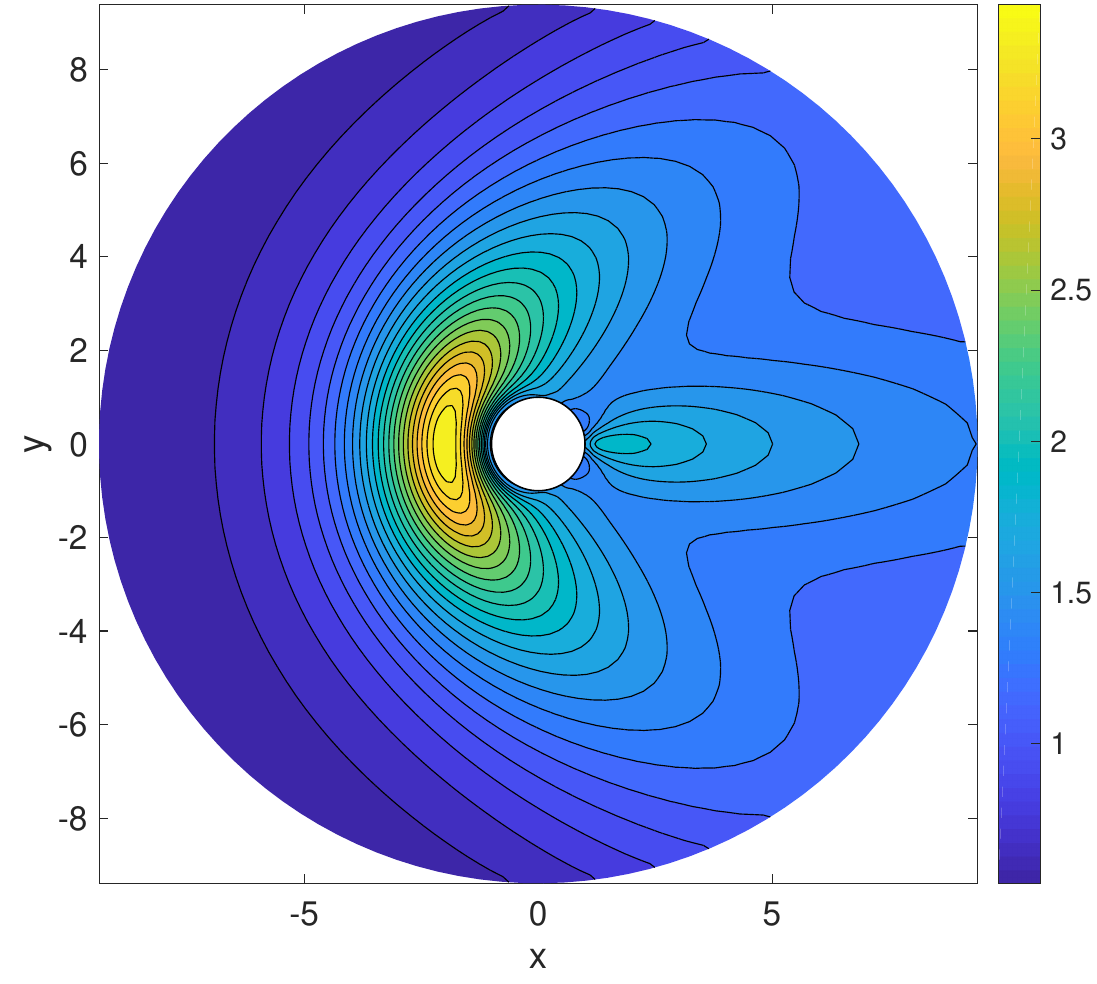}}
\caption{Temperature.}
\end{subfigure}\hspace{0.0200\figw}%
\begin{subfigure}[t]{0.4856\figw}\centering
\raisebox{0.0000\figw}{\includegraphics[width=0.4856\figw]{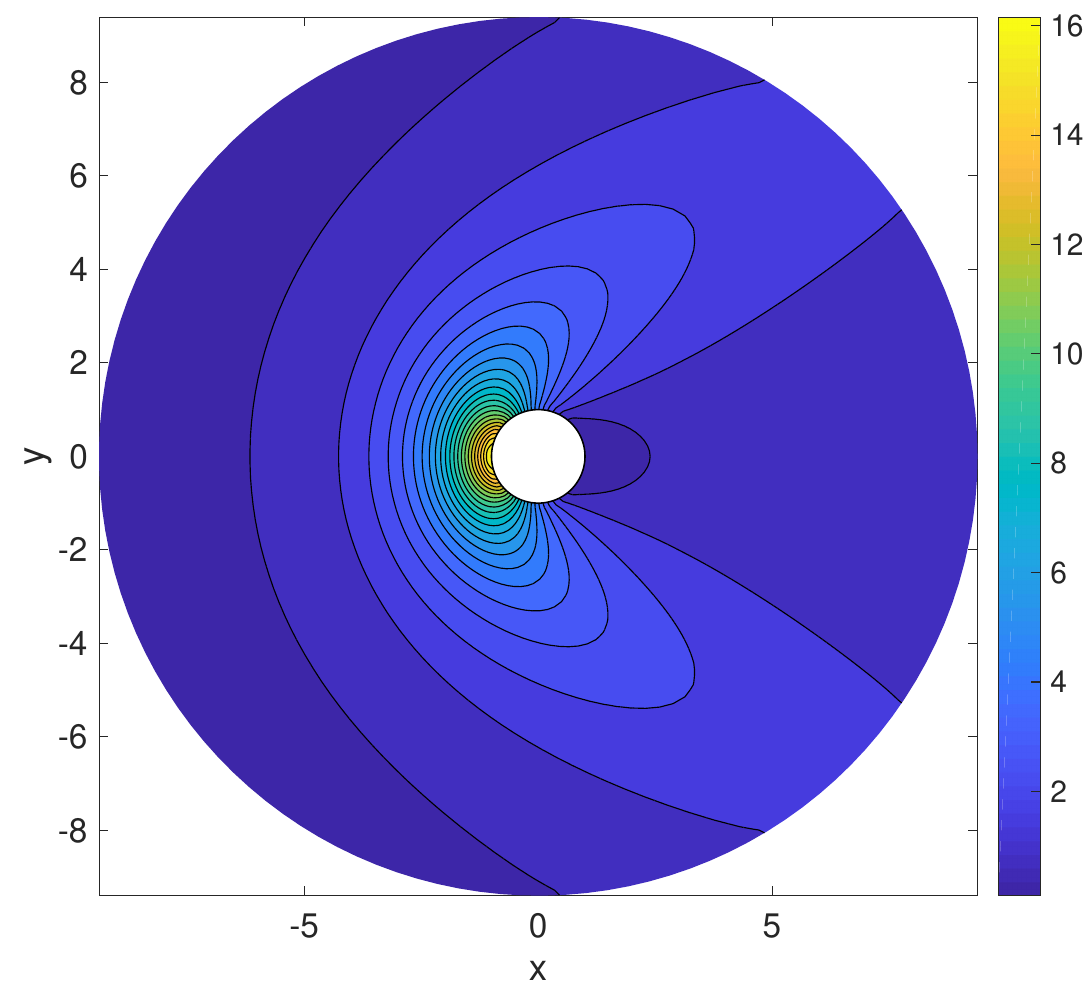}}
\caption{Pressure.}
\end{subfigure}
\caption{Circular cylinder at $\mathrm{Ma}=5$ and $\varepsilon=1$: macroscopic fields of IWPD (filled contours) and CIS (lines) on common levels.}\label{fig:cyl_kn1_fields}
\end{figure}

Mach-5 flow around a cylinder of unit radius with a stationary diffuse wall is computed on body-fitted O-grids with outer radius $9.40$ at $\varepsilon=1$ and $6.26$ at $10^{-2}$ and $10^{-4}$; the field differences from the reference are listed in Table~\ref{tab:accuracy}. At $\varepsilon=1$ the strongly rarefied shock layer extends several radii upstream (Figure~\ref{fig:cyl_kn1_fields}), and the temperature and pressure along the stagnation line and at the wall coincide for the two methods, including the maximum stagnation-line pressure of $16.16$ (Figure~\SMref{fig:cyl_kn1_profiles}). The reduced distributions on the symmetry line (Figure~\ref{fig:cyl_kn1_dist_lin}, and on a logarithmic scale Figure~\SMref{fig:cyl_kn1_dist_log}) consist ahead of the body of the free-stream beam and a slow population reflected from the wall, which dominates next to the stagnation point, and in the wake of two lobes populated from the upper and the lower side of the body. IWPD reproduces these structures on both scales with relative differences of $3.72\times10^{-6}$ to $3.63\times10^{-4}$, the largest next to the stagnation point.

\begin{figure}[!htb]
\centering\singlespacing\setlength{\figw}{\linewidth}
\begin{subfigure}[t]{0.4908\figw}\centering
\raisebox{0.0000\figw}{\includegraphics[width=0.4908\figw]{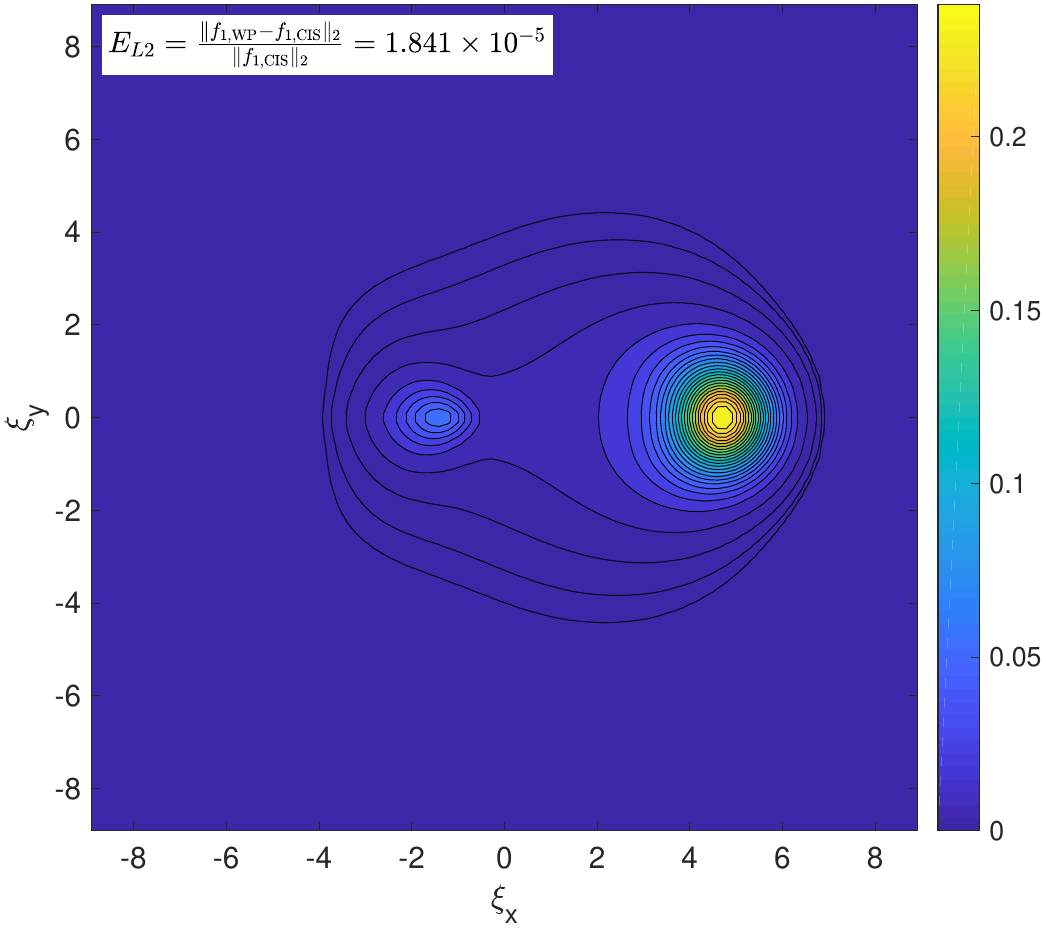}}
\caption{$x=-3.09$.}
\end{subfigure}\hspace{0.0200\figw}%
\begin{subfigure}[t]{0.4842\figw}\centering
\raisebox{0.0000\figw}{\includegraphics[width=0.4842\figw]{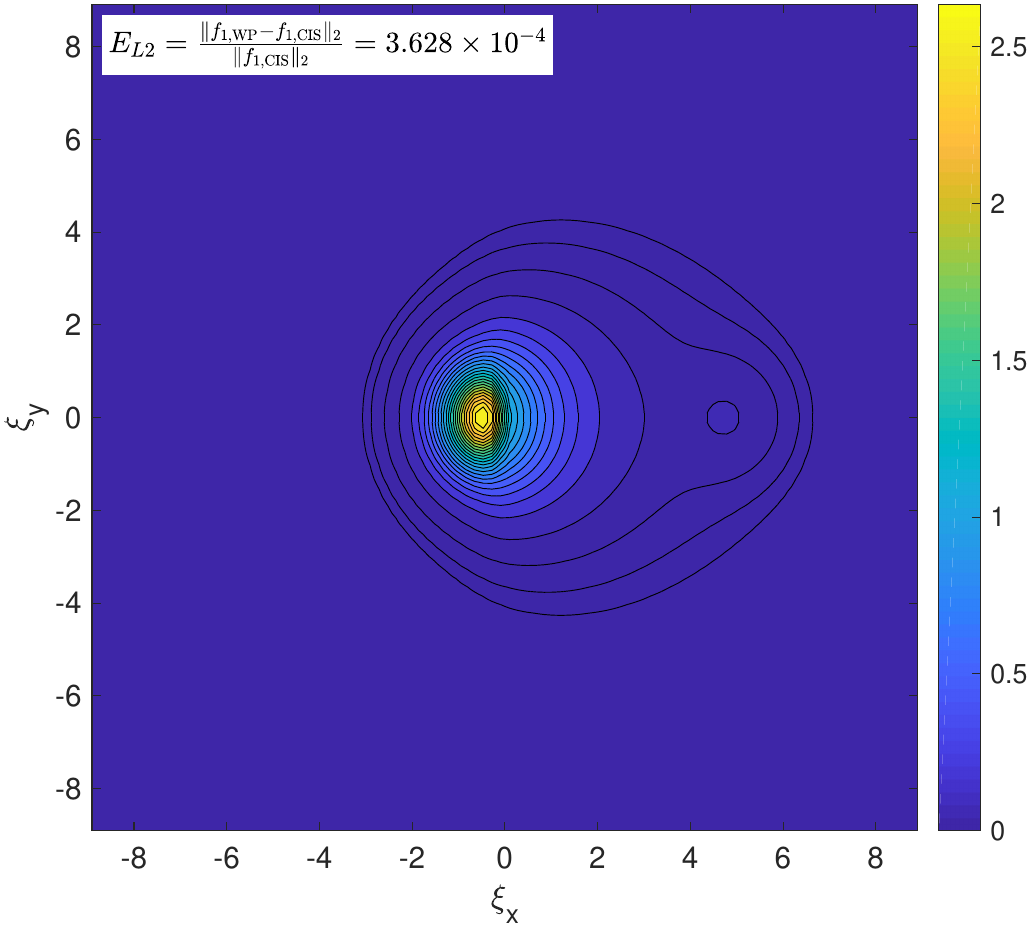}}
\caption{$x=-1.05$.}
\end{subfigure}
\par\smallskip
\begin{subfigure}[t]{0.4873\figw}\centering
\raisebox{0.0000\figw}{\includegraphics[width=0.4873\figw]{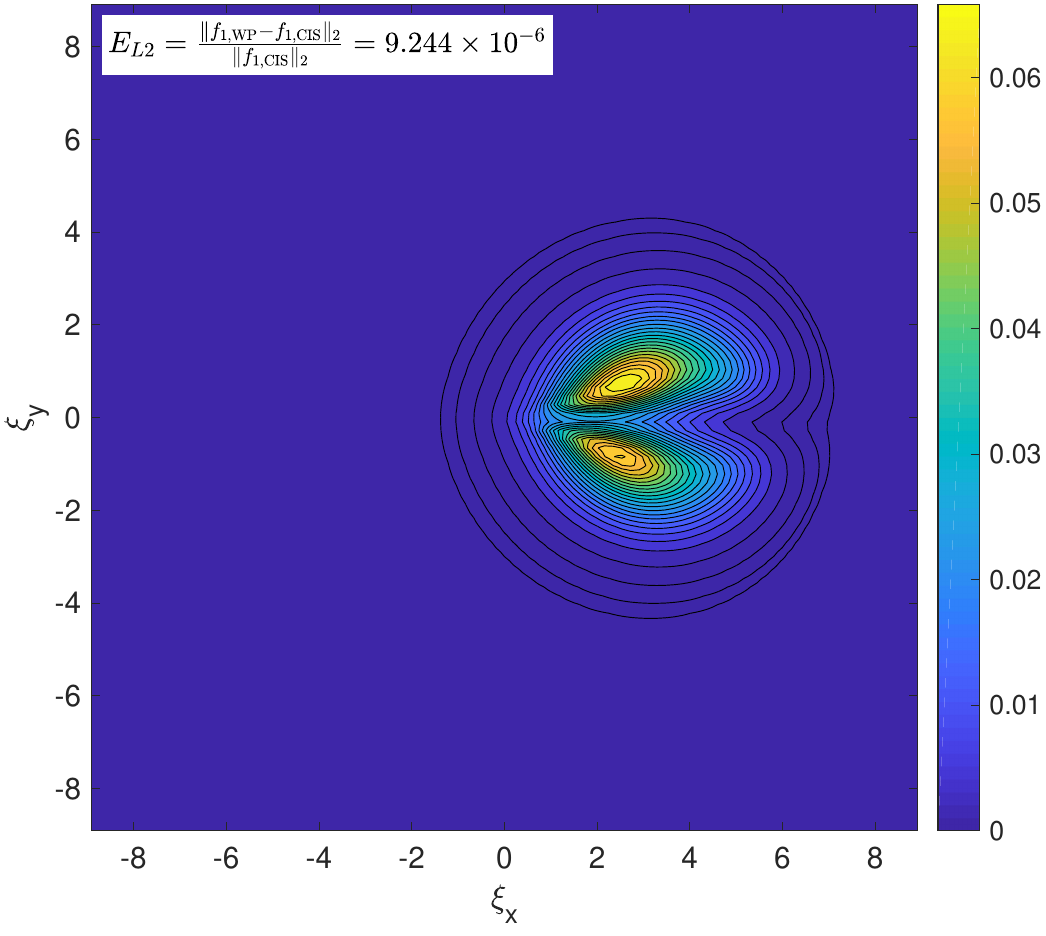}}
\caption{$x=5.16$.}
\end{subfigure}\hspace{0.0200\figw}%
\begin{subfigure}[t]{0.4877\figw}\centering
\raisebox{0.0000\figw}{\includegraphics[width=0.4877\figw]{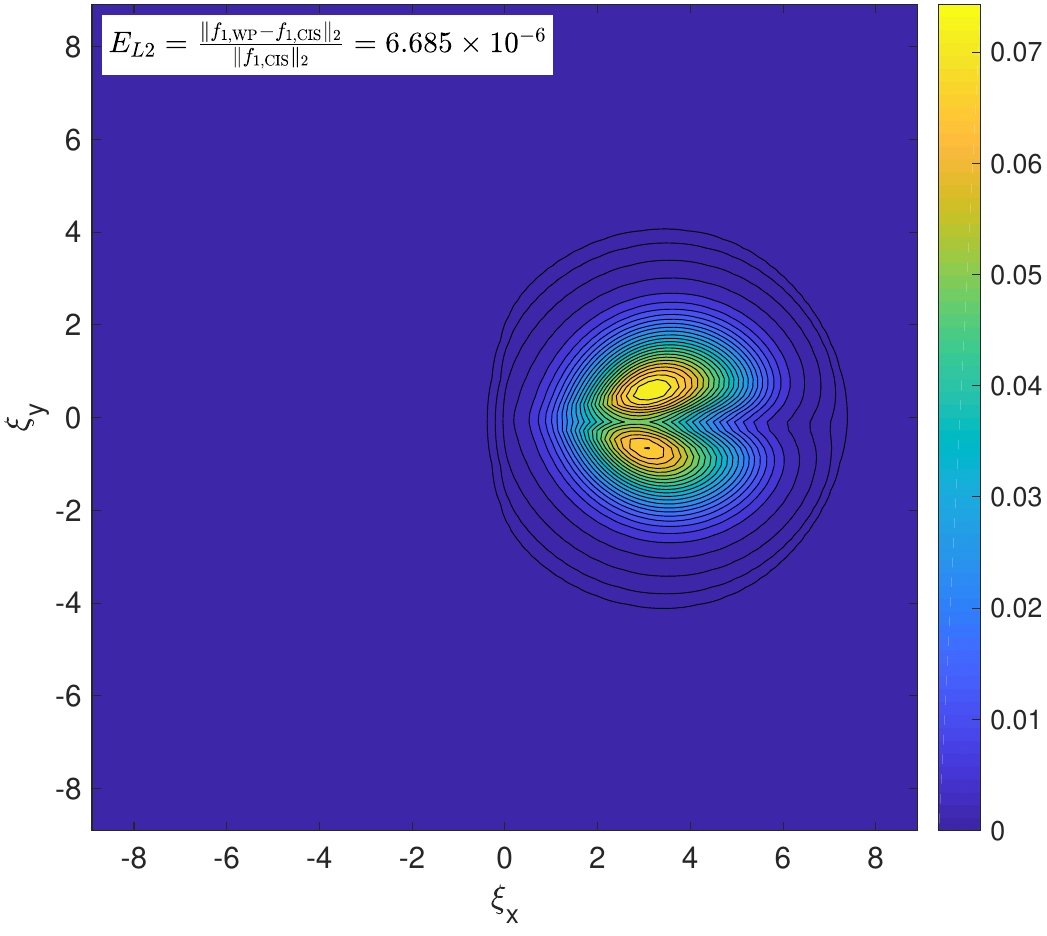}}
\caption{$x=9.34$.}
\end{subfigure}
\caption{Circular cylinder at $\mathrm{Ma}=5$ and $\varepsilon=1$: reduced distributions $f_1(\xi_x,\xi_y)$ on a linear scale at two points ahead of the body (a,b) and two points in the wake (c,d), all close to the symmetry axis. IWPD is shown by filled contours and CIS by lines.}\label{fig:cyl_kn1_dist_lin}
\end{figure}

At $\varepsilon=10^{-2}$ the shock layer is much thinner, and at $\varepsilon=10^{-4}$ the stagnation-line and wall profiles of IWPD and NS coincide across the bow shock and along the body (Figures~\SMref{fig:cyl_kn1em2_fields}--\SMref{fig:cyl_kn1em4_profiles}). At this Knudsen number the meshes do not resolve the boundary layer, and the wall profiles, which show the values in the cells adjacent to the wall, differ accordingly from the wall temperature. The residual histories (Figure~\SMref{fig:cyl_residual}) show the acceleration toward the continuum regime. At $\varepsilon=10^{-4}$ IWPD passes $\|R_U\|=10^{-5}$ after $15$ outer iterations and reaches the tolerance after $20$, whereas the CIS needs $856\,940$ iterations to reach $\|R_U\|=10^{-5}$; NS needs $390$ implicit steps and $46.3$~s to the tolerance against $121.8$~s for IWPD, so that $t_{\rm WP}/t_{\rm NS}=2.63$ although IWPD transports the particle on $64^2$ velocity nodes.

\paragraph{X38 section}\label{sec:res_x38}
\begin{figure}[!htb]
\centering\singlespacing\setlength{\figw}{\linewidth}
\begin{subfigure}[t]{0.4875\figw}\centering
\raisebox{0.0000\figw}{\includegraphics[width=0.4875\figw]{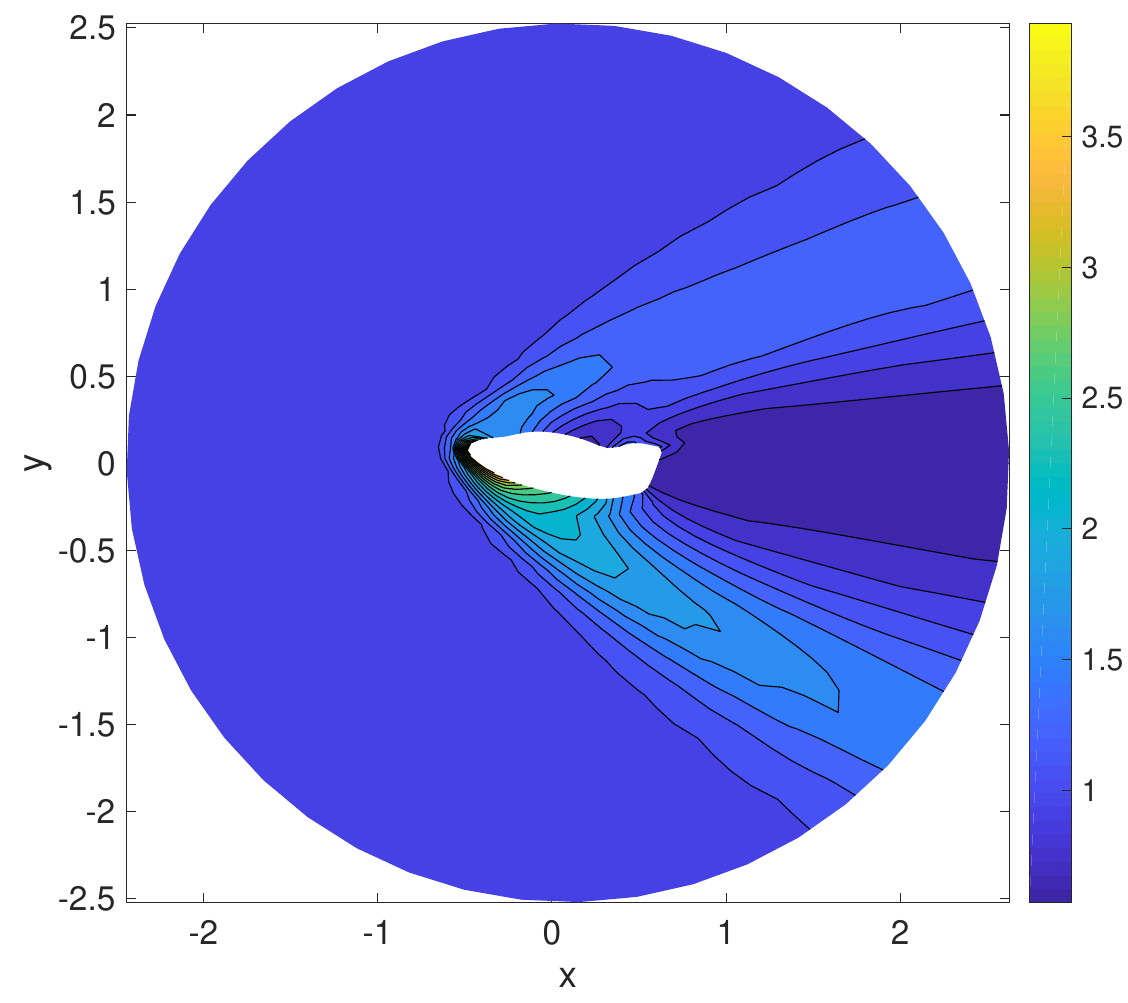}}
\caption{Density.}
\end{subfigure}\hspace{0.0200\figw}%
\begin{subfigure}[t]{0.4875\figw}\centering
\raisebox{0.0000\figw}{\includegraphics[width=0.4875\figw]{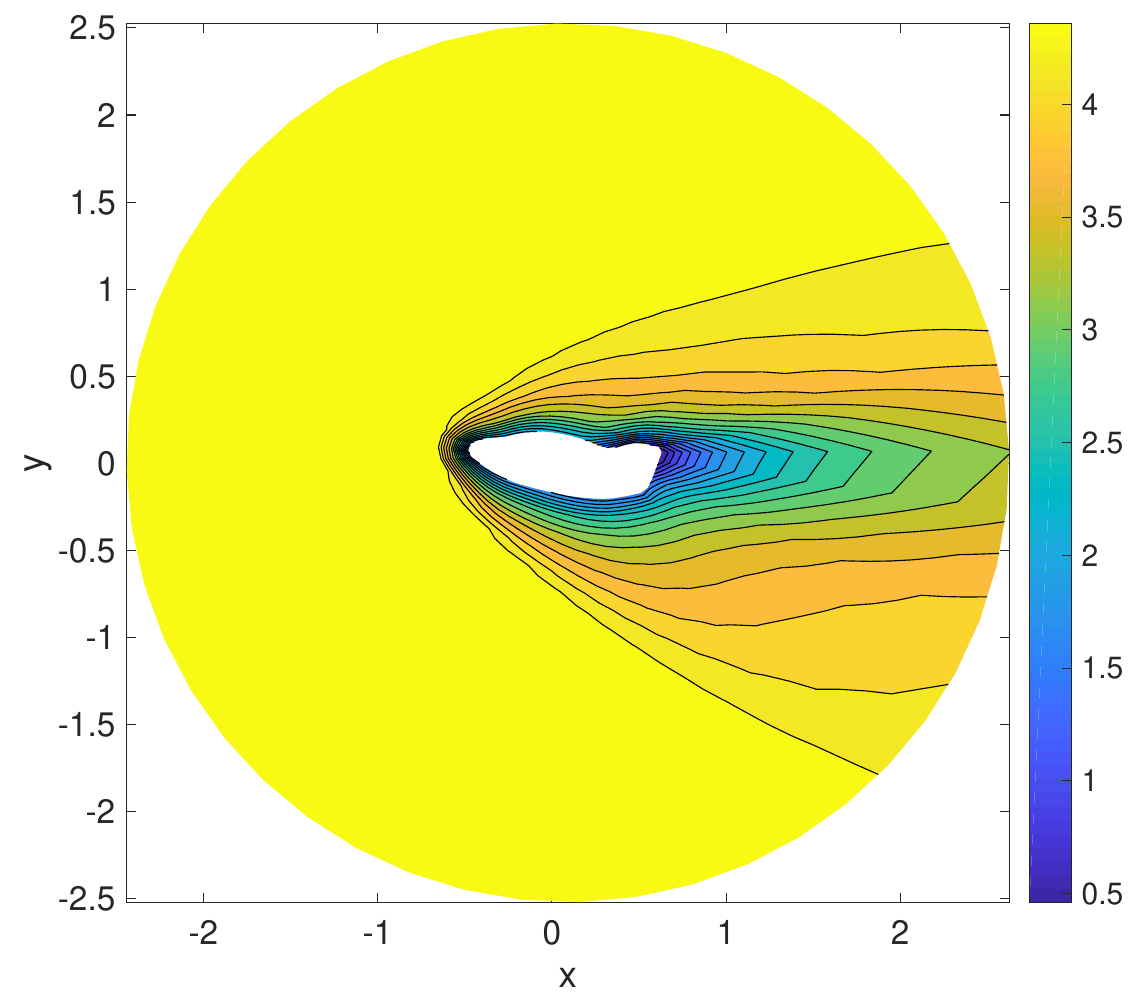}}
\caption{Speed.}
\end{subfigure}
\par\smallskip
\begin{subfigure}[t]{0.4930\figw}\centering
\raisebox{0.0000\figw}{\includegraphics[width=0.4930\figw]{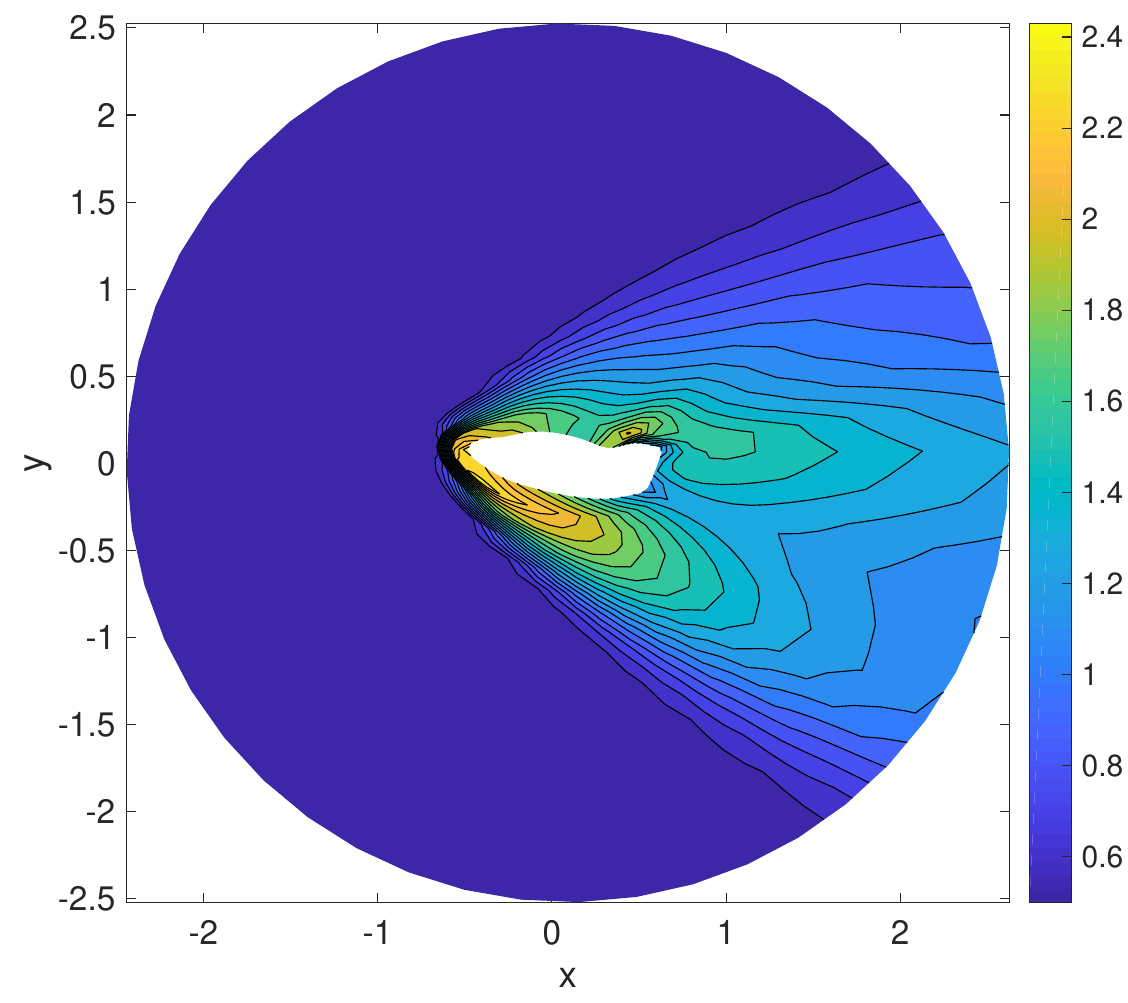}}
\caption{Temperature.}
\end{subfigure}\hspace{0.0200\figw}%
\begin{subfigure}[t]{0.4820\figw}\centering
\raisebox{0.0000\figw}{\includegraphics[width=0.4820\figw]{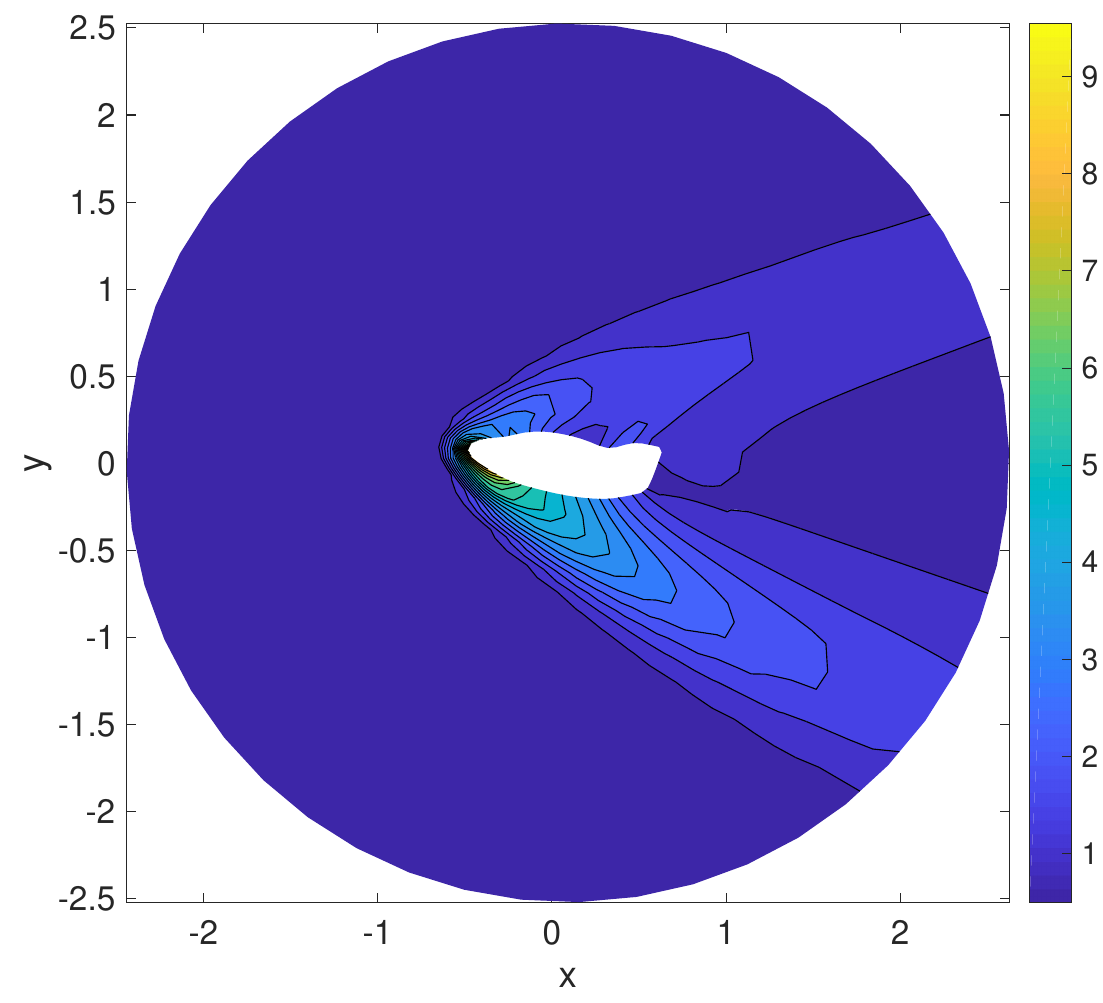}}
\caption{Pressure.}
\end{subfigure}
\caption{X38 section at $\mathrm{Ma}=5$ and $\varepsilon=10^{-4}$: macroscopic fields of IWPD (filled contours) and NS (lines) on common levels.}\label{fig:x38_kn1em4_fields}
\end{figure}

The X38 section uses periodic body-fitted O-grids with the settings in Table~\ref{tab:settings}. The stationary walls are diffuse, and the incoming far-field distribution is the free-stream Maxwellian. Figure~\ref{fig:x38_kn1em4_fields} shows the continuum fields; Supplementary Figures~\SMref{fig:x38_kn1_fields}--\SMref{fig:x38_kn1em4_profiles} provide the other fields, profiles and sampled distributions. The rarefied distributions contain a free-stream beam and a population reflected from the wall. At two points near X38, the IWPD--CIS distribution differences are $1.69\times10^{-4}$ and $1.82\times10^{-4}$ at $\varepsilon=10^{-2}$.

Figure~\ref{fig:x38_residual} shows that the X38 convergence follows the cylinder trend. The same WPD iteration covers the three regimes; the gain over CIS grows as collisions become dominant. Iteration counts and wall-time ratios are collected in Table~\ref{tab:efficiency}.

\begin{figure}[!htb]
\centering\singlespacing\setlength{\figw}{.97\linewidth}
\begin{subfigure}[t]{0.9950\figw}\centering
\makebox[\linewidth]{\raisebox{0.0000\figw}{\includegraphics[width=0.4944\figw]{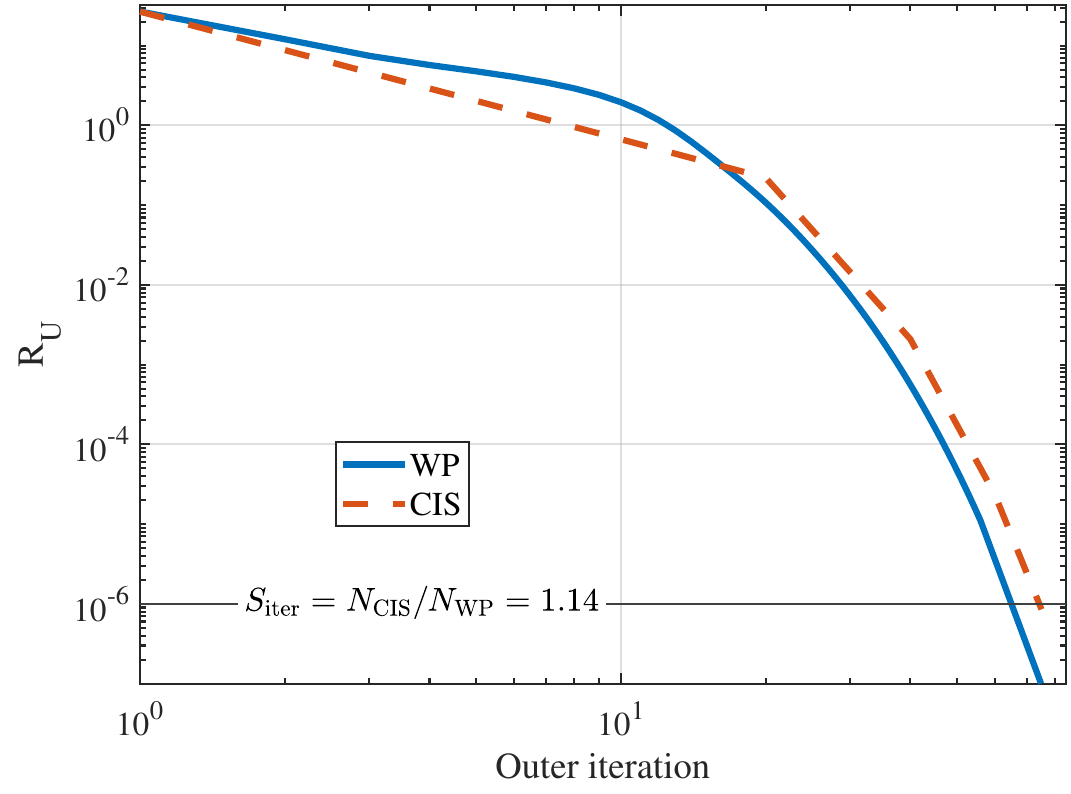}}\hspace{0.0200\figw}\raisebox{0.0002\figw}{\includegraphics[width=0.4806\figw]{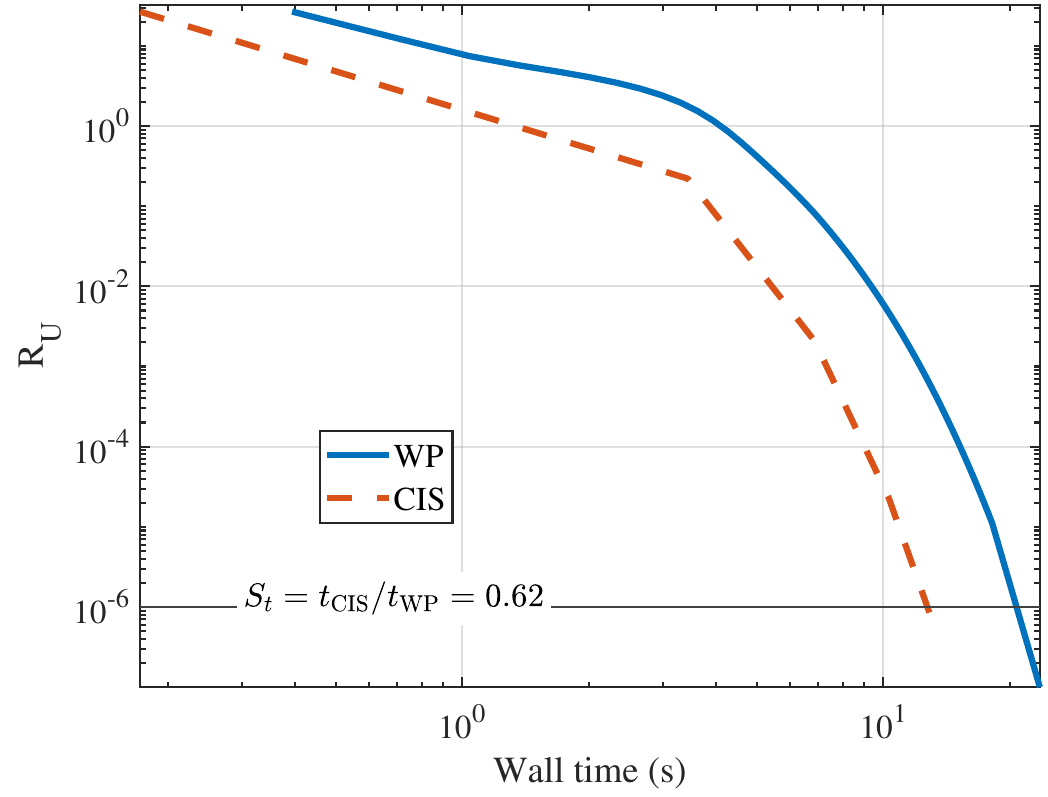}}}
\caption{$\varepsilon=1$.}
\end{subfigure}
\par\smallskip
\begin{subfigure}[t]{0.9950\figw}\centering
\makebox[\linewidth]{\raisebox{0.0002\figw}{\includegraphics[width=0.4842\figw]{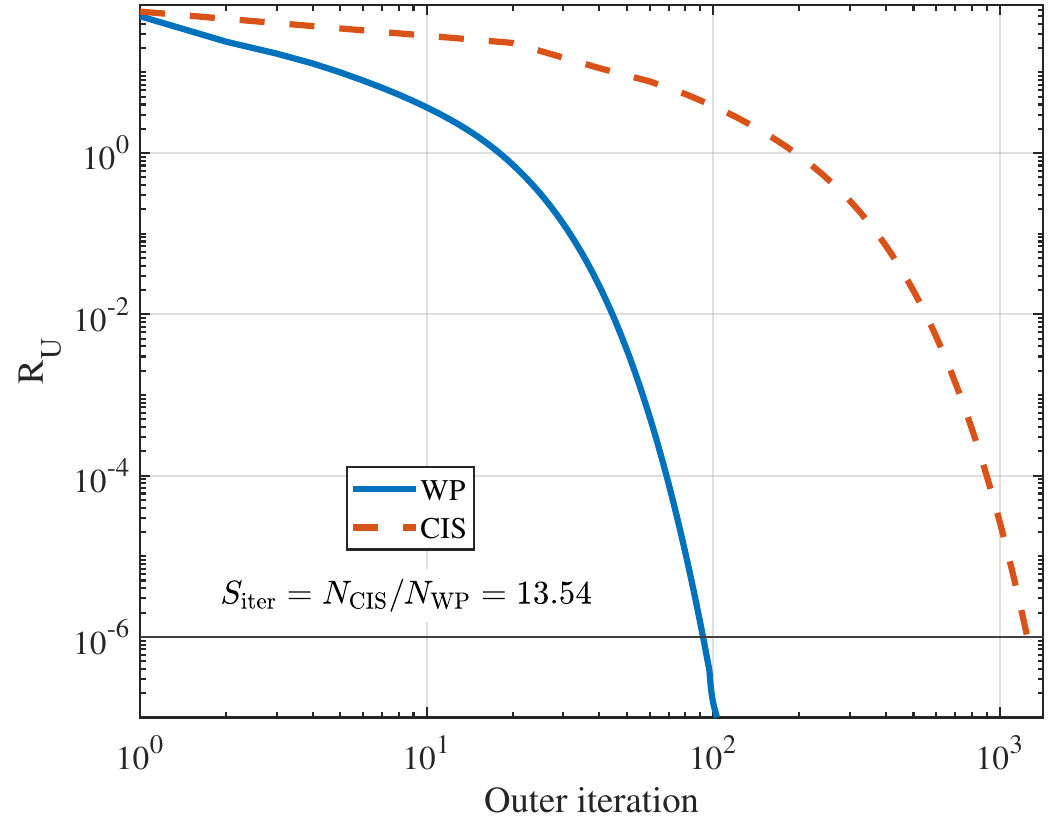}}\hspace{0.0200\figw}\raisebox{0.0000\figw}{\includegraphics[width=0.4908\figw]{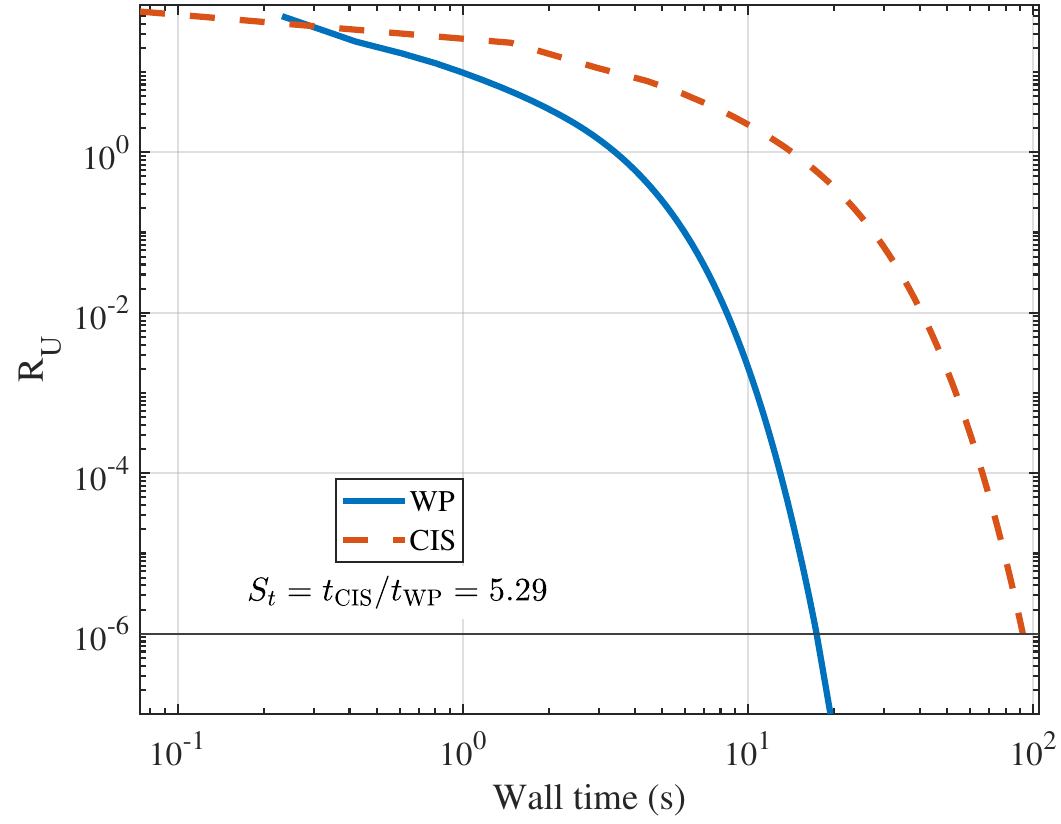}}}
\caption{$\varepsilon=10^{-2}$.}
\end{subfigure}
\par\smallskip
\begin{subfigure}[t]{0.9950\figw}\centering
\makebox[\linewidth]{\raisebox{0.0000\figw}{\includegraphics[width=0.5167\figw]{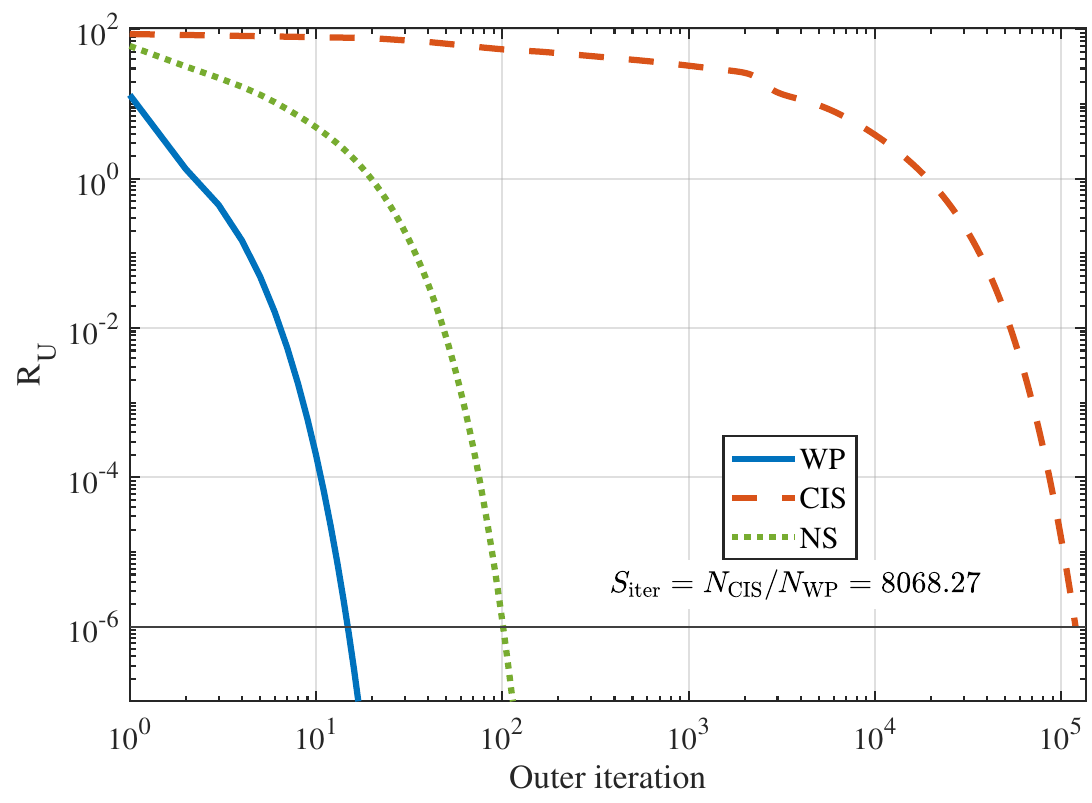}}\hspace{0.0200\figw}\raisebox{0.0037\figw}{\includegraphics[width=0.4583\figw]{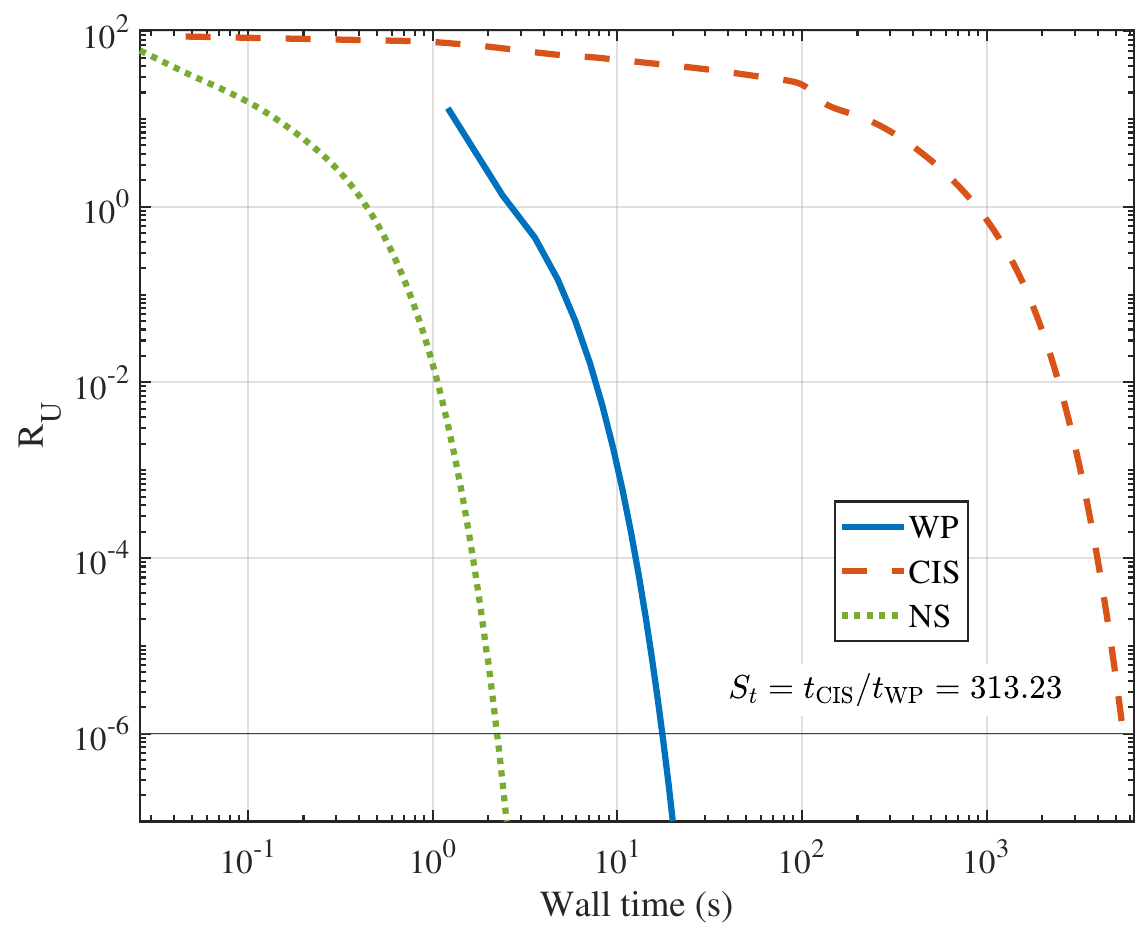}}}
\caption{$\varepsilon=10^{-4}$.}
\end{subfigure}
\caption{X38 section at $\mathrm{Ma}=5$: residual versus iteration (left) and wall time (right) for IWPD (solid), CIS (dashed) and, at $\varepsilon=10^{-4}$, NS (dotted). The grey line marks $\|R_U\|=10^{-6}$, at which $S_{\rm iter}$ and $S_t$ are evaluated.}\label{fig:x38_residual}
\end{figure}

\subsection{Convergence and computational cost}\label{sec:convergence}
The residual histories (Figures~\ref{fig:internal_residual} and~\ref{fig:x38_residual}, and Figures~\SMref{fig:couette_residual}, \SMref{fig:cavity_residual} and~\SMref{fig:cyl_residual}) show that the CIS slows down as $\varepsilon$ decreases, as the scale-dependent bound \eqref{eq:bounds} predicts, whereas IWPD converges in tens to hundreds of outer iterations. Table~\ref{tab:factors} compares the asymptotic contraction factors, fitted by least squares to $\ln\|R_U\|$, with the predictions of Section~\ref{sec:rates}. For the internal flows the nominal ratio $\eta$ and wavenumber parameter $z$ are evaluated with the cell size $\Delta$, the reference state and the slowest mode $\vk=\pi/L$ of the domain of length $L=1$,
\begin{equation}\label{eq:nominal}
 \eta=\frac{\Delta}{(|\bu|+a)\,\tau},\qquad z=\pi\tau.
\end{equation}
For the external flows the horizon varies too strongly over the O-grid for a nominal value to be meaningful. The macroscopic contraction $\varrho_W$ of the inexact-solve estimate \eqref{eq:inexact} is taken from the histories of the NS computations, which use the same macroscopic solver.
\begin{table}[!htb]
\centering\small
\caption{Measured contraction factors per iteration of the CIS and per outer iteration of IWPD, fitted over the residual range $[10^{-6},10^{-3}]$, and the predictions of Section~\ref{sec:rates}: the CIS factor \eqref{eq:factors}, the prediction $\varrho_W^{N_U}$ of the inexact-solve estimate \eqref{eq:inexact} with $\varrho_W$ the measured factor of the NS computation, and the bound \eqref{eq:bounds} of the model problem with a converged W iteration, where it is smaller than one. }
\label{tab:factors}
\resizebox{\linewidth}{!}{\begin{tabular}{llrrrrrrrr}
\toprule
Case & $\varepsilon$ & $N_U$ & $\eta$ & $z$ & $g_{\rm CIS}$ measured & $\arctan z/z$ & $g_{\rm WP}$ measured & $\varrho_W^{N_U}$ & bound \eqref{eq:bounds}\\
\midrule
Couette  & $1$          & 1   & 0.0075 & 4.59   & 0.916  & 0.295  & 0.907 & -- & --\\
Couette  & $10^{-2}$    & 1   & 0.75   & 0.046  & 0.9987 & 0.9993 & 0.976 & -- & --\\
Couette  & $10^{-4}$    & 128 & 75     & 0.0005 & 1.0000 & 1.0000 & 0.990 & 0.974 & $7\times10^{-8}$\\
Cavity   & $1$          & 1   & 0.012  & 4.59   & 0.944  & 0.295  & 0.927 & -- & --\\
Cavity   & $0.075$      & 1   & 0.16   & 0.34   & 0.969  & 0.963  & 0.858 & -- & --\\
Cavity   & $2.05\times10^{-3}$ & 24 & 5.7 & 0.0094 & 0.9997 & 1.0000 & 0.744 & 0.752 & 0.023\\
Cavity   & $2.05\times10^{-4}$ & 24 & 36.5 & 0.0009 & 1.0000 & 1.0000 & 0.975 & 0.933 & $3\times10^{-7}$\\
Cylinder & $1$          & 1   & --     & --     & 0.911  & --     & 0.878 & -- & --\\
Cylinder & $10^{-2}$    & 1   & --     & --     & 0.998  & --     & 0.948 & -- & --\\
Cylinder & $10^{-4}$    & 120 & --     & --     & 1.0000 & --     & 0.332 & 0.002 & --\\
X38      & $10^{-2}$    & 1   & --     & --     & 0.987  & --     & 0.823 & -- & --\\
X38      & $10^{-4}$    & 120 & --     & --     & 0.9999 & --     & 0.332 & $<10^{-6}$ & --\\
\bottomrule
\end{tabular}}
\end{table}

Table~\ref{tab:factors} shows the loss of contraction of CIS toward the continuum limit. For $z\le0.35$, the measured factors differ from $\arctan z/z$ by less than $0.006$, but agreement near one does not imply agreement in iteration counts: the measured continuum runs converge one to two orders of magnitude faster than the scalar model predicts. At $\varepsilon=1$, the model omits diffuse boundaries and approximate transport inversion and is not quantitative. In the intermediate regime IWPD reduces the measured factors even where the bound \eqref{eq:bounds} does not apply. In the continuum regime its measured factors remain well above the ideal bound, which assumes a converged W iteration; the next paragraph shows that they are set instead by the incomplete macroscopic solve and the explicit collision remainder.

For the internal flows, the measured IWPD factors are close to the estimate $\varrho_W^{N_U}$, with differences of about $0.01$ at $\Rey=100$ and $0.02$--$0.04$ for Couette flow and the $\Rey=1000$ cavity. This is consistent with incomplete macroscopic convergence limiting the outer iteration. For hypersonic flows at $\varepsilon=10^{-4}$, the measured factors lie within $0.002$ of the Shakhov relaxation value $1-\Pr=1/3$, consistent with the explicit remainder mechanism in \eqref{eq:remainder_factor} and tested directly by the BGK comparison below.

The BGK--Shakhov comparison in Table~\ref{tab:bgk} tests the role of the explicit collision remainder with the same programs, meshes, velocity grids, controls and tolerance. Removing the Shakhov remainder removes the homogeneous heat-flux mode in \eqref{eq:remainder_factor}. The BGK residual then falls by more than an order of magnitude per outer iteration once it is below $10^{-1}$, and both the iteration count and wall time decrease. At $\varepsilon=1$ and $10^{-2}$ the cylinder computations instead converge similarly for the two collision models: at the residual $10^{-6}$, BGK gives $139$ and $337$ IWPD iterations against $186$ and $9890$ CIS iterations, with $S_{\rm iter}=1.34$ and $29.35$ and $S_t=0.75$ and $15.95$. The Shakhov iteration ratios are $1.35$ and $29.36$. The IWPD--CIS field differences for BGK are $2.70\times10^{-6}$ to $1.81\times10^{-5}$ at $\varepsilon=1$ and $2.10\times10^{-4}$ to $9.02\times10^{-4}$ at $\varepsilon=10^{-2}$. Thus the collision remainder limits the observed continuum convergence but not these rarefied and transition cases.

\begin{table}[!htb]
\centering\small
\caption{Hypersonic flows at $\varepsilon=10^{-4}$ computed with the Shakhov and the BGK model by IWPD with W120/PR2/P1: outer iterations and wall time to the tolerance $10^{-7}$, asymptotic contraction factor per outer iteration, and wall-time ratio to the NS computation on the same mesh. The Shakhov factors are fitted over the residual range $[10^{-6},10^{-3}]$. For the BGK model the residual contracts faster than linearly and crosses this range within two or three outer iterations, and the factor given is the largest ratio of successive residuals once $\|R_U\|<10^{-1}$.}
\label{tab:bgk}
\begin{tabular}{llrrrr}
\toprule
Case & Model & Outer iterations & Factor & $t_{\rm WP}$ (s) & $t_{\rm WP}/t_{\rm NS}$\\
\midrule
Cylinder & Shakhov & 20 & 0.332 & 121.8 & 2.63\\
         & BGK     & 8  & 0.047 & 44.7  & 0.97\\
X38      & Shakhov & 17 & 0.332 & 20.00 & 7.92\\
         & BGK     & 6  & 0.015 & 6.71  & 2.65\\
\bottomrule
\end{tabular}
\end{table}

\subsubsection{Computational cost}\label{sec:cost_assess}
Table~\ref{tab:efficiency} compares wall time as well as iteration counts. At $\varepsilon=1$, the two schemes require similar iteration counts, with the additional macroscopic work of IWPD giving $S_t=0.40$--$0.74$. Near the continuum limit, the wall-time gain reaches $13.56$--$50.07$ for internal flows and $313.23$--$2118.23$ for external flows. These gains are smaller than the iteration-count ratios because each IWPD outer iteration includes multiple W updates. The ratio $S_{\rm iter}/S_t$ measures that added work: it is $1.84$--$3.76$ with W1 and rises to $25.76$--$142.40$ for the continuum configurations. The reported cost relative to the accompanying NS solver is $0.82$--$7.92$. These ratios characterize the shared implementation and prescribed controls described in the computational setup.

\begin{table}[!htb]
\centering\footnotesize
\caption{Efficiency of IWPD relative to CIS at the common residual threshold of each comparison, which is $\|R_U\|=10^{-6}$ for the external flows and $\|R_U\|=10^{-5}$ for the cylinder at $\varepsilon=10^{-4}$, and wall-time ratio of IWPD to the NS computation on the same mesh at the tolerance $10^{-7}$ in the continuum cases.}
\label{tab:efficiency}
\begin{tabular}{llcrrr}
\toprule
Case & Regime & IWPD control & $S_{\rm iter}$ & $S_t$ & $t_{\rm WP}/t_{\rm NS}$\\
\midrule
Couette  & $\varepsilon=1$       & W1/PR6/P1   & 1.17     & 0.40    & --\\
Couette  & $\varepsilon=10^{-2}$ & W1/PR6/P1   & 18.89    & 5.02    & --\\
Couette  & $\varepsilon=10^{-4}$ & W128/PR2/P1 & 2987.73  & 20.98   & 1.71\\
Cavity   & $\varepsilon=1$       & W1/PR6/P1   & 1.36     & 0.74    & --\\
Cavity   & $\varepsilon=0.075$   & W1/PR6/P1   & 4.55     & 1.93    & --\\
Cavity   & $\Rey=100$            & W24/PR4/P1  & 806.22   & 13.56   & 0.86\\
Cavity   & $\Rey=1000$           & W24/PR3/P1  & 2001.63  & 50.07   & 0.82\\
Cylinder & $\varepsilon=1$       & W1/PR2/P1   & 1.35     & 0.72    & --\\
Cylinder & $\varepsilon=10^{-2}$ & W1/PR2/P1   & 29.36    & 11.37   & --\\
Cylinder & $\varepsilon=10^{-4}$ & W120/PR2/P1 & 57391.23 & 2118.23 & 2.63\\
X38      & $\varepsilon=1$       & W1/PR2/P1   & 1.14     & 0.62    & --\\
X38      & $\varepsilon=10^{-2}$ & W1/PR2/P1   & 13.54    & 5.29    & --\\
X38      & $\varepsilon=10^{-4}$ & W120/PR2/P1 & 8068.27  & 313.23  & 7.92\\
\bottomrule
\end{tabular}
\end{table}

\begin{figure}[!htb]
\centering\singlespacing\setlength{\figw}{.97\linewidth}
\begin{subfigure}[t]{0.9950\figw}\centering
\makebox[\linewidth]{\raisebox{0.0006\figw}{\includegraphics[width=0.4875\figw]{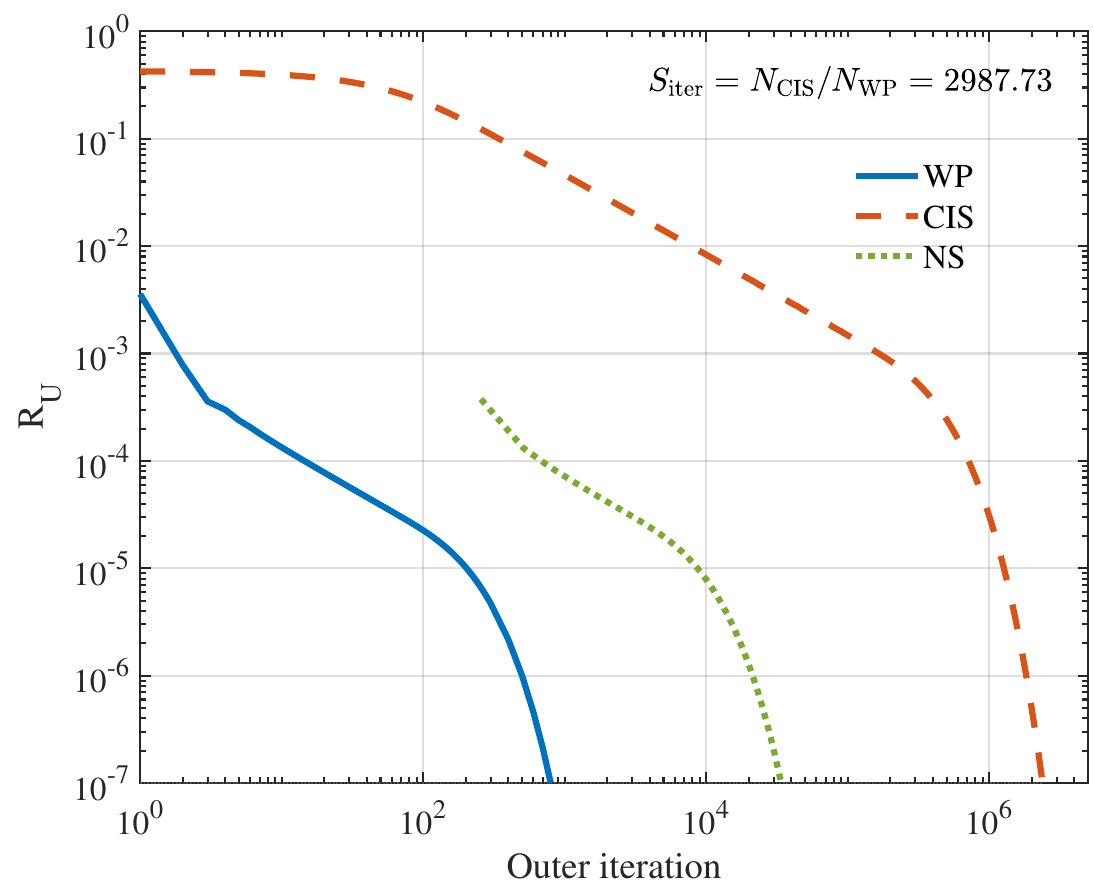}}\hspace{0.0200\figw}\raisebox{0.0000\figw}{\includegraphics[width=0.4875\figw]{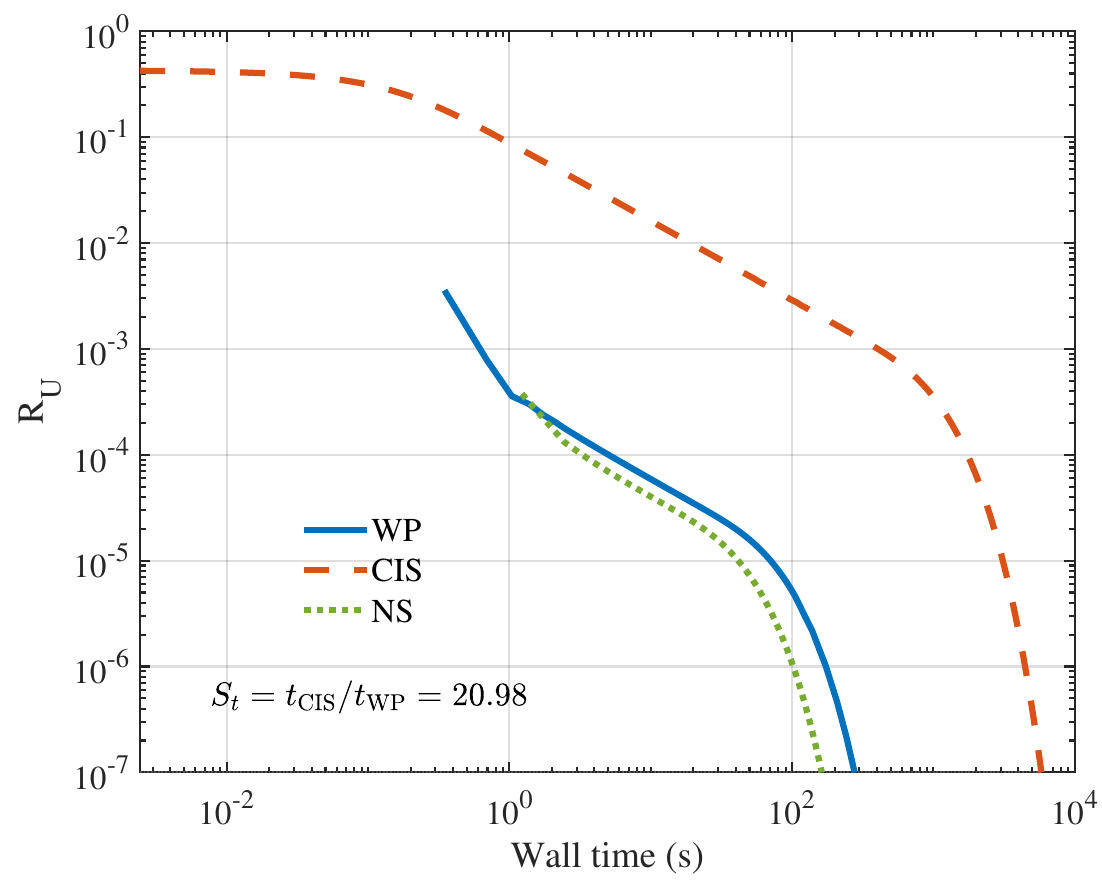}}}
\caption{Couette flow, $\varepsilon=10^{-4}$.}
\end{subfigure}
\par\smallskip
\begin{subfigure}[t]{0.9950\figw}\centering
\makebox[\linewidth]{\raisebox{0.0011\figw}{\includegraphics[width=0.3681\figw]{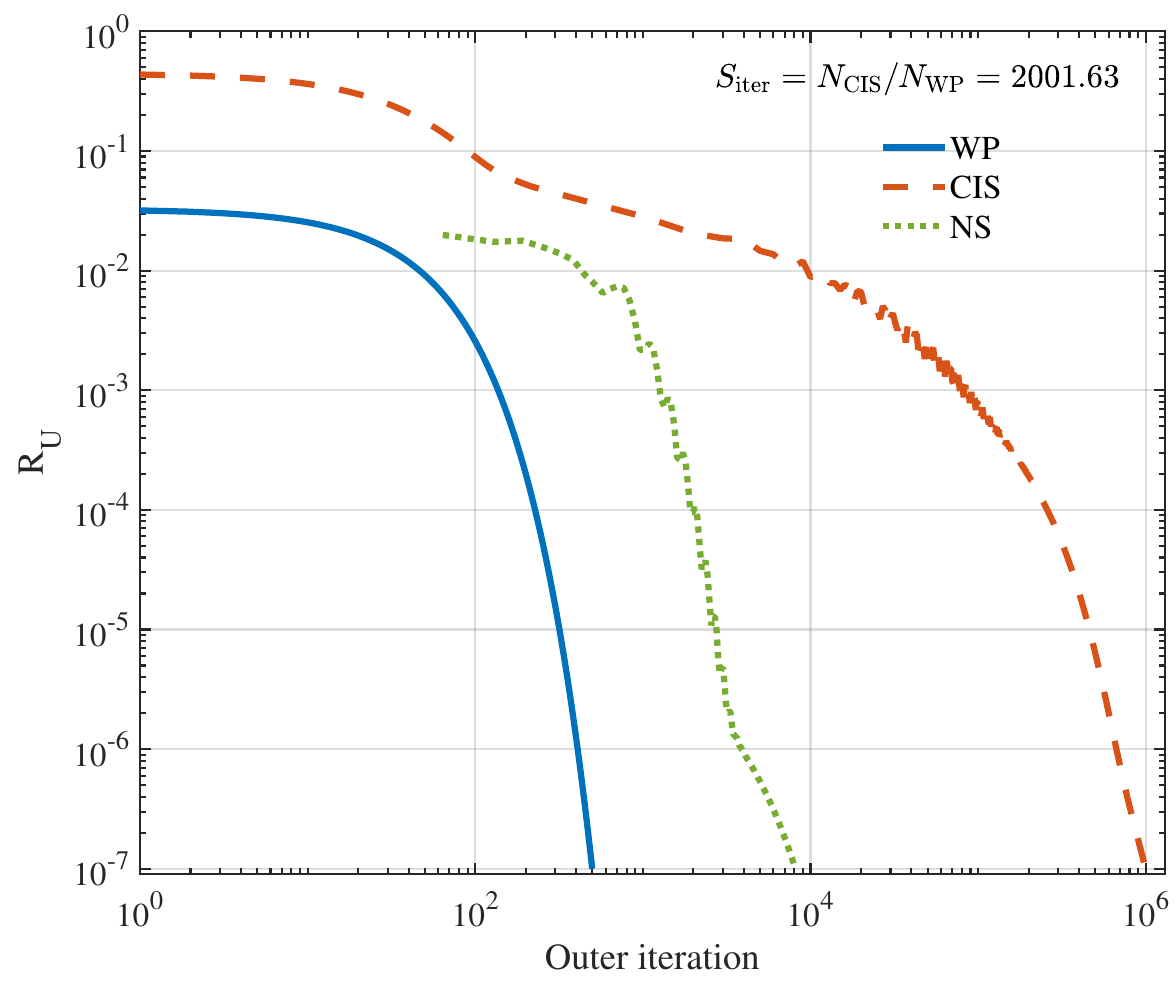}}\hspace{0.0200\figw}\raisebox{0.0000\figw}{\includegraphics[width=0.4875\figw]{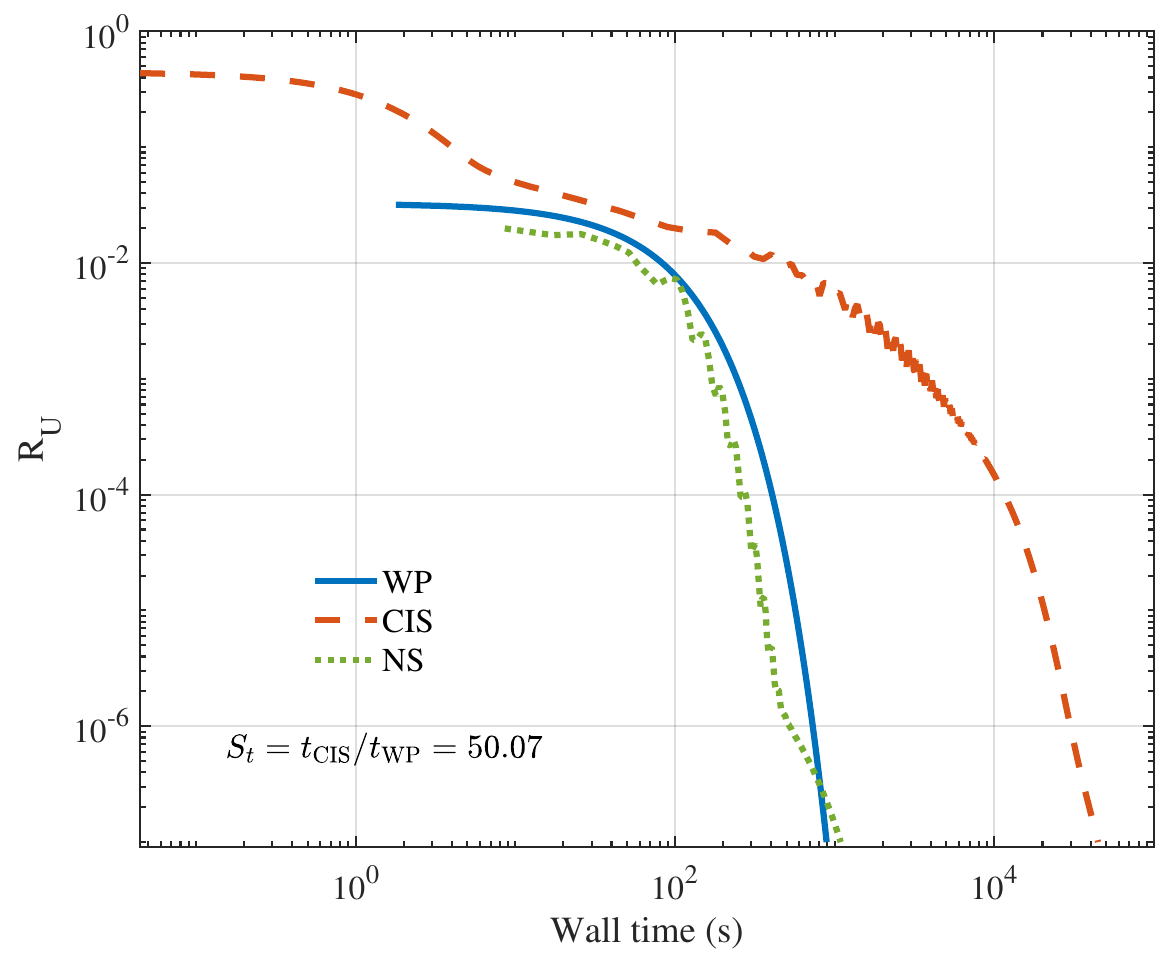}}}
\caption{Lid-driven cavity, $\mathrm{Re}=1000$.}
\end{subfigure}
\caption{Internal flows in the continuum regime: macroscopic residual norm $\|R_U\|$ versus outer iteration (left) and wall time (right) for IWPD (solid), CIS (dashed) and NS (dotted). The ratios $S_{\rm iter}$ and $S_t$ are evaluated at the common residual threshold.}\label{fig:internal_residual}
\end{figure}

For continuum internal flows, Figure~\ref{fig:internal_residual} and Table~\ref{tab:efficiency} show that the reduction in outer iterations outweighs the added macroscopic work. In the cavity, the macroscopic work is distributed differently between IWPD and NS: IWPD uses PR4 or PR3 in its W updates, whereas NS uses PR6. The wall-time ratios below one therefore reflect this allocation of work within the shared macroscopic solver.

\subsection{Parameter and mesh dependence}\label{sec:controls}
The baseline computations use one W update at $\varepsilon\ge10^{-2}$ and $24$--$128$ W updates in the continuum cases, with one P update and one kinetic LU--SGS pair. The small nominal $\eta$ in the former cases motivates limiting the W work: a fully converged W solve amplifies long waves below $\eta_c$ in the scalar model. The following scans assess the practical allocation of W and P work across this range of nominal horizon ratios.

\begin{figure}[!htb]
\centering\singlespacing\setlength{\figw}{\linewidth}
\begin{subfigure}[t]{0.5118\figw}\centering
\raisebox{0.0048\figw}{\includegraphics[width=0.5118\figw]{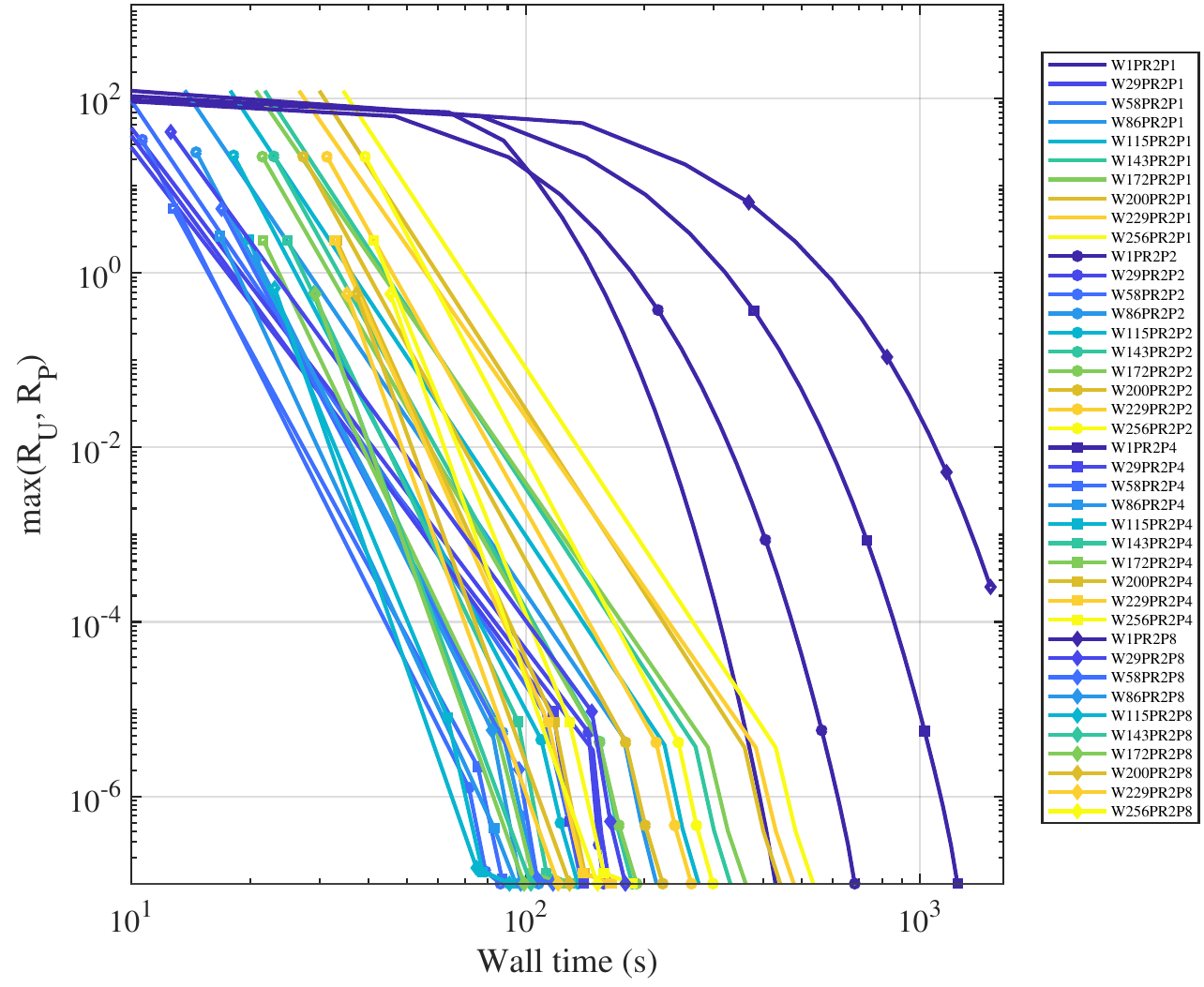}}
\caption{Residual versus wall time.}
\end{subfigure}\hspace{0.0200\figw}%
\begin{subfigure}[t]{0.4632\figw}\centering
\raisebox{0.0000\figw}{\includegraphics[width=0.4632\figw]{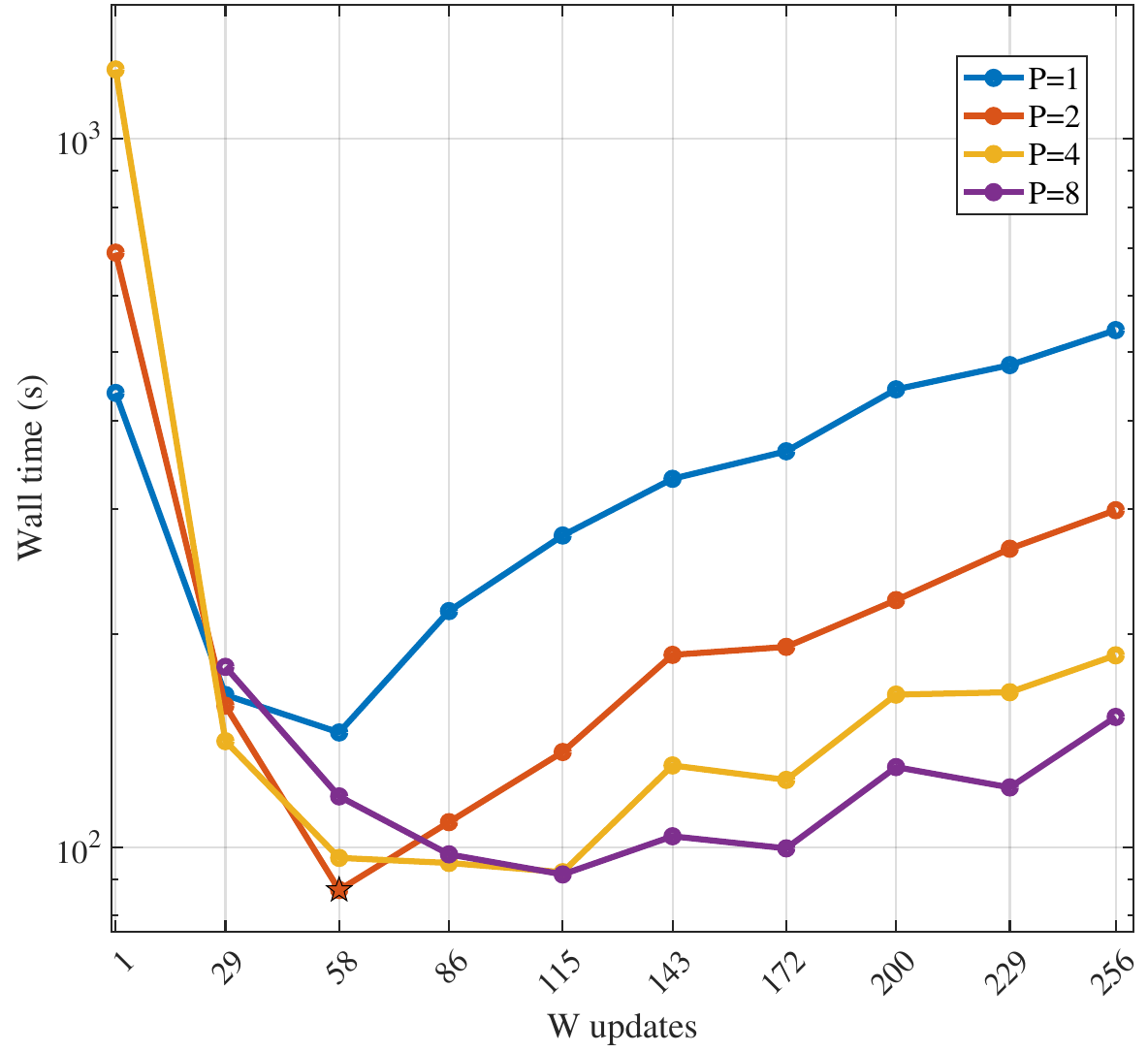}}
\caption{Wall time versus the number of W updates.}
\end{subfigure}
\caption{Circular cylinder at $\mathrm{Ma}=5$ and $\varepsilon=10^{-4}$: scan of the W and P controls with PR2 on a $100\times100$ mesh, covering W1--W256 and P1, P2, P4 and P8; the optimum is marked by a star. The residual in (a) is the larger of the residual norms $\|R_U\|$ and $\|R_P\|$.}\label{fig:cyl_scan}
\end{figure}

For Couette flow at $\varepsilon=10^{-4}$ a scan of W with PR1, PR2 and PR4 and P1, continued to $\|R_U\|\leq10^{-8}$ (Figure~\SMref{fig:couette_scan}), gives with PR2 a wall time that falls from $559.2$~s at W8 to a minimum of $380.6$~s at W128, the setting used in Section~\ref{sec:res_couette}, and rises to $382.0$~s at W256. With PR1 the runs from W16 to W256 need $655.1$~s to $826.6$~s, and with PR4 the wall time grows from $421.2$~s at W1 to $6904.4$~s at W64. For the cavity the scans of W and PR with P1 (Figure~\SMref{fig:cavity_scan}) select the settings of Section~\ref{sec:res_cavity}, W24/PR4 at $\Rey=100$ and W24/PR3 at $\Rey=1000$, as the fastest runs to $\|R_U\|\leq10^{-7}$ (not shown), and to $\|R_U\|\leq10^{-8}$ they show a flat wall time over a broad range of W around the optimum W28/PR3 at both Reynolds numbers. For the cylinder at $\varepsilon=10^{-4}$ the scan in Figure~\ref{fig:cyl_scan}, on a $100\times100$ mesh with the same velocity grid, CFL numbers, PR2 and tolerance, covers $N_U$ from $1$ to $256$ and $N_P=1$, $2$, $4$ and $8$. The lowest wall time, $86.97$~s, is obtained with W58/P2, and for large $N_U$ additional P updates shorten it further, consistently with the remainder factor \eqref{eq:remainder_factor}, which acts per P update.

The cost is relatively flat around the optimal number of W updates, but is more sensitive to the number of point-relaxation pairs: Couette flow with PR4/W64 is about $18$ times slower than its optimum. The internal-flow controls were selected by scans, whose cost is excluded from the reported solver times. The common external-flow setting W120/PR2/P1 provides a baseline across geometries; the cylinder scan identifies further savings, with W58/P2 costing about a third as much as W115/P1. These trends motivate allocating iteration work according to the local $\eta$ relative to the model threshold $\eta_c$.

\subsubsection{Mesh refinement and comparison with other methods}\label{sec:efficiency}
Table~\ref{tab:meshref} follows the continuum cavity flows under mesh refinement with a single control setting, W24/PR6/P1, and all computations converged to $\|R_U\|<10^{-8}$; this setting and this tolerance differ from those of Table~\ref{tab:efficiency}, and each table is interpreted using its stated controls and stopping criterion. The NS computations are organized in cycles of $64$ implicit steps, which the table counts. On the $64^2$ mesh at $\Rey=100$ the two computations perform the same number of macroscopic updates, $48\times24=18\times64=1152$, and need the same wall time to the printed accuracy. At $\Rey=1000$, where $\eta$ stays above $12$, the outer iterations of IWPD and the cycles of NS grow in proportion to the number of cells per direction, because both are limited by the point-relaxation solver, whose convergence slows down under refinement and enters the outer factor through the inexact-solve estimate \eqref{eq:inexact}. The wall time of IWPD therefore remains at two thirds of that of NS on all three meshes, and its advantage over the CIS, whose number of iterations grows by $62\%$ between the $100^2$ and $200^2$ meshes, stays at $S_t=56.50$ and $56.15$. At $\Rey=100$ the refinement lowers the nominal ratio $\eta$ from $5.7$ to $2.4$, toward $\eta_c$, because the finer mesh begins to resolve the mean free path; the number of CIS iterations grows only from $43888$ to $50071$, whereas the outer iterations of IWPD grow from $48$ to $187$ and $S_t$ drops from $10.89$ to $4.12$. These runs show that the gain diminishes as refinement reduces the cell-to-mean-free-path ratio. The table measures iterative cost for the converged configurations on the stated meshes. Additional finer-grid trials required a reduced number of W updates to obtain convergence, demonstrating the importance of matching the macroscopic work to the local resolution.

\begin{table}[!htb]
\centering\small
\caption{Lid-driven cavity under mesh refinement: nominal ratio $\eta$ of \eqref{eq:nominal}, outer iterations of IWPD (W24/PR6/P1), cycles of $64$ implicit steps of NS and iterations of CIS, $N$, with the wall times $t$ (s), all to $\|R_U\|<10^{-8}$, and the ratios $S_t=t_{\rm CIS}/t_{\rm WP}$ and $t_{\rm WP}/t_{\rm NS}$.}
\label{tab:meshref}
\resizebox{\linewidth}{!}{\begin{tabular}{llrrrrrrrrr}
\toprule
Case & Mesh & $\eta$ & $N_{\rm WP}$ & $t_{\rm WP}$ & $N_{\rm NS}$ & $t_{\rm NS}$ & $N_{\rm CIS}$ & $t_{\rm CIS}$ & $S_t$ & $t_{\rm WP}/t_{\rm NS}$\\
\midrule
$\Rey=100$  & $64^2$  & 5.7  & 48  & 61.6   & 18  & 61.6  & 43888   & 670.9    & 10.89 & 1.00\\
            & $150^2$ & 2.4  & 187 & 1354.2 & 48  & 928.5 & 50071   & 5577.0   & 4.12 & 1.46\\
$\Rey=1000$ & $100^2$ & 36.5 & 345 & 1118.7 & 194 & 1659.4 & 1415368 & 63208.8 & 56.50 & 0.67\\
            & $200^2$ & 18.3 & 665  & 8604.9  & 379 & 13006.6 & 2291601 & 483139.4 & 56.15 & 0.66\\
            & $300^2$ & 12.2 & 1007 & 29858.0 & 570 & 44501.3 & --      & --       & --   & 0.67\\
\bottomrule
\end{tabular}}
\end{table}

Table~\ref{tab:iugks} compares IWPD with published IUGKS~\cite{zhu2016implicit,xu2022iugks} and GSIS~\cite{zhu2021gsis} results. In addition to the baseline calculations, IWPD is run on the spatial and velocity grids of the GSIS study. On those grids it requires $150$ and $431$ outer iterations at $\Rey=100$ and $1000$, compared with the reported GSIS counts of $234$ and $1410$. At $\varepsilon=0.075$, the corresponding counts are $116$ for IWPD and $52$ for GSIS. The smaller continuum counts and larger transition count reflect the regime dependence examined above.

The comparison uses matched spatial and velocity grids with the stopping criteria of the respective studies: GSIS uses a volume-weighted relative change below $10^{-8}$, whereas IWPD uses $\|R_U\|<10^{-8}$. With the respective stopping criteria and work per iteration, these counts demonstrate competitive iterative convergence; a wall-time comparison requires a common implementation and stopping test. The published IUGKS results use further grid and stopping-criterion choices, as detailed in the caption. The corresponding CIS baselines illustrate the effect of these computational choices: at $\varepsilon=1$, the GSIS study reports $24$ iterations, compared with $170$ here.

\begin{table}[!htb]
\centering\small
\caption{Lid-driven cavity: iterations of the CIS and of IWPD of this work to $\|R_U\|<10^{-6}$ (settings of Table~\ref{tab:settings}); iterations of the implicit UGKS (IUGKS) and of its two-step versions without and with multigrid from Table~1 of~\cite{xu2022iugks} ($64^2$ cells or a stretched mesh, $80^2$, $60^2$ and $28^2$ velocity nodes, relative change of the conservative variables below $10^{-6}$); iterations of the CIS and of GSIS from Table~1 of~\cite{zhu2021gsis} ($64^2$ cells with $48^2$ and $28^2$ velocity nodes at $\varepsilon=1$ and $0.075$, $128^2$ cells with $28^2$ velocity nodes at $\Rey=100$ and $1000$, volume-weighted relative change below $10^{-8}$); and outer iterations of IWPD on the mesh and velocity grid of~\cite{zhu2021gsis} to $\|R_U\|<10^{-8}$ (W1/PR6/P1 at $\varepsilon=0.075$, W24/PR6/P1 at $\Rey=100$ and $1000$).}
\label{tab:iugks}
\resizebox{\linewidth}{!}{\begin{tabular}{lrrrrrrrr}
\toprule
 & \multicolumn{2}{c}{This work} & \multicolumn{3}{c}{IUGKS~\cite{xu2022iugks}} & \multicolumn{2}{c}{GSIS~\cite{zhu2021gsis}} & This work\\
\cmidrule(lr){2-3}\cmidrule(lr){4-6}\cmidrule(lr){7-8}\cmidrule(lr){9-9}
Case & CIS & IWPD & IUGKS & two-step & two-step, multigrid & CIS & GSIS & IWPD, GSIS grid\\
\midrule
$\varepsilon=1$     & 170    & 127 & 79   & 77  & 33  & 24      & 23   & --\\
$\varepsilon=0.075$ & 323    & 71  & 169  & 51  & 20  & 226     & 52   & 116\\
$\Rey=100$          & 27318  & 36  & 923  & 74  & 36  & 40917   & 234  & 150\\
$\Rey=1000$         & 668472 & 408 & 3480 & 263 & 187 & 1283068 & 1410 & 431\\
\bottomrule
\end{tabular}}
\end{table}

\FloatBarrier
\section{Conclusions}\label{sec:conclusion}
The closed WPD system combines conservation laws for the macroscopic state with a kinetic equation for the particle component. Accumulating only the Maxwellian makes the wave a functional of the conservative variables. Under moment compatibility, the system is equivalent to the kinetic equation, and its hydrodynamic expansion identifies the wave and particle contributions to the Euler and Navier--Stokes fluxes.

The residual-based implicit iteration transfers macroscopic prediction through a discrete endpoint correction and synchronizes moments after each microscopic update. Its stationary states satisfy the intended discrete kinetic equation independently of the iteration controls. The smooth fixed-mesh analysis gives the formal fluid limit, while linear model analysis distinguishes the effects of the WPD scale, incomplete macroscopic solves and explicit collision remainders on convergence.

The BGK and Shakhov results support these mechanisms in one- and two-dimensional flows. Near the continuum limit, IWPD substantially reduces iteration counts and wall time relative to the conventional iterative scheme. Collision-model comparisons, parameter scans and mesh studies show how the gain depends on the collision remainder and the allocation of macroscopic work. These findings provide a basis for selecting iteration controls and improving the inner solvers.

\section*{Supplementary information}
Online Resource 1 contains additional profiles, velocity distributions, residual histories and parameter studies, together with the Apollo-section results. Supplementary figures, tables and sections are numbered with the prefix S.

\section*{Declarations}
\paragraph{Author contributions}
\textbf{Chang Liu:} Conceptualization, Methodology, Formal analysis, Software, Investigation, Visualization, Writing -- original draft, Funding acquisition. \textbf{Kun Xu:} Conceptualization, Methodology, Supervision, Writing -- review \& editing, Funding acquisition.

\paragraph{Competing interests}
The authors declare that they have no known competing financial interests or personal relationships that could have appeared to influence the work reported in this paper.

\paragraph{Funding}
{\sloppy Chang Liu is partially supported by the National Natural Science Foundation of China (12031001, 12102061), the Beijing Natural Science Foundation (Z230003), and the Presidential Foundation of the China Academy of Engineering Physics (YZJJZQ2022017). The authors are partially supported by the National Key R\&D Program of China (2022YFA1004500). Kun Xu is partially supported by the National Natural Science Foundation of China (12172316, 92371107) and the Hong Kong Research Grants Council (16301222, 16208324).\par}

\paragraph{Use of AI-assisted technologies}
The authors used Claude (Anthropic) and Codex (OpenAI) to assist with manuscript organization and language editing. The authors are responsible for reviewing the assisted revisions and for the content of the article.

\paragraph{Data availability}
The data supporting the results reported in this article are available from the corresponding author on request.

\paragraph{Code availability}
The code used in this study is available from the corresponding author on reasonable request.

\paragraph{Ethics approval and consent}
Not applicable.

\end{document}


\raggedbottom
\title{Supplementary material: An implicit multiscale solver for steady kinetic equations based on wave--particle decomposition}
\titlerunning{Supplementary material for the implicit WPD solver}
\author{Chang Liu \and Kun Xu}
\authorrunning{C. Liu and K. Xu}
\institute{Chang Liu (corresponding author) \at
Institute of Applied Physics and Computational Mathematics, Beijing, P.R. China\\
\email{liuchang@iapcm.ac.cn}
\and Kun Xu \at
Department of Mathematics and Department of Mechanical and Aerospace Engineering, Hong Kong University of Science and Technology, Clear Water Bay, Kowloon, Hong Kong\\
Shenzhen Research Institute, Hong Kong University of Science and Technology, Shenzhen, China\\
\email{makxu@ust.hk}}
\date{}
\maketitle

Online Resource 1 for \textit{Journal of Scientific Computing} supplements the main article. IWPD denotes the implicit wave--particle decomposition solver (WP in figures), CIS the conventional iterative scheme and NS the accompanying gas-kinetic Navier--Stokes scheme. Field comparisons use common contour levels. Wall values are evaluated in the cells adjacent to the surface. Except for the Apollo settings stated in Section~\ref{sec:apollo}, the discretizations, residual norms and comparison criteria are those of the main text.

\section{Normal shock}
\begin{figure}[H]
\centering\singlespacing\setlength{\figw}{1.0\linewidth}
\begin{subfigure}[t]{0.4739\figw}\centering
\raisebox{0.0000\figw}{\includegraphics[width=0.4739\figw]{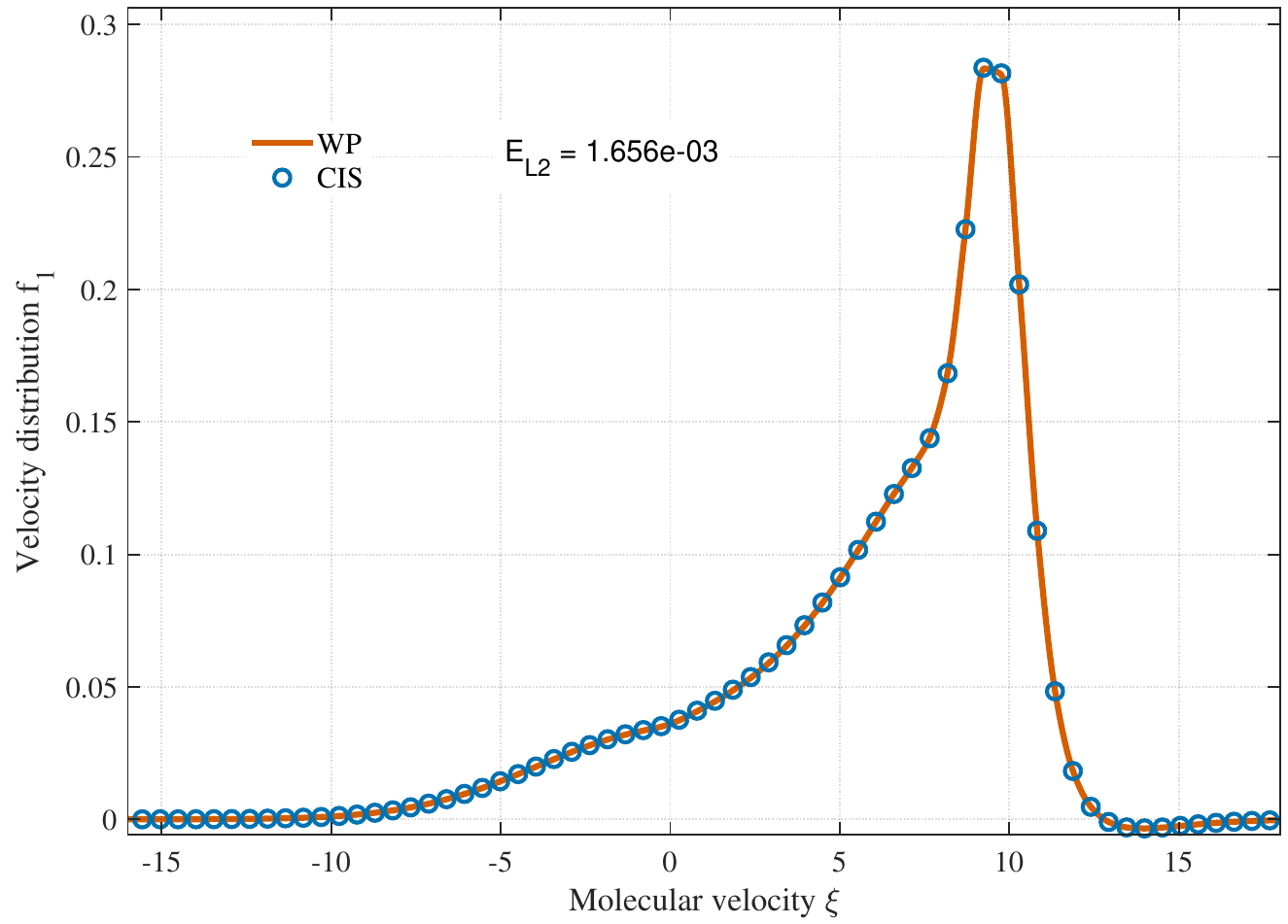}}
\caption{$x-x_{\rho\text{-mid}}=-3.52$.}
\end{subfigure}\hspace{0.0200\figw}%
\begin{subfigure}[t]{0.4739\figw}\centering
\raisebox{0.0000\figw}{\includegraphics[width=0.4739\figw]{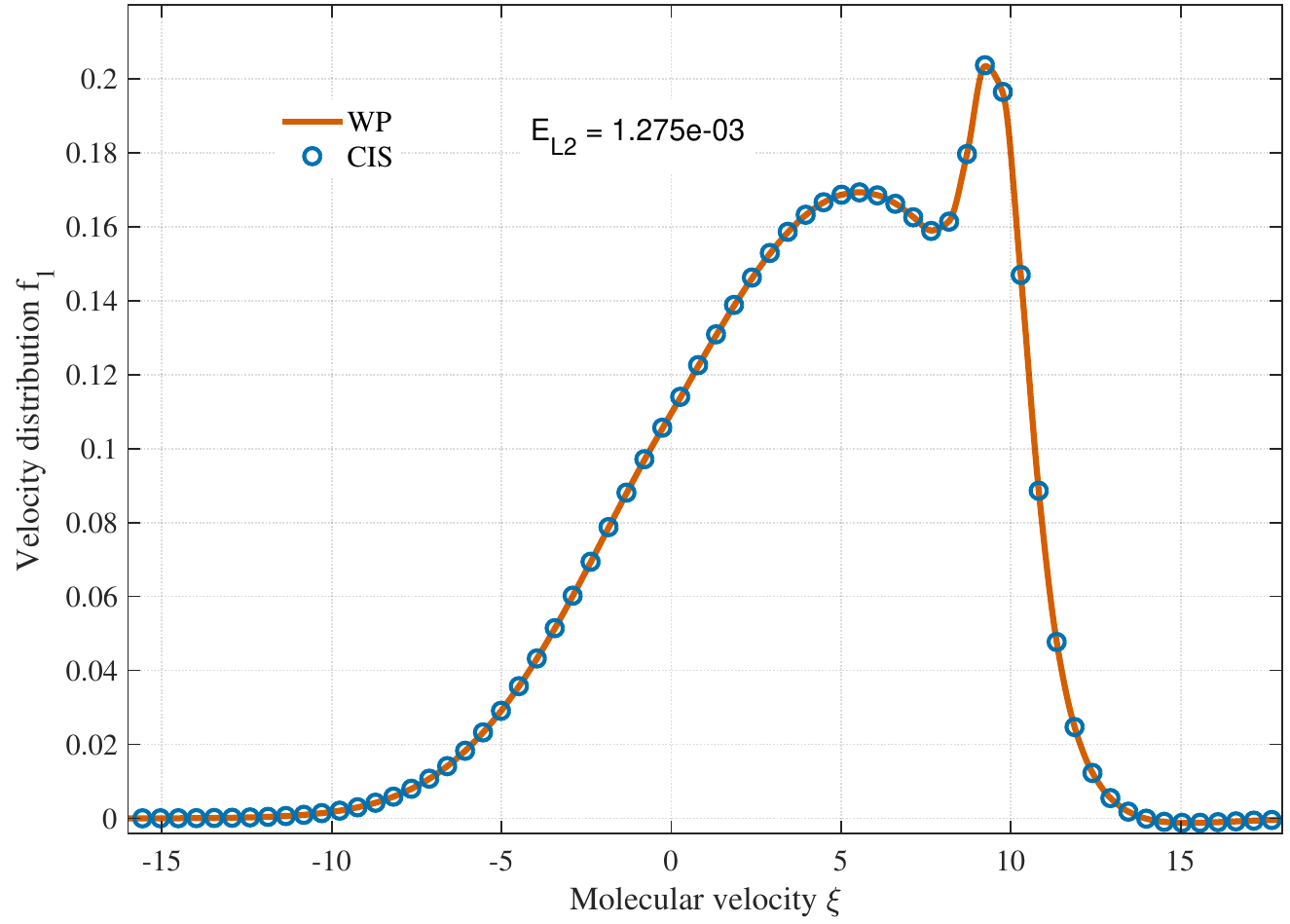}}
\caption{$x-x_{\rho\text{-mid}}=-0.80$.}
\end{subfigure}
\par\smallskip
\begin{subfigure}[t]{0.4739\figw}\centering
\raisebox{0.0000\figw}{\includegraphics[width=0.4739\figw]{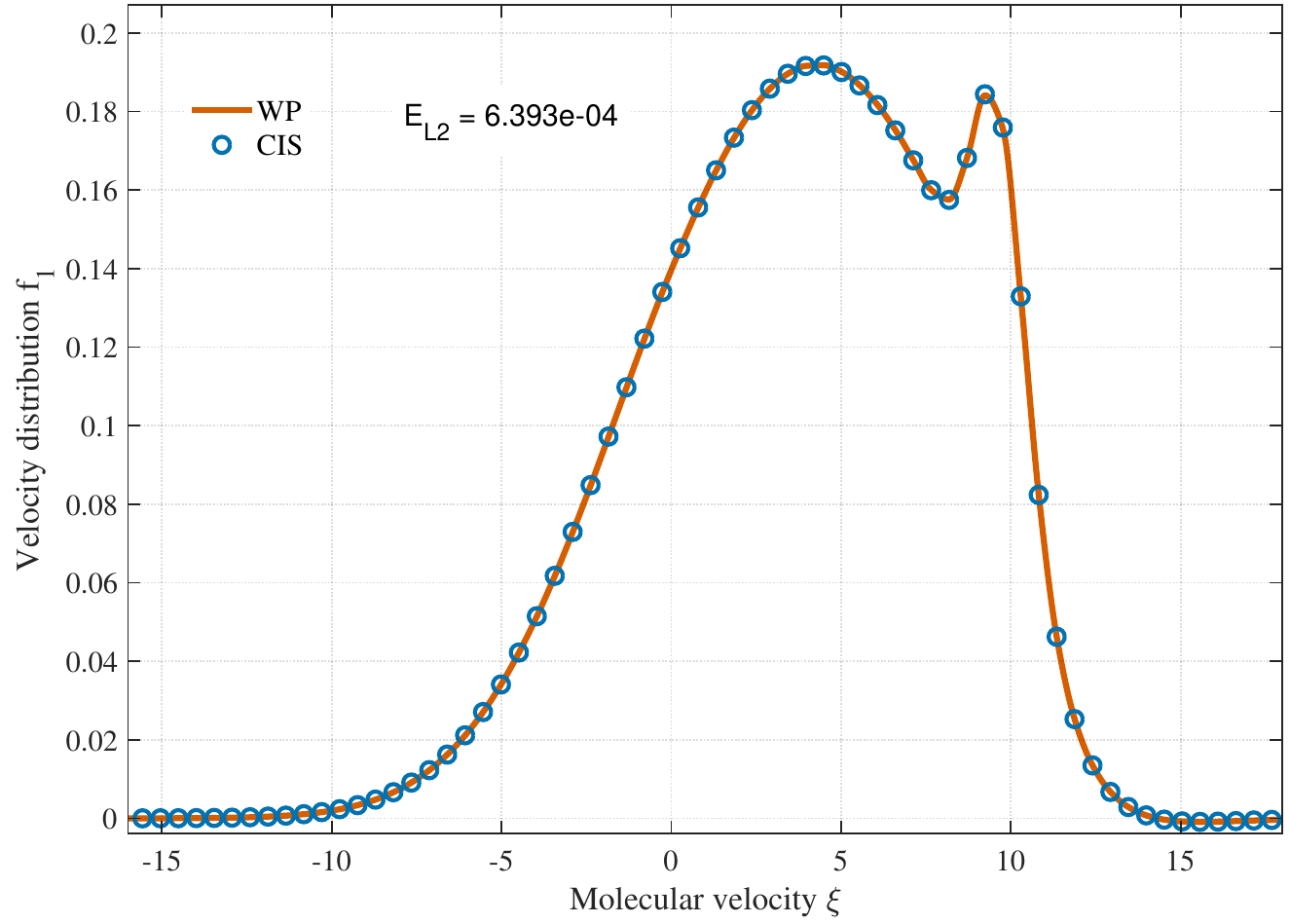}}
\caption{$x-x_{\rho\text{-mid}}=-0.12$.}
\end{subfigure}\hspace{0.0200\figw}%
\begin{subfigure}[t]{0.4739\figw}\centering
\raisebox{0.0000\figw}{\includegraphics[width=0.4739\figw]{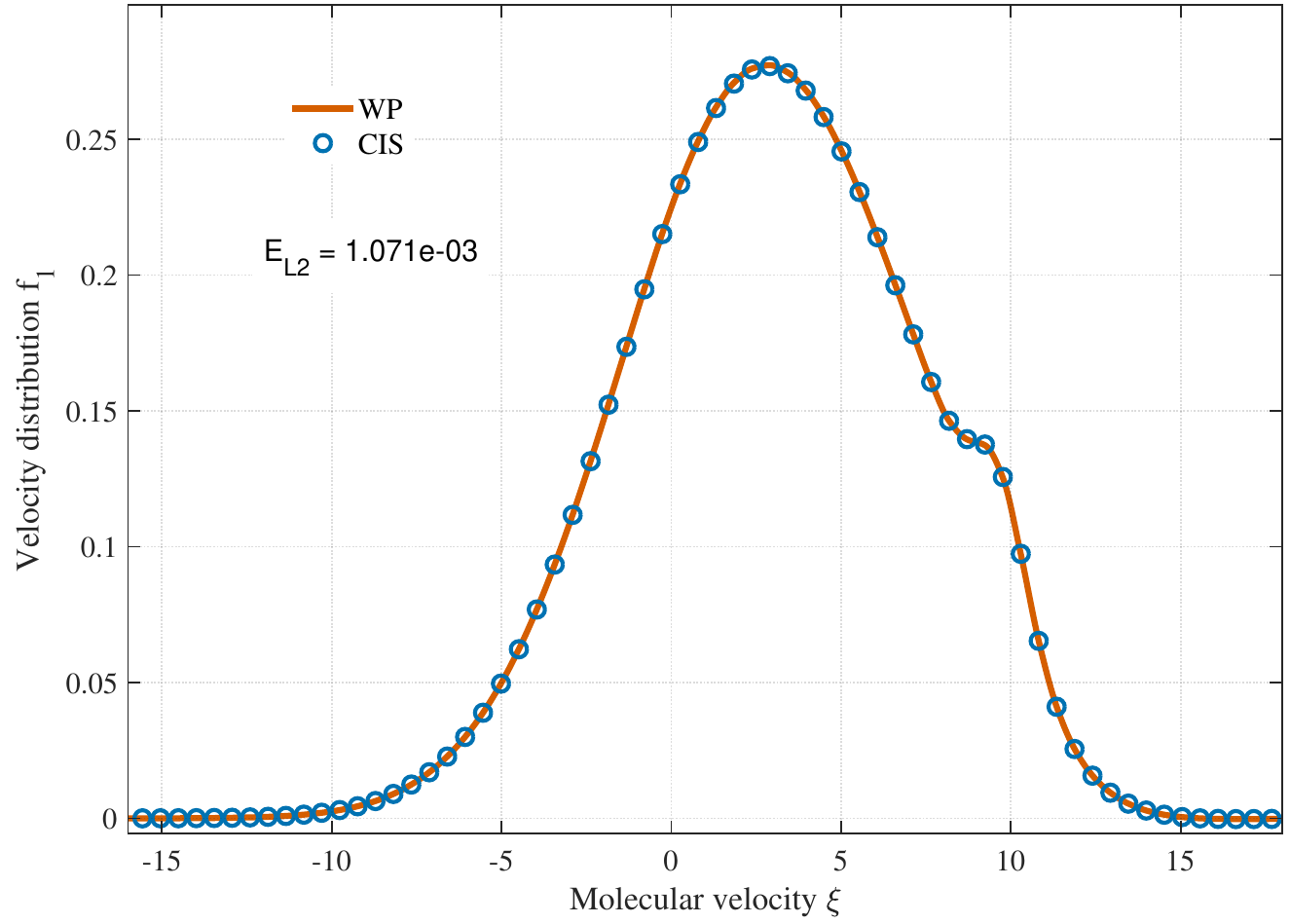}}
\caption{$x-x_{\rho\text{-mid}}=1.93$.}
\end{subfigure}
\caption{Normal shock at $\mathrm{Ma}=10$ and $\varepsilon=1$: reduced distributions $f_1(\xi)$ at four positions across the shock, ordered from upstream (a) to downstream (d), where $x_{\rho\text{-mid}}$ is the density midpoint. Lines denote IWPD and symbols CIS.}\label{fig:shock_dist}
\end{figure}

\clearpage
\section{Couette flow}
\begin{figure}[H]
\centering\singlespacing\setlength{\figw}{.65\linewidth}
\begin{subfigure}[t]{0.9494\figw}\centering
\makebox[\linewidth]{\raisebox{0.0039\figw}{\includegraphics[width=0.4543\figw]{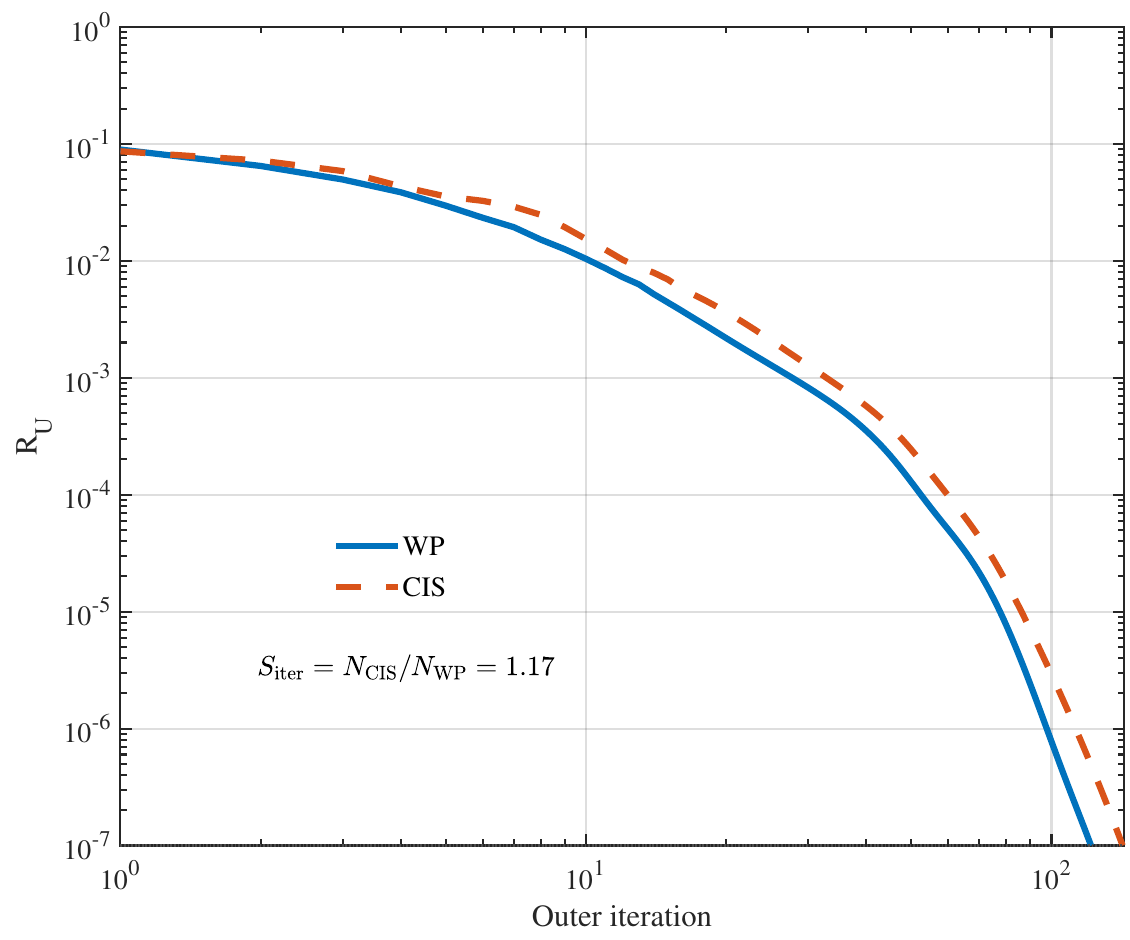}}\hspace{0.0200\figw}\raisebox{0.0000\figw}{\includegraphics[width=0.4752\figw]{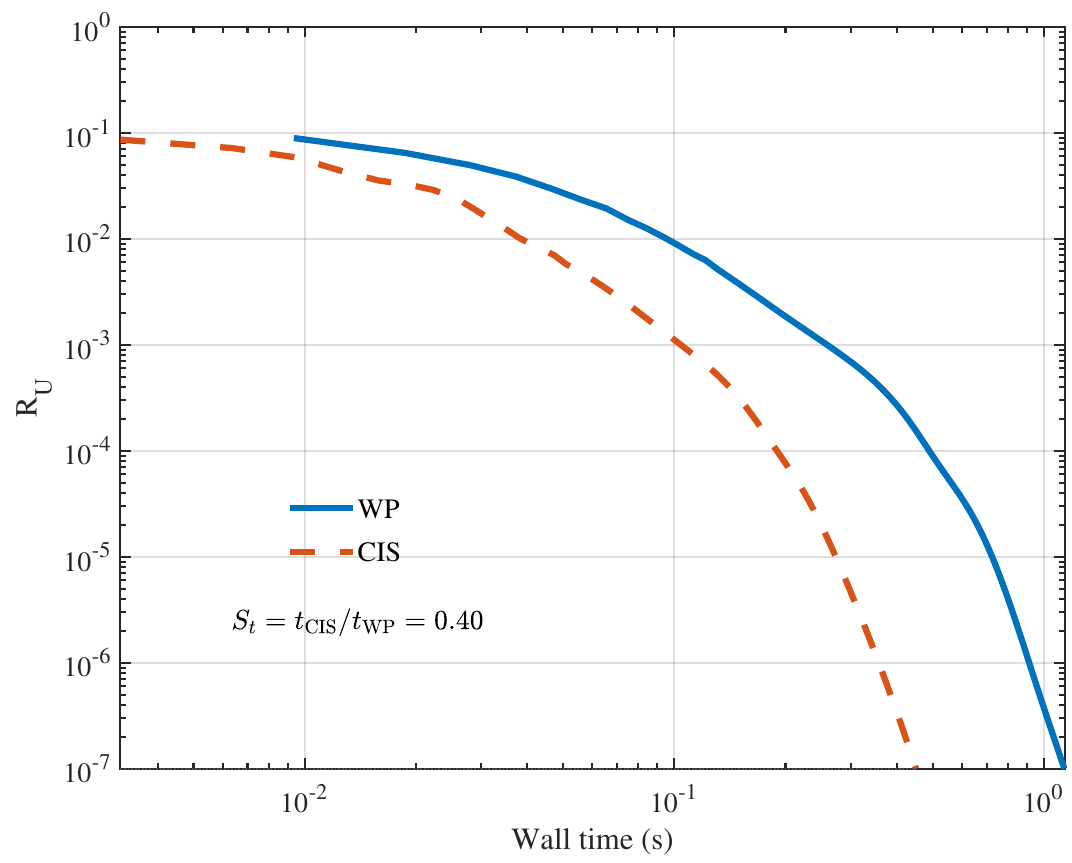}}}
\caption{$\varepsilon=1$.}
\end{subfigure}
\par\smallskip
\begin{subfigure}[t]{0.9494\figw}\centering
\makebox[\linewidth]{\raisebox{0.0034\figw}{\includegraphics[width=0.4553\figw]{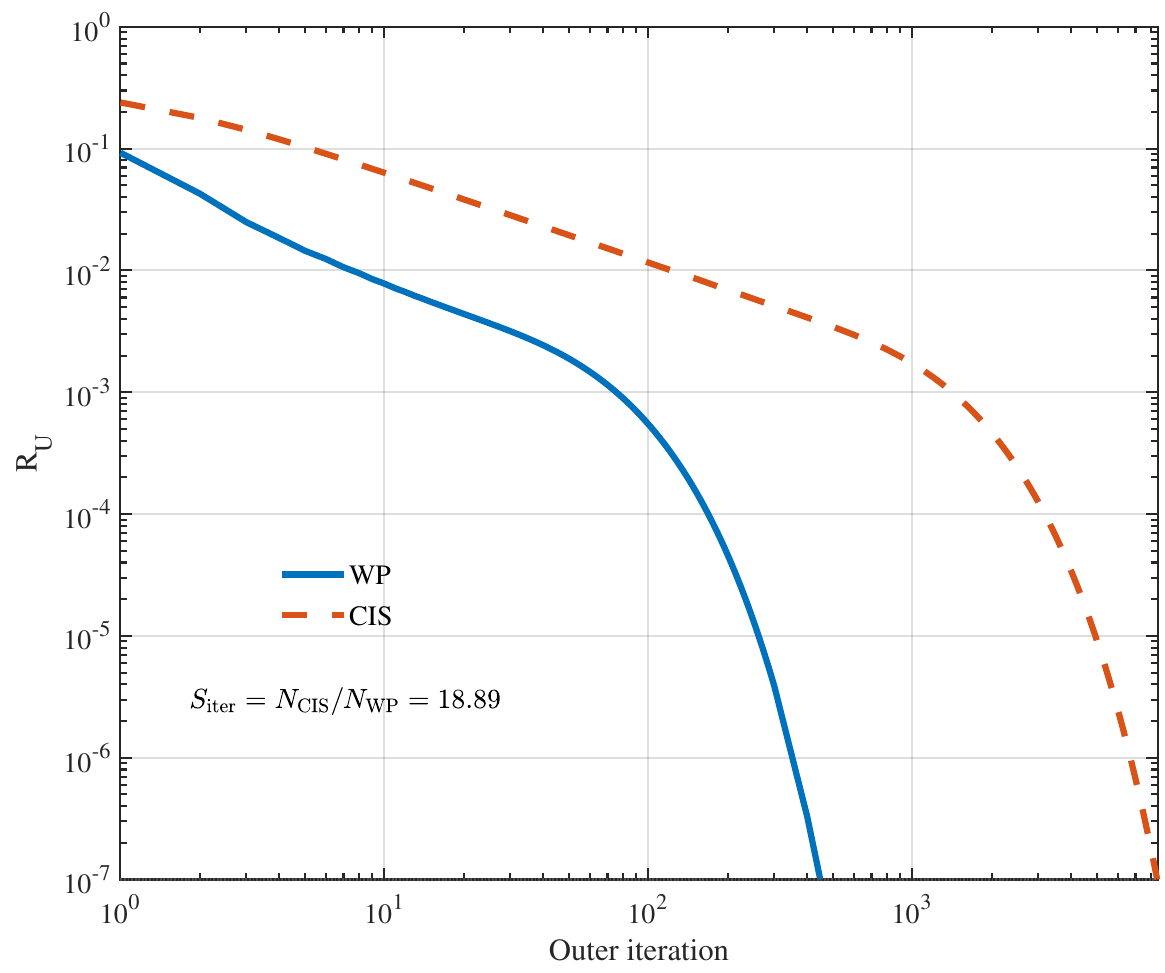}}\hspace{0.0200\figw}\raisebox{0.0000\figw}{\includegraphics[width=0.4741\figw]{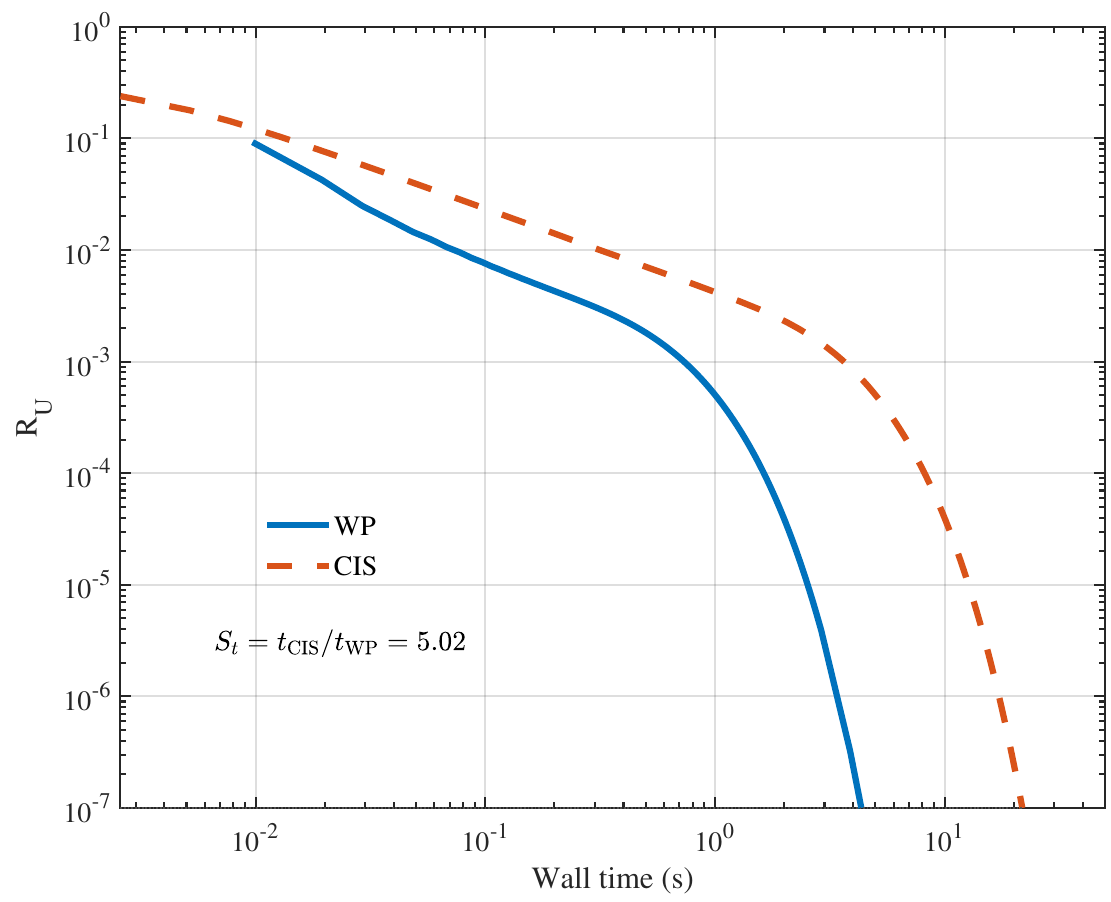}}}
\caption{$\varepsilon=10^{-2}$.}
\end{subfigure}
\par\smallskip
\begin{subfigure}[t]{0.9494\figw}\centering
\makebox[\linewidth]{\raisebox{0.0007\figw}{\includegraphics[width=0.4595\figw]{couette-residual-wp-si-couette-kn0d0001-2vs.pdf}}\hspace{0.0200\figw}\raisebox{0.0000\figw}{\includegraphics[width=0.4699\figw]{couette-residual-walltime-couette-kn0d0001-2vs.pdf}}}
\caption{$\varepsilon=10^{-4}$.}
\end{subfigure}
\caption{Couette flow: macroscopic residual norm $\|R_U\|$ versus outer iteration (left) and wall time (right) for IWPD (solid), CIS (dashed) and, at $\varepsilon=10^{-4}$, NS (dotted). The ratios $S_{\rm iter}$ and $S_t$ are evaluated at the common residual threshold.}\label{fig:couette_residual}
\end{figure}

\begin{figure}[H]
\centering\singlespacing\setlength{\figw}{.65\linewidth}
\begin{subfigure}[t]{0.4675\figw}\centering
\raisebox{0.0000\figw}{\includegraphics[width=0.4675\figw]{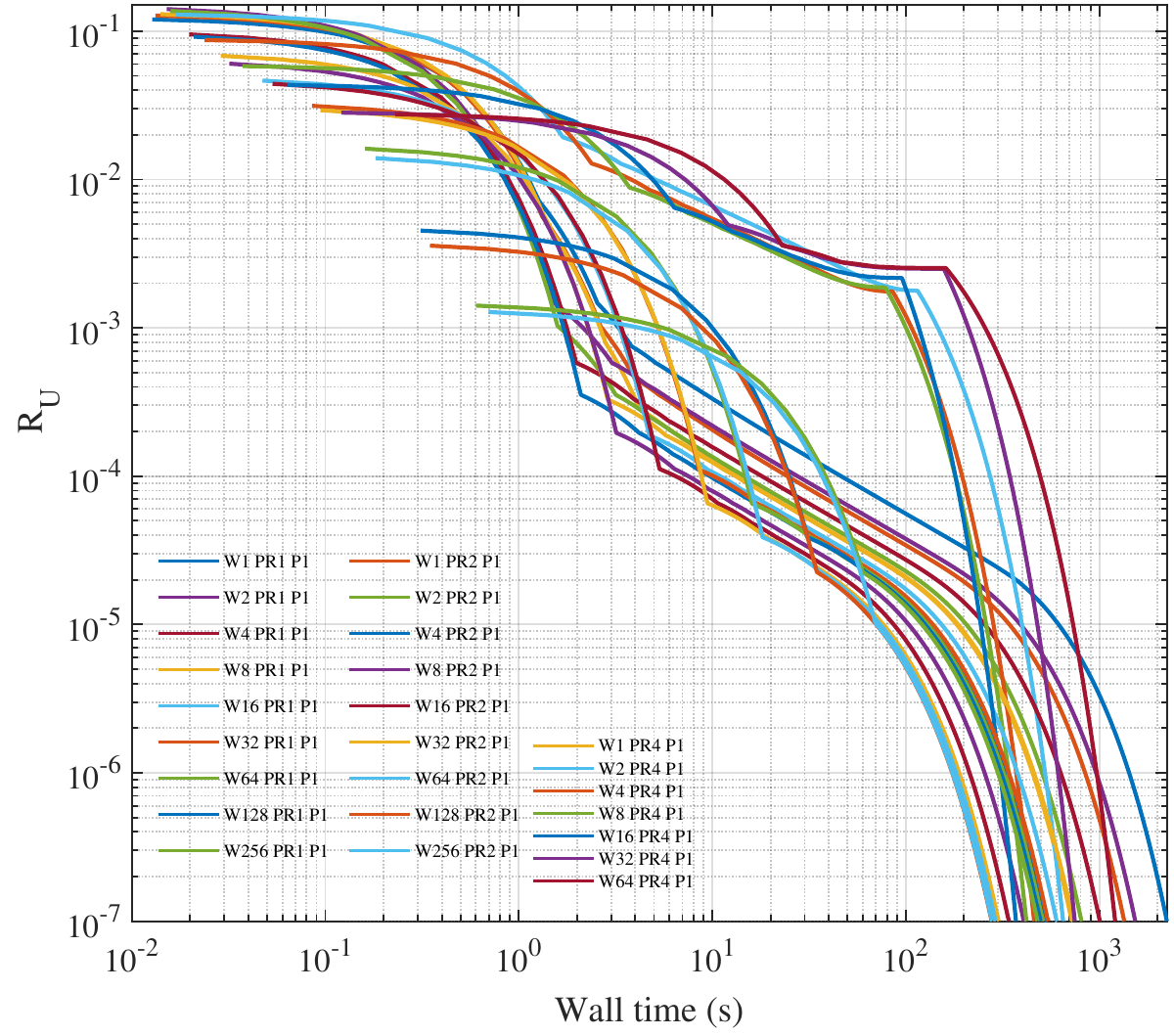}}
\caption{Residual versus wall time.}
\end{subfigure}\hspace{0.0200\figw}%
\begin{subfigure}[t]{0.5075\figw}\centering
\raisebox{0.0018\figw}{\includegraphics[width=0.5075\figw]{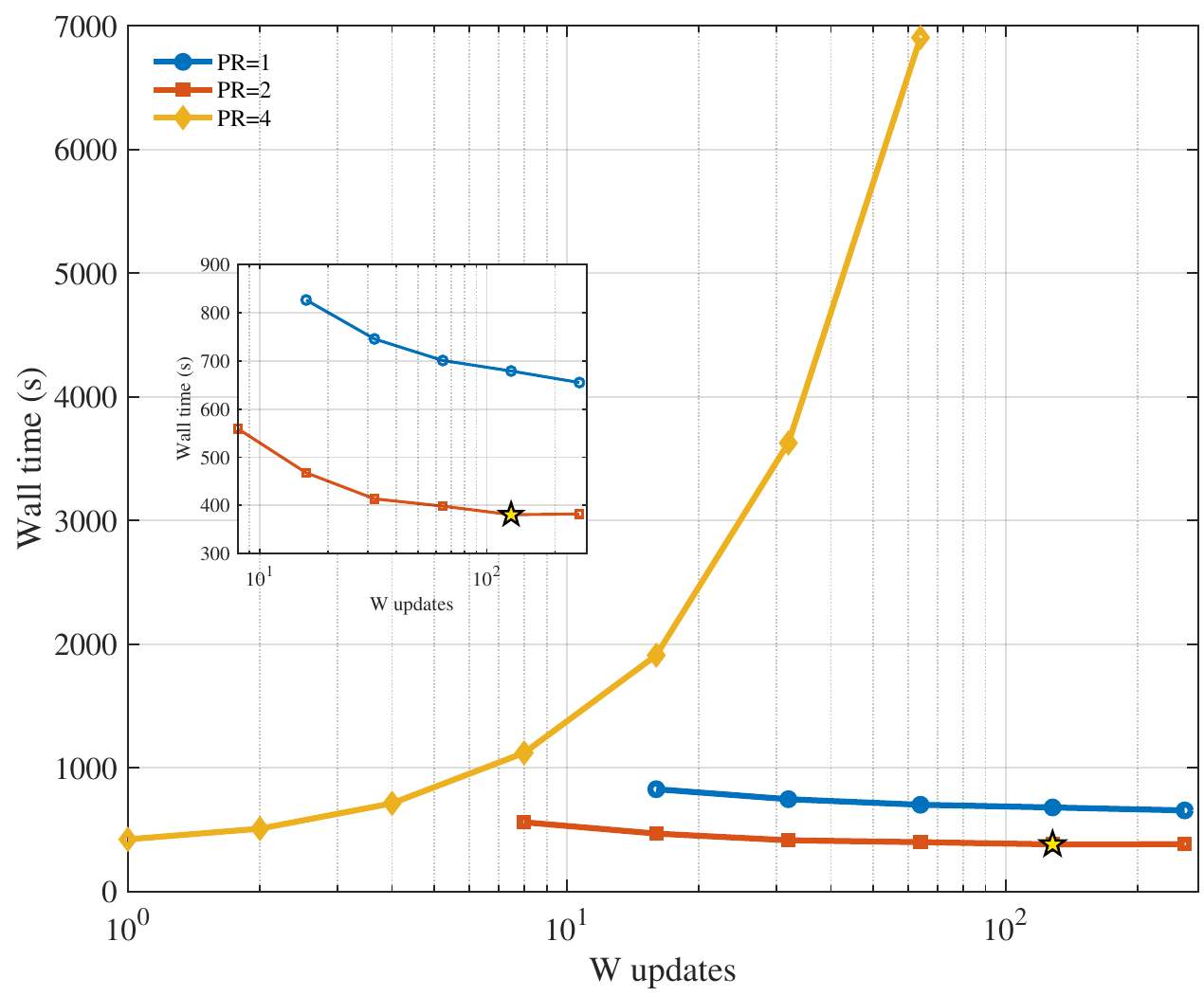}}
\caption{Wall time versus the number of W updates.}
\end{subfigure}
\caption{Couette flow at $\varepsilon=10^{-4}$: scan of the IWPD controls with P1, covering W1--W256 with PR1, PR2 and PR4. The inset in (b) enlarges the neighbourhood of the optimum, which is marked by a star.}\label{fig:couette_scan}
\end{figure}

\clearpage
\section{Lid-driven cavity}
\begin{figure}[H]
\centering\singlespacing\setlength{\figw}{.65\linewidth}
\begin{subfigure}[t]{0.4841\figw}\centering
\raisebox{0.0000\figw}{\includegraphics[width=0.4841\figw]{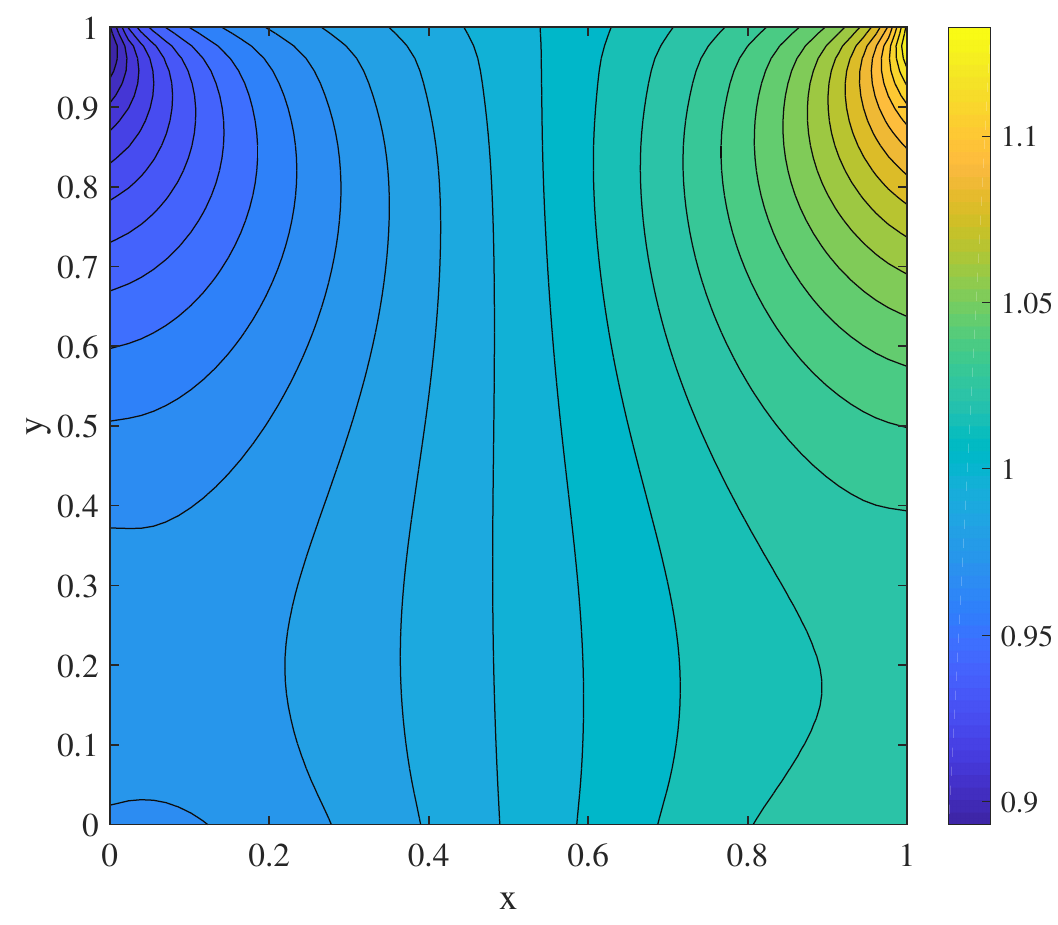}}
\caption{Density.}
\end{subfigure}\hspace{0.0200\figw}%
\begin{subfigure}[t]{0.4909\figw}\centering
\raisebox{0.0000\figw}{\includegraphics[width=0.4909\figw]{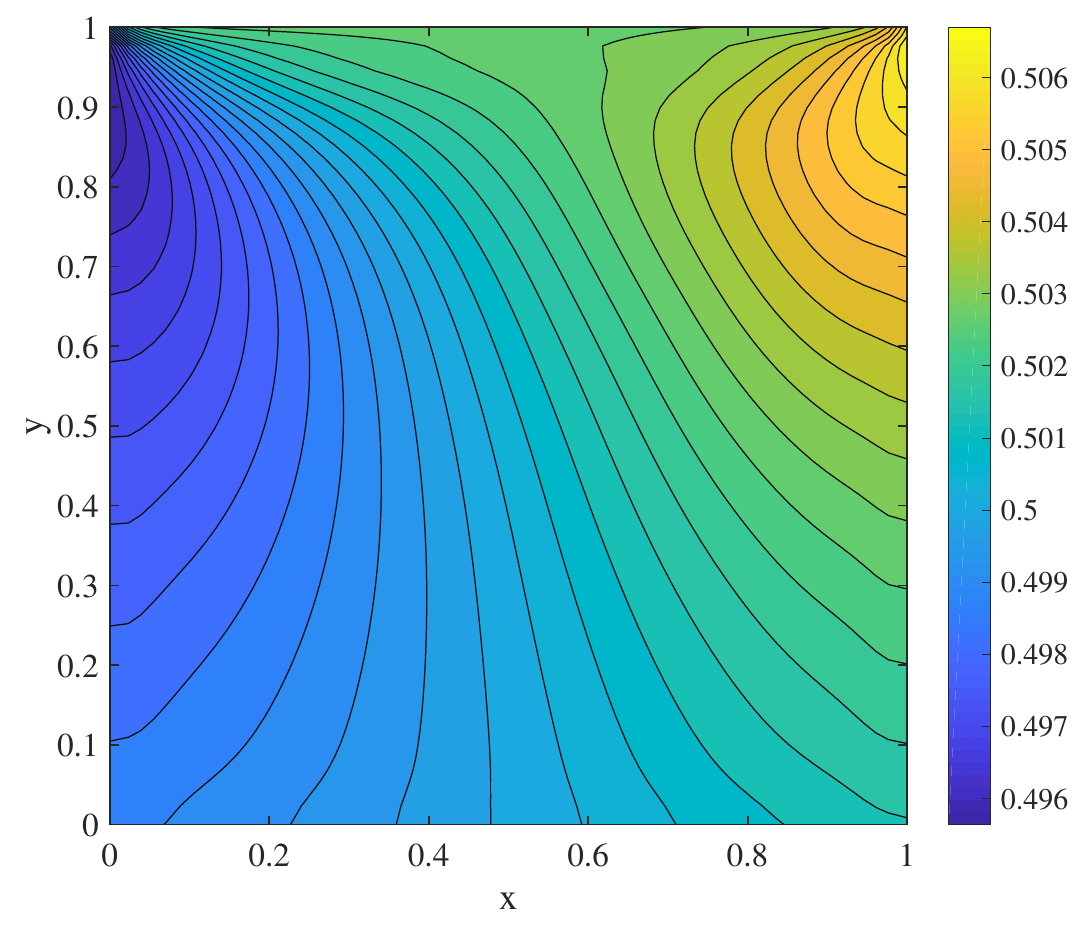}}
\caption{Temperature.}
\end{subfigure}
\par\smallskip
\begin{subfigure}[t]{0.4939\figw}\centering
\raisebox{0.0000\figw}{\includegraphics[width=0.4939\figw]{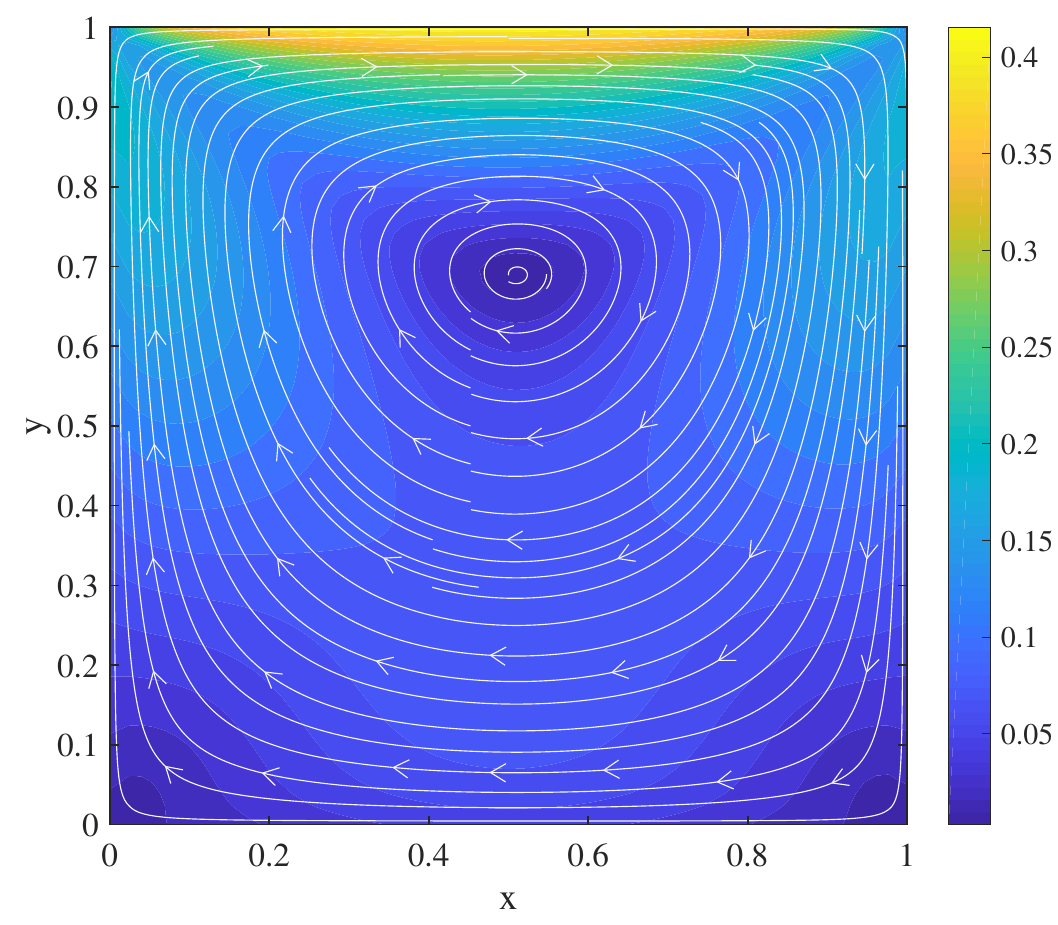}}
\caption{Speed and streamlines.}
\end{subfigure}\hspace{0.0200\figw}%
\begin{subfigure}[t]{0.4811\figw}\centering
\raisebox{0.0000\figw}{\includegraphics[width=0.4811\figw]{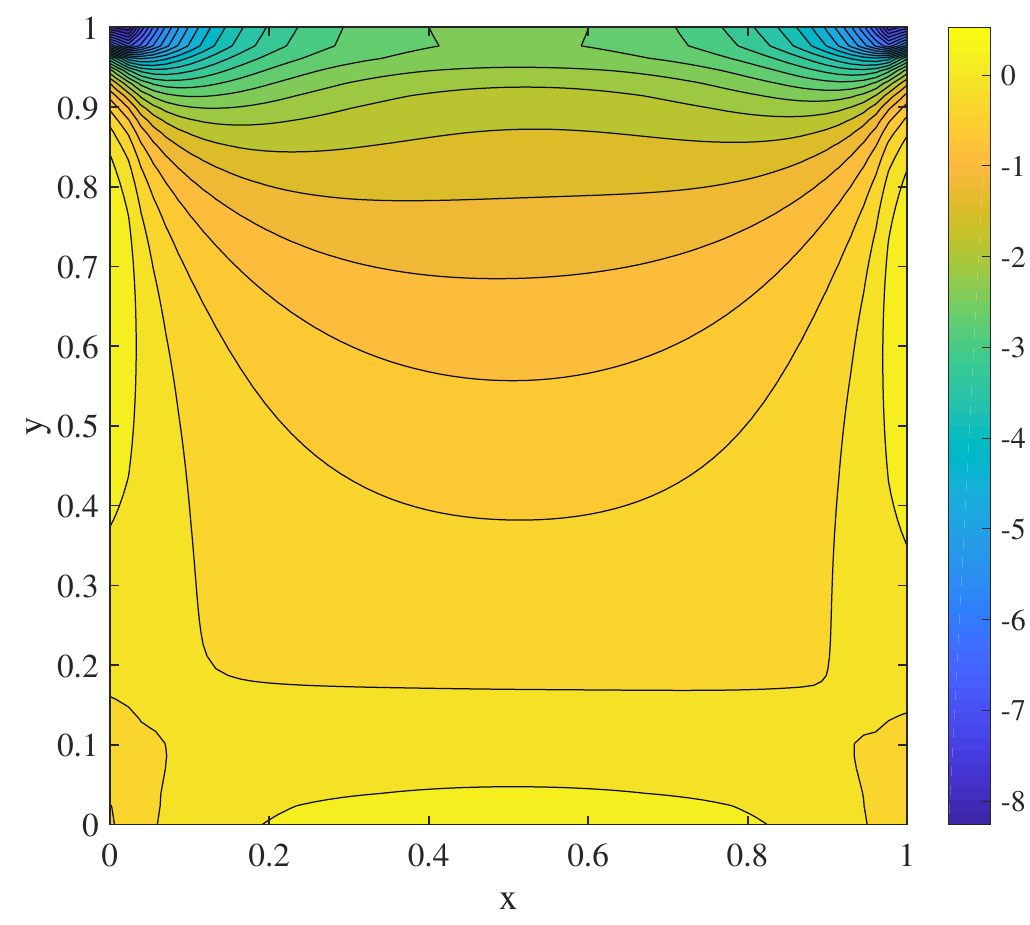}}
\caption{Vorticity.}
\end{subfigure}
\caption{Lid-driven cavity at $\varepsilon=1$: macroscopic fields of IWPD (filled contours) and CIS (lines, streamlines in (c)) on common levels.}\label{fig:cavity_kn1_fields}
\end{figure}

\begin{figure}[H]
\centering\singlespacing\setlength{\figw}{.65\linewidth}
\begin{subfigure}[t]{0.4841\figw}\centering
\raisebox{0.0000\figw}{\includegraphics[width=0.4841\figw]{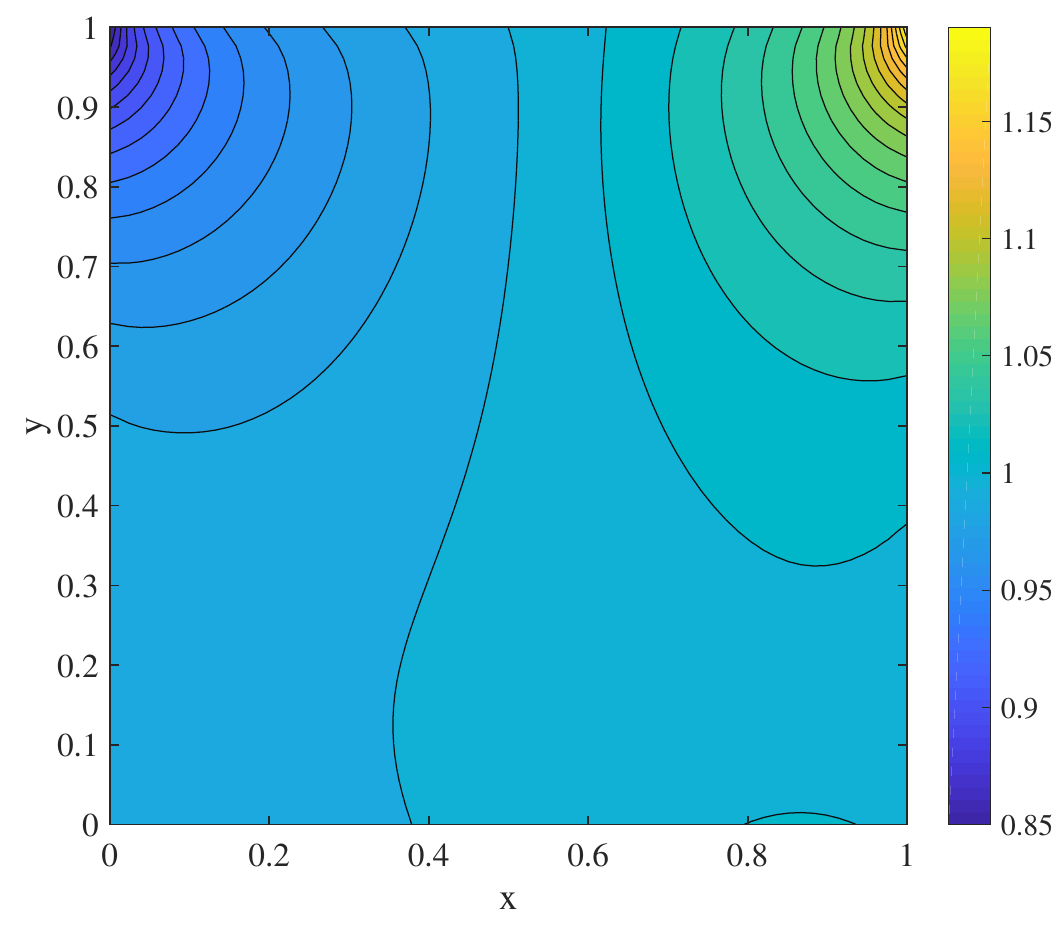}}
\caption{Density.}
\end{subfigure}\hspace{0.0200\figw}%
\begin{subfigure}[t]{0.4909\figw}\centering
\raisebox{0.0000\figw}{\includegraphics[width=0.4909\figw]{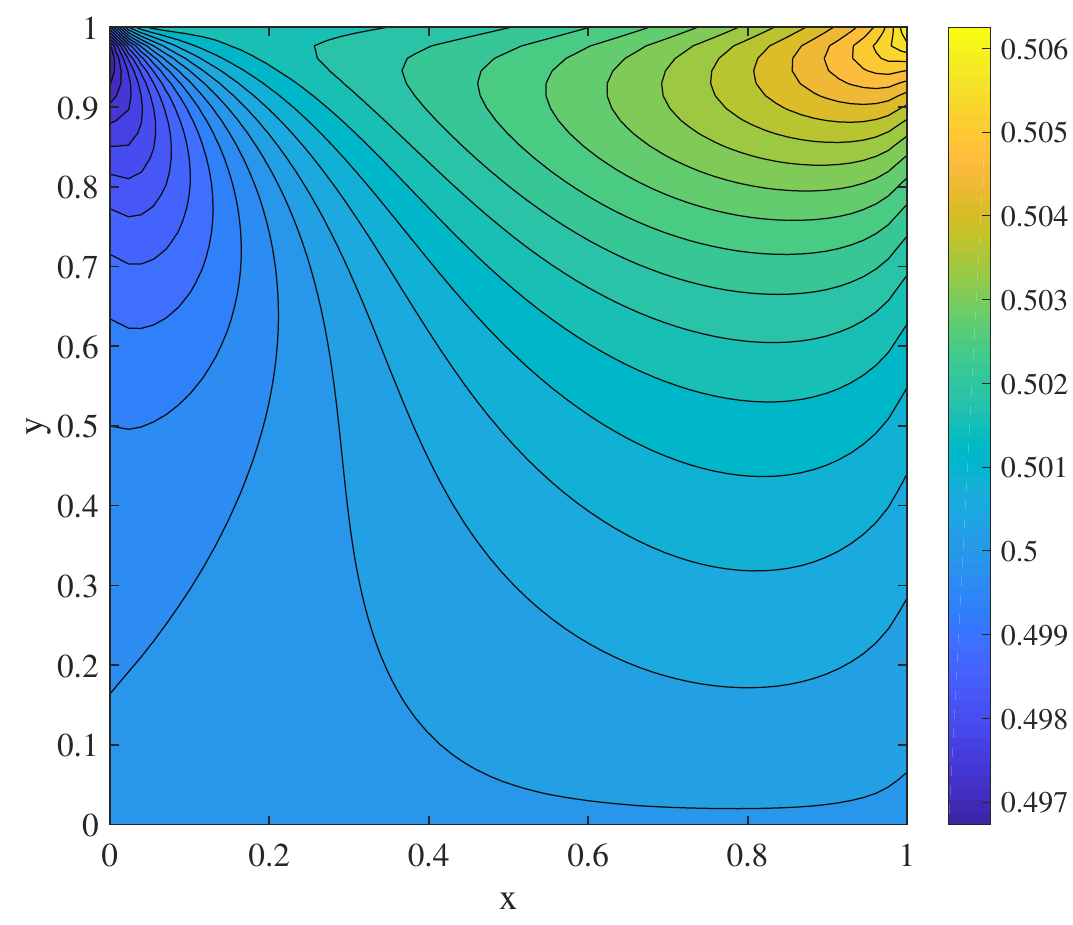}}
\caption{Temperature.}
\end{subfigure}
\par\smallskip
\begin{subfigure}[t]{0.4869\figw}\centering
\raisebox{0.0000\figw}{\includegraphics[width=0.4869\figw]{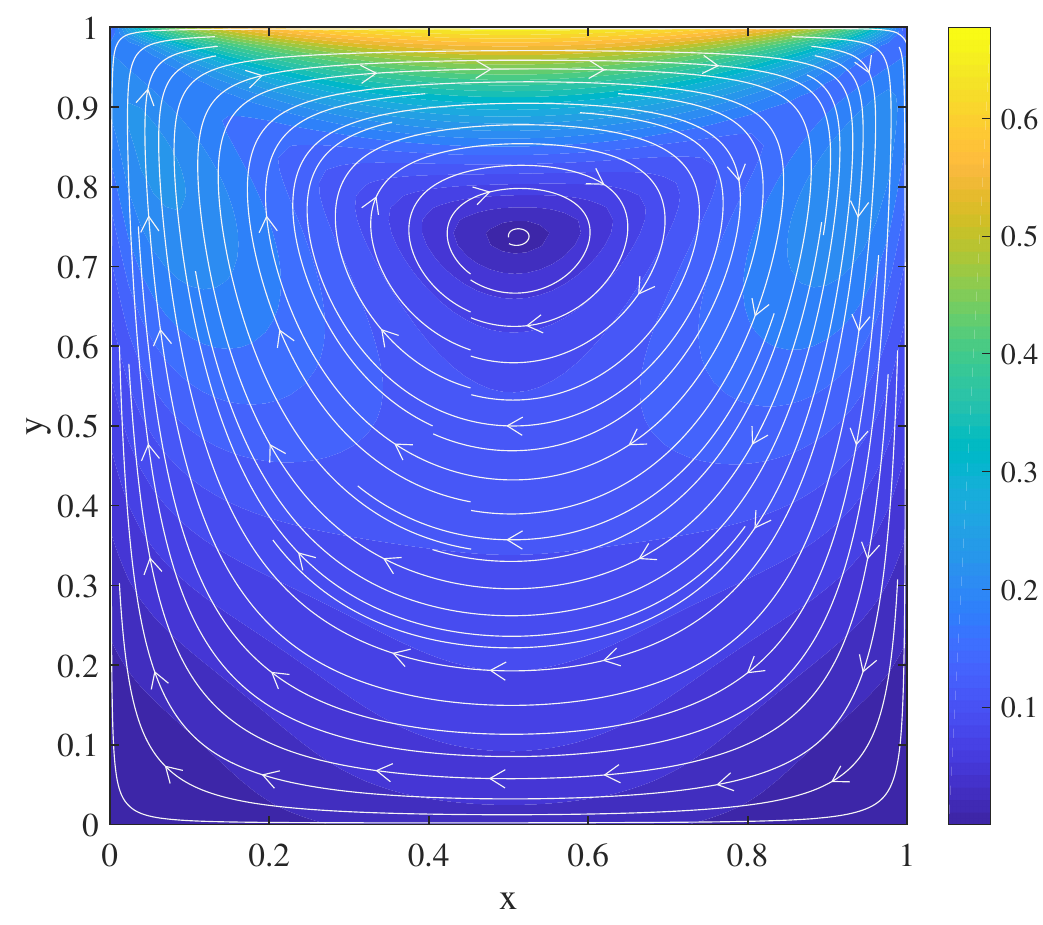}}
\caption{Speed and streamlines.}
\end{subfigure}\hspace{0.0200\figw}%
\begin{subfigure}[t]{0.4881\figw}\centering
\raisebox{0.0000\figw}{\includegraphics[width=0.4881\figw]{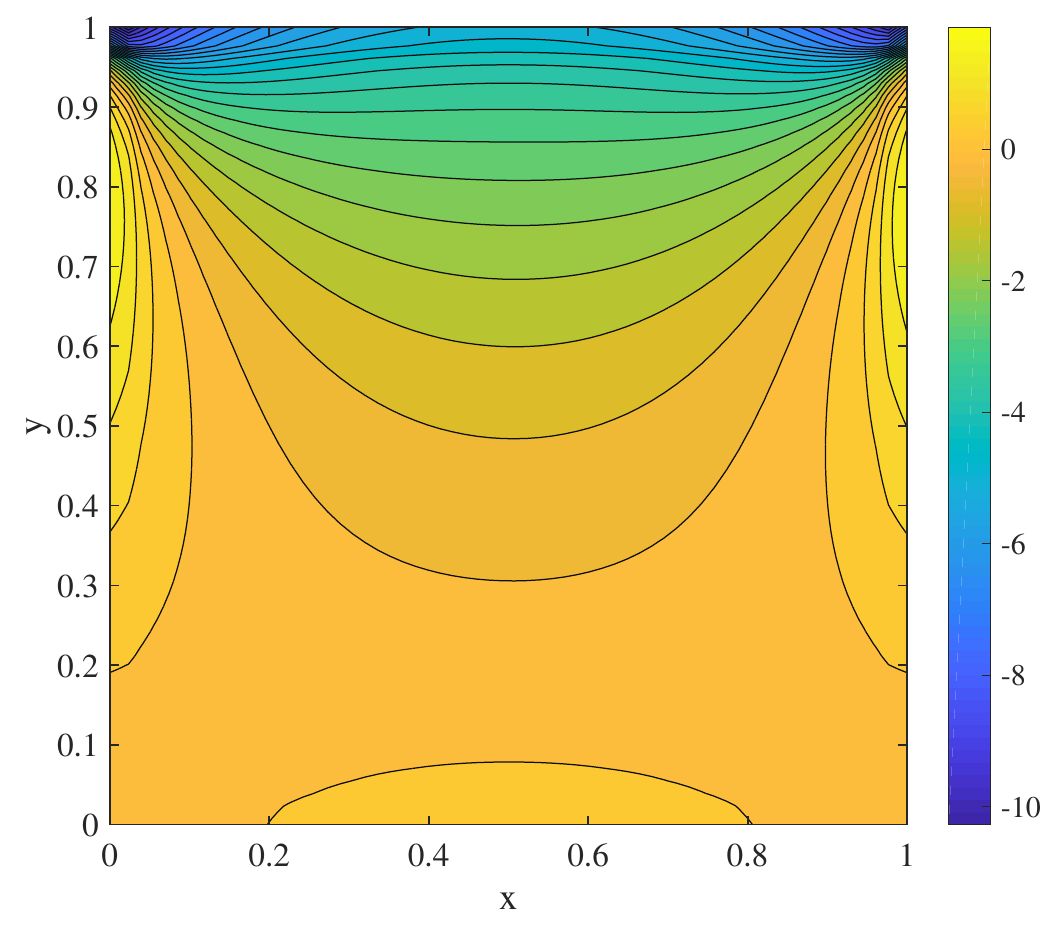}}
\caption{Vorticity.}
\end{subfigure}
\caption{Lid-driven cavity at $\varepsilon=0.075$: macroscopic fields of IWPD (filled contours) and CIS (lines, streamlines in (c)) on common levels.}\label{fig:cavity_kn0075_fields}
\end{figure}

\begin{figure}[H]
\centering\singlespacing\setlength{\figw}{.65\linewidth}
\begin{subfigure}[t]{0.4812\figw}\centering
\raisebox{0.0000\figw}{\includegraphics[width=0.4812\figw]{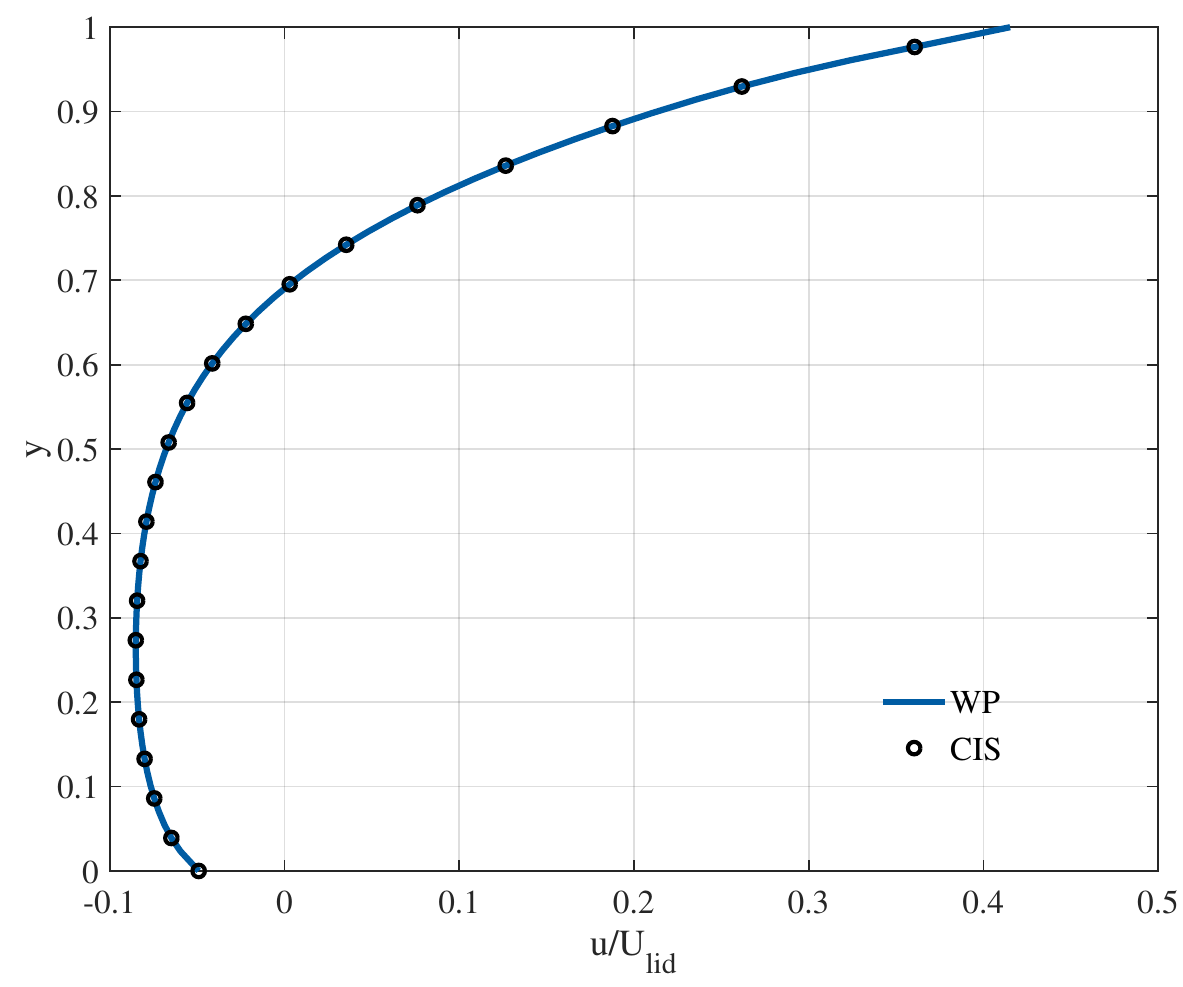}}
\caption{$u/U_{\rm lid}$ on the vertical centreline, $\varepsilon=1$.}
\end{subfigure}\hspace{0.0200\figw}%
\begin{subfigure}[t]{0.4938\figw}\centering
\raisebox{0.0063\figw}{\includegraphics[width=0.4938\figw]{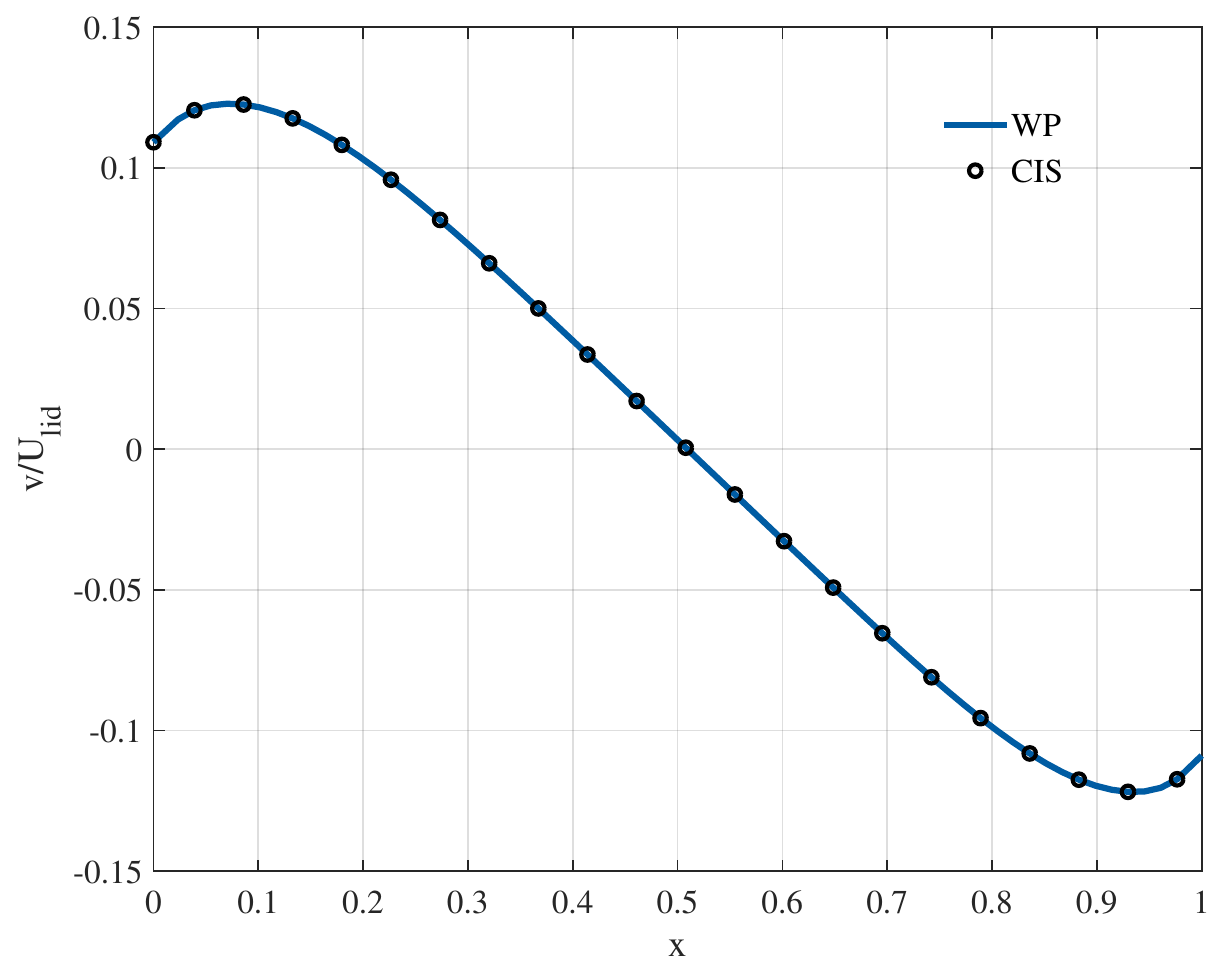}}
\caption{$v/U_{\rm lid}$ on the horizontal centreline, $\varepsilon=1$.}
\end{subfigure}
\par\smallskip
\begin{subfigure}[t]{0.4812\figw}\centering
\raisebox{0.0000\figw}{\includegraphics[width=0.4812\figw]{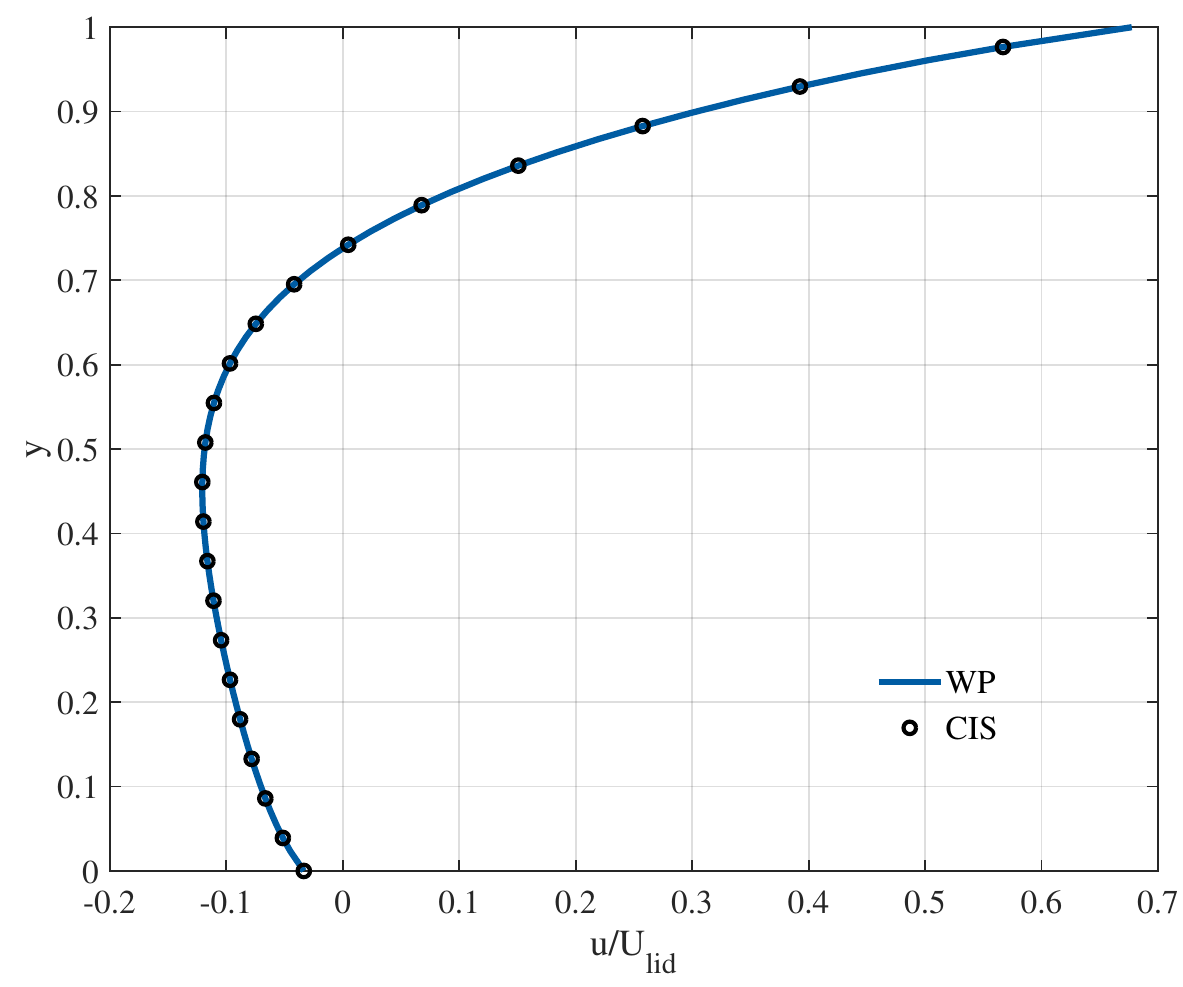}}
\caption{$u/U_{\rm lid}$ on the vertical centreline, $\varepsilon=0.075$.}
\end{subfigure}\hspace{0.0200\figw}%
\begin{subfigure}[t]{0.4938\figw}\centering
\raisebox{0.0063\figw}{\includegraphics[width=0.4938\figw]{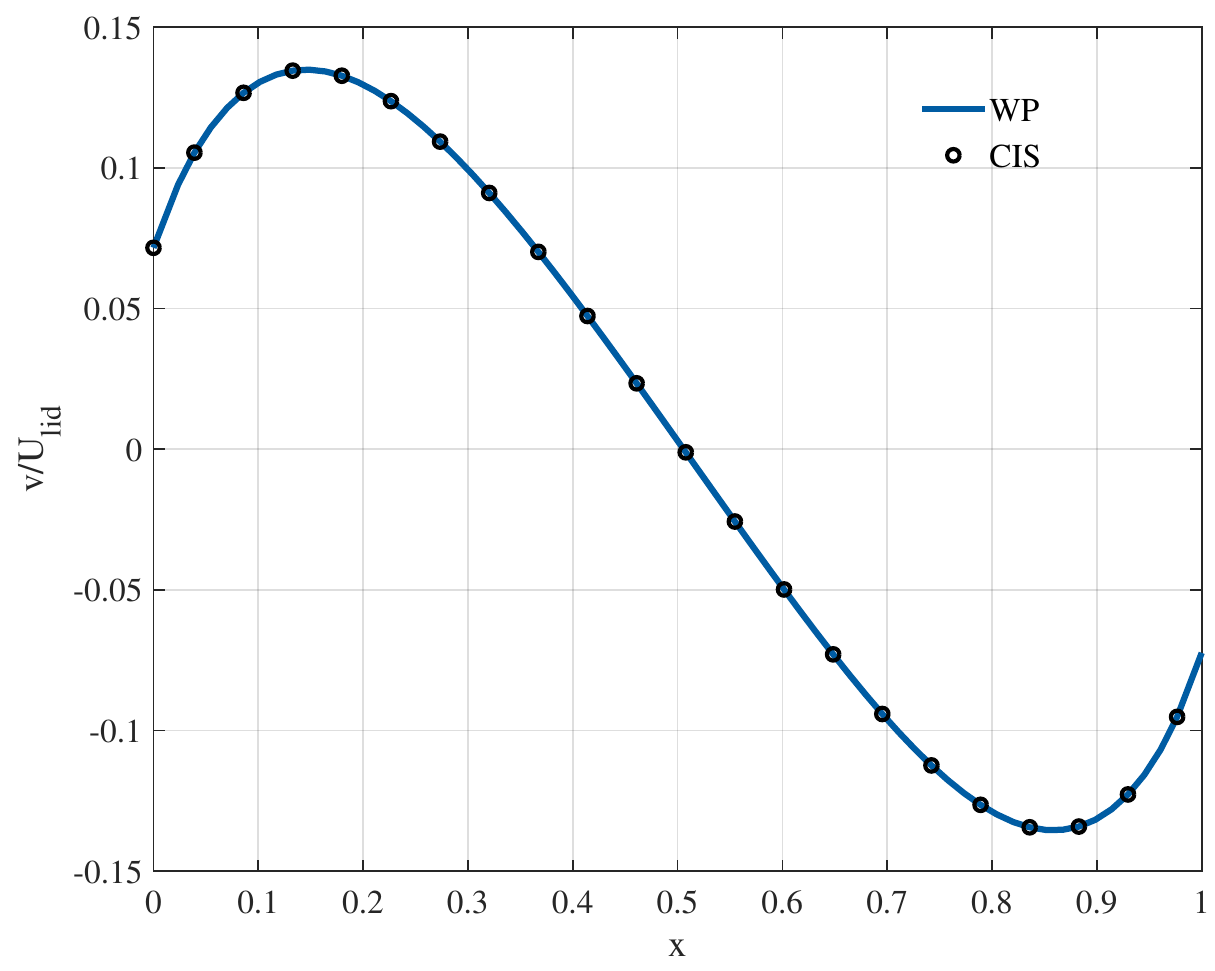}}
\caption{$v/U_{\rm lid}$ on the horizontal centreline, $\varepsilon=0.075$.}
\end{subfigure}
\caption{Lid-driven cavity at $\varepsilon=1$ and 0.075: centreline velocity profiles. Lines denote IWPD and symbols CIS.}\label{fig:cavity_rarefied_profiles}
\end{figure}

\begin{figure}[H]
\centering\singlespacing\setlength{\figw}{.65\linewidth}
\begin{subfigure}[t]{0.4869\figw}\centering
\raisebox{0.0000\figw}{\includegraphics[width=0.4869\figw]{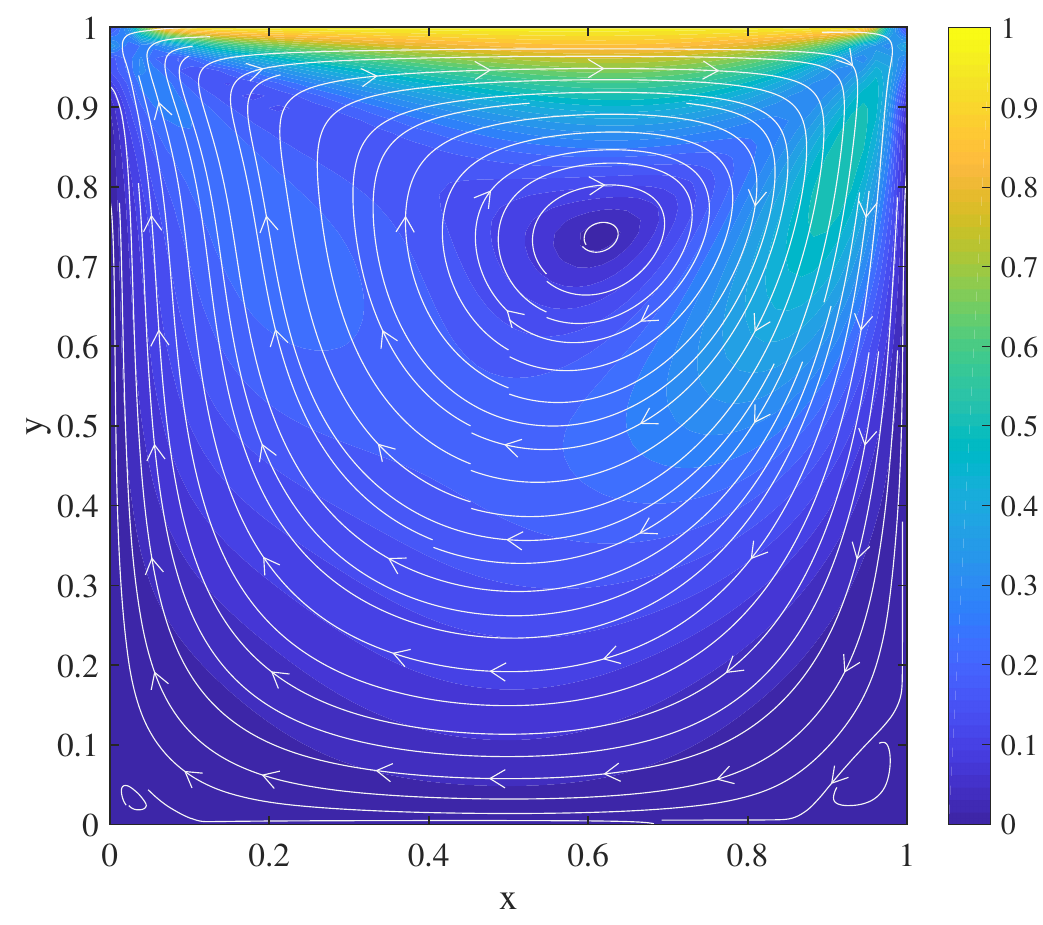}}
\caption{Speed and streamlines.}
\end{subfigure}\hspace{0.0200\figw}%
\begin{subfigure}[t]{0.4881\figw}\centering
\raisebox{0.0000\figw}{\includegraphics[width=0.4881\figw]{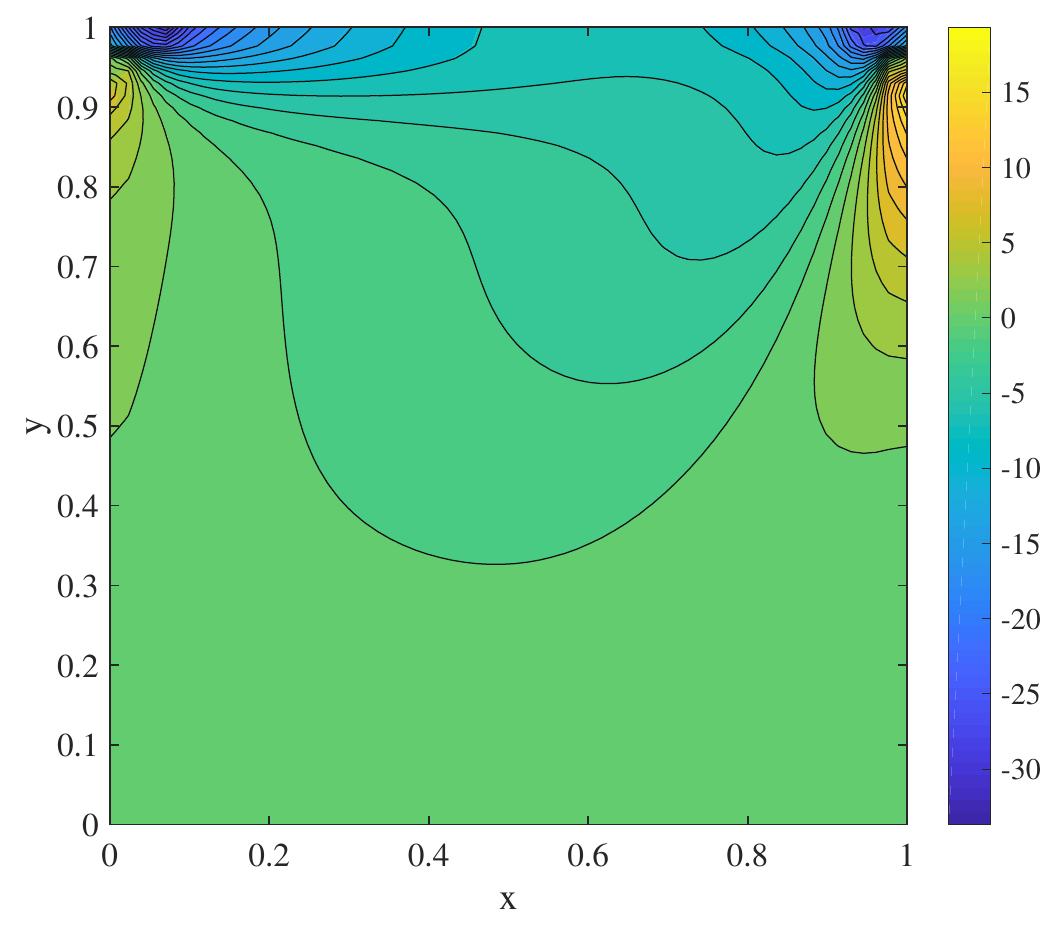}}
\caption{Vorticity.}
\end{subfigure}
\par\smallskip
\begin{subfigure}[t]{0.4774\figw}\centering
\raisebox{0.0000\figw}{\includegraphics[width=0.4774\figw]{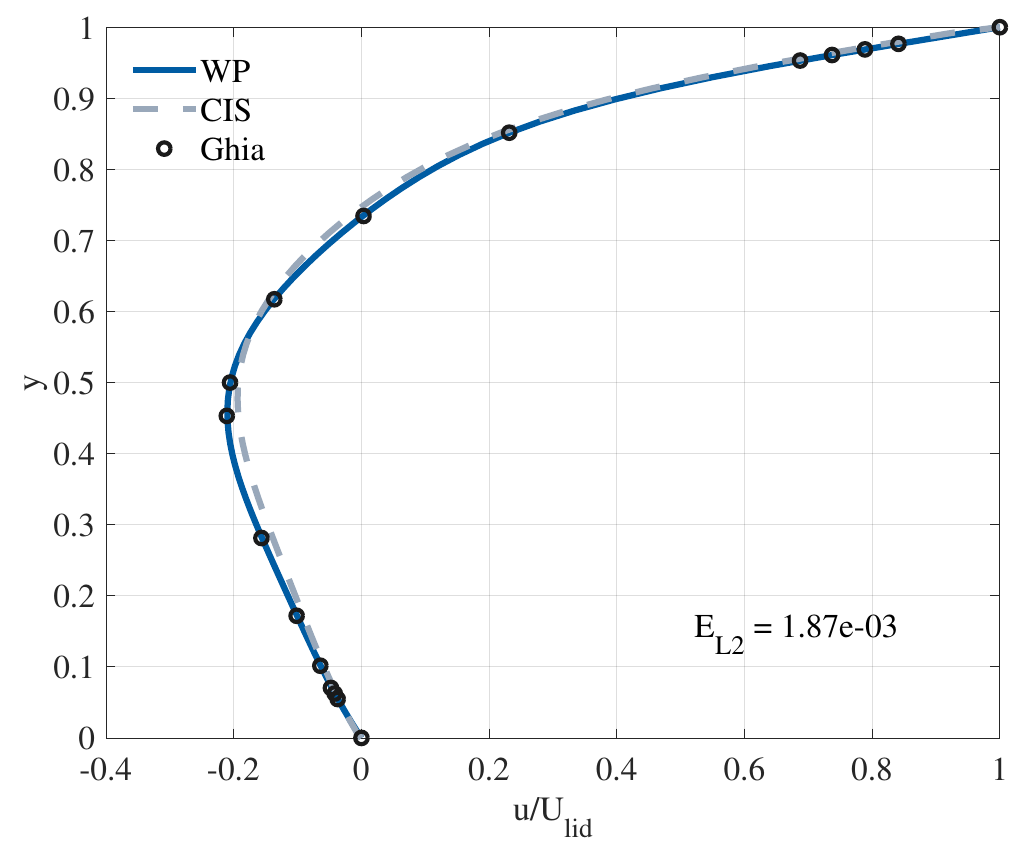}}
\caption{$u/U_{\rm lid}$ on the vertical centreline.}
\end{subfigure}\hspace{0.0200\figw}%
\begin{subfigure}[t]{0.4976\figw}\centering
\raisebox{0.0069\figw}{\includegraphics[width=0.4976\figw]{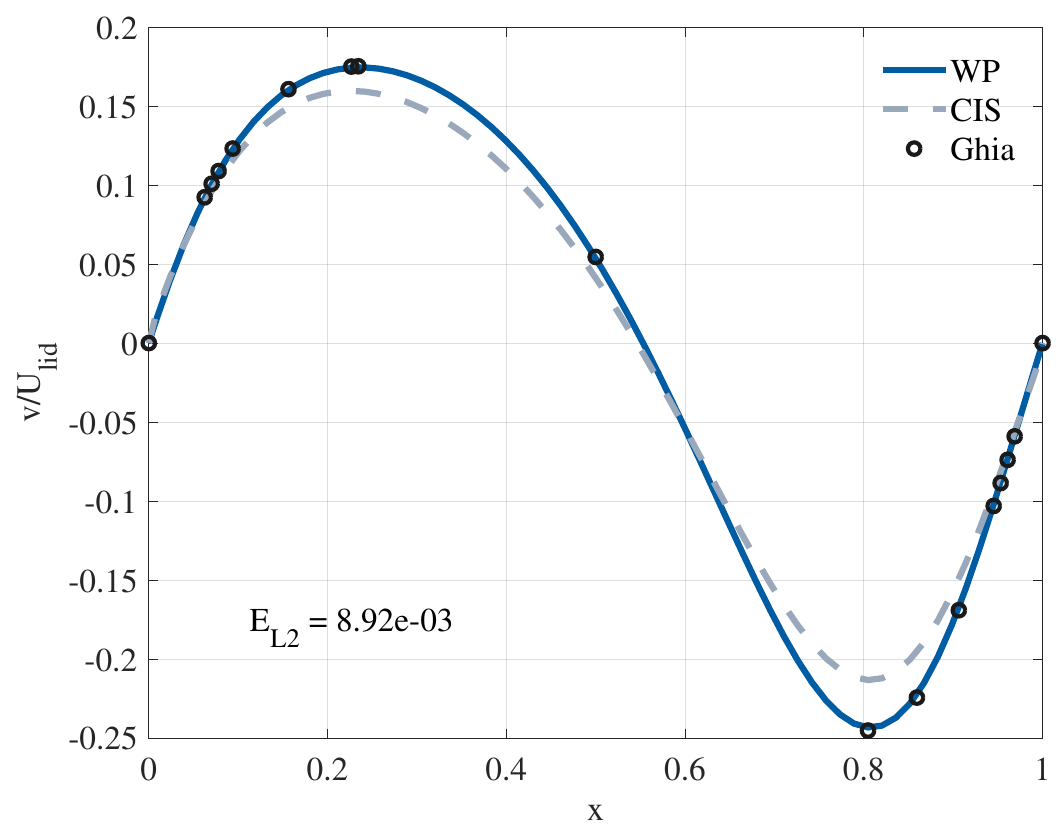}}
\caption{$v/U_{\rm lid}$ on the horizontal centreline.}
\end{subfigure}
\caption{Lid-driven cavity at $\mathrm{Re}=100$: (a,b) fields of IWPD (filled contours) and NS (lines, streamlines in (a)); (c,d) centreline profiles of IWPD (solid) and CIS (dashed) compared with the benchmark data of Ghia et al.~\cite{ghia1982} (symbols); the panels give the relative difference $E_{L_2}$ between IWPD and the benchmark data.}\label{fig:cavity_re100}
\end{figure}

\begin{figure}[H]
\centering\singlespacing\setlength{\figw}{.90\linewidth}
\begin{subfigure}[t]{0.6915\figw}\centering
\makebox[\linewidth]{\raisebox{0.0000\figw}{\includegraphics[width=0.3417\figw]{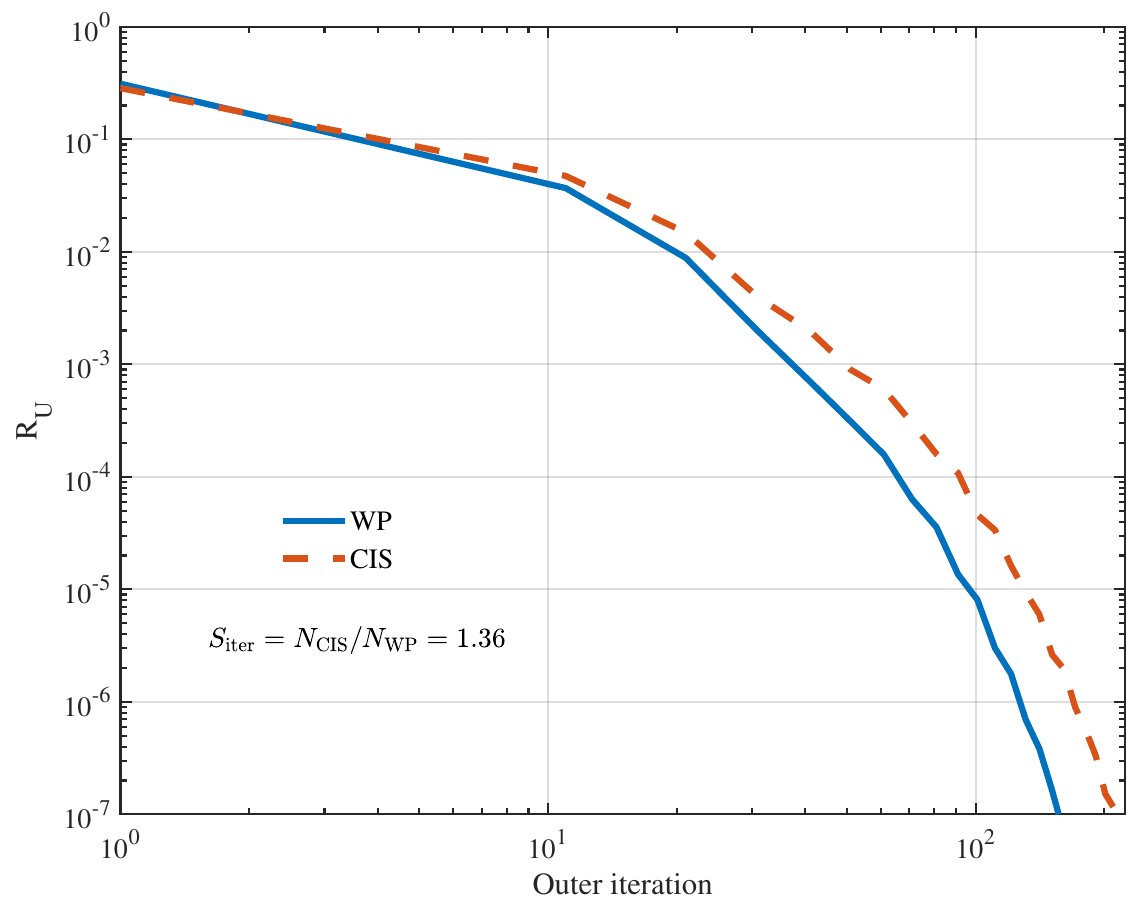}}\hspace{0.0200\figw}\raisebox{0.0001\figw}{\includegraphics[width=0.3298\figw]{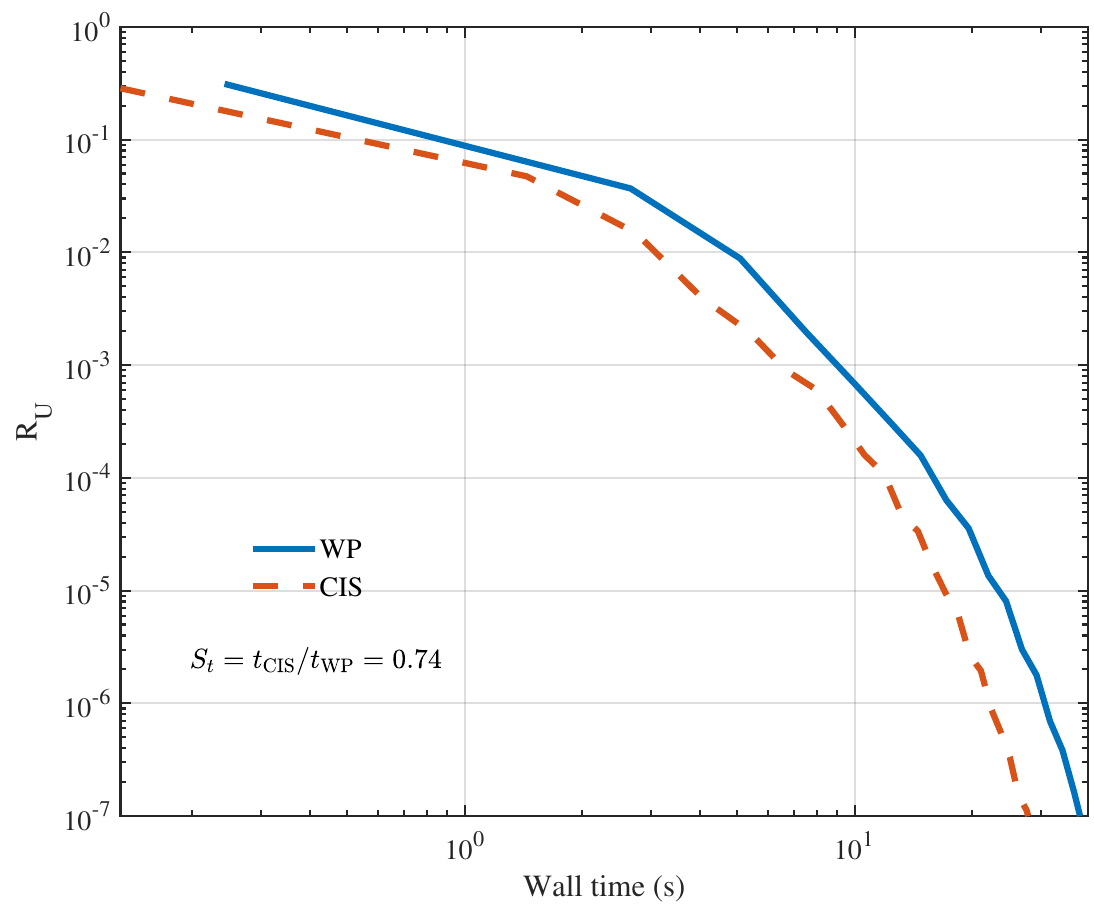}}}
\caption{$\varepsilon=1$.}
\end{subfigure}
\par\smallskip
\begin{subfigure}[t]{0.6915\figw}\centering
\makebox[\linewidth]{\raisebox{0.0020\figw}{\includegraphics[width=0.3340\figw]{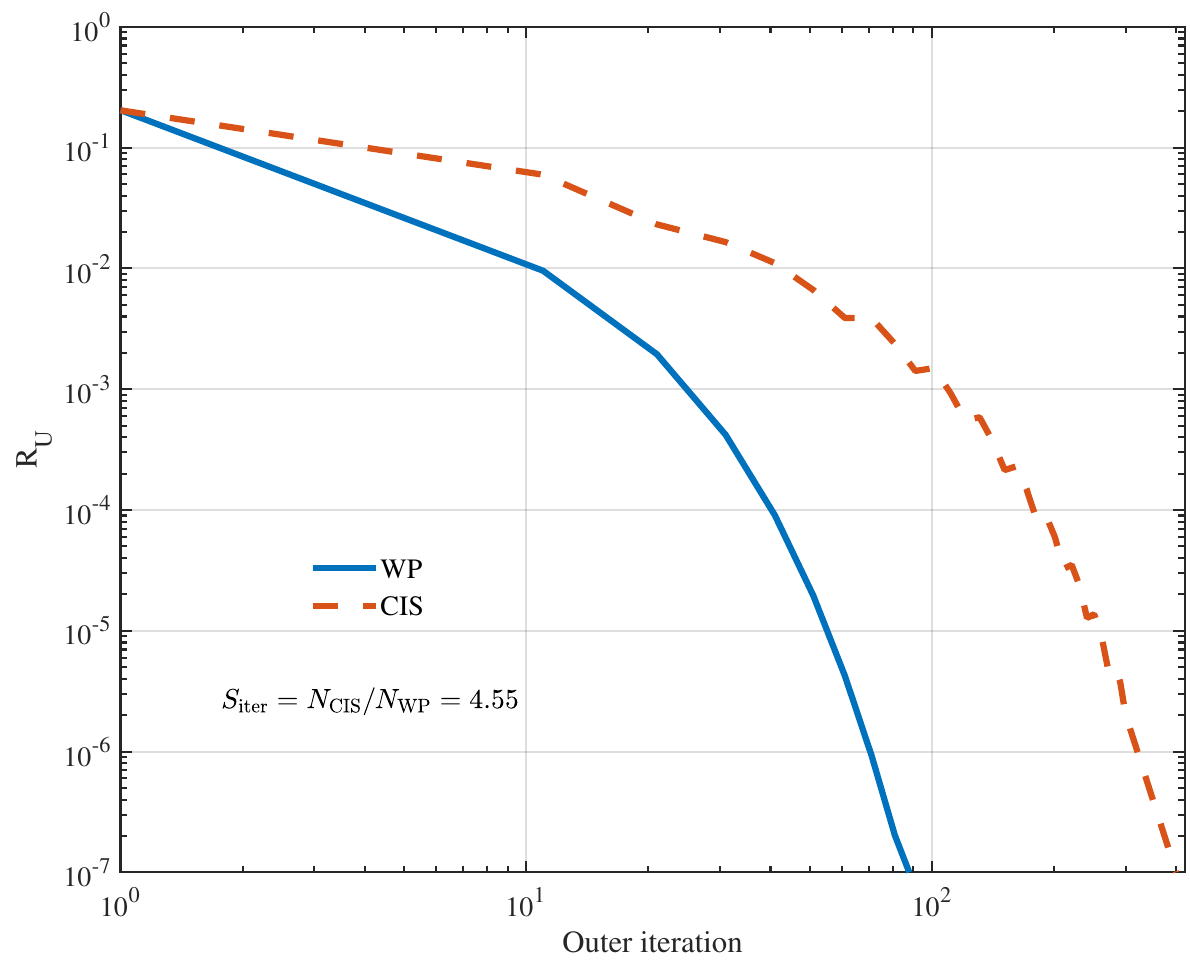}}\hspace{0.0200\figw}\raisebox{0.0000\figw}{\includegraphics[width=0.3375\figw]{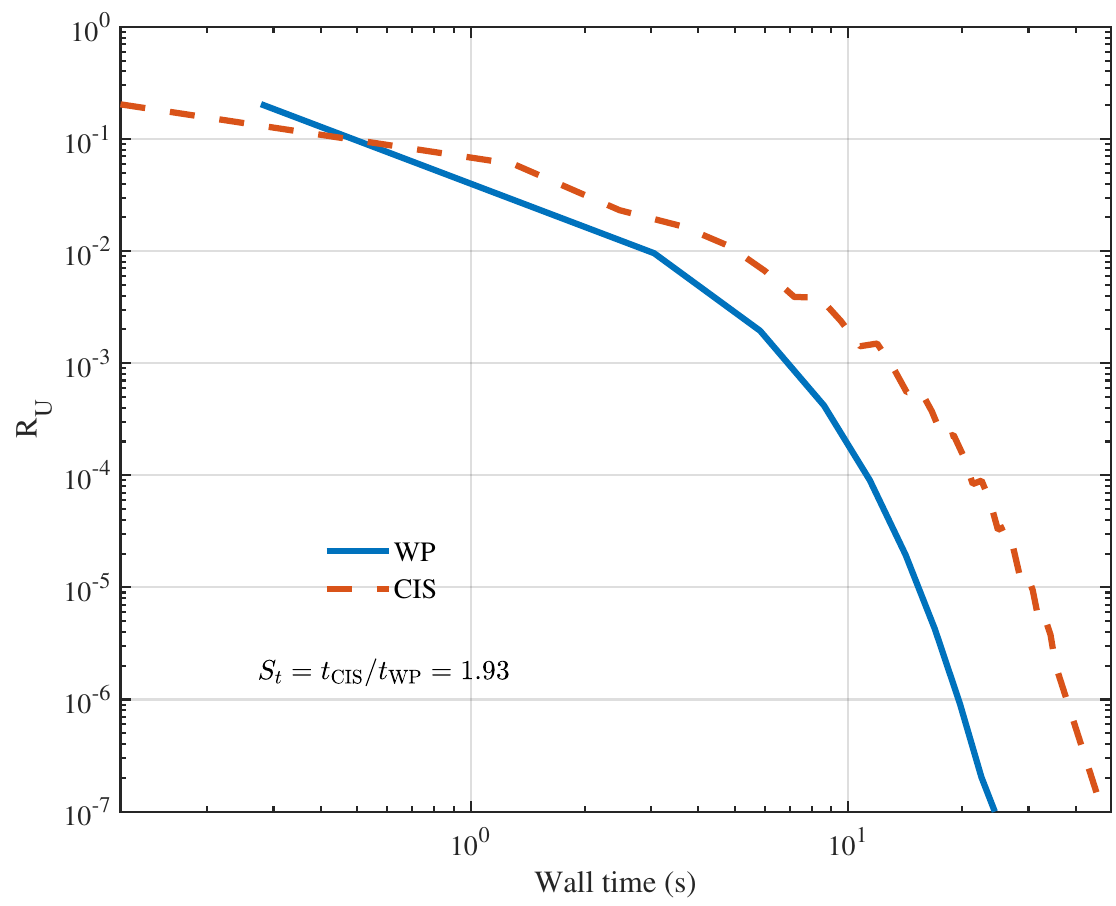}}}
\caption{$\varepsilon=0.075$.}
\end{subfigure}
\par\smallskip
\begin{subfigure}[t]{0.6915\figw}\centering
\makebox[\linewidth]{\raisebox{0.0000\figw}{\includegraphics[width=0.3418\figw]{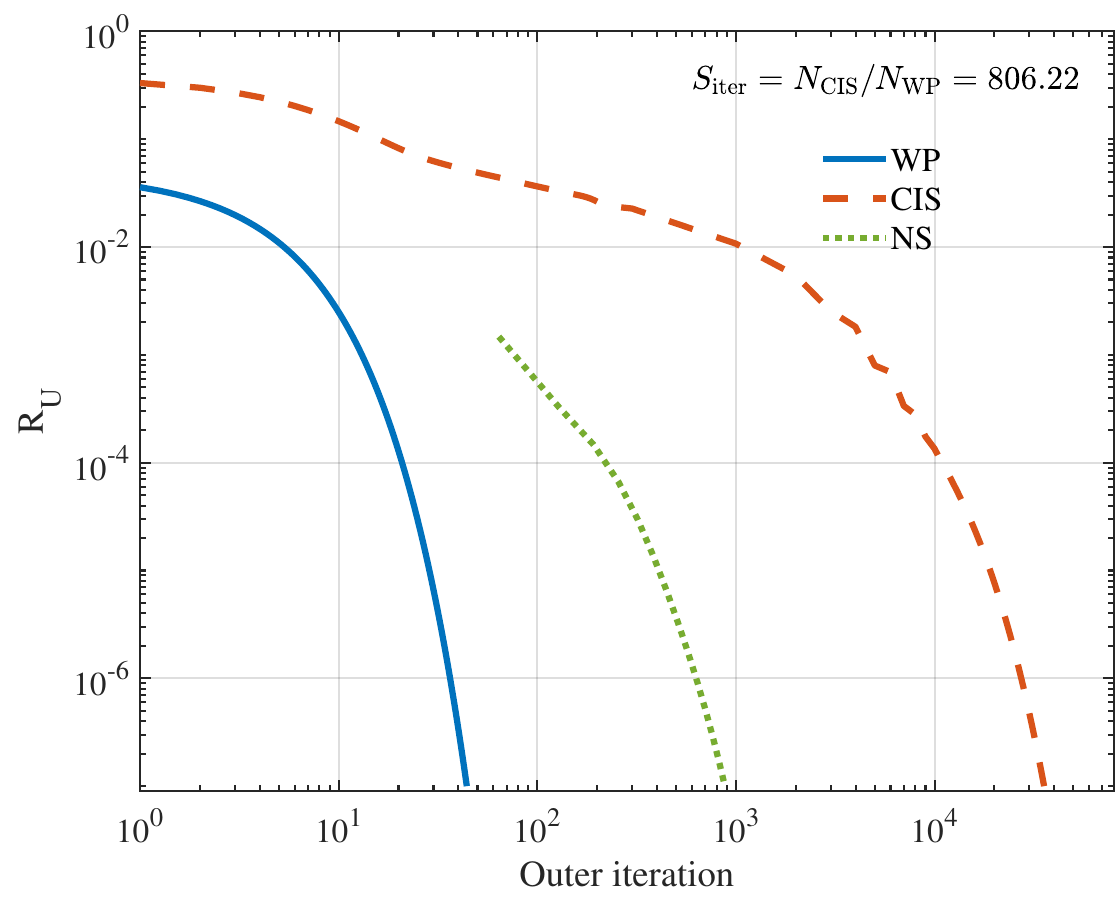}}\hspace{0.0200\figw}\raisebox{0.0023\figw}{\includegraphics[width=0.3297\figw]{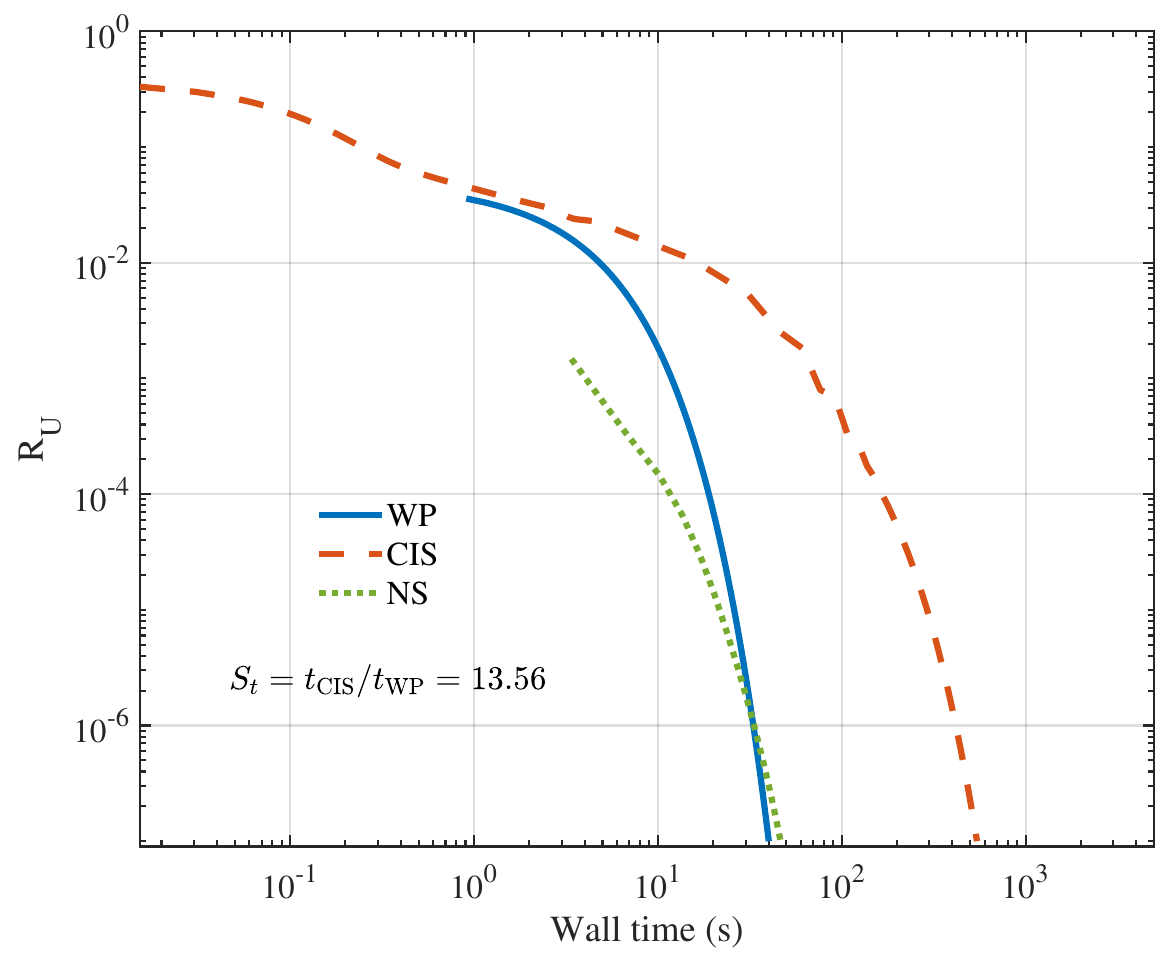}}}
\caption{$\mathrm{Re}=100$.}
\end{subfigure}
\par\smallskip
\begin{subfigure}[t]{0.6915\figw}\centering
\makebox[\linewidth]{\raisebox{0.0010\figw}{\includegraphics[width=0.3324\figw]{cavity-re1000_2vs_residual_iteration.pdf}}\hspace{0.0200\figw}\raisebox{0.0000\figw}{\includegraphics[width=0.3391\figw]{cavity-re1000_2vs_residual_walltime.pdf}}}
\caption{$\mathrm{Re}=1000$.}
\end{subfigure}
\caption{Lid-driven cavity: macroscopic residual versus outer iteration (left) and wall time (right) at $\varepsilon=1$~(a), $\varepsilon=0.075$~(b), $\mathrm{Re}=100$~(c) and $\mathrm{Re}=1000$~(d), for IWPD (solid), CIS (dashed) and, in the continuum cases, NS (dotted).}\label{fig:cavity_residual}
\end{figure}

\begin{figure}[H]
\centering\singlespacing\setlength{\figw}{1.0\linewidth}
\begin{subfigure}[t]{0.4920\figw}\centering
\raisebox{0.0000\figw}{\includegraphics[width=0.4920\figw]{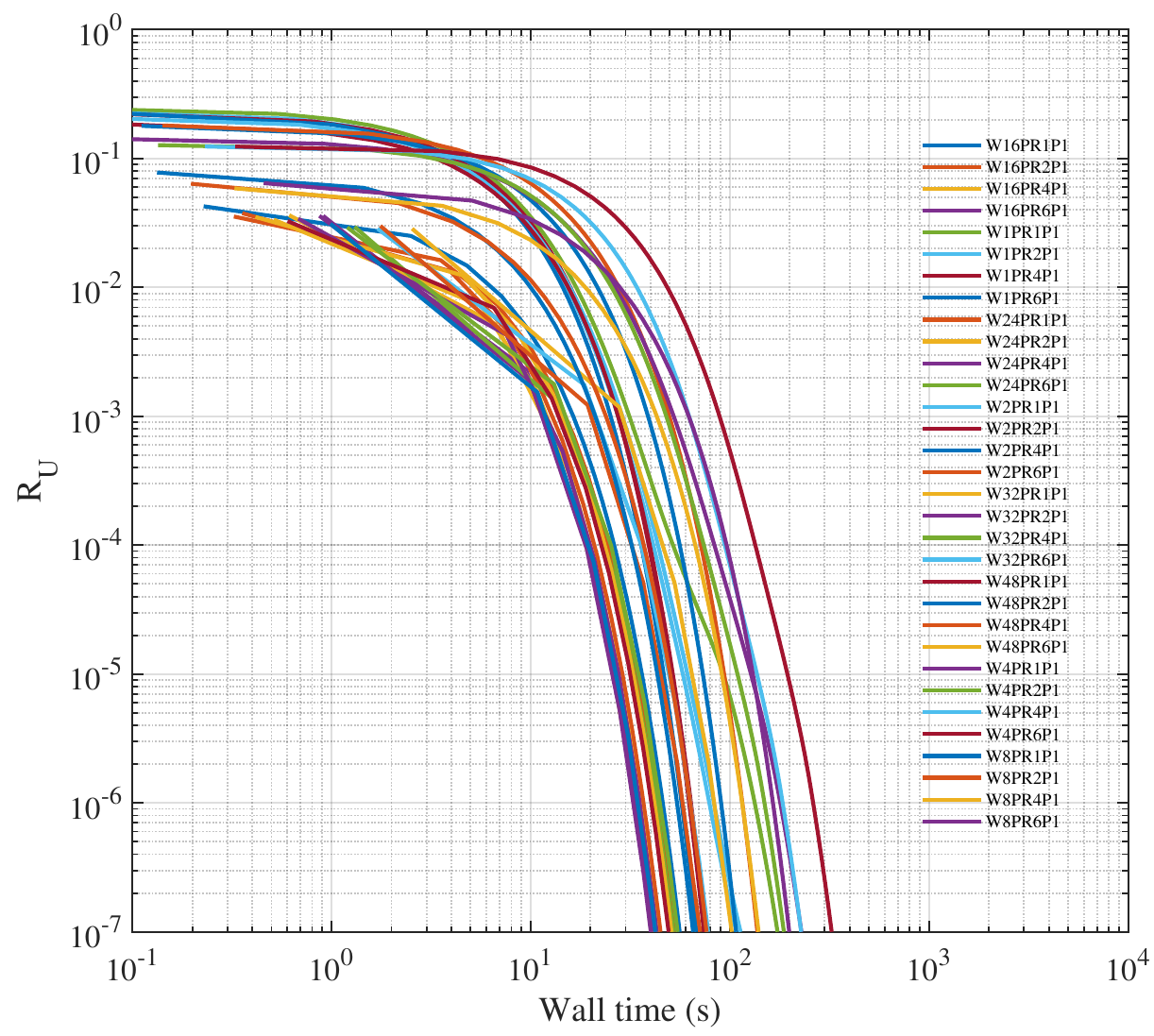}}
\caption{Residual histories, $\mathrm{Re}=100$.}
\end{subfigure}\hspace{0.0200\figw}%
\begin{subfigure}[t]{0.4830\figw}\centering
\raisebox{0.0000\figw}{\includegraphics[width=0.4830\figw]{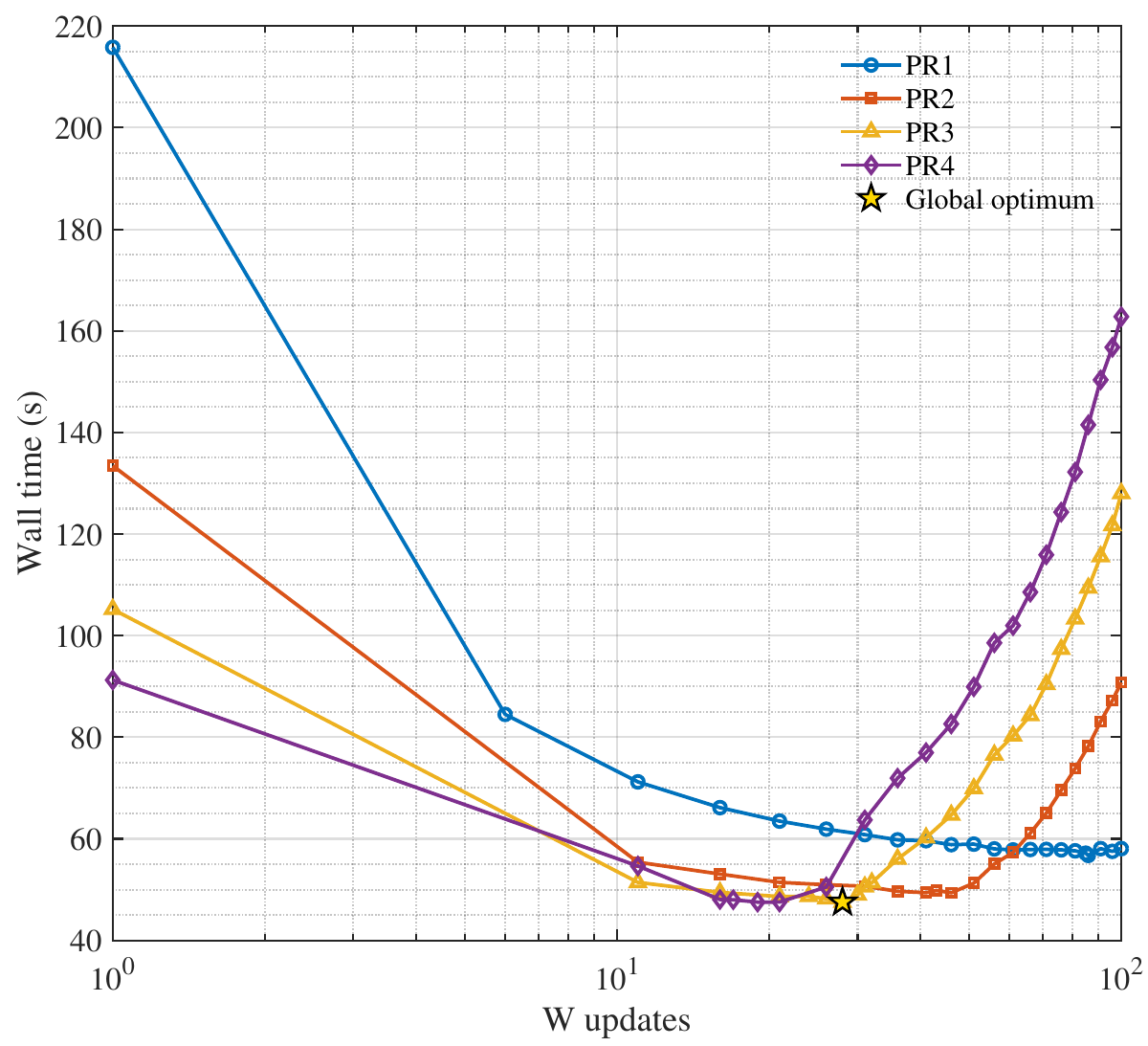}}
\caption{Wall time versus the number of W updates, $\mathrm{Re}=100$.}
\end{subfigure}
\par\smallskip
\begin{subfigure}[t]{0.4841\figw}\centering
\raisebox{0.0000\figw}{\includegraphics[width=0.4841\figw]{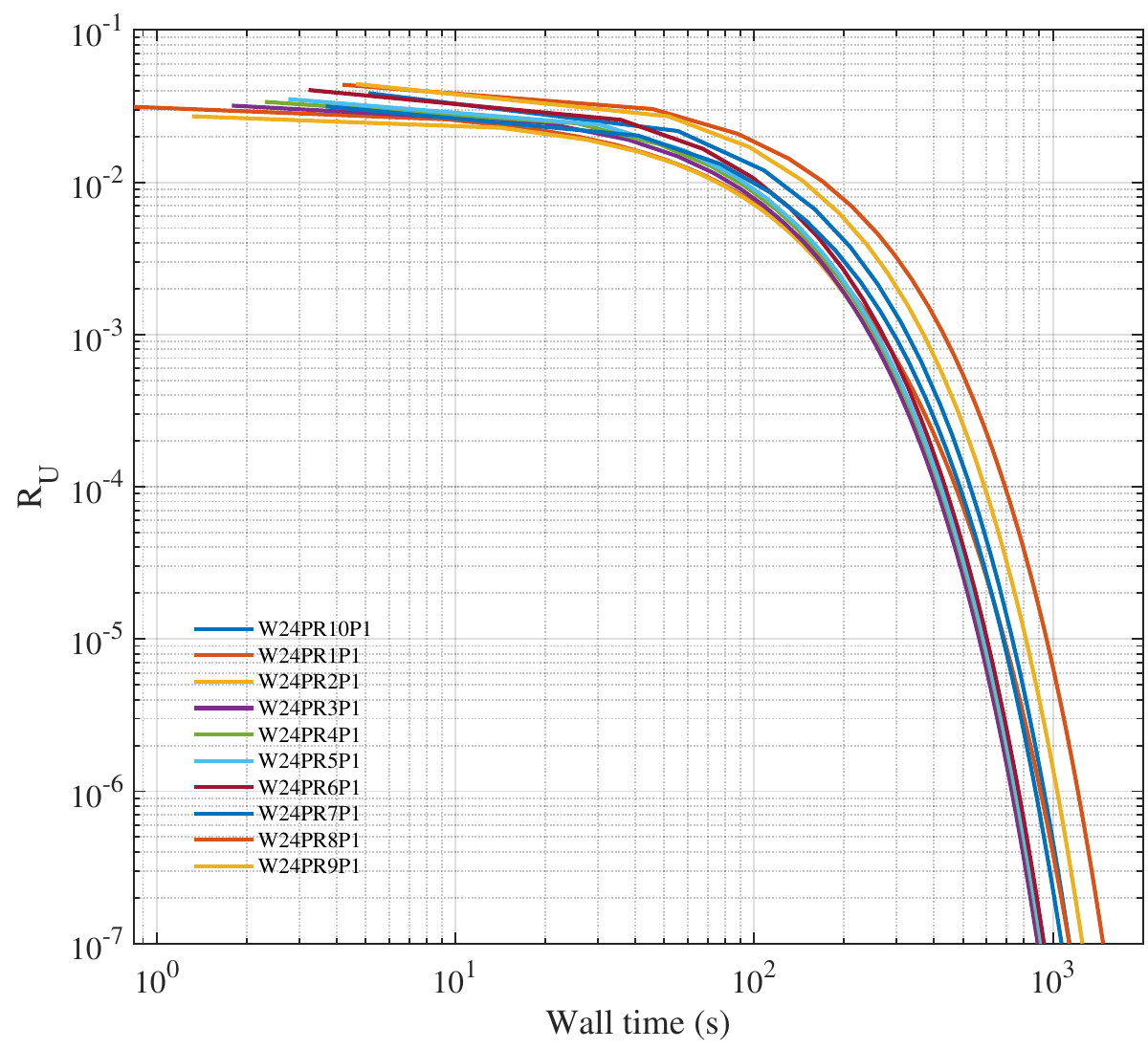}}
\caption{Residual histories, $\mathrm{Re}=1000$.}
\end{subfigure}\hspace{0.0200\figw}%
\begin{subfigure}[t]{0.4909\figw}\centering
\raisebox{0.0000\figw}{\includegraphics[width=0.4909\figw]{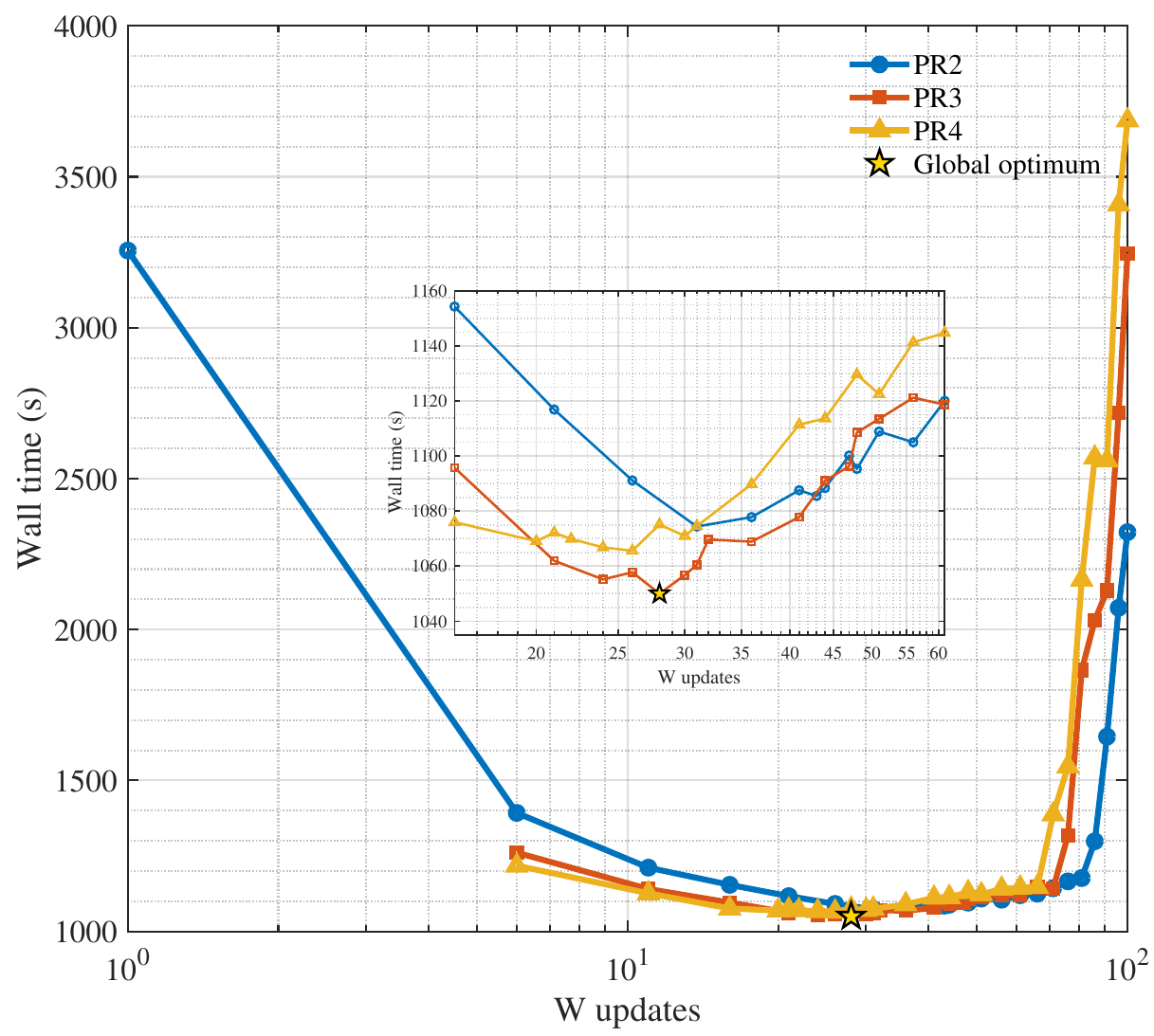}}
\caption{Wall time versus the number of W updates, $\mathrm{Re}=1000$.}
\end{subfigure}
\caption{Lid-driven cavity: scans of the IWPD controls with P1. The left panels show the residual histories of the scanned configurations versus wall time, and the right panels the wall time to $\|R_U\|\leq10^{-8}$ versus the number of W updates for different numbers of PR pairs, with the optimum marked by a star.}\label{fig:cavity_scan}
\end{figure}

\clearpage
\section{Circular cylinder}
\begin{figure}[H]
\centering\singlespacing\setlength{\figw}{.63\linewidth}
\begin{subfigure}[t]{0.4892\figw}\centering
\raisebox{0.0000\figw}{\includegraphics[width=0.4892\figw]{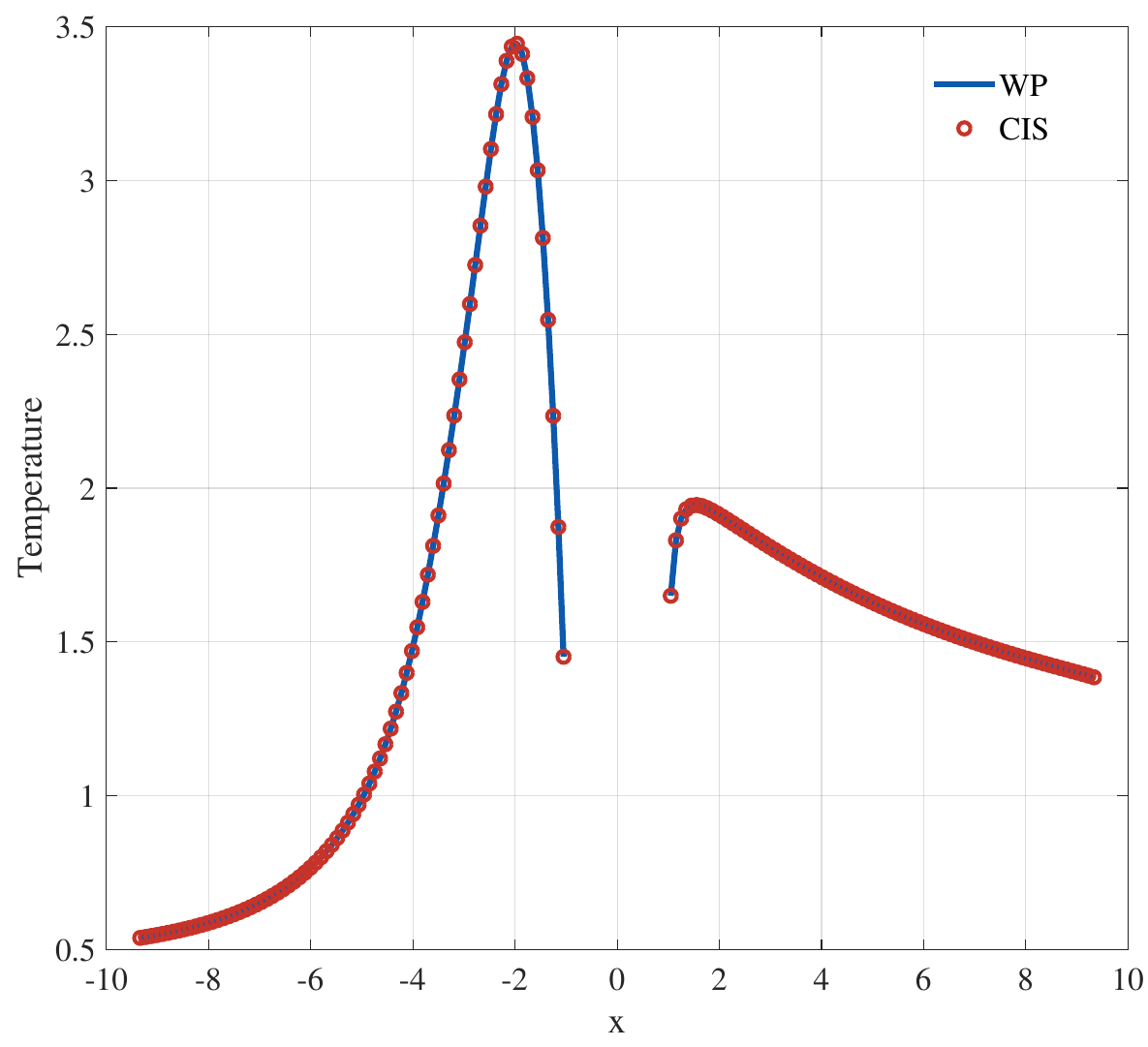}}
\caption{Stagnation-line temperature.}
\end{subfigure}\hspace{0.0200\figw}%
\begin{subfigure}[t]{0.4858\figw}\centering
\raisebox{0.0000\figw}{\includegraphics[width=0.4858\figw]{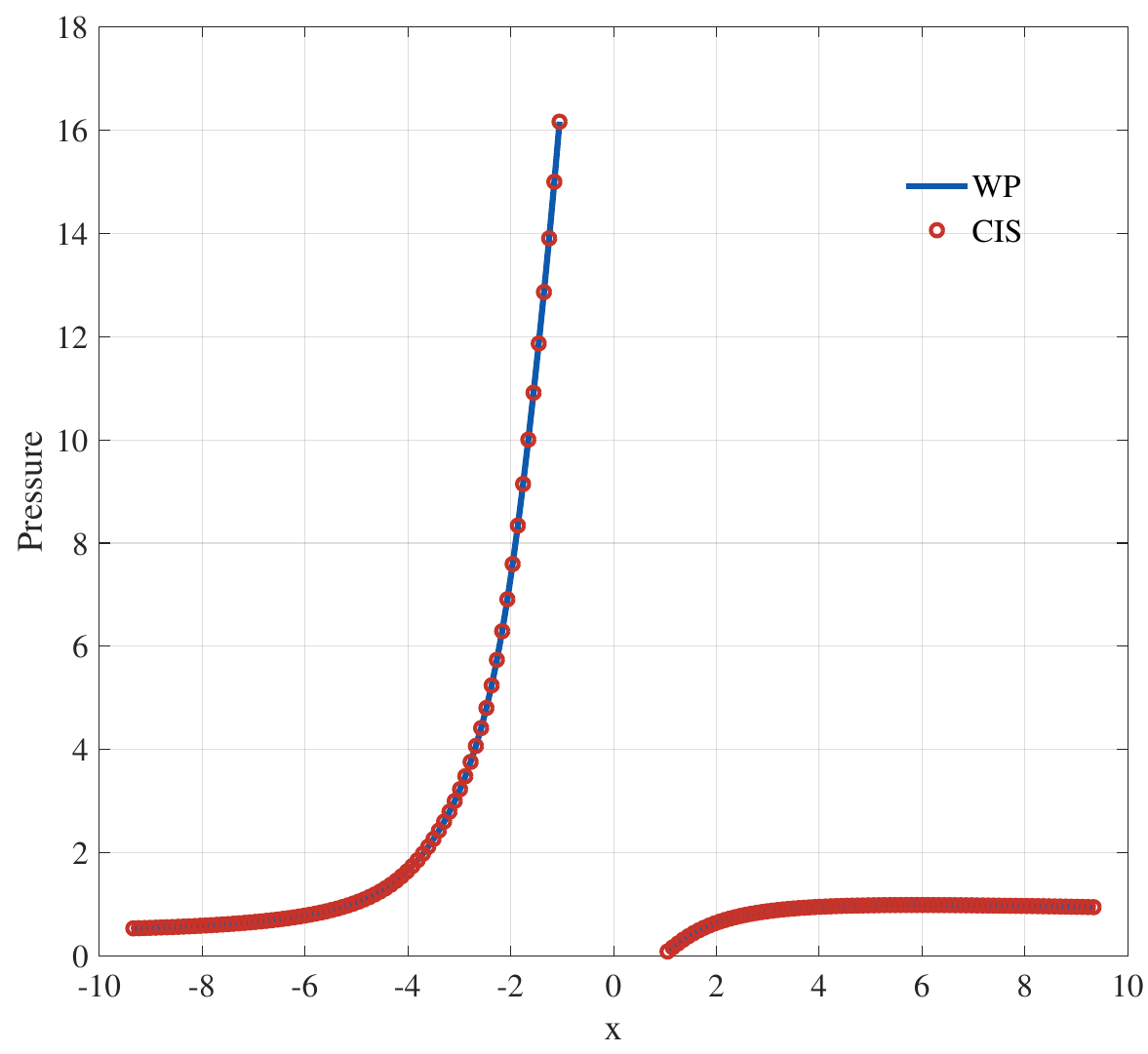}}
\caption{Stagnation-line pressure.}
\end{subfigure}
\par\smallskip
\begin{subfigure}[t]{0.4927\figw}\centering
\raisebox{0.0000\figw}{\includegraphics[width=0.4927\figw]{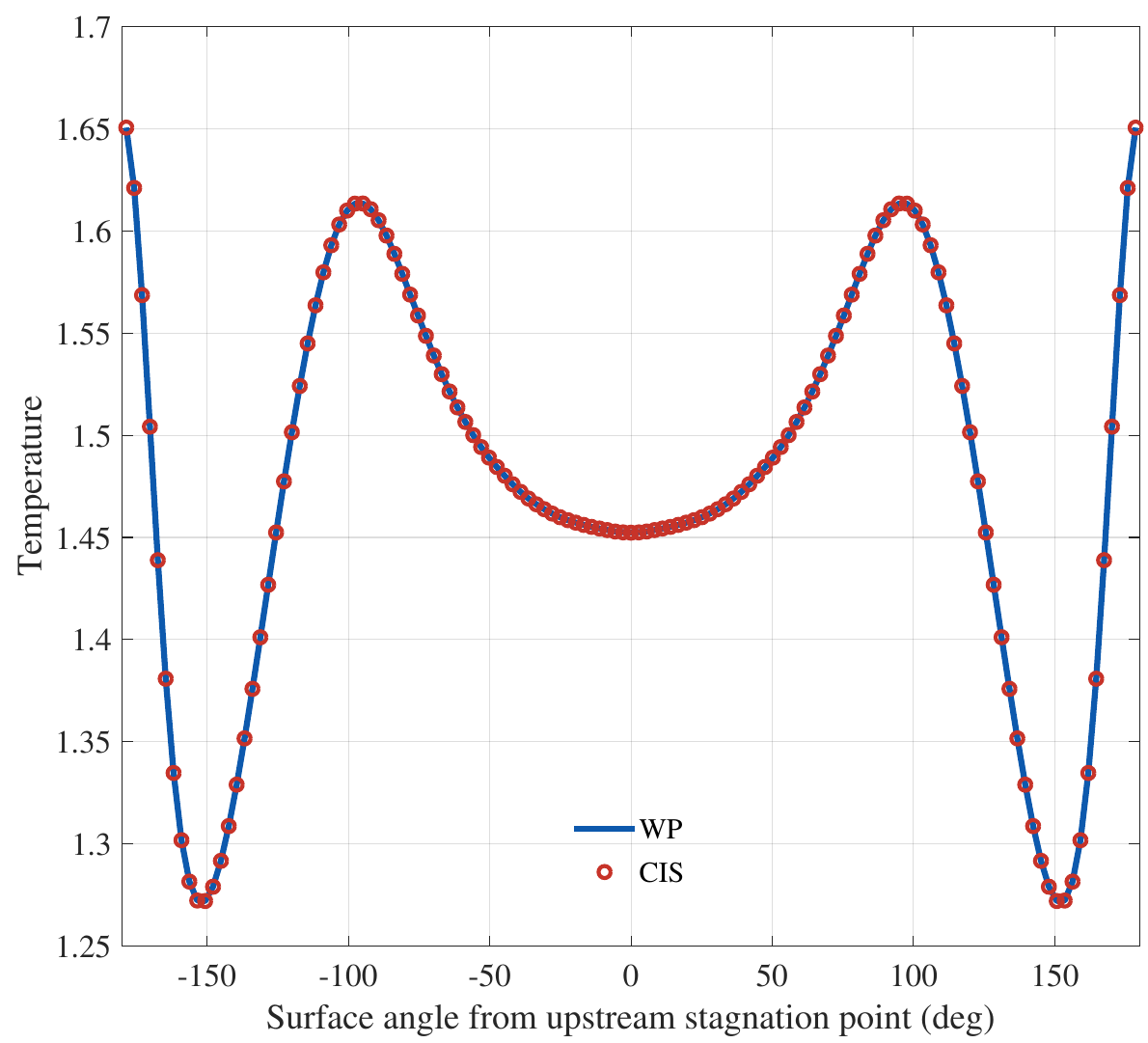}}
\caption{Wall temperature.}
\end{subfigure}\hspace{0.0200\figw}%
\begin{subfigure}[t]{0.4823\figw}\centering
\raisebox{0.0000\figw}{\includegraphics[width=0.4823\figw]{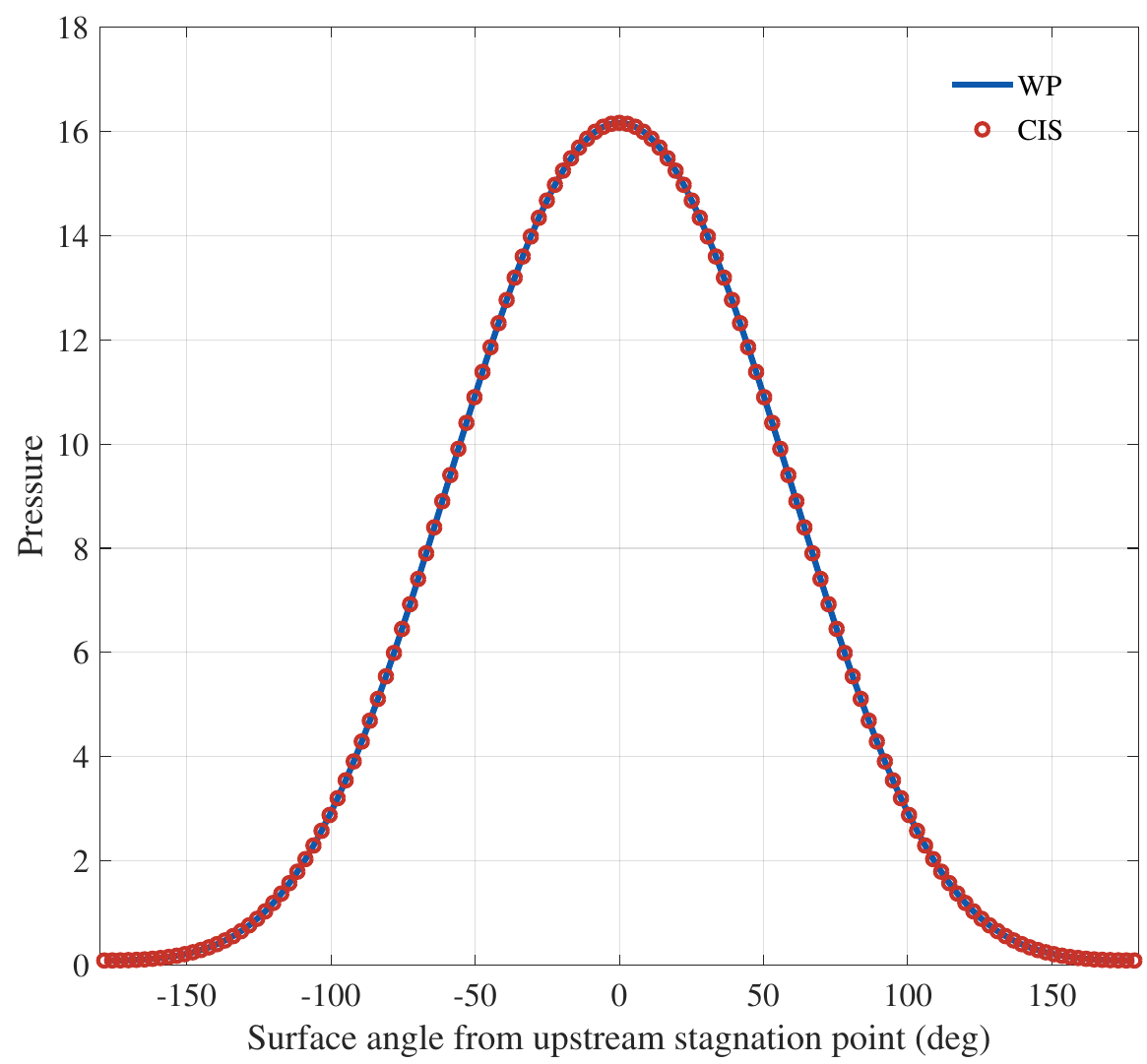}}
\caption{Wall pressure.}
\end{subfigure}
\caption{Circular cylinder at $\mathrm{Ma}=5$ and $\varepsilon=1$: temperature and pressure along the stagnation line (a,b) and on the wall versus the angle from the upstream stagnation point (c,d). Lines denote IWPD and symbols CIS. Here and in the following figures the wall values are those of the cells adjacent to the wall.}\label{fig:cyl_kn1_profiles}
\end{figure}

\begin{figure}[H]
\centering\singlespacing\setlength{\figw}{.63\linewidth}
\begin{subfigure}[t]{0.4875\figw}\centering
\raisebox{0.0000\figw}{\includegraphics[width=0.4875\figw]{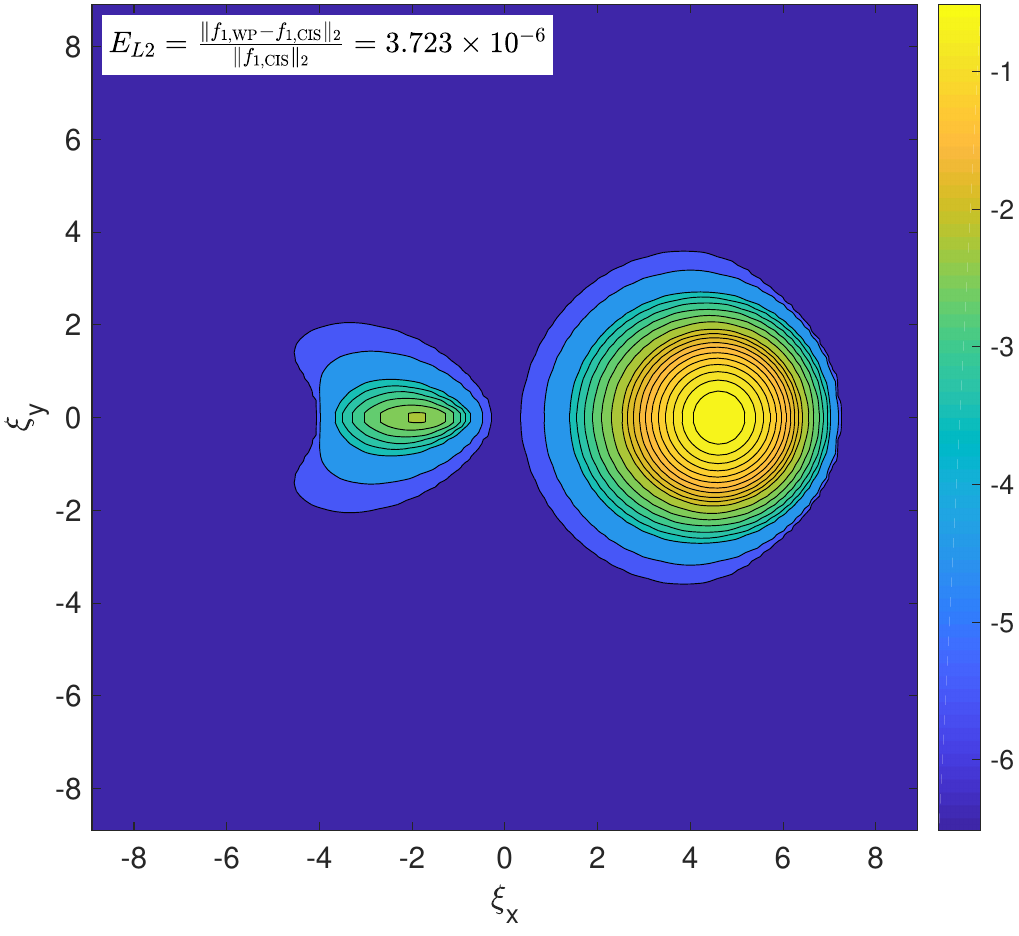}}
\caption{$x=-7.29$.}
\end{subfigure}\hspace{0.0200\figw}%
\begin{subfigure}[t]{0.4875\figw}\centering
\raisebox{0.0000\figw}{\includegraphics[width=0.4875\figw]{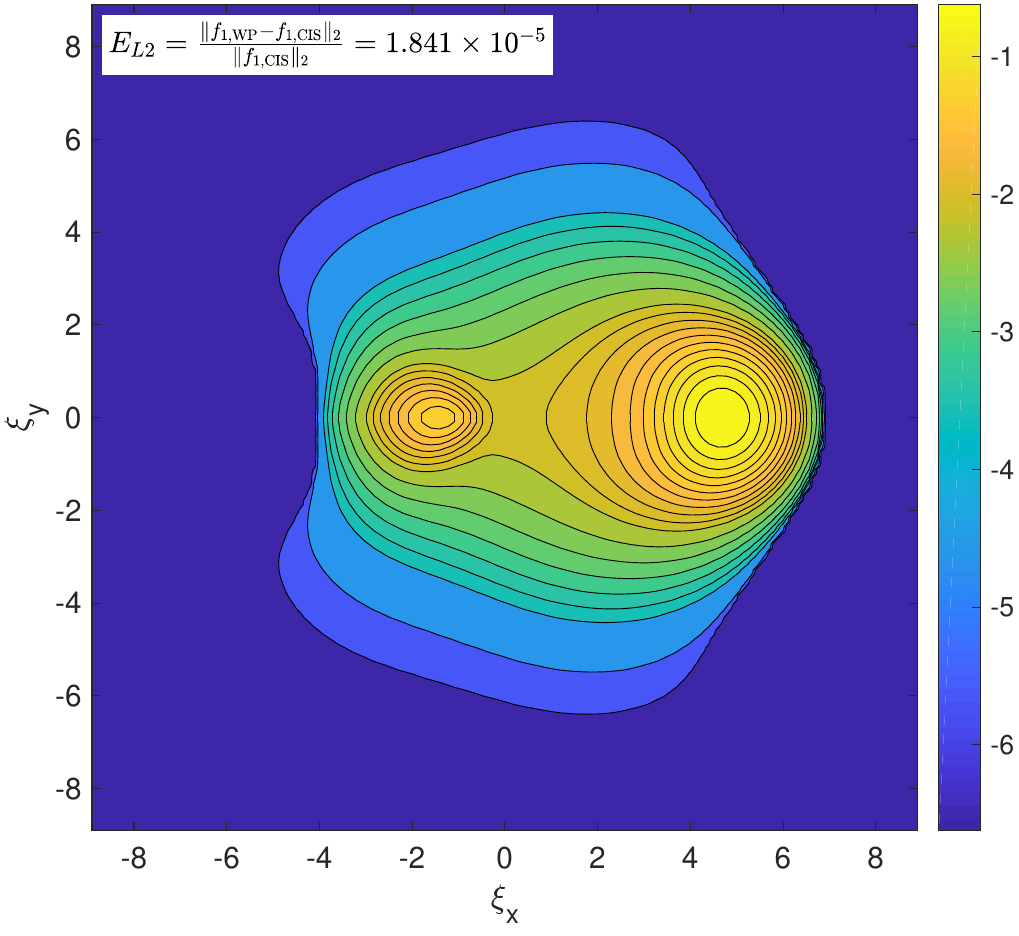}}
\caption{$x=-3.09$.}
\end{subfigure}
\par\smallskip
\begin{subfigure}[t]{0.4875\figw}\centering
\raisebox{0.0000\figw}{\includegraphics[width=0.4875\figw]{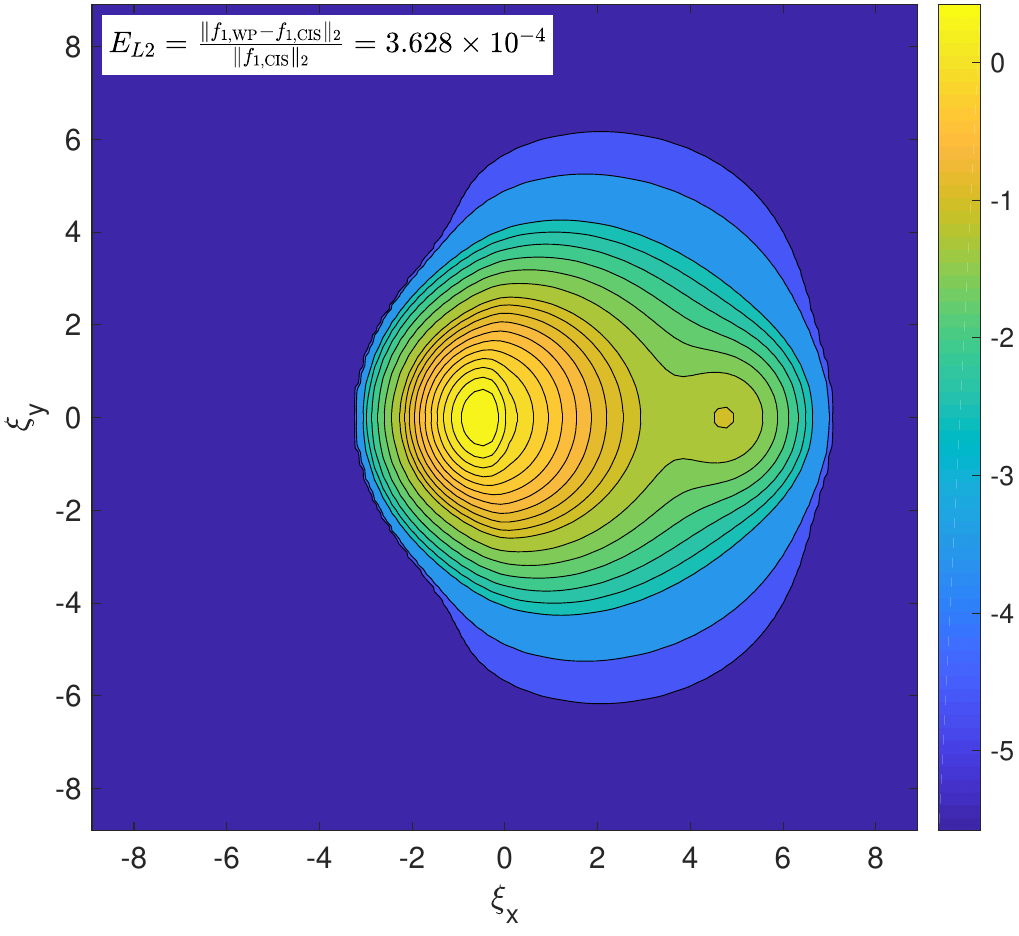}}
\caption{$x=-1.05$.}
\end{subfigure}\hspace{0.0200\figw}%
\begin{subfigure}[t]{0.4875\figw}\centering
\raisebox{0.0000\figw}{\includegraphics[width=0.4875\figw]{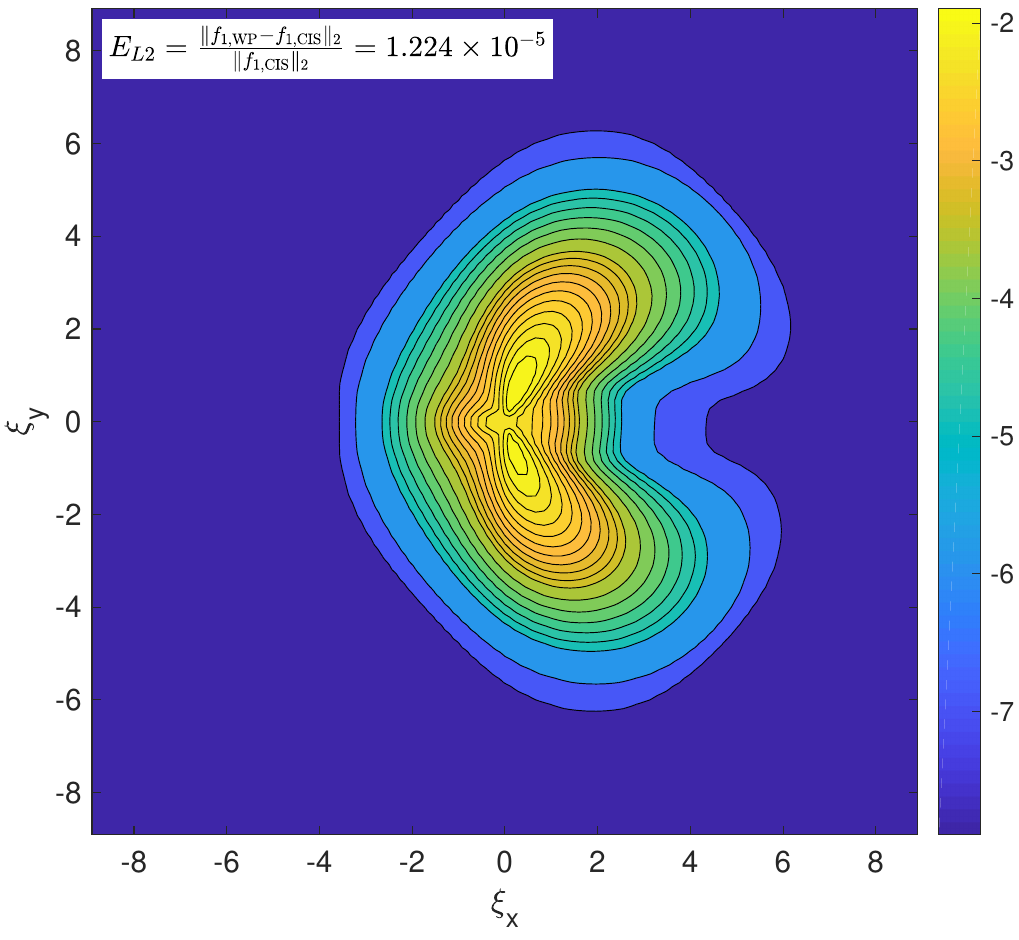}}
\caption{$x=1.05$.}
\end{subfigure}
\caption{Circular cylinder at $\mathrm{Ma}=5$ and $\varepsilon=1$: reduced distributions $f_1(\xi_x,\xi_y)$ on a logarithmic scale at three points ahead of the body (a--c) and one point behind it (d). IWPD is shown by filled contours and CIS by lines.}\label{fig:cyl_kn1_dist_log}
\end{figure}

\begin{figure}[H]
\centering\singlespacing\setlength{\figw}{.65\linewidth}
\begin{subfigure}[t]{0.4856\figw}\centering
\raisebox{0.0000\figw}{\includegraphics[width=0.4856\figw]{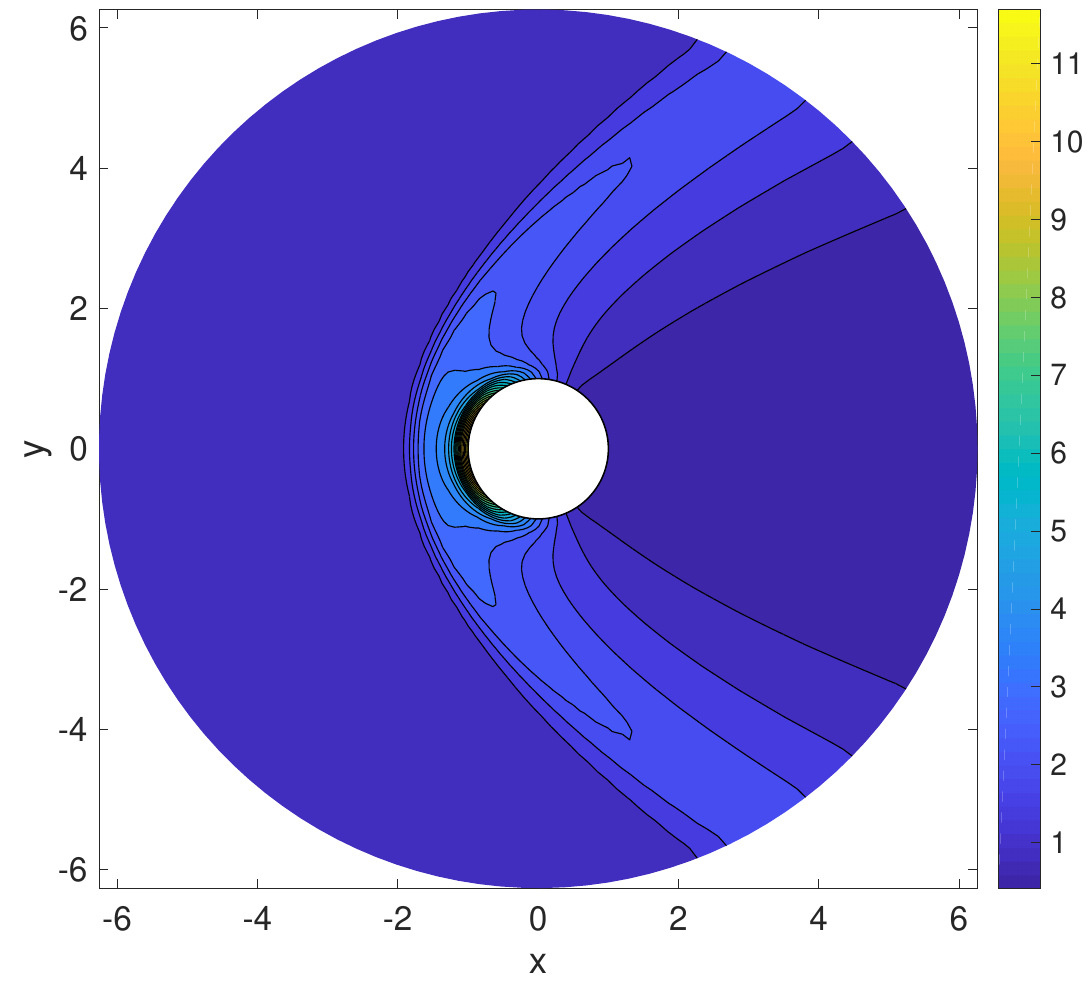}}
\caption{Density.}
\end{subfigure}\hspace{0.0200\figw}%
\begin{subfigure}[t]{0.4894\figw}\centering
\raisebox{0.0000\figw}{\includegraphics[width=0.4894\figw]{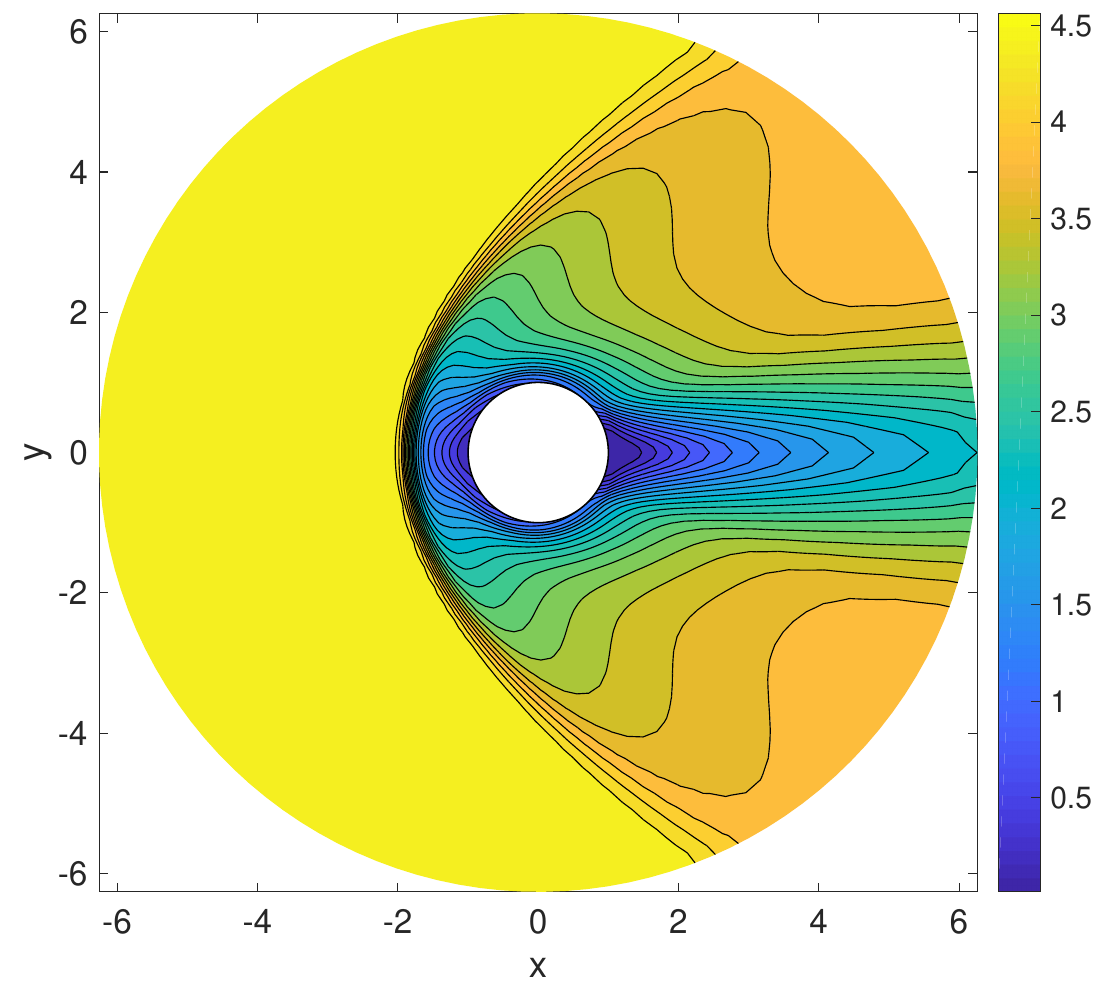}}
\caption{Speed.}
\end{subfigure}
\par\smallskip
\begin{subfigure}[t]{0.4894\figw}\centering
\raisebox{0.0000\figw}{\includegraphics[width=0.4894\figw]{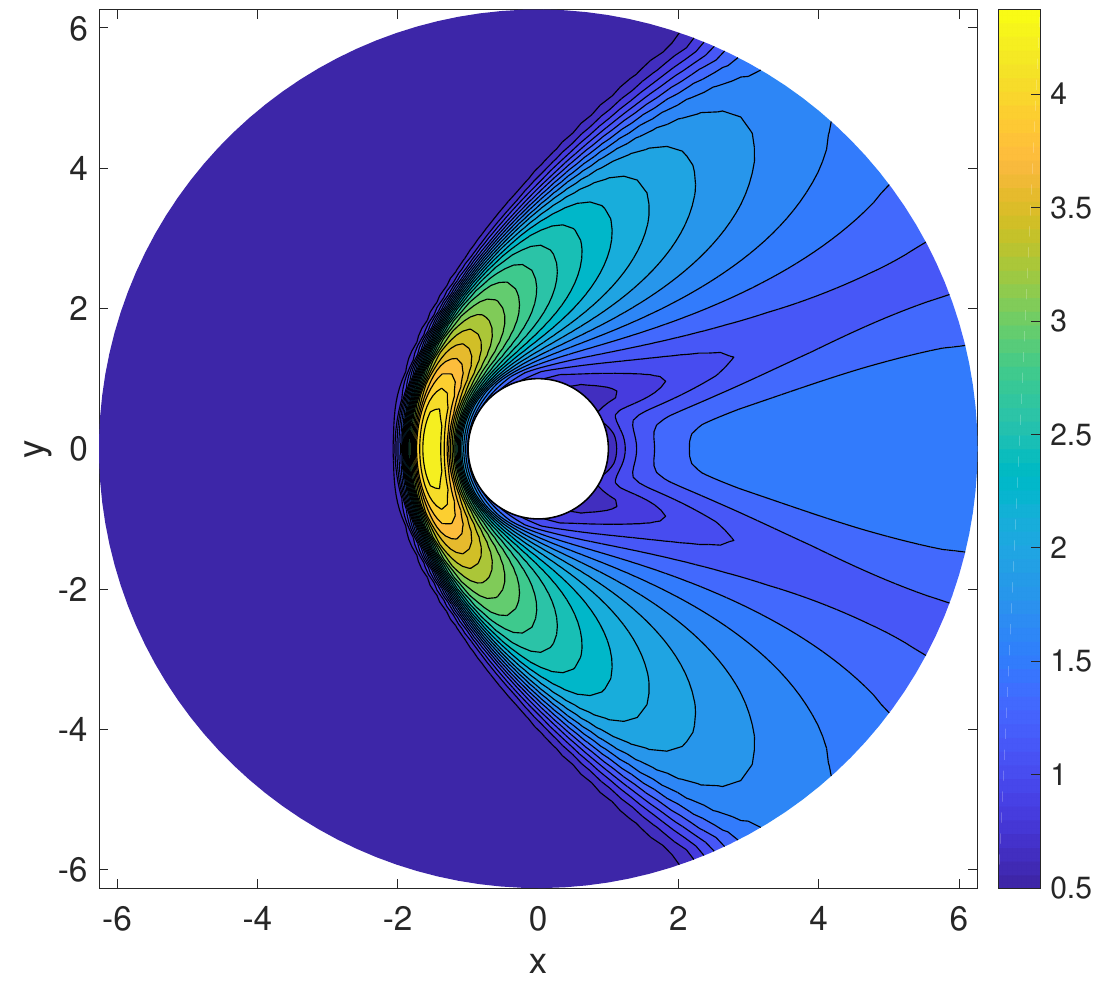}}
\caption{Temperature.}
\end{subfigure}\hspace{0.0200\figw}%
\begin{subfigure}[t]{0.4856\figw}\centering
\raisebox{0.0000\figw}{\includegraphics[width=0.4856\figw]{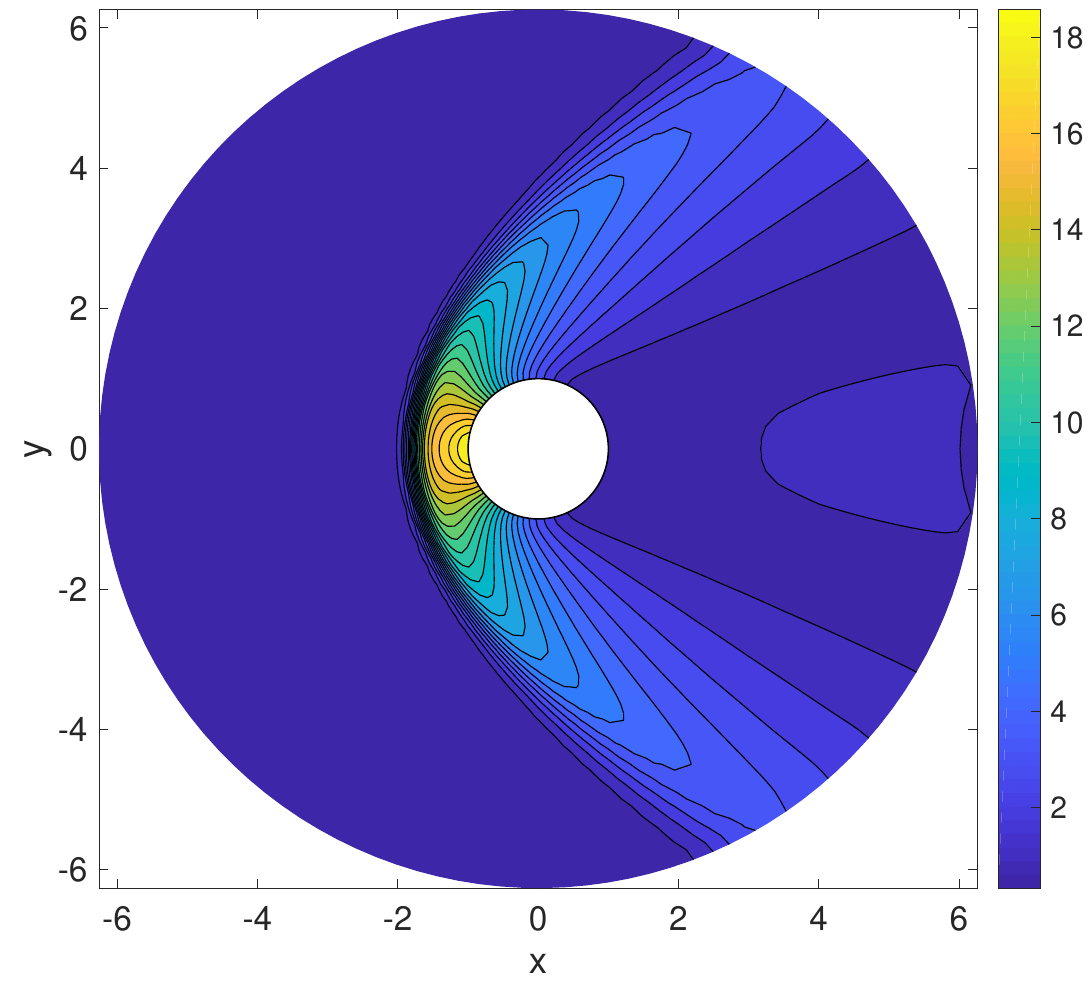}}
\caption{Pressure.}
\end{subfigure}
\caption{Circular cylinder at $\mathrm{Ma}=5$ and $\varepsilon=10^{-2}$: macroscopic fields of IWPD (filled contours) and CIS (lines) on common levels.}\label{fig:cyl_kn1em2_fields}
\end{figure}

\begin{figure}[H]
\centering\singlespacing\setlength{\figw}{.65\linewidth}
\begin{subfigure}[t]{0.4892\figw}\centering
\raisebox{0.0000\figw}{\includegraphics[width=0.4892\figw]{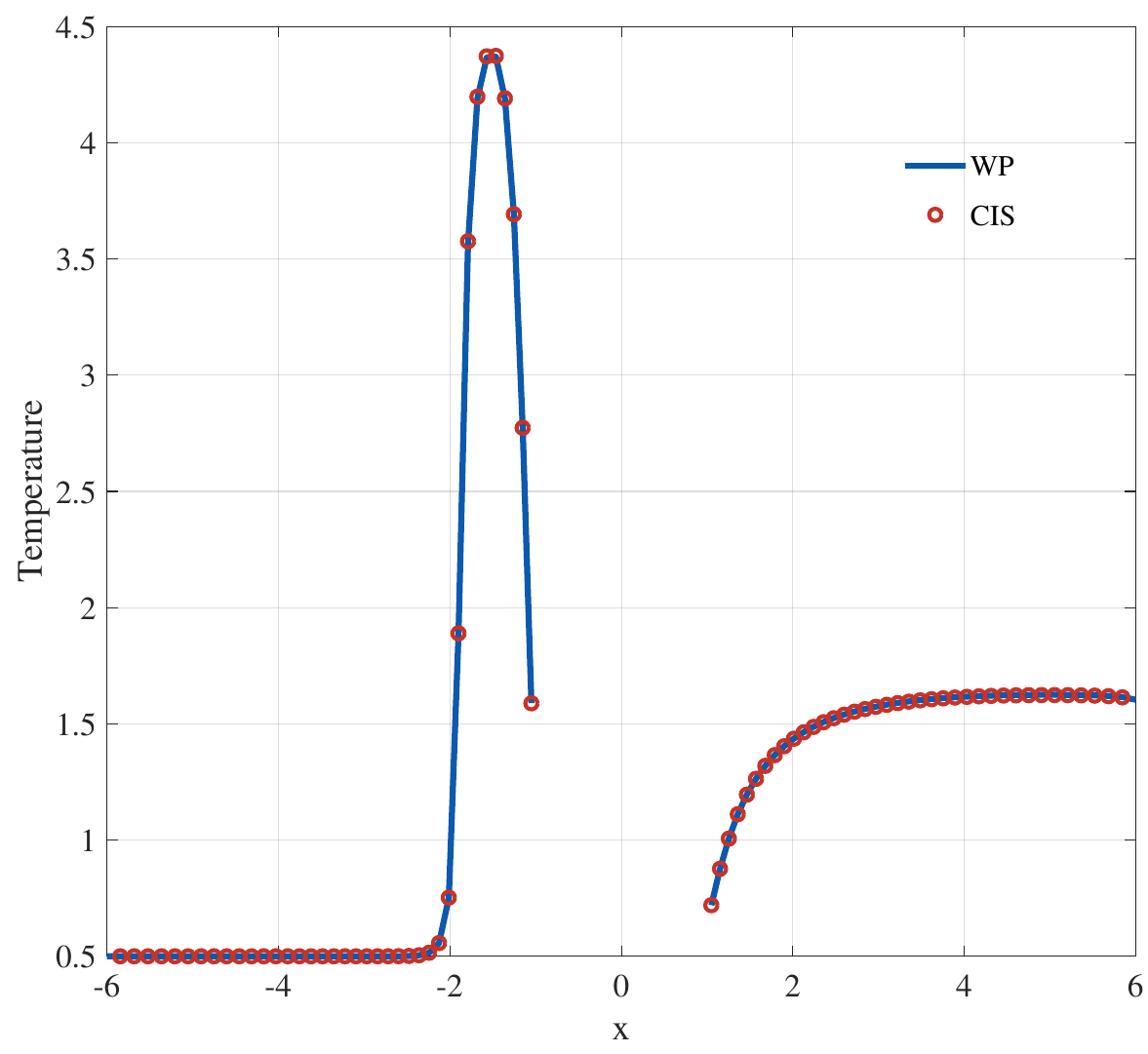}}
\caption{Stagnation-line temperature.}
\end{subfigure}\hspace{0.0200\figw}%
\begin{subfigure}[t]{0.4858\figw}\centering
\raisebox{0.0000\figw}{\includegraphics[width=0.4858\figw]{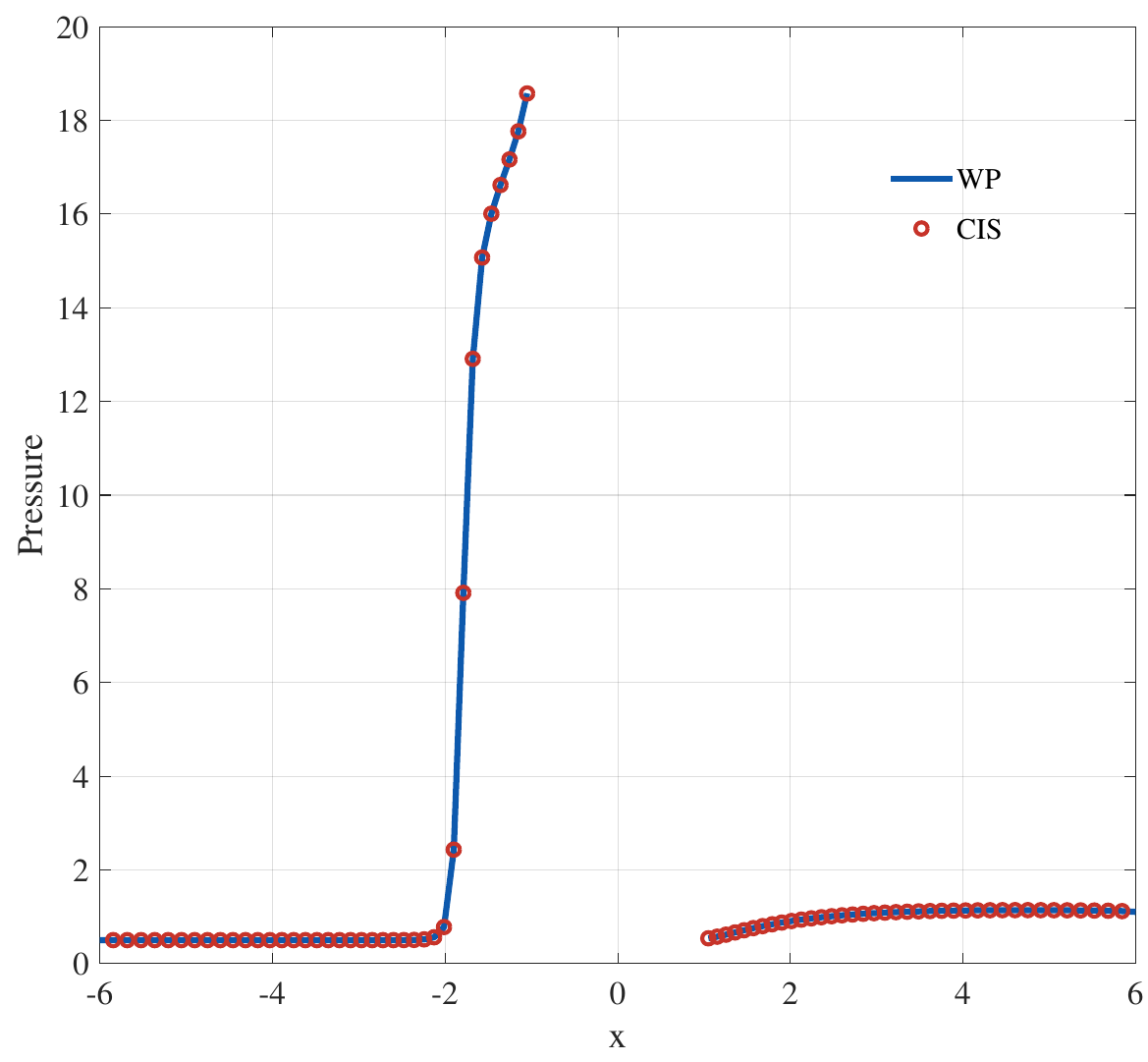}}
\caption{Stagnation-line pressure.}
\end{subfigure}
\par\smallskip
\begin{subfigure}[t]{0.4892\figw}\centering
\raisebox{0.0000\figw}{\includegraphics[width=0.4892\figw]{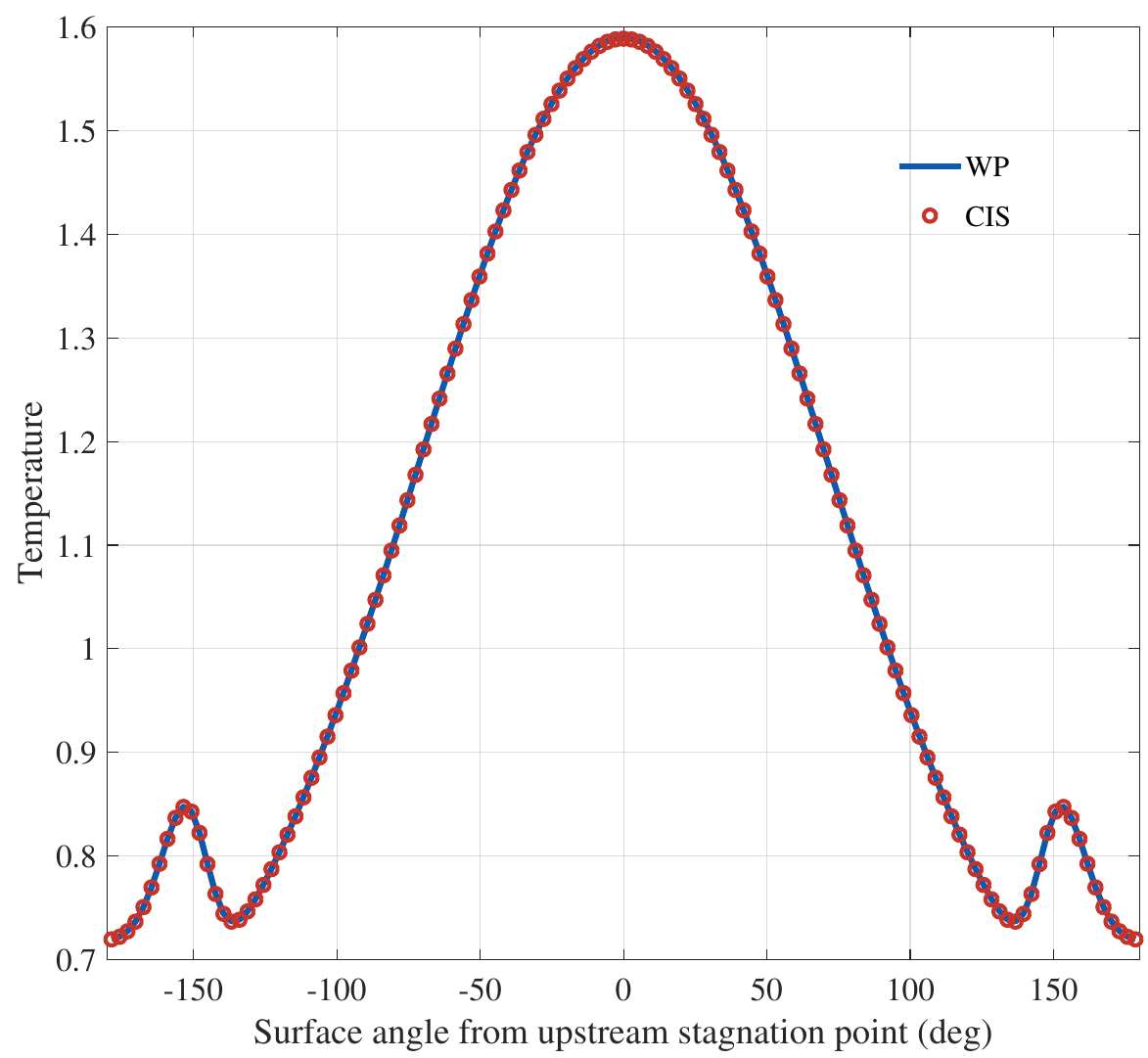}}
\caption{Wall temperature.}
\end{subfigure}\hspace{0.0200\figw}%
\begin{subfigure}[t]{0.4858\figw}\centering
\raisebox{0.0000\figw}{\includegraphics[width=0.4858\figw]{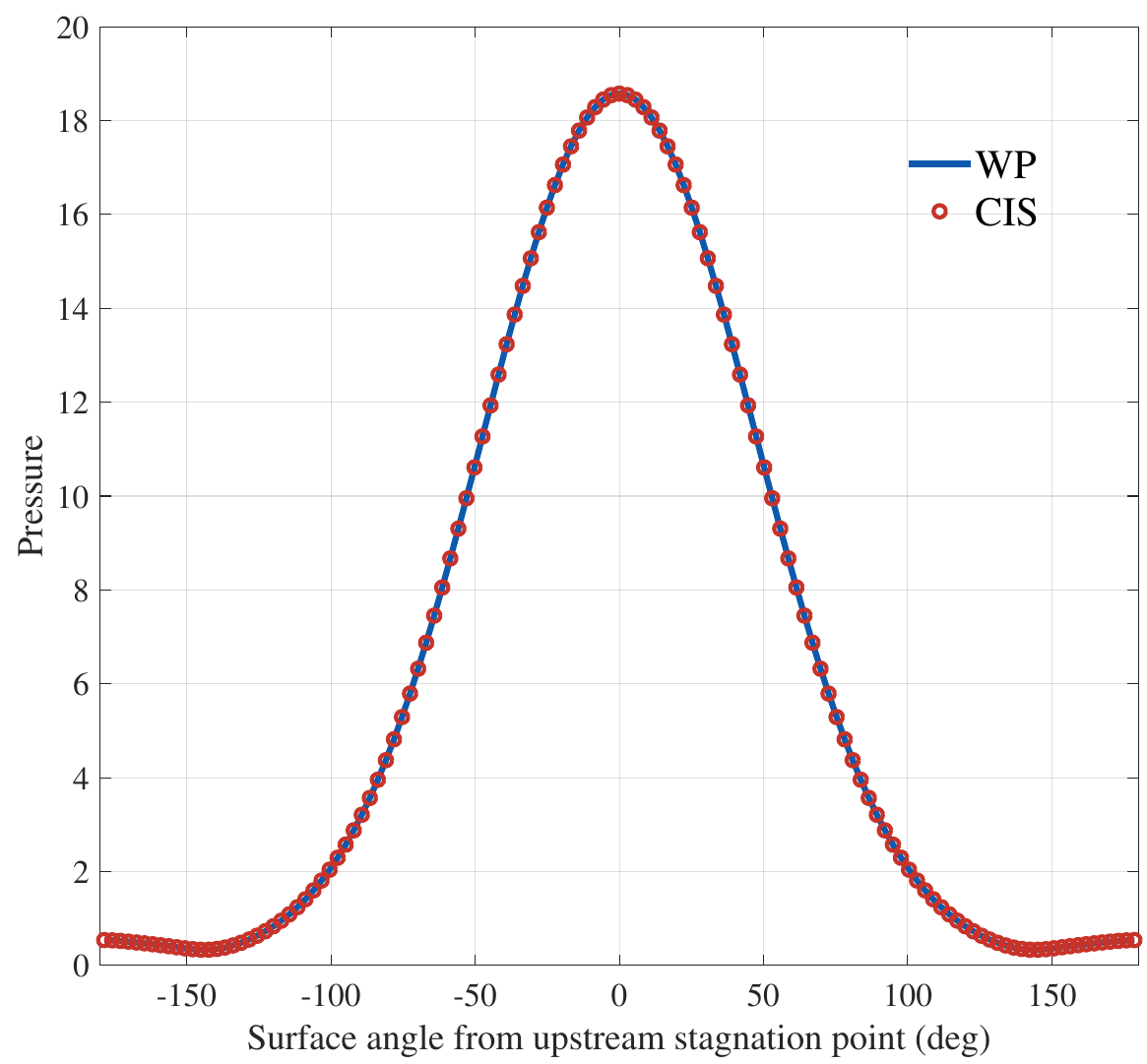}}
\caption{Wall pressure.}
\end{subfigure}
\caption{Circular cylinder at $\mathrm{Ma}=5$ and $\varepsilon=10^{-2}$: temperature and pressure along the stagnation line (a,b) and on the wall (c,d). Lines denote IWPD and symbols CIS.}\label{fig:cyl_kn1em2_profiles}
\end{figure}

\begin{figure}[H]
\centering\singlespacing\setlength{\figw}{.65\linewidth}
\begin{subfigure}[t]{0.4856\figw}\centering
\raisebox{0.0000\figw}{\includegraphics[width=0.4856\figw]{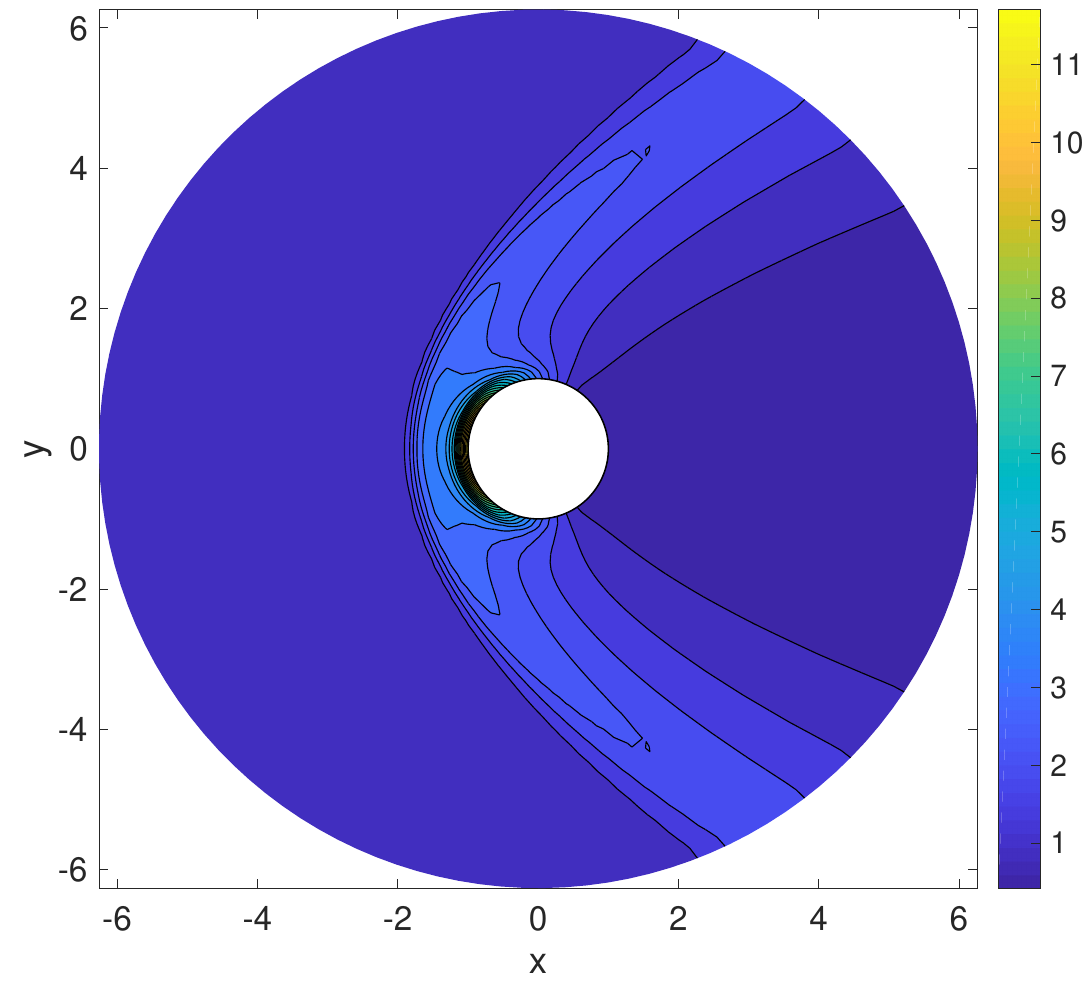}}
\caption{Density.}
\end{subfigure}\hspace{0.0200\figw}%
\begin{subfigure}[t]{0.4894\figw}\centering
\raisebox{0.0000\figw}{\includegraphics[width=0.4894\figw]{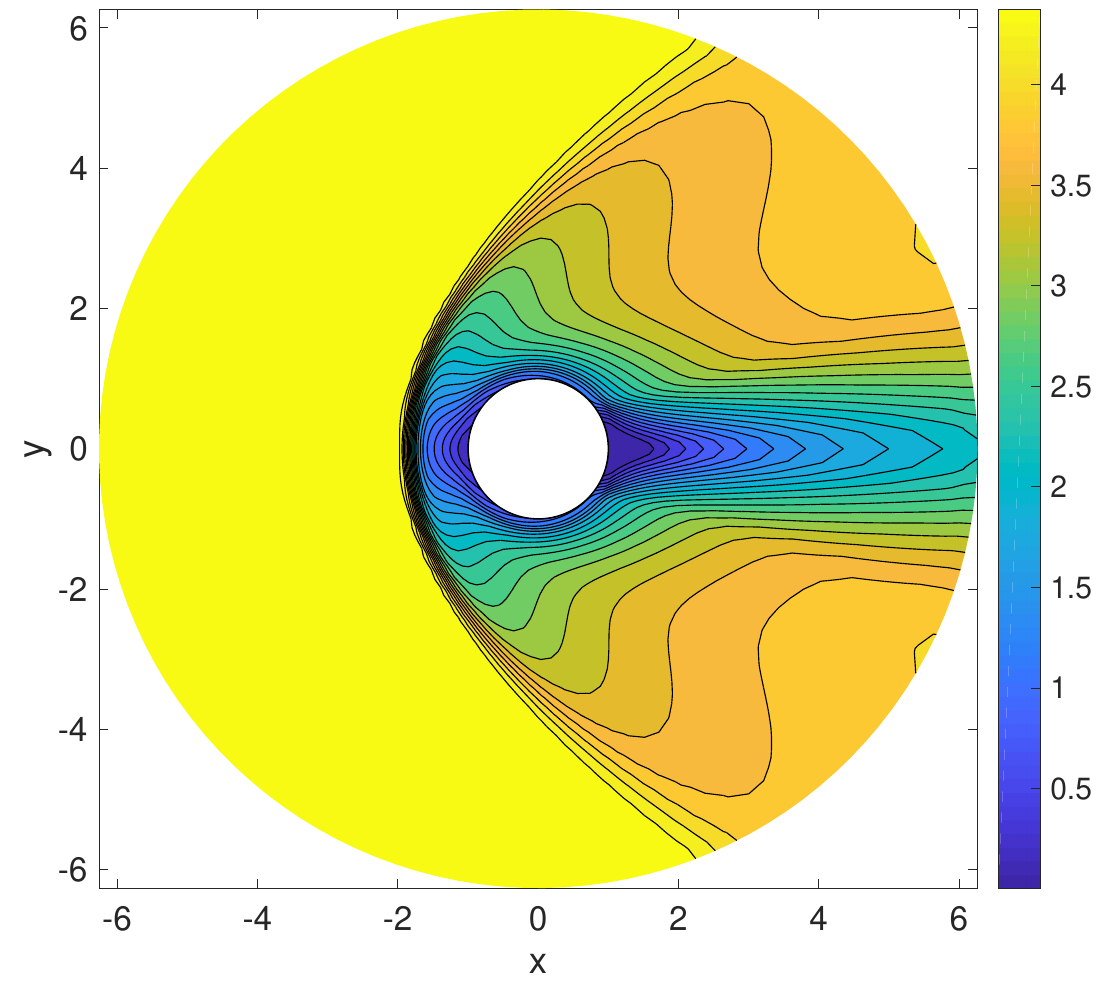}}
\caption{Speed.}
\end{subfigure}
\par\smallskip
\begin{subfigure}[t]{0.4894\figw}\centering
\raisebox{0.0000\figw}{\includegraphics[width=0.4894\figw]{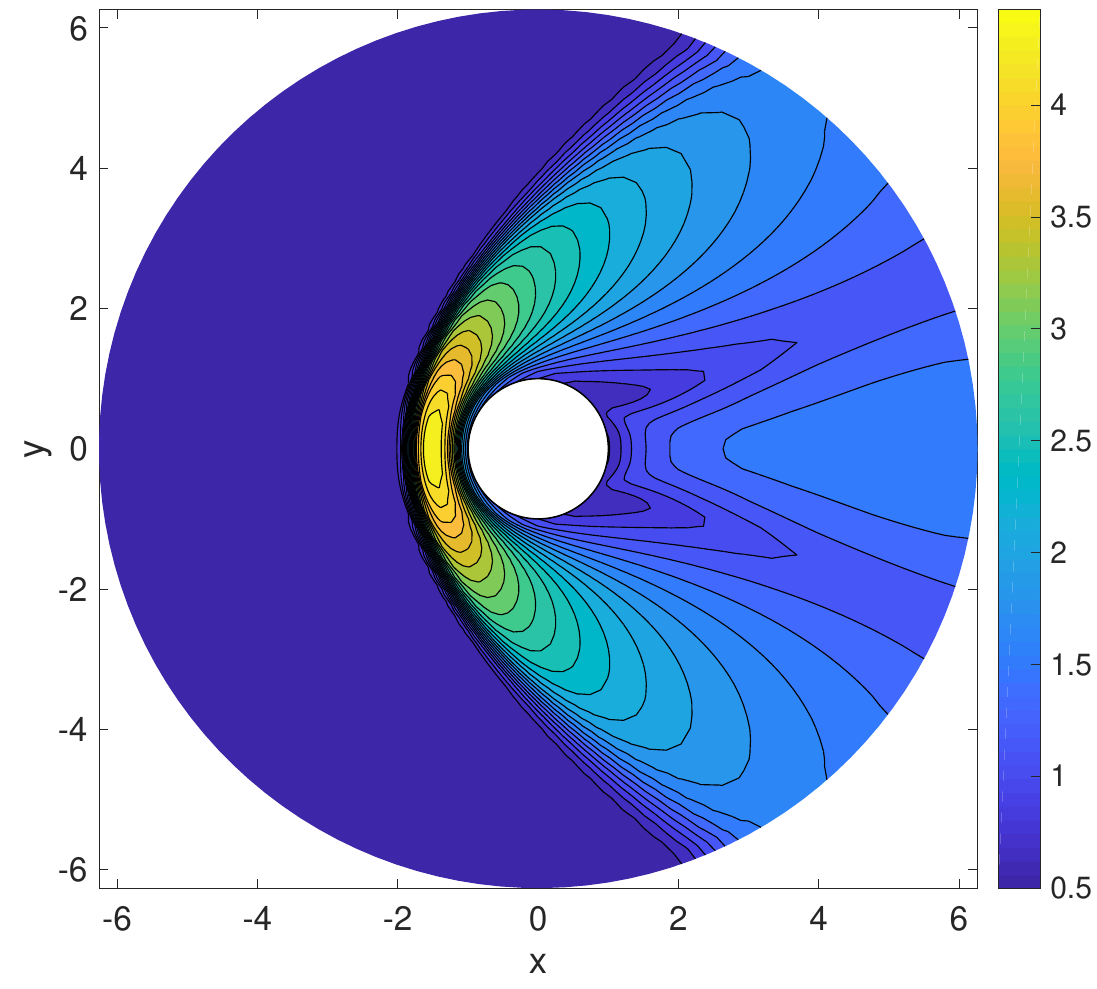}}
\caption{Temperature.}
\end{subfigure}\hspace{0.0200\figw}%
\begin{subfigure}[t]{0.4856\figw}\centering
\raisebox{0.0000\figw}{\includegraphics[width=0.4856\figw]{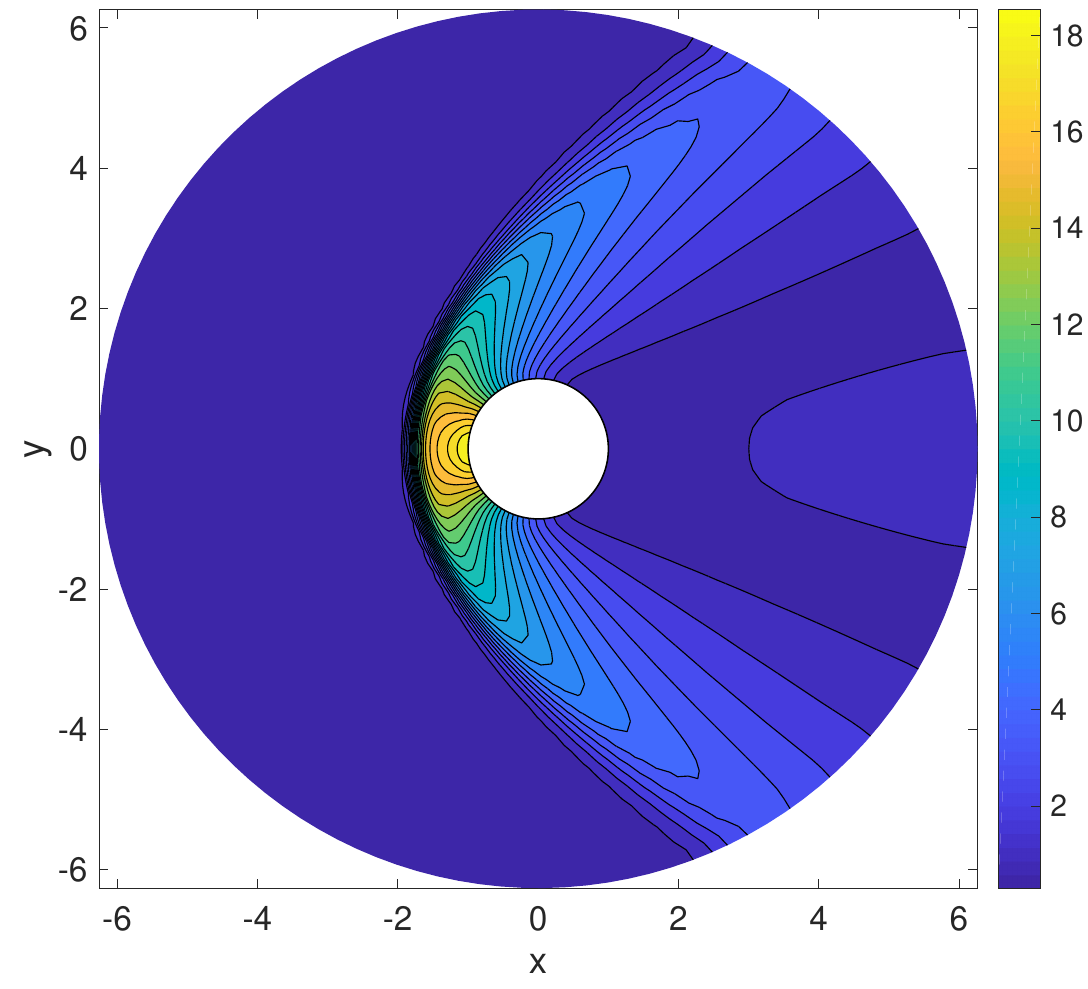}}
\caption{Pressure.}
\end{subfigure}
\caption{Circular cylinder at $\mathrm{Ma}=5$ and $\varepsilon=10^{-4}$: macroscopic fields of IWPD (filled contours) and NS (lines) on common levels.}\label{fig:cyl_kn1em4_fields}
\end{figure}

\begin{figure}[H]
\centering\singlespacing\setlength{\figw}{.65\linewidth}
\begin{subfigure}[t]{0.4892\figw}\centering
\raisebox{0.0000\figw}{\includegraphics[width=0.4892\figw]{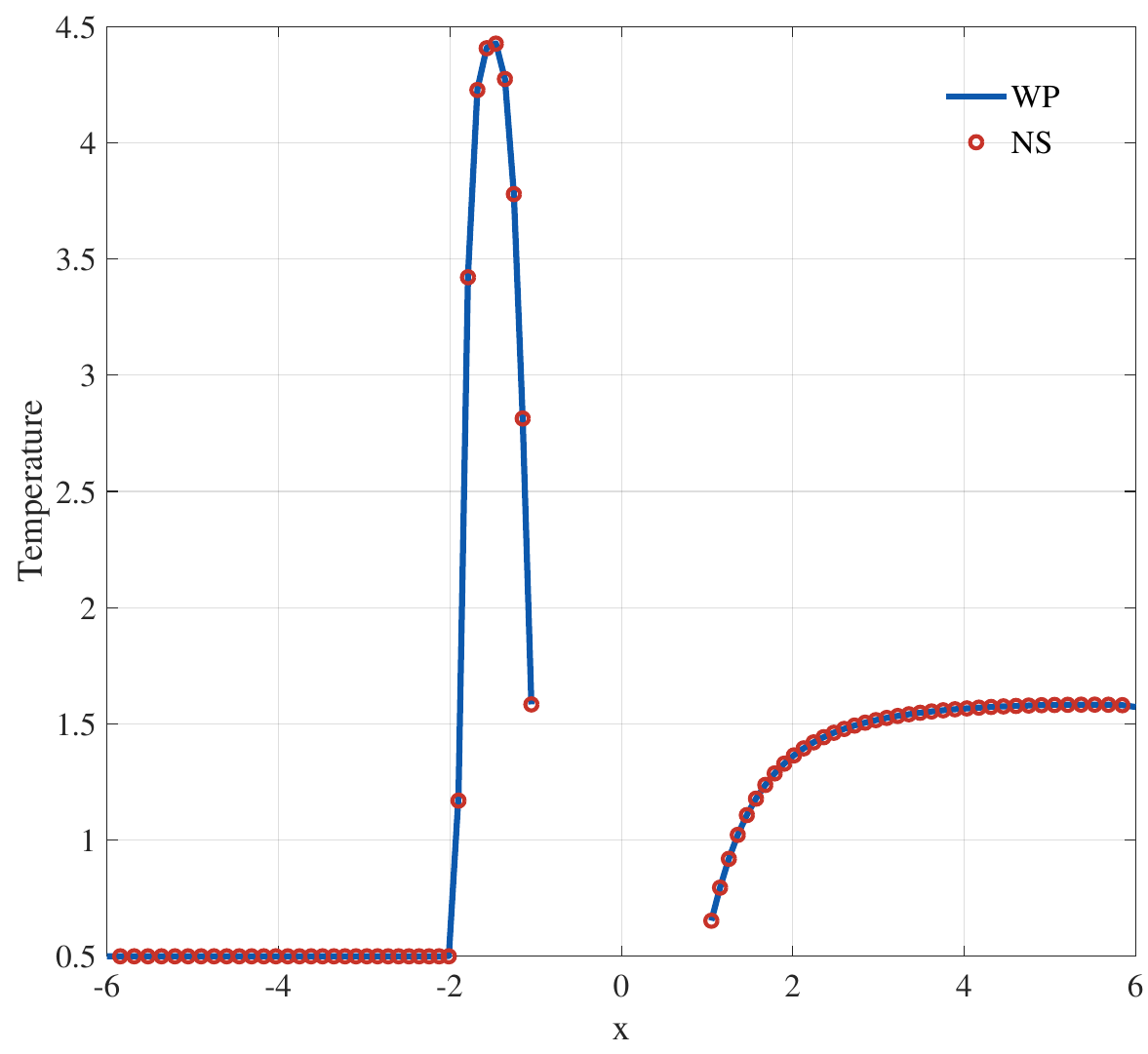}}
\caption{Stagnation-line temperature.}
\end{subfigure}\hspace{0.0200\figw}%
\begin{subfigure}[t]{0.4858\figw}\centering
\raisebox{0.0000\figw}{\includegraphics[width=0.4858\figw]{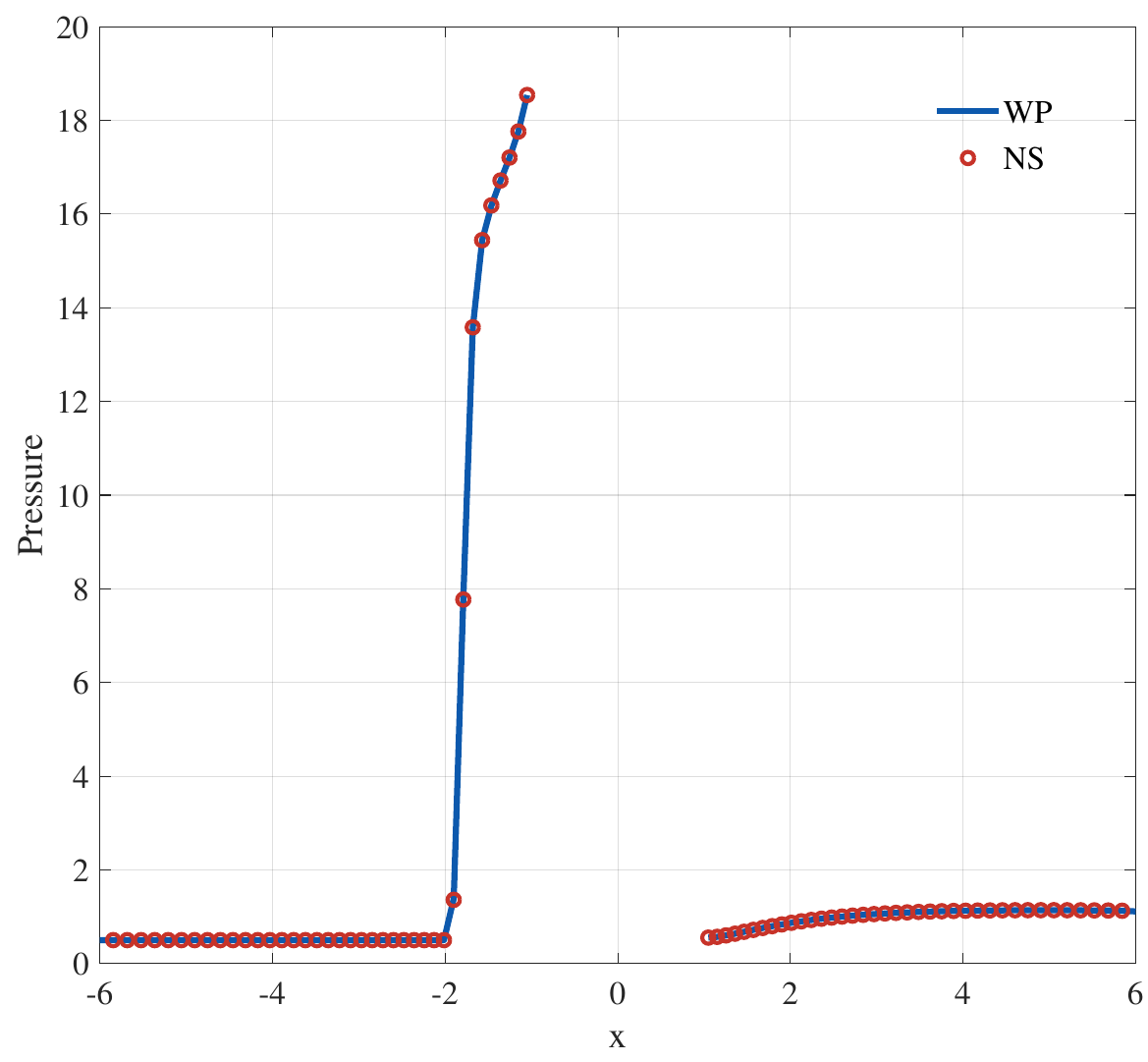}}
\caption{Stagnation-line pressure.}
\end{subfigure}
\par\smallskip
\begin{subfigure}[t]{0.4892\figw}\centering
\raisebox{0.0000\figw}{\includegraphics[width=0.4892\figw]{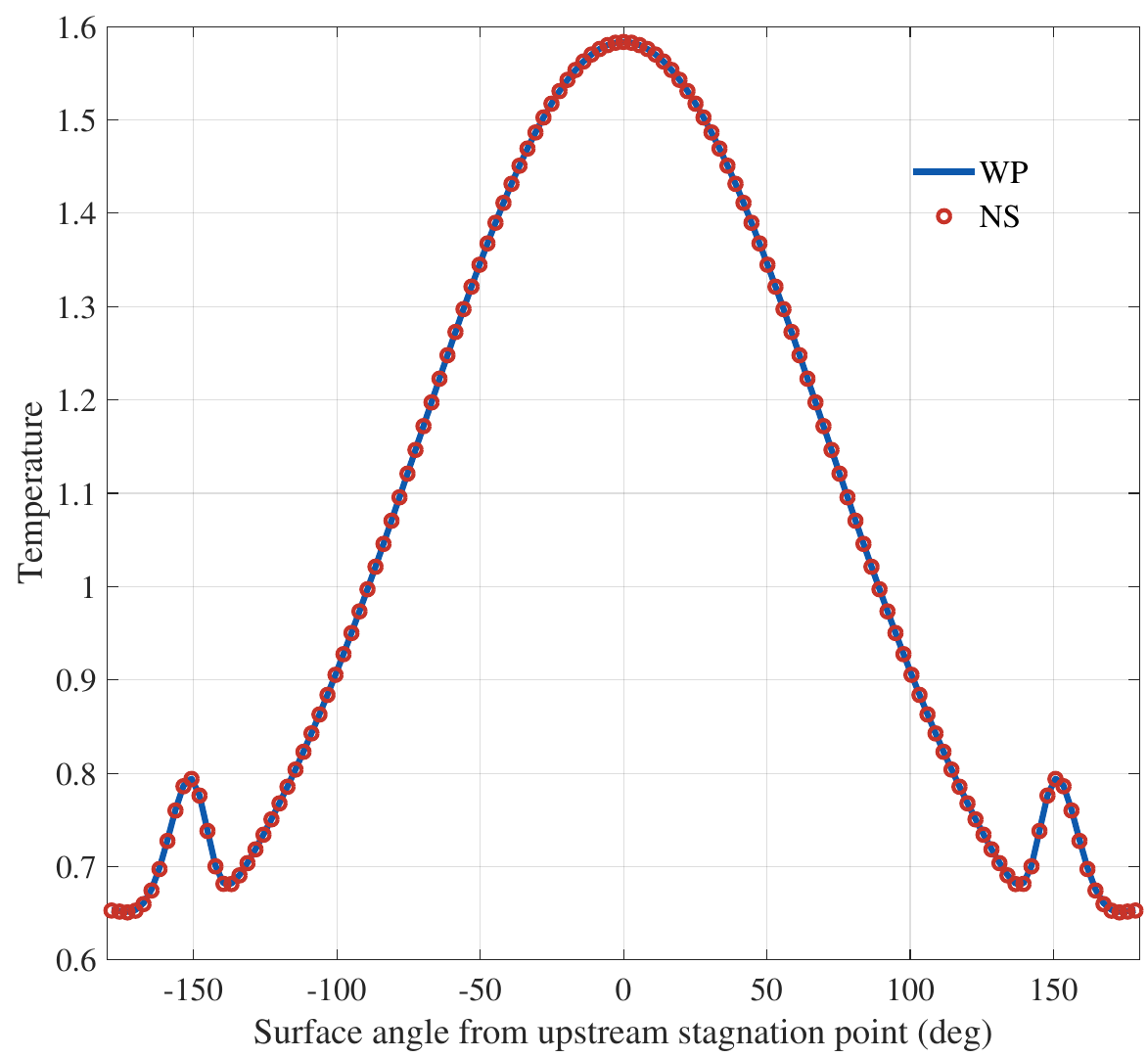}}
\caption{Wall temperature.}
\end{subfigure}\hspace{0.0200\figw}%
\begin{subfigure}[t]{0.4858\figw}\centering
\raisebox{0.0000\figw}{\includegraphics[width=0.4858\figw]{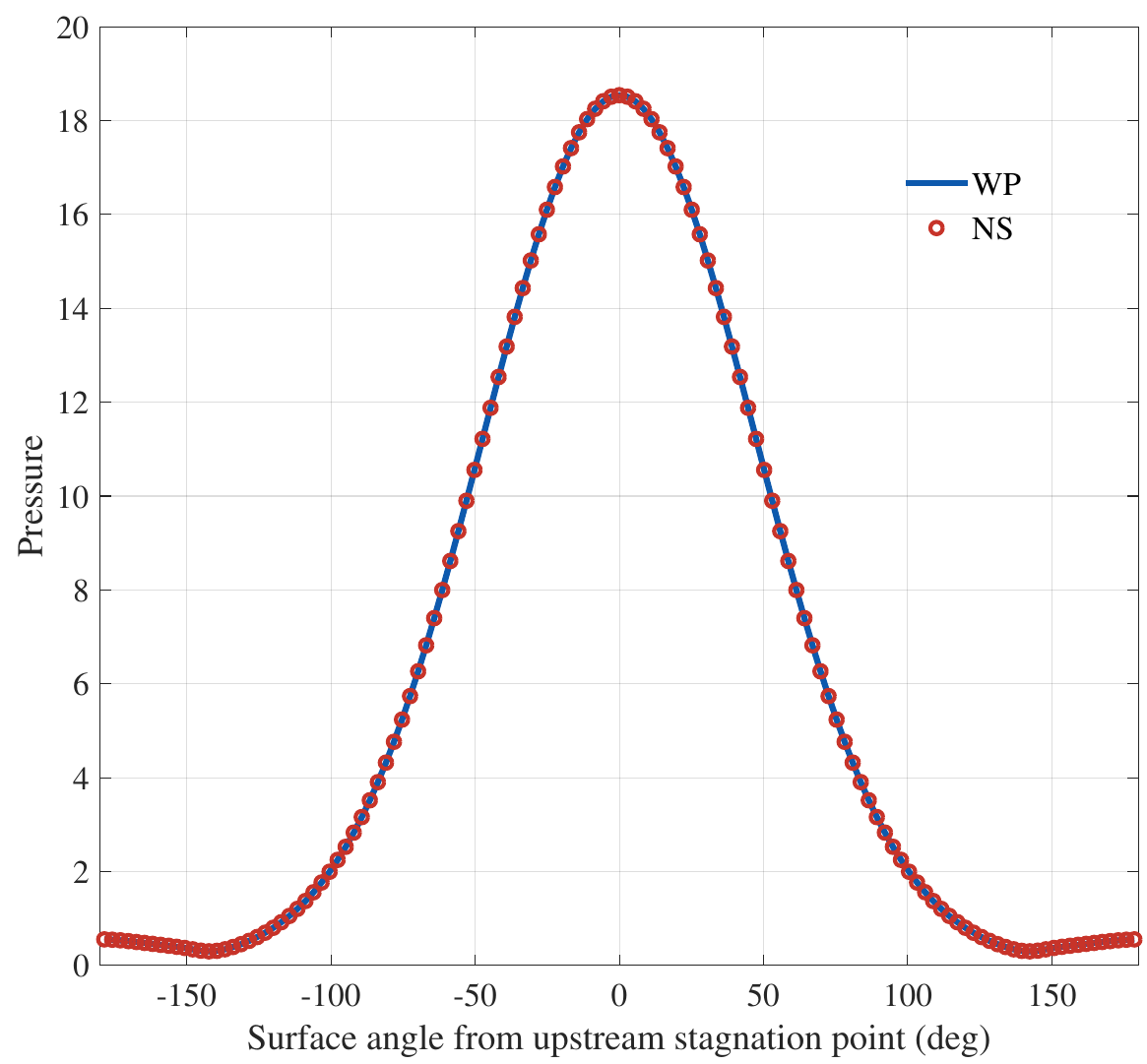}}
\caption{Wall pressure.}
\end{subfigure}
\caption{Circular cylinder at $\mathrm{Ma}=5$ and $\varepsilon=10^{-4}$: temperature and pressure along the stagnation line (a,b) and on the wall (c,d). Lines denote IWPD and symbols NS.}\label{fig:cyl_kn1em4_profiles}
\end{figure}

\begin{figure}[H]
\centering\singlespacing\setlength{\figw}{.90\linewidth}
\begin{subfigure}[t]{0.9950\figw}\centering
\makebox[\linewidth]{\raisebox{0.0000\figw}{\includegraphics[width=0.5032\figw]{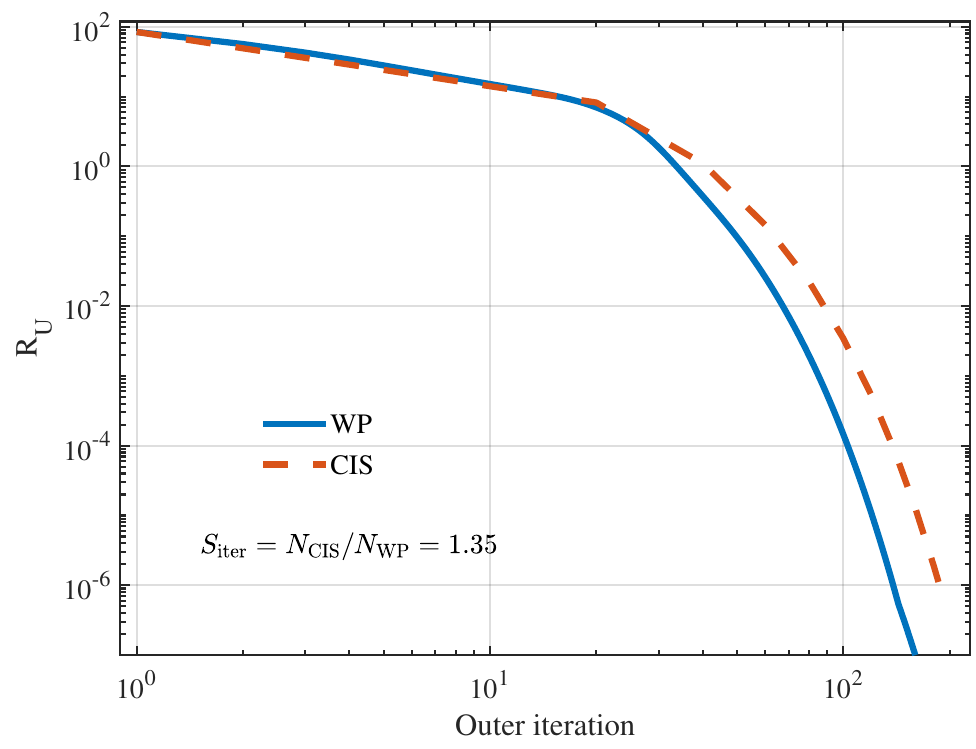}}\hspace{0.0200\figw}\raisebox{0.0045\figw}{\includegraphics[width=0.4718\figw]{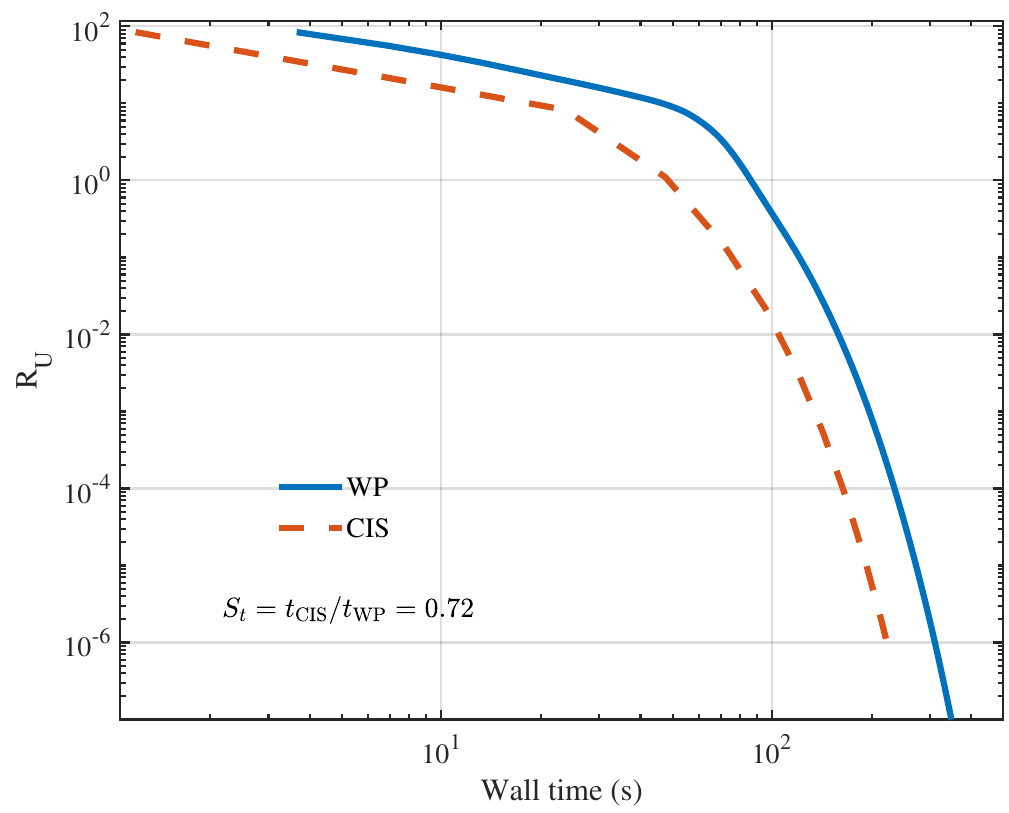}}}
\caption{$\varepsilon=1$.}
\end{subfigure}
\par\smallskip
\begin{subfigure}[t]{0.9950\figw}\centering
\makebox[\linewidth]{\raisebox{0.0000\figw}{\includegraphics[width=0.4818\figw]{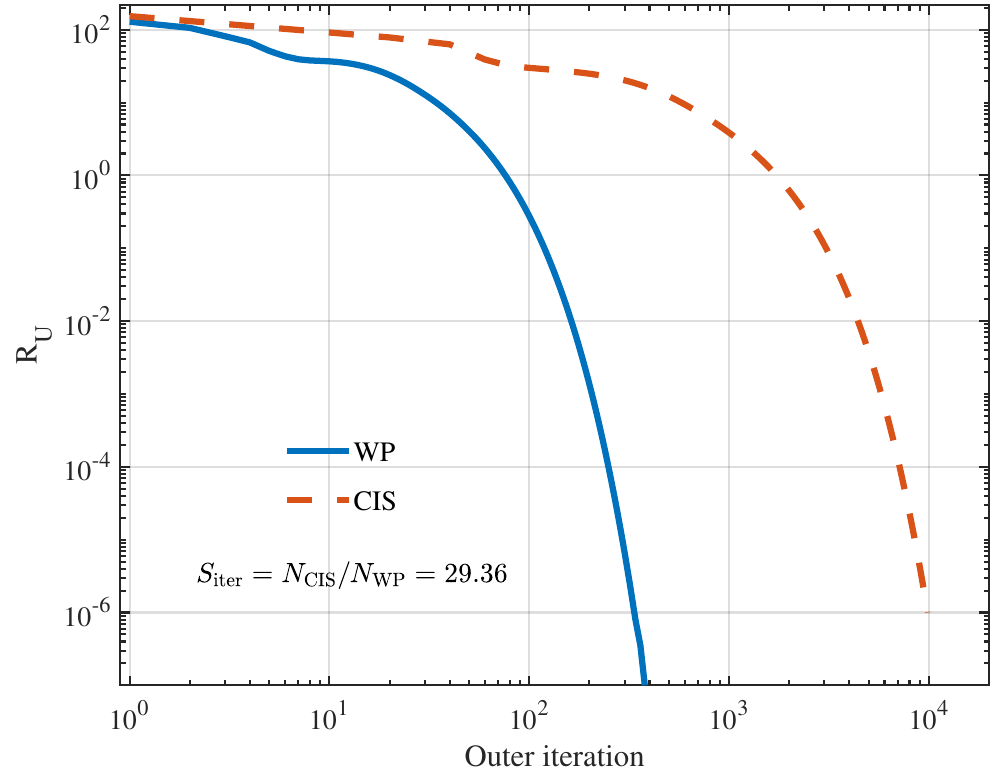}}\hspace{0.0200\figw}\raisebox{0.0002\figw}{\includegraphics[width=0.4932\figw]{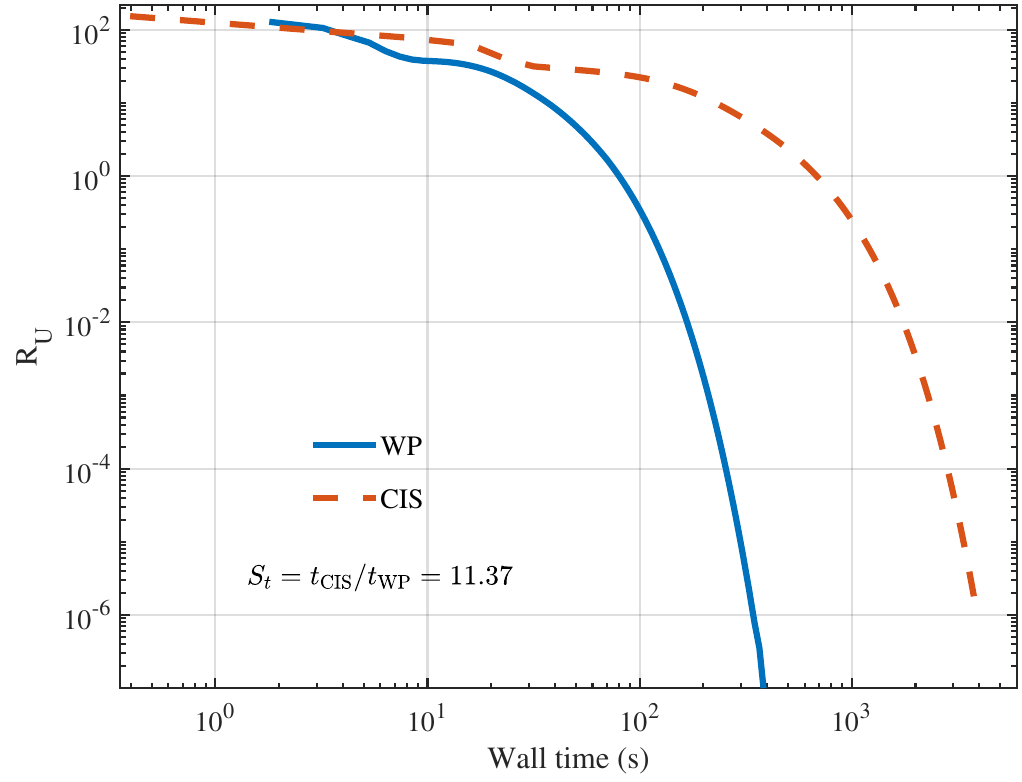}}}
\caption{$\varepsilon=10^{-2}$.}
\end{subfigure}
\par\smallskip
\begin{subfigure}[t]{0.9950\figw}\centering
\makebox[\linewidth]{\raisebox{0.0056\figw}{\includegraphics[width=0.4800\figw]{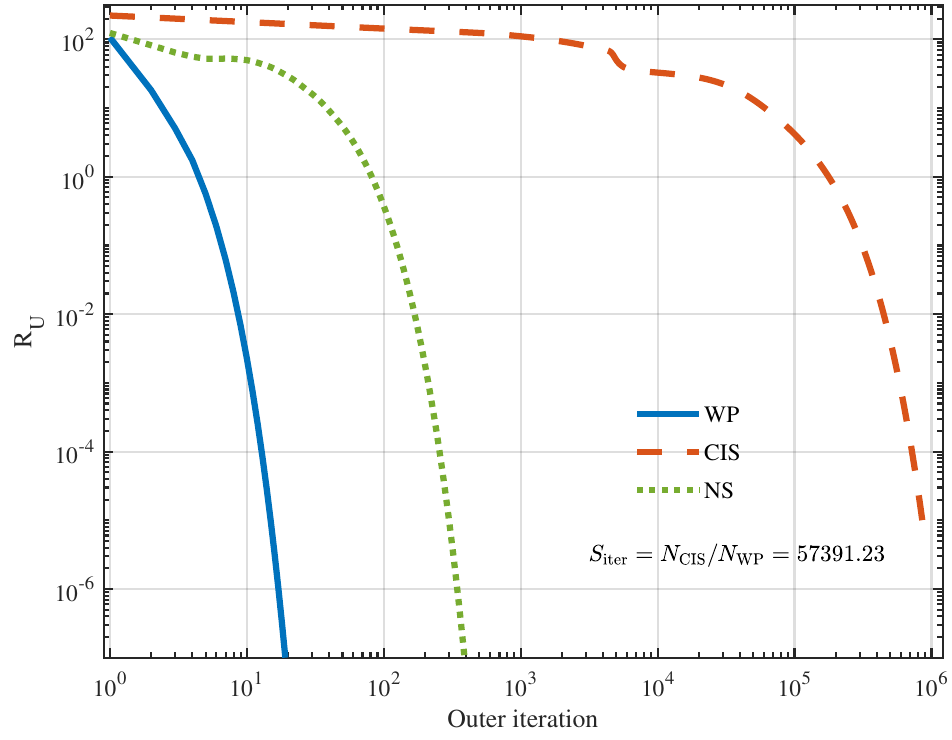}}\hspace{0.0200\figw}\raisebox{0.0000\figw}{\includegraphics[width=0.4950\figw]{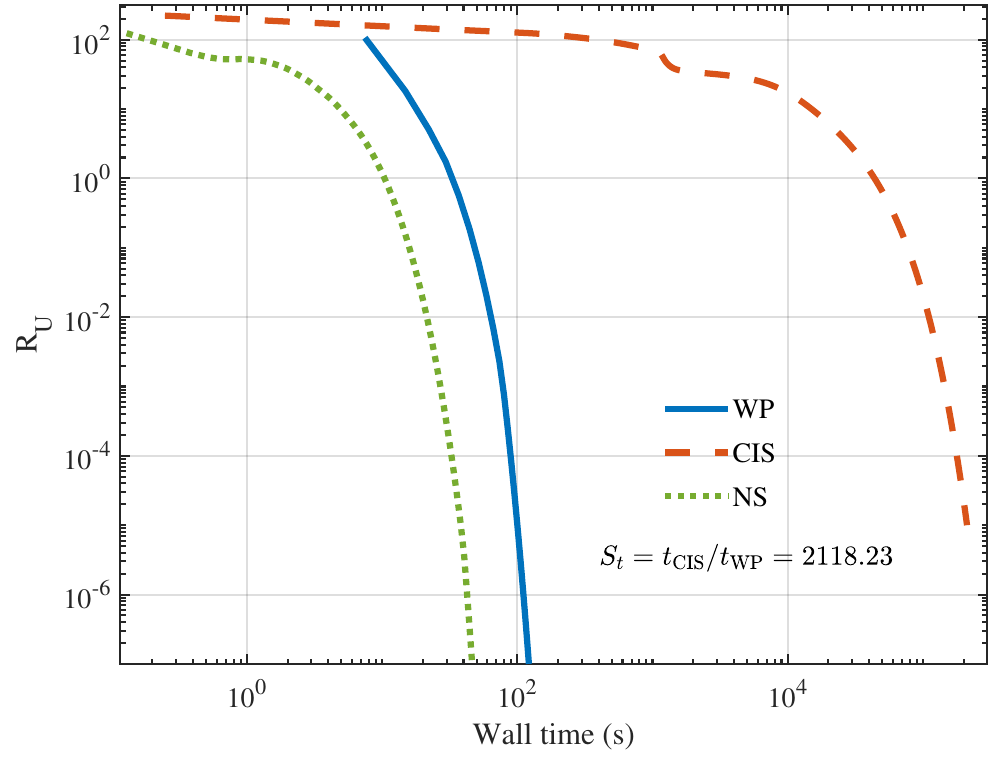}}}
\caption{$\varepsilon=10^{-4}$.}
\end{subfigure}
\caption{Circular cylinder at $\mathrm{Ma}=5$: residual versus iteration (left) and wall time (right), with IWPD solid, CIS dashed and NS dotted. Grey line: $\|R_U\|=10^{-6}$. Ratios use $10^{-6}$, except $10^{-5}$ at $\varepsilon=10^{-4}$.}\label{fig:cyl_residual}
\end{figure}

\clearpage
\section{X38 section}
\begin{figure}[H]
\centering\singlespacing\setlength{\figw}{.80\linewidth}
\begin{subfigure}[t]{0.4875\figw}\centering
\raisebox{0.0000\figw}{\includegraphics[width=0.4875\figw]{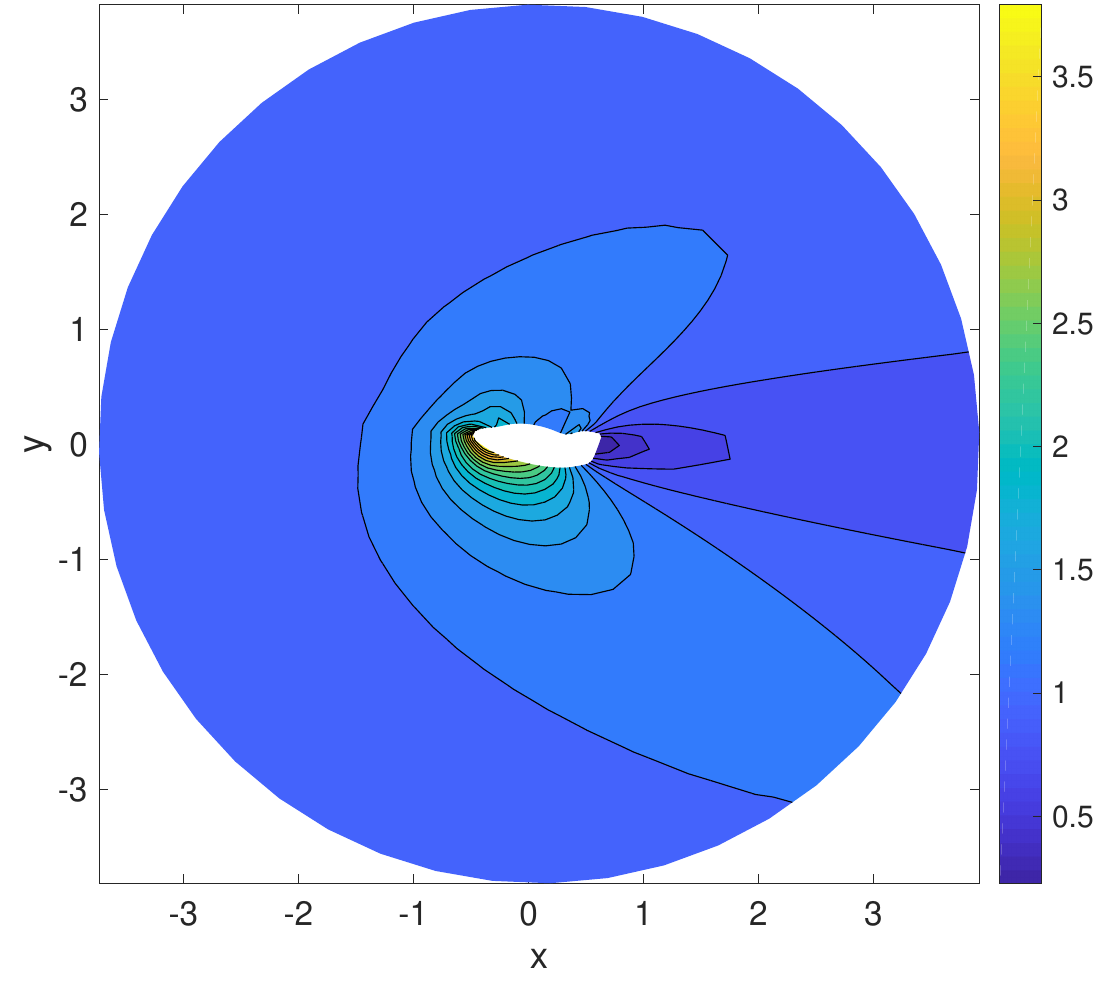}}
\caption{Density.}
\end{subfigure}\hspace{0.0200\figw}%
\begin{subfigure}[t]{0.4875\figw}\centering
\raisebox{0.0000\figw}{\includegraphics[width=0.4875\figw]{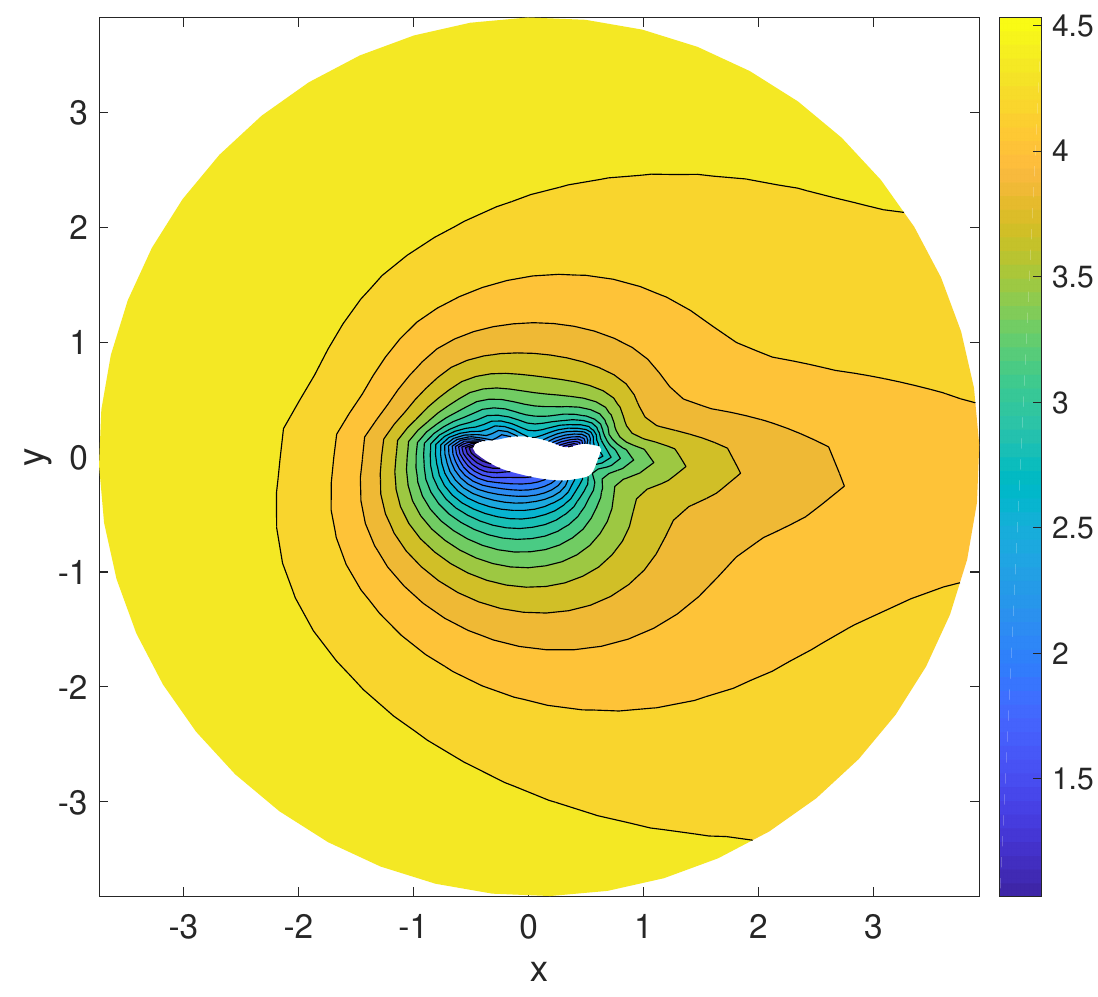}}
\caption{Speed.}
\end{subfigure}
\par\smallskip
\begin{subfigure}[t]{0.4931\figw}\centering
\raisebox{0.0000\figw}{\includegraphics[width=0.4931\figw]{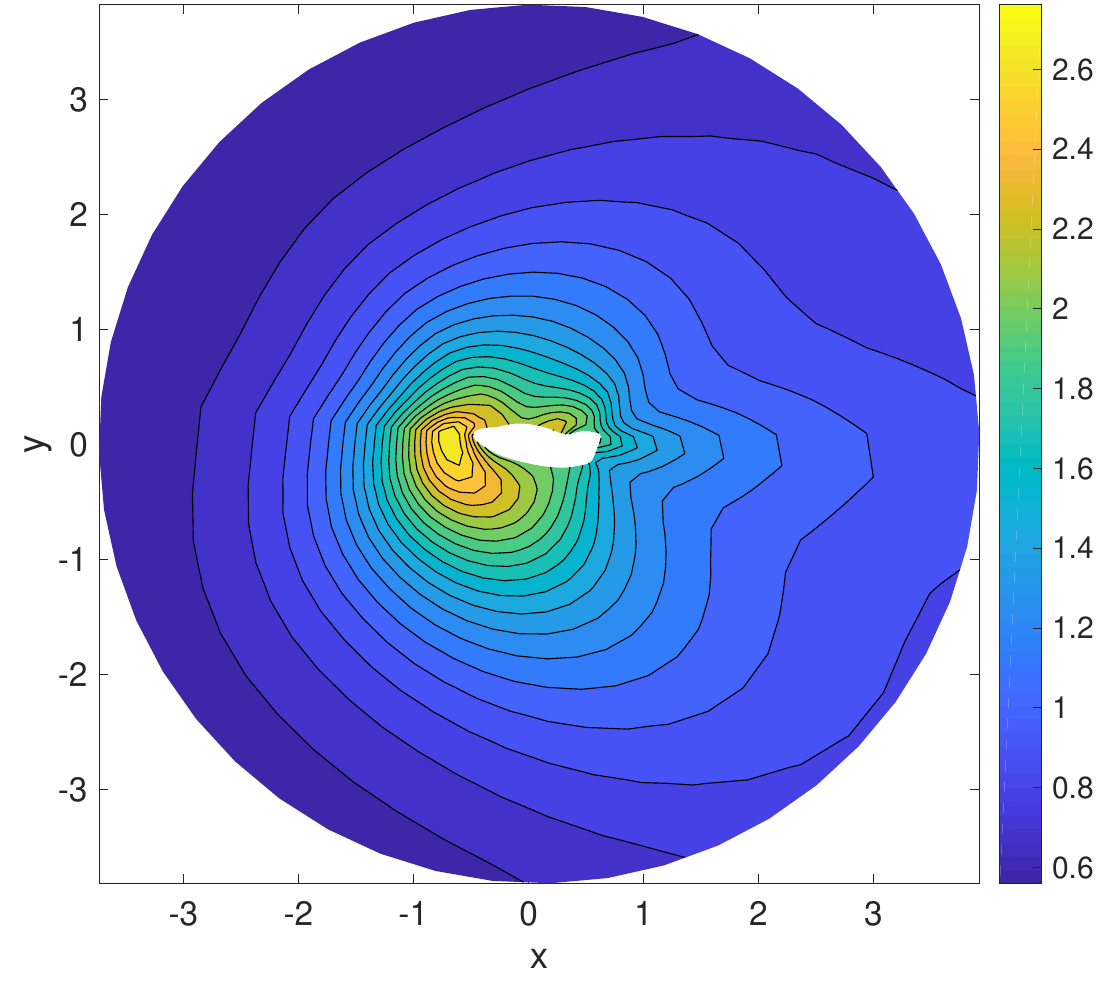}}
\caption{Temperature.}
\end{subfigure}\hspace{0.0200\figw}%
\begin{subfigure}[t]{0.4819\figw}\centering
\raisebox{0.0000\figw}{\includegraphics[width=0.4819\figw]{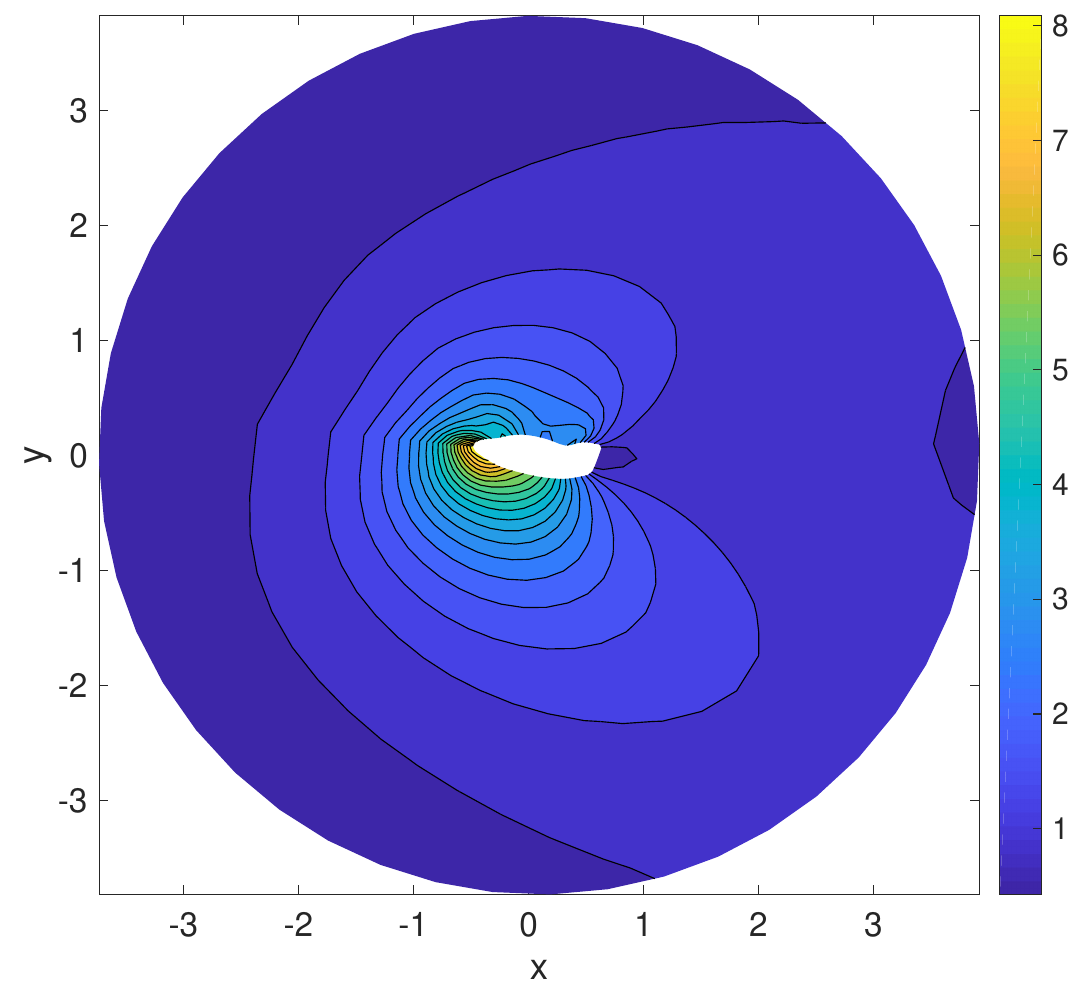}}
\caption{Pressure.}
\end{subfigure}
\caption{X38 section at $\mathrm{Ma}=5$ and $\varepsilon=1$: macroscopic fields of IWPD (filled contours) and CIS (lines) on common levels.}\label{fig:x38_kn1_fields}
\end{figure}

\begin{figure}[H]
\centering\singlespacing\setlength{\figw}{.85\linewidth}
\begin{subfigure}[t]{0.4930\figw}\centering
\raisebox{0.0000\figw}{\includegraphics[width=0.4930\figw]{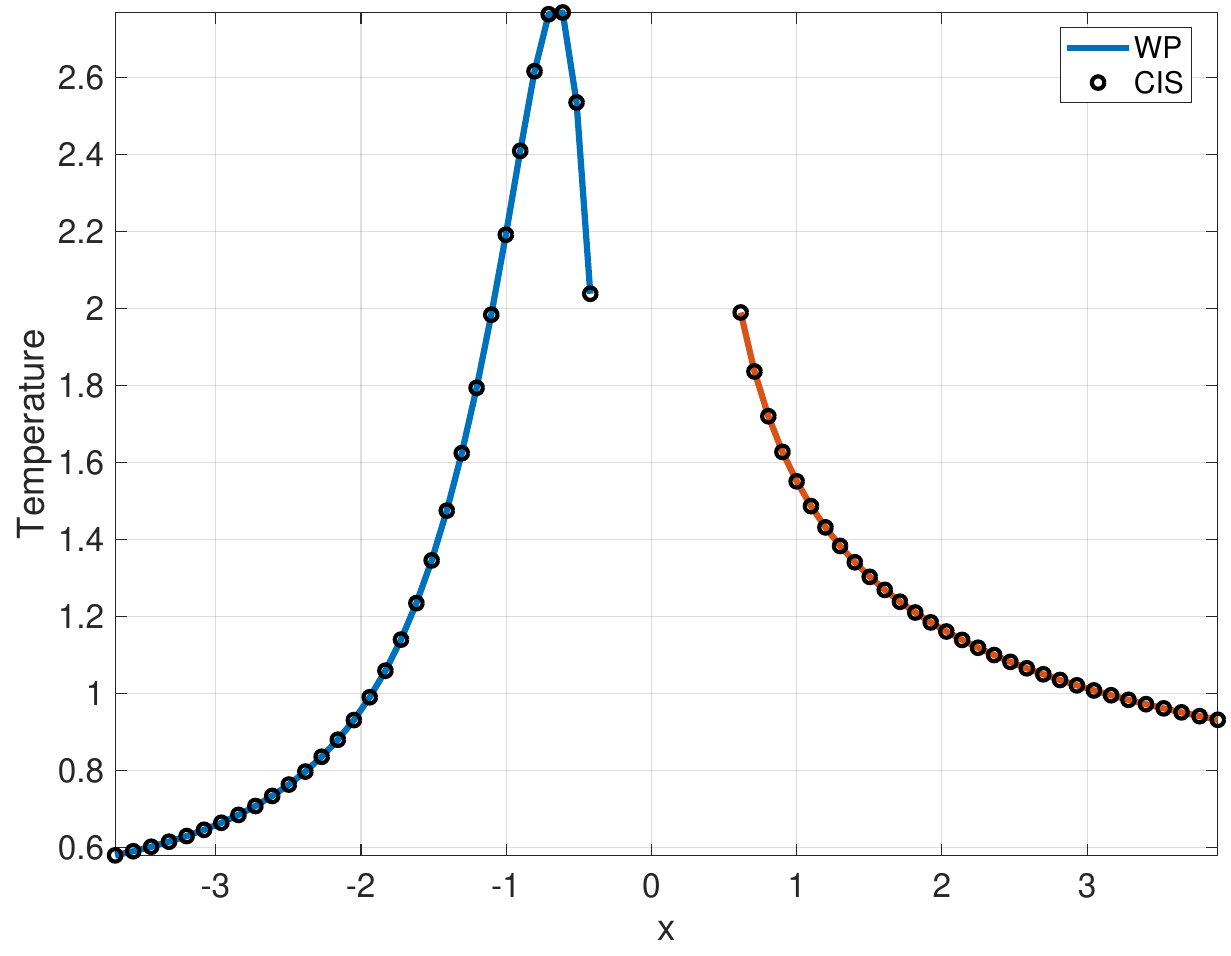}}
\caption{Temperature.}
\end{subfigure}\hspace{0.0200\figw}%
\begin{subfigure}[t]{0.4820\figw}\centering
\raisebox{0.0000\figw}{\includegraphics[width=0.4820\figw]{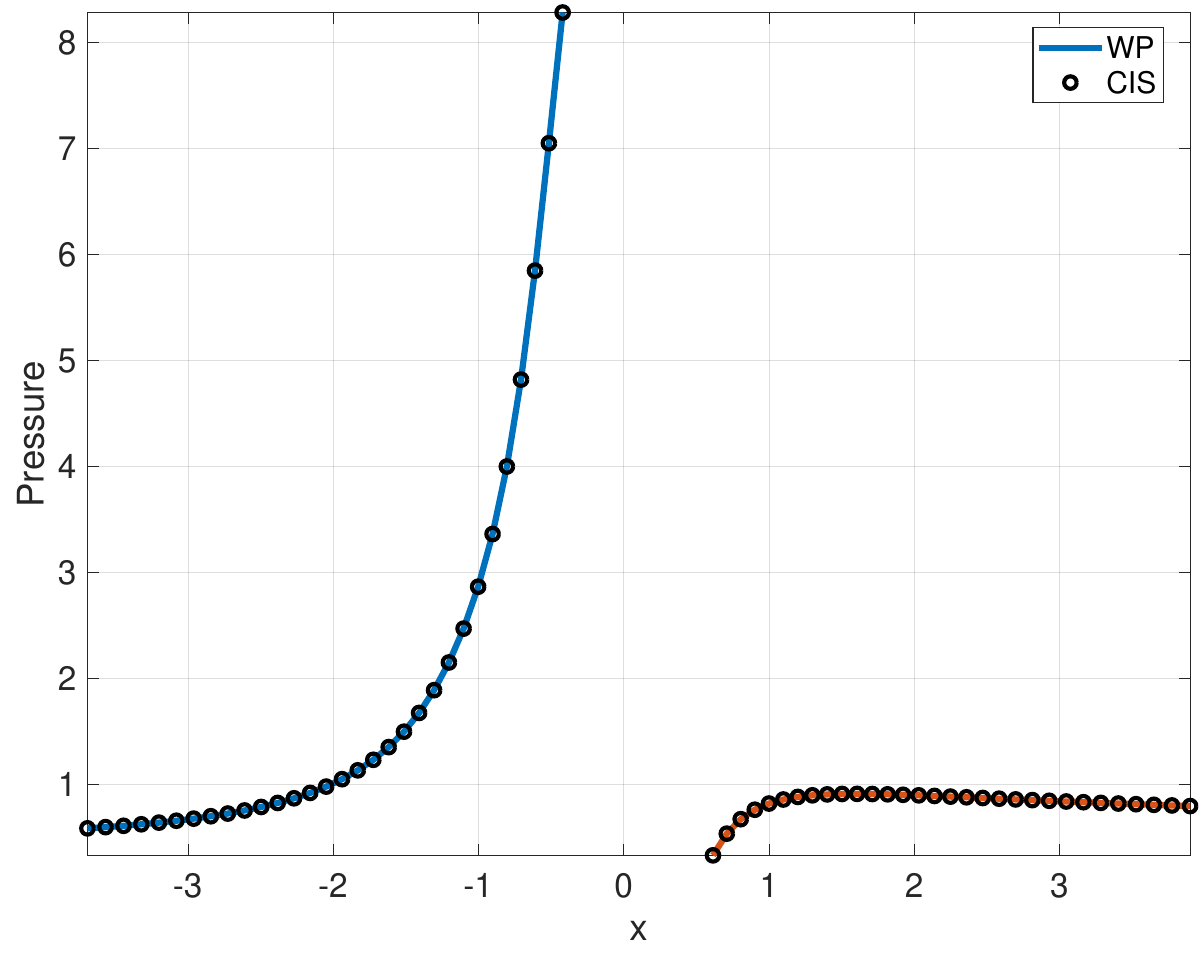}}
\caption{Pressure.}
\end{subfigure}
\caption{X38 section at $\varepsilon=1$: temperature and pressure along the stagnation line. Lines denote IWPD and symbols CIS.}\label{fig:x38_kn1_profiles}
\end{figure}

\begin{figure}[H]
\centering\singlespacing\setlength{\figw}{.65\linewidth}
\begin{subfigure}[t]{0.4803\figw}\centering
\raisebox{0.0000\figw}{\includegraphics[width=0.4803\figw]{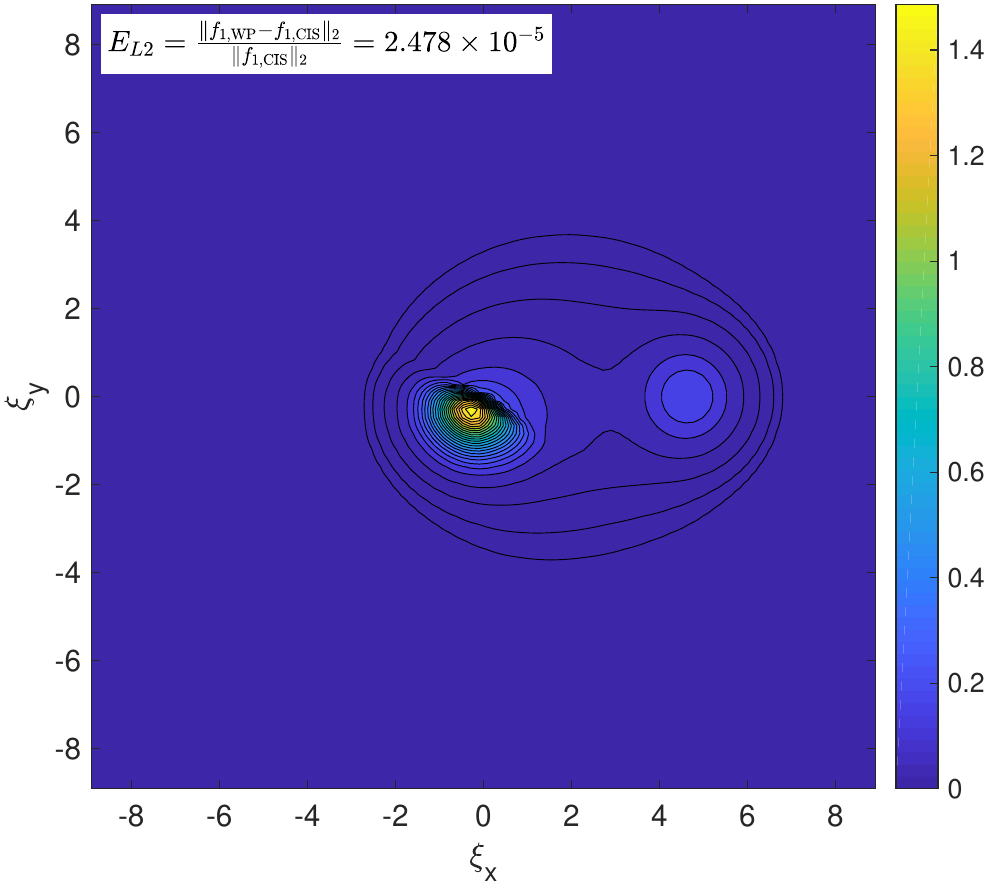}}
\caption{Sample 5, $(x,y)=(-0.38,-0.03)$.}
\end{subfigure}\hspace{0.0200\figw}%
\begin{subfigure}[t]{0.4947\figw}\centering
\raisebox{0.0000\figw}{\includegraphics[width=0.4947\figw]{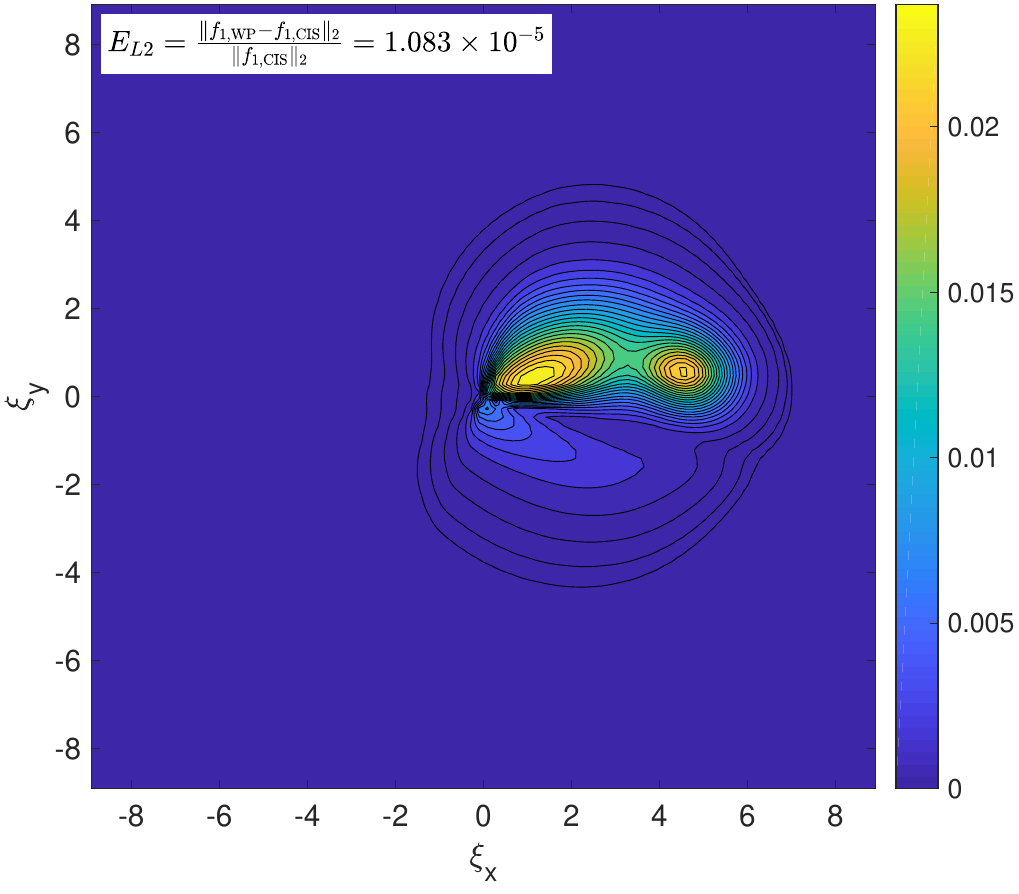}}
\caption{Sample 6, $(x,y)=(0.60,-0.04)$.}
\end{subfigure}
\par\smallskip
\begin{subfigure}[t]{0.4875\figw}\centering
\raisebox{0.0000\figw}{\includegraphics[width=0.4875\figw]{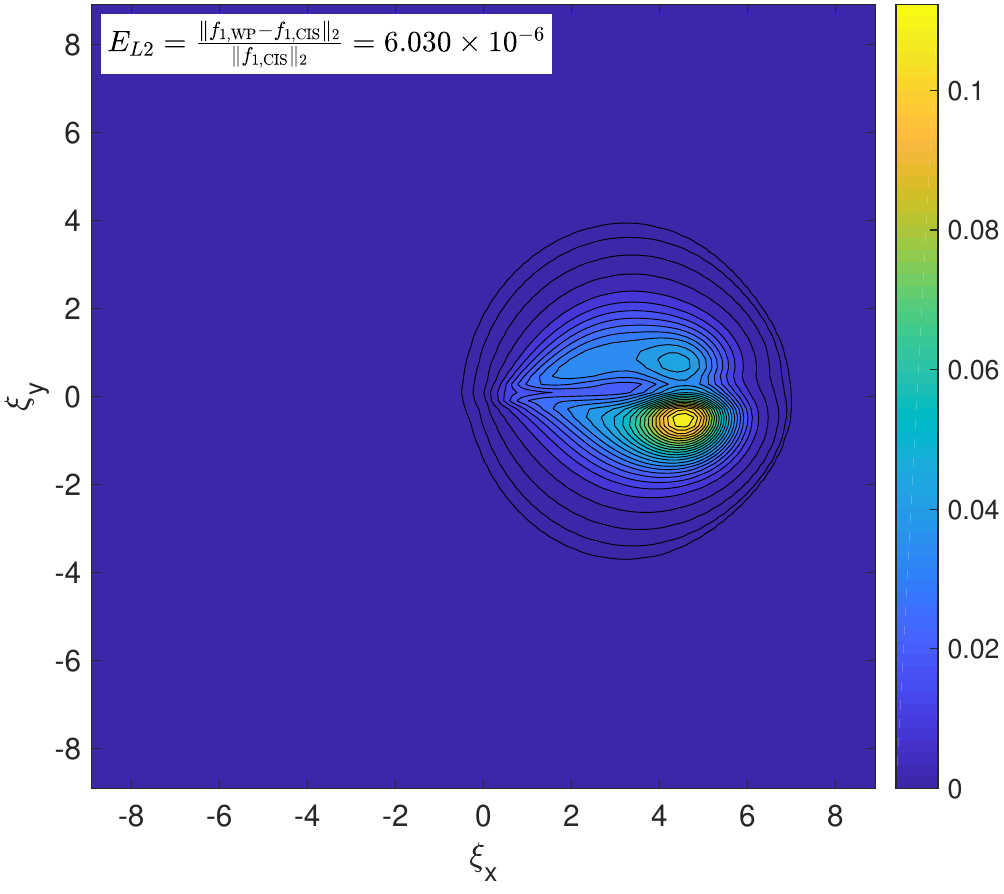}}
\caption{Sample 7, $(x,y)=(1.32,0.07)$.}
\end{subfigure}\hspace{0.0200\figw}%
\begin{subfigure}[t]{0.4875\figw}\centering
\raisebox{0.0000\figw}{\includegraphics[width=0.4875\figw]{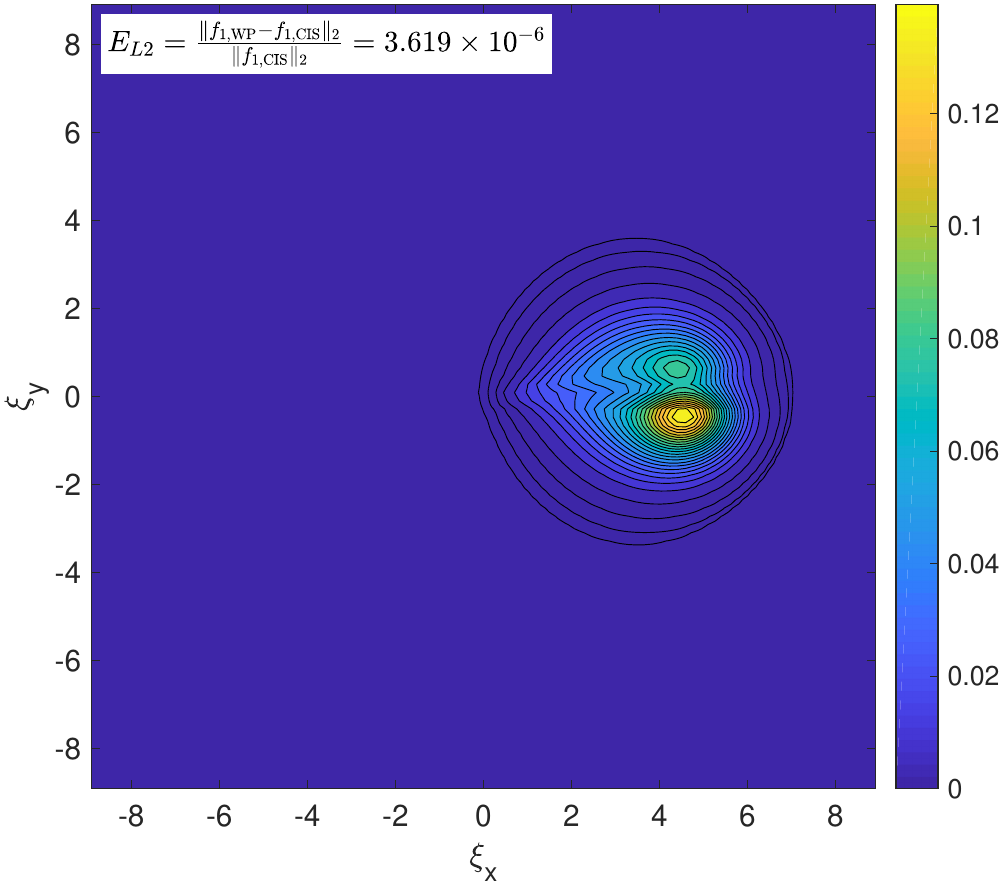}}
\caption{Sample 8, $(x,y)=(2.27,0.08)$.}
\end{subfigure}
\caption{X38 section at $\varepsilon=1$: reduced distributions $f_1(\xi_x,\xi_y)$ on a linear scale at four points along the centreline. IWPD is shown by filled contours and CIS by lines.}\label{fig:x38_kn1_dist_lin}
\end{figure}

\begin{figure}[H]
\centering\singlespacing\setlength{\figw}{.65\linewidth}
\begin{subfigure}[t]{0.4875\figw}\centering
\raisebox{0.0000\figw}{\includegraphics[width=0.4875\figw]{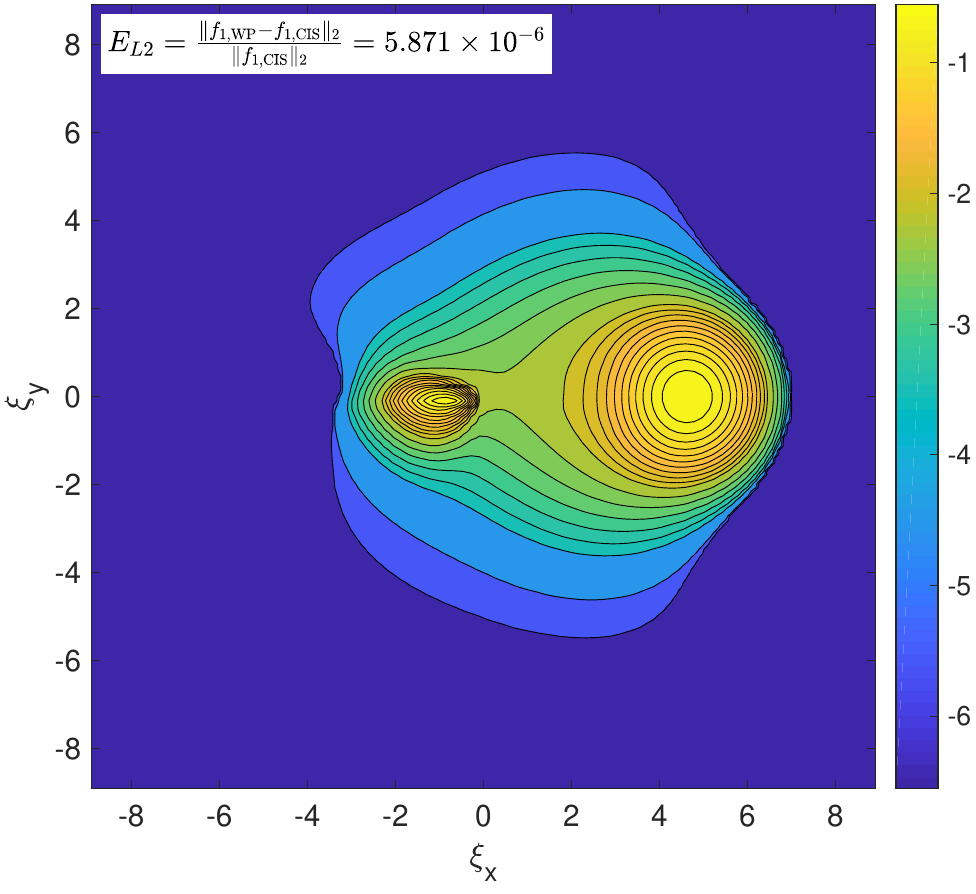}}
\caption{Sample 4, $(x,y)=(-1.14,0.01)$.}
\end{subfigure}\hspace{0.0200\figw}%
\begin{subfigure}[t]{0.4875\figw}\centering
\raisebox{0.0000\figw}{\includegraphics[width=0.4875\figw]{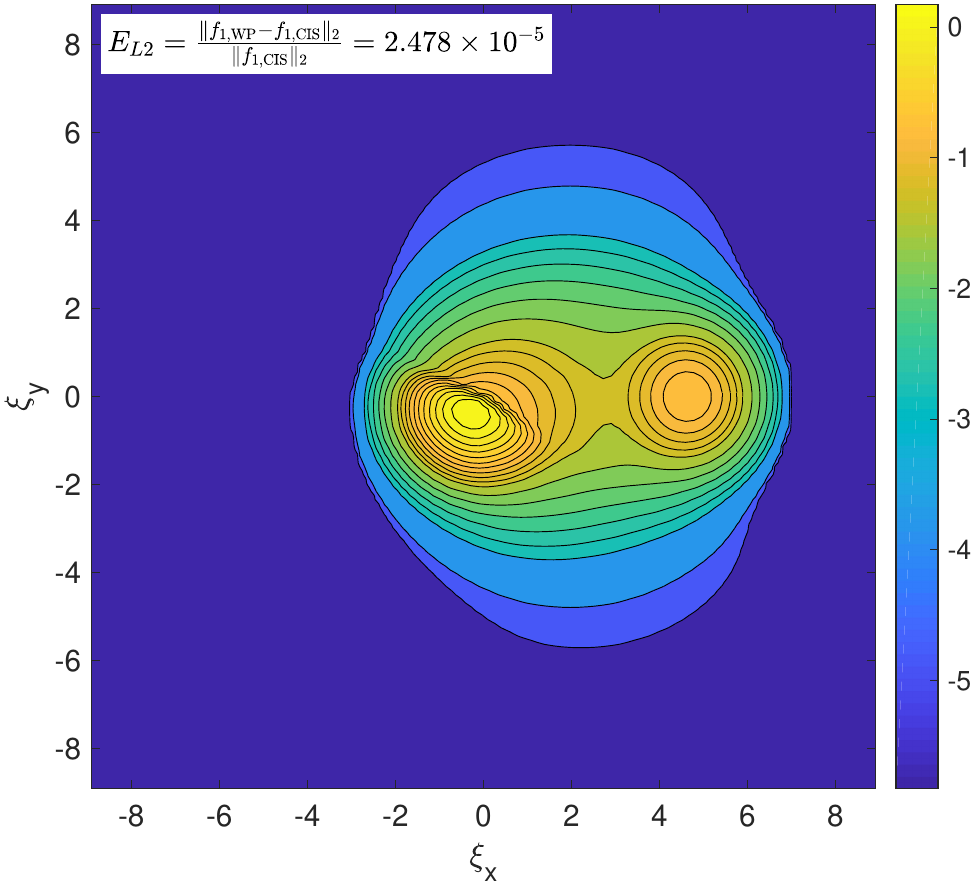}}
\caption{Sample 5, $(x,y)=(-0.38,-0.03)$.}
\end{subfigure}
\par\smallskip
\begin{subfigure}[t]{0.4875\figw}\centering
\raisebox{0.0000\figw}{\includegraphics[width=0.4875\figw]{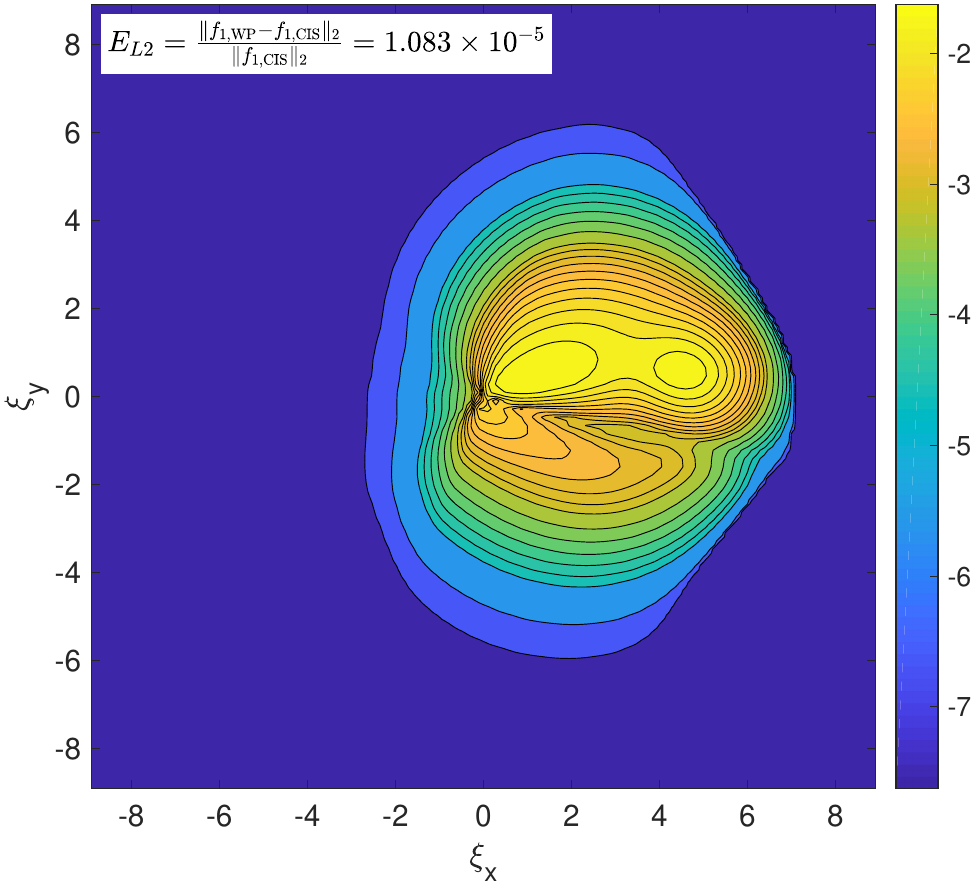}}
\caption{Sample 6, $(x,y)=(0.60,-0.04)$.}
\end{subfigure}\hspace{0.0200\figw}%
\begin{subfigure}[t]{0.4875\figw}\centering
\raisebox{0.0000\figw}{\includegraphics[width=0.4875\figw]{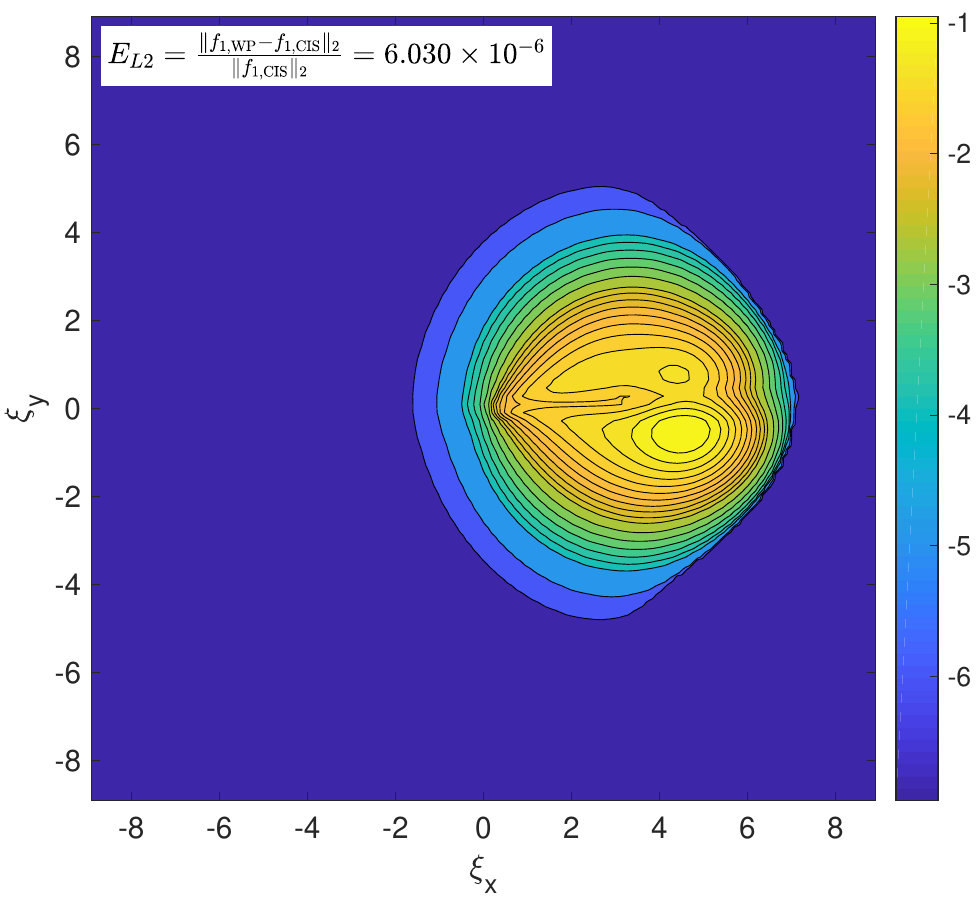}}
\caption{Sample 7, $(x,y)=(1.32,0.07)$.}
\end{subfigure}
\caption{X38 section at $\varepsilon=1$: reduced distributions $f_1(\xi_x,\xi_y)$ on a logarithmic scale at four points along the centreline. IWPD is shown by filled contours and CIS by lines.}\label{fig:x38_kn1_dist_log}
\end{figure}

\begin{figure}[H]
\centering\singlespacing\setlength{\figw}{.65\linewidth}
\begin{subfigure}[t]{0.4875\figw}\centering
\raisebox{0.0000\figw}{\includegraphics[width=0.4875\figw]{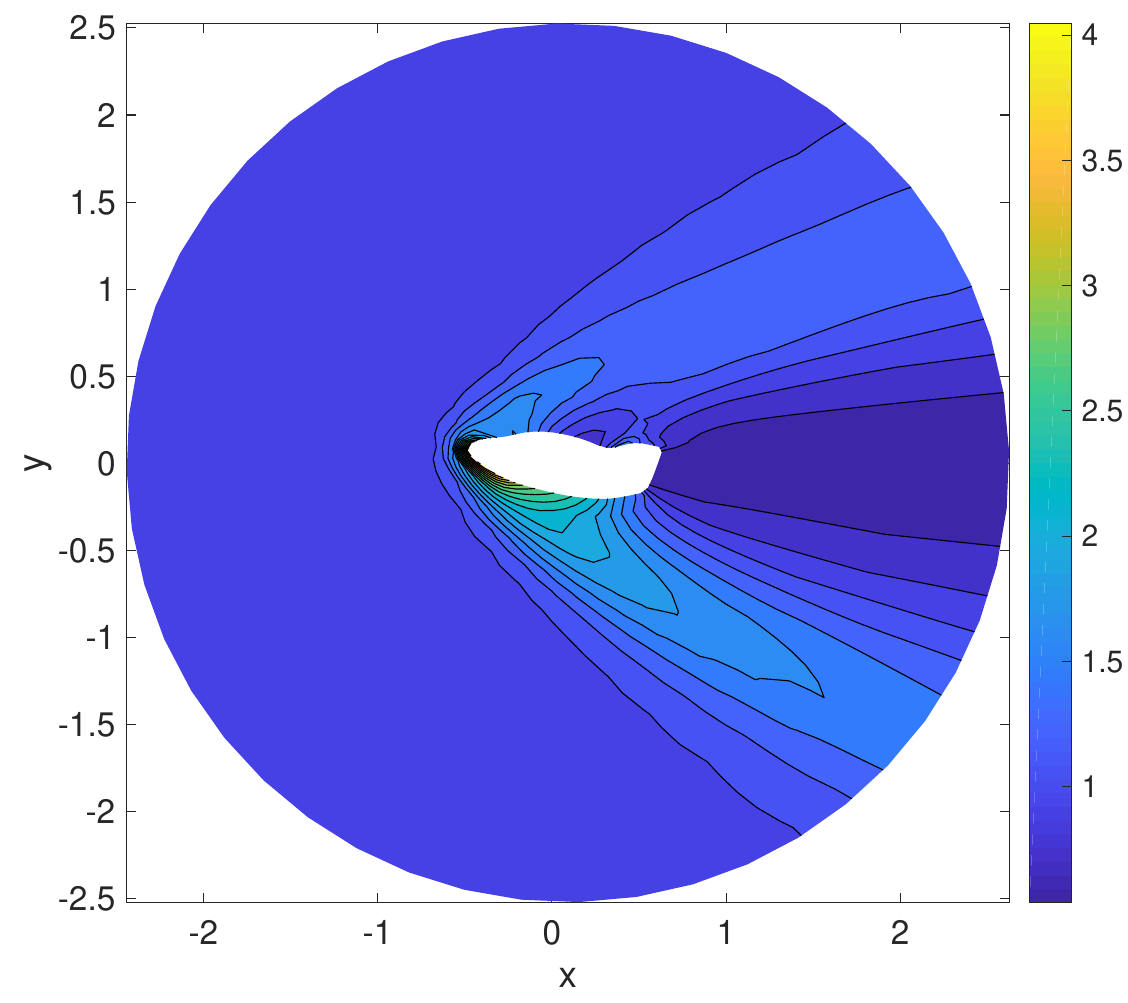}}
\caption{Density.}
\end{subfigure}\hspace{0.0200\figw}%
\begin{subfigure}[t]{0.4875\figw}\centering
\raisebox{0.0000\figw}{\includegraphics[width=0.4875\figw]{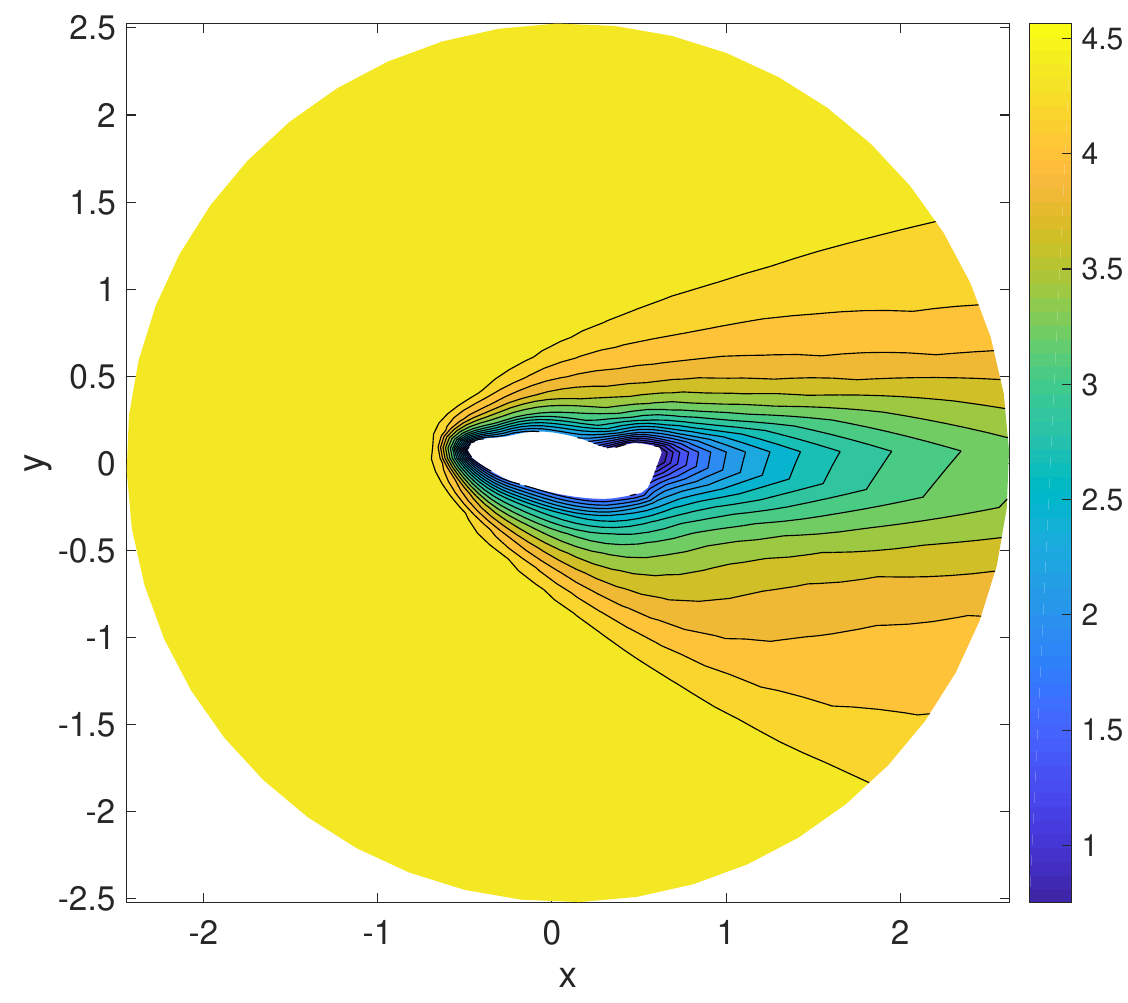}}
\caption{Speed.}
\end{subfigure}
\par\smallskip
\begin{subfigure}[t]{0.4930\figw}\centering
\raisebox{0.0000\figw}{\includegraphics[width=0.4930\figw]{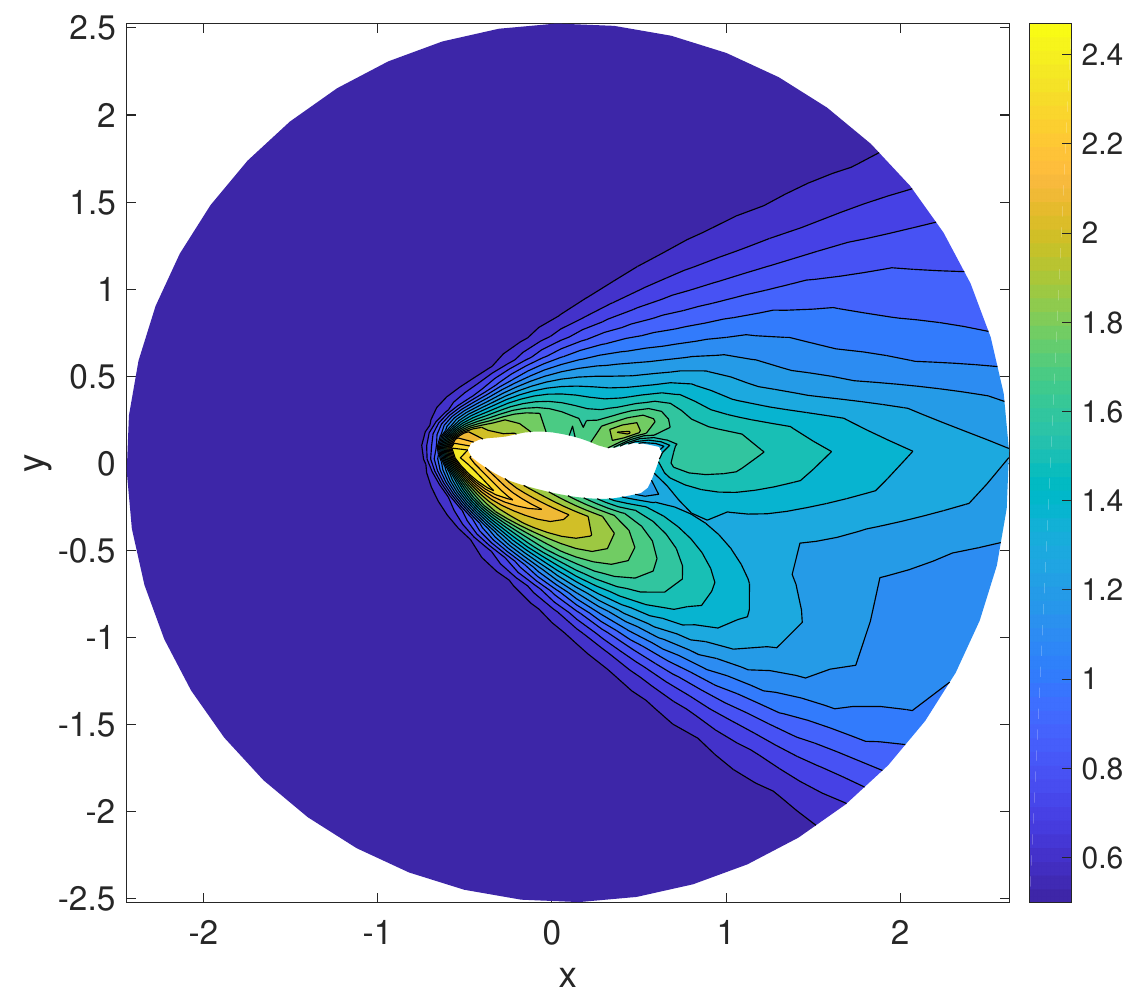}}
\caption{Temperature.}
\end{subfigure}\hspace{0.0200\figw}%
\begin{subfigure}[t]{0.4820\figw}\centering
\raisebox{0.0000\figw}{\includegraphics[width=0.4820\figw]{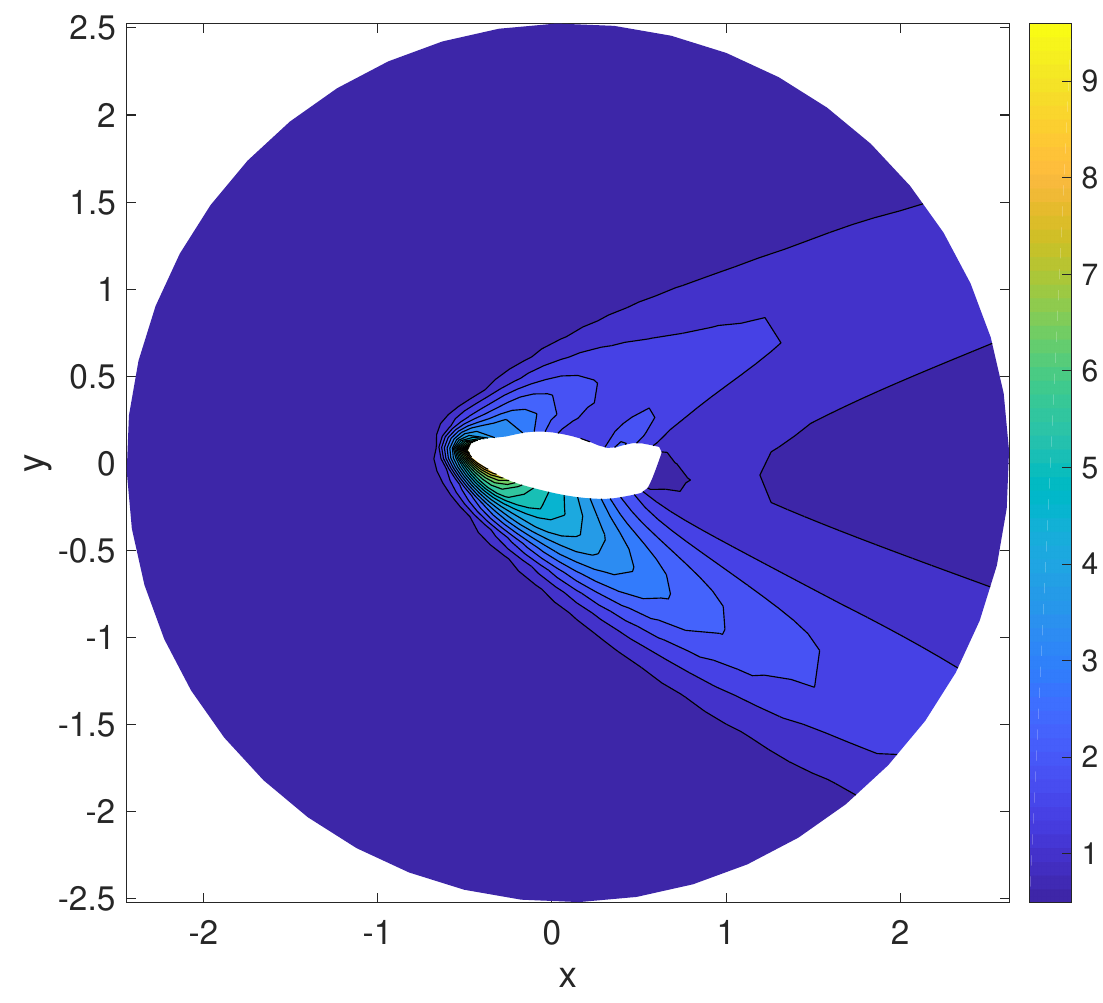}}
\caption{Pressure.}
\end{subfigure}
\caption{X38 section at $\mathrm{Ma}=5$ and $\varepsilon=10^{-2}$: macroscopic fields of IWPD (filled contours) and CIS (lines) on common levels.}\label{fig:x38_kn1em2_fields}
\end{figure}

\begin{figure}[H]
\centering\singlespacing\setlength{\figw}{.65\linewidth}
\begin{subfigure}[t]{0.4868\figw}\centering
\raisebox{0.0109\figw}{\includegraphics[width=0.4868\figw]{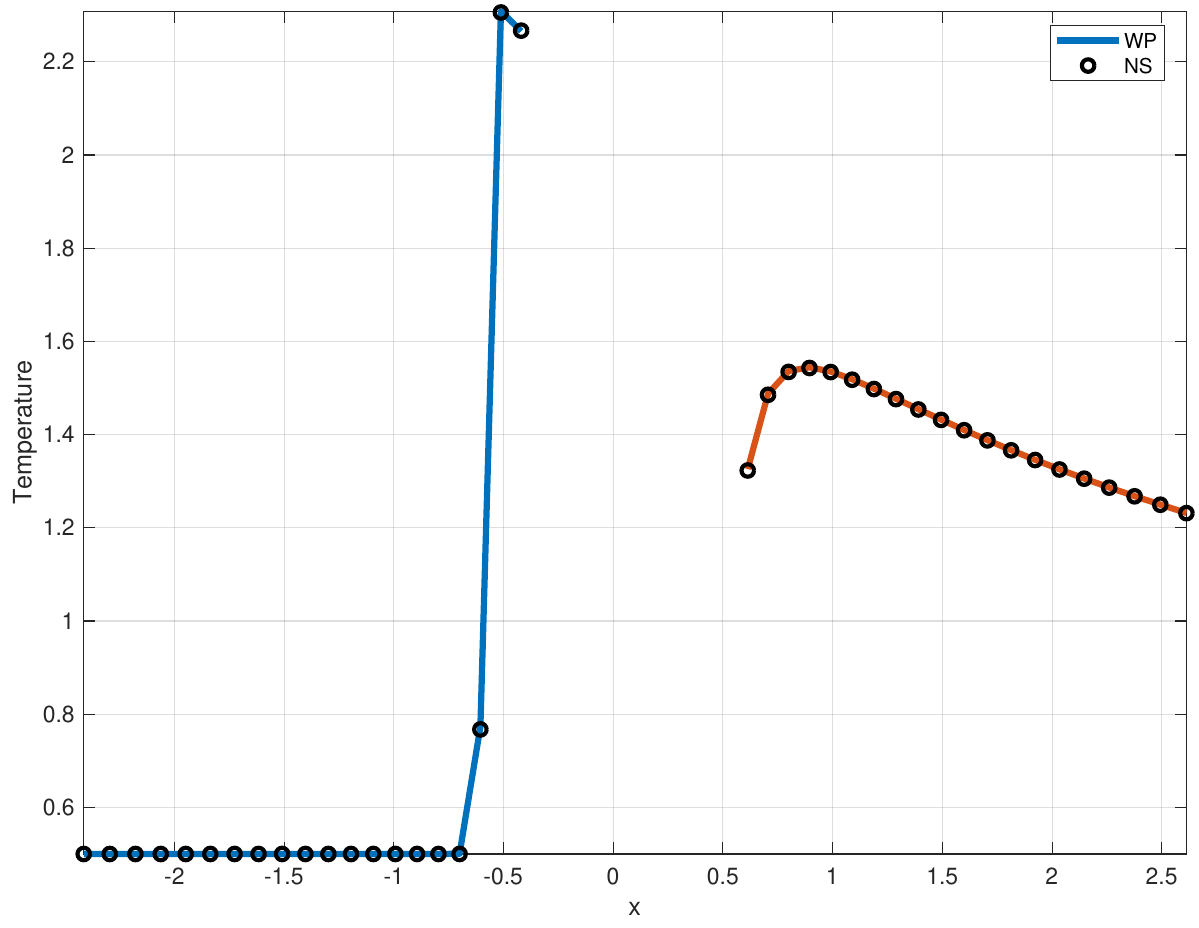}}
\caption{Stagnation-line temperature.}
\end{subfigure}\hspace{0.0200\figw}%
\begin{subfigure}[t]{0.4882\figw}\centering
\raisebox{0.0000\figw}{\includegraphics[width=0.4882\figw]{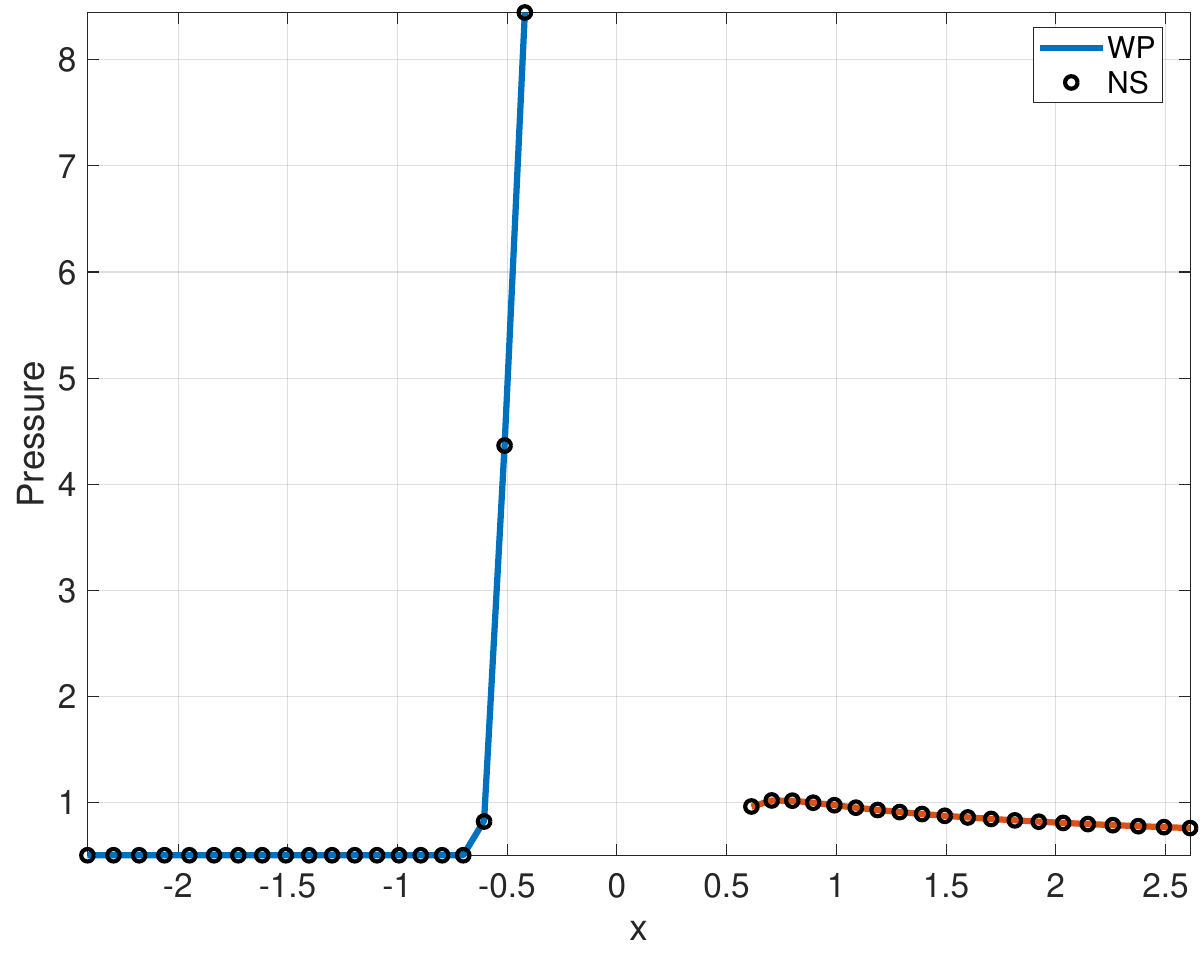}}
\caption{Stagnation-line pressure.}
\end{subfigure}
\par\smallskip
\begin{subfigure}[t]{0.4930\figw}\centering
\raisebox{0.0000\figw}{\includegraphics[width=0.4930\figw]{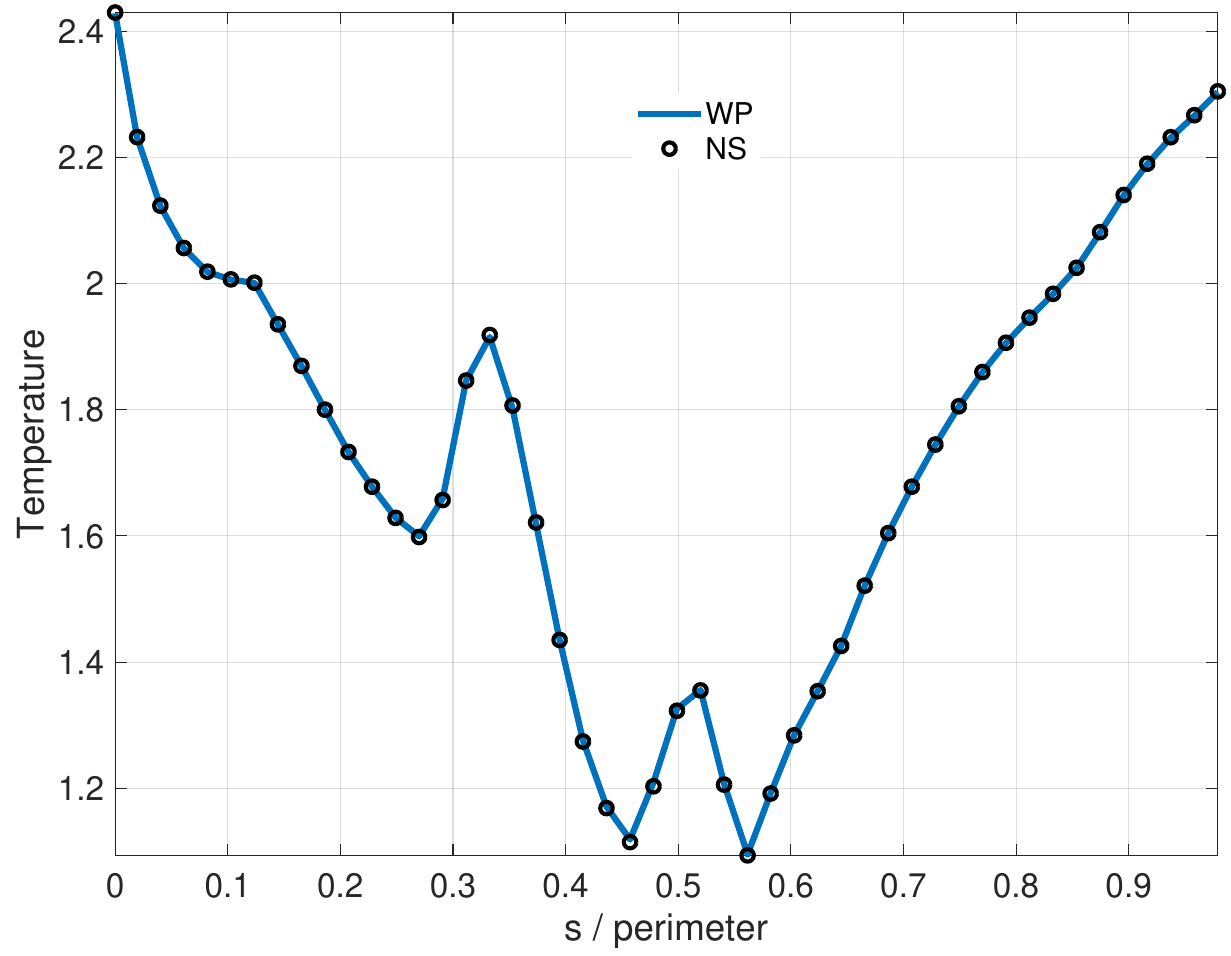}}
\caption{Wall temperature.}
\end{subfigure}\hspace{0.0200\figw}%
\begin{subfigure}[t]{0.4820\figw}\centering
\raisebox{0.0000\figw}{\includegraphics[width=0.4820\figw]{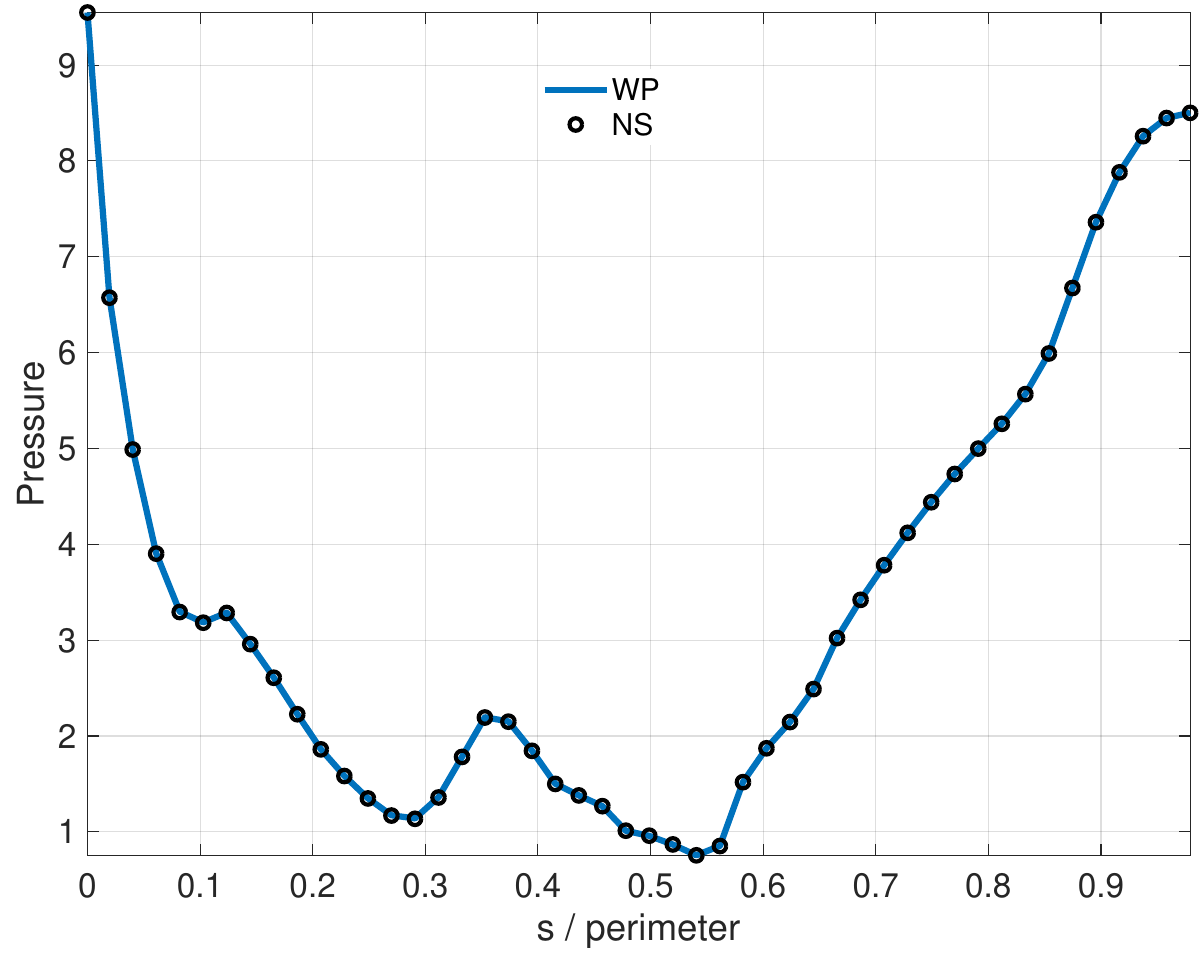}}
\caption{Wall pressure.}
\end{subfigure}
\caption{X38 section at $\varepsilon=10^{-4}$: temperature and pressure along the stagnation line (a,b) and on the wall versus the normalized arc length (c,d). Lines denote IWPD and symbols NS.}\label{fig:x38_kn1em4_profiles}
\end{figure}

\clearpage
\section{Additional verification: Apollo section}\label{sec:apollo}
The Apollo section is normalized by its maximum diameter and placed at an $18.5^\circ$ angle of attack. The body-fitted O-grids have outer radius $3.9$ at $\varepsilon=1$ and $2.6$ at $10^{-2}$ and $10^{-4}$. Walls are stationary and diffuse; the incoming far-field distribution is the free-stream Maxwellian. The model, viscosity law and residual definitions are those of the main text.

\begin{table}[H]\centering\small
\caption{Apollo discretizations and controls. The CFL column gives explicit/implicit values; the velocity domain is $[-9,9]^2$.}\label{tab:apollo_settings}
\resizebox{\linewidth}{!}{\begin{tabular}{llllll}\toprule
Case & Regime & Physical mesh & Velocity grid & IWPD control & CFL\\\midrule
Apollo   & $\varepsilon=1$, $\Ma=5$ & $64\times38$ O-grid & $96^2$ on $[-9,9]^2$ & W1/PR2/P1 & 0.2/1000\\
Apollo   & $\varepsilon=10^{-2}$, $10^{-4}$ & $64\times24$ O-grid & $96^2$, $64^2$ on $[-9,9]^2$ & W1, W120/PR2/P1 & 0.2/1000\\
\bottomrule\end{tabular}}
\end{table}

\begin{table}[H]\centering\small
\caption{Apollo field differences from CIS at $\varepsilon=1$ and $10^{-2}$ and from NS at $10^{-4}$, using the main-text definition of $E_{L_2}$.}\label{tab:apollo_accuracy}
\resizebox{\linewidth}{!}{\begin{tabular}{llrrrr}\toprule
Case & $\varepsilon$ & Density & Speed & Temperature & Pressure\\\midrule
Apollo   & $1$       & $3.18\times10^{-5}$ & $4.43\times10^{-6}$ & $8.63\times10^{-6}$ & $1.09\times10^{-5}$\\
         & $10^{-2}$ & $4.13\times10^{-4}$ & $1.72\times10^{-4}$ & $7.32\times10^{-4}$ & $3.21\times10^{-4}$\\
         & $10^{-4}$ & $4.40\times10^{-4}$ & $4.02\times10^{-4}$ & $1.21\times10^{-3}$ & $8.47\times10^{-4}$\\
\bottomrule\end{tabular}}
\end{table}

\begin{table}[H]\centering\small
\caption{Apollo efficiency relative to CIS at the common macroscopic residual $10^{-6}$. The NS time ratio uses tolerance $10^{-7}$.}\label{tab:apollo_efficiency}
\resizebox{\linewidth}{!}{\begin{tabular}{llcrrr}\toprule
Case & Regime & IWPD control & $S_{\rm iter}$ & $S_t$ & $t_{\rm WP}/t_{\rm NS}$\\\midrule
Apollo   & $\varepsilon=1$       & W1/PR2/P1   & 1.36     & 0.72    & --\\
Apollo   & $\varepsilon=10^{-2}$ & W1/PR2/P1   & 26.56    & 10.25   & --\\
Apollo   & $\varepsilon=10^{-4}$ & W120/PR2/P1 & 36531.84 & 1417.95 & 3.99\\
\bottomrule\end{tabular}}
\end{table}

\begin{table}[H]\centering\small
\caption{Apollo BGK--Shakhov comparison at $\varepsilon=10^{-4}$ with W120/PR2/P1 and tolerance $10^{-7}$. The Shakhov factor is fitted over $[10^{-6},10^{-3}]$; for BGK it is the largest successive residual ratio below $10^{-1}$.}\label{tab:apollo_bgk}
\resizebox{\linewidth}{!}{\begin{tabular}{llrrrr}\toprule
Case & Model & Outer iterations & Factor & $t_{\rm WP}$ (s) & $t_{\rm WP}/t_{\rm NS}$\\\midrule
Apollo   & Shakhov & 19 & 0.331 & 35.47 & 3.99\\
         & BGK     & 8  & 0.051 & 15.8  & 1.78\\
\bottomrule\end{tabular}}
\end{table}

The field differences are listed in Table~\ref{tab:apollo_accuracy}. The profiles show the asymmetric shock layer associated with the angle of attack. The rarefied distributions contain incident and wall-reflected populations, with two lobes in the wake. Across the sampled X38 and Apollo distributions, the IWPD--CIS differences at $\varepsilon=1$ range from $2.89\times10^{-6}$ to $3.23\times10^{-5}$.

At $\varepsilon=10^{-4}$, IWPD reaches the macroscopic residual $10^{-6}$ in $19$ outer iterations, compared with $573\,864$ for CIS; NS requires $251$ steps. The IWPD and NS times are $35.47$ and $8.89$~s. Tables~\ref{tab:apollo_efficiency} and~\ref{tab:apollo_bgk} summarize the efficiency and collision-model comparisons. The corresponding BGK run reaches tolerance in $8$ outer iterations, supporting the main-text interpretation of the explicit collision remainder.

\begin{figure}[H]
\centering\singlespacing\setlength{\figw}{.65\linewidth}
\begin{subfigure}[t]{0.4814\figw}\centering
\raisebox{0.0000\figw}{\includegraphics[width=0.4814\figw]{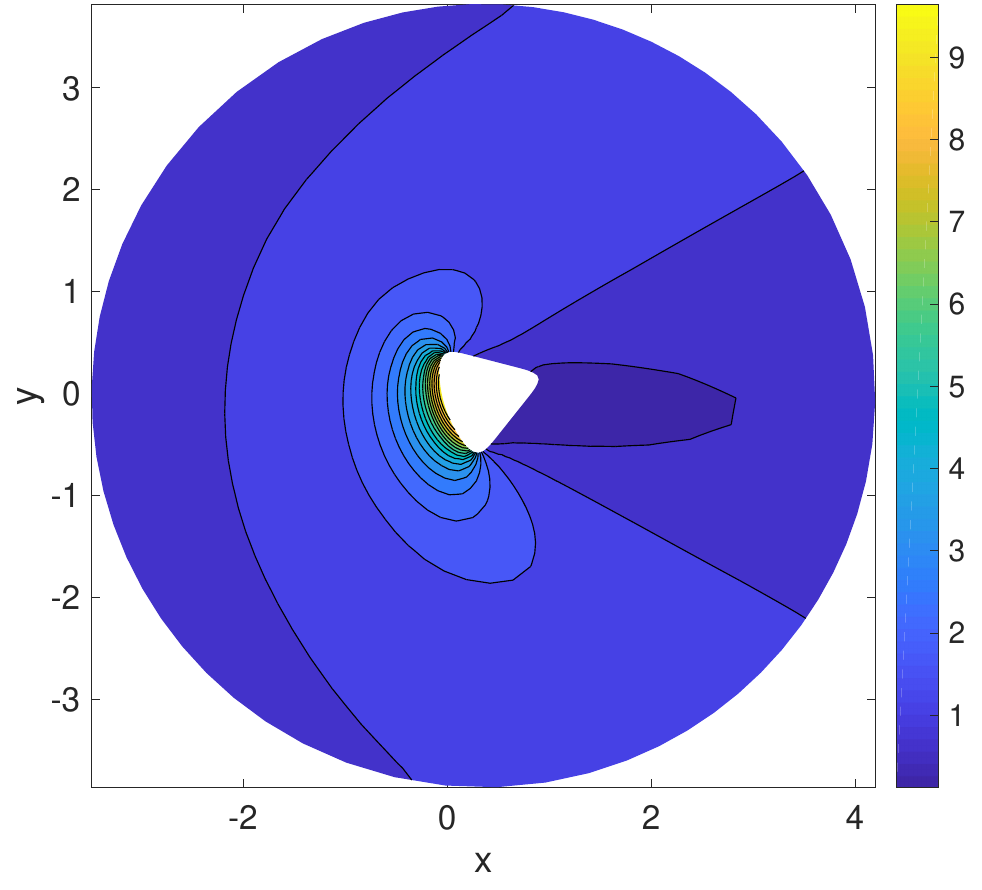}}
\caption{Density.}
\end{subfigure}\hspace{0.0200\figw}%
\begin{subfigure}[t]{0.4936\figw}\centering
\raisebox{0.0000\figw}{\includegraphics[width=0.4936\figw]{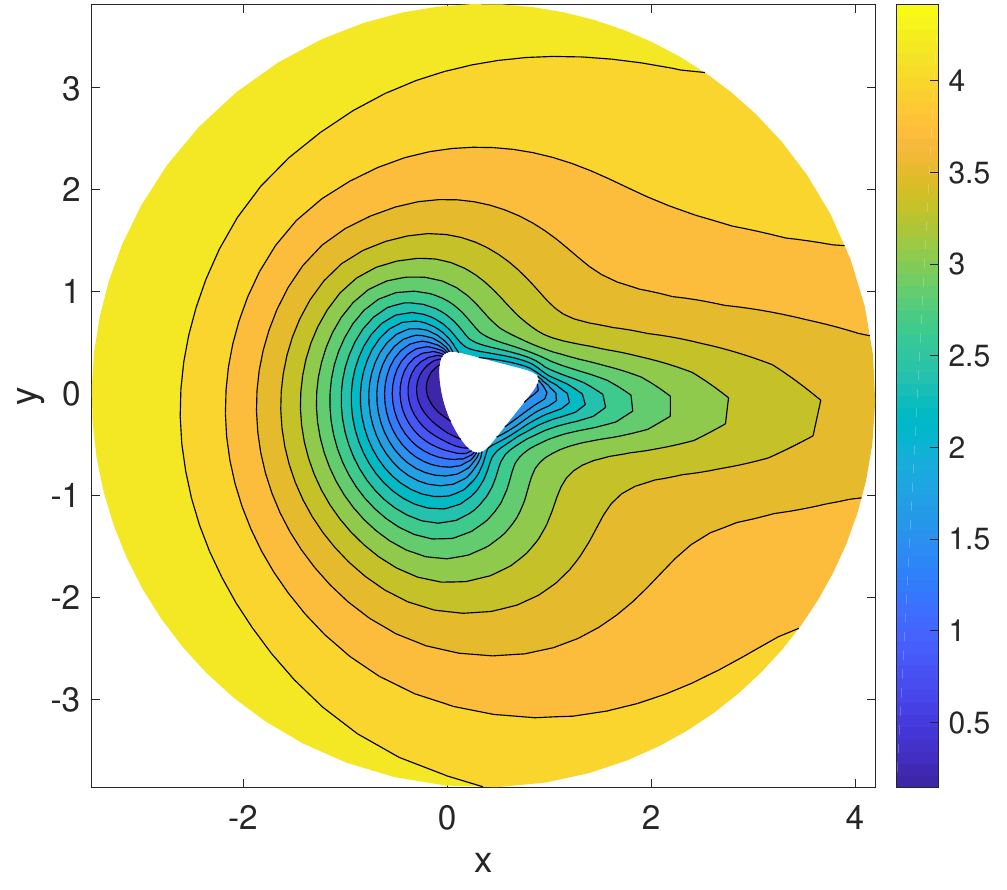}}
\caption{Speed.}
\end{subfigure}
\par\smallskip
\begin{subfigure}[t]{0.4895\figw}\centering
\raisebox{0.0000\figw}{\includegraphics[width=0.4895\figw]{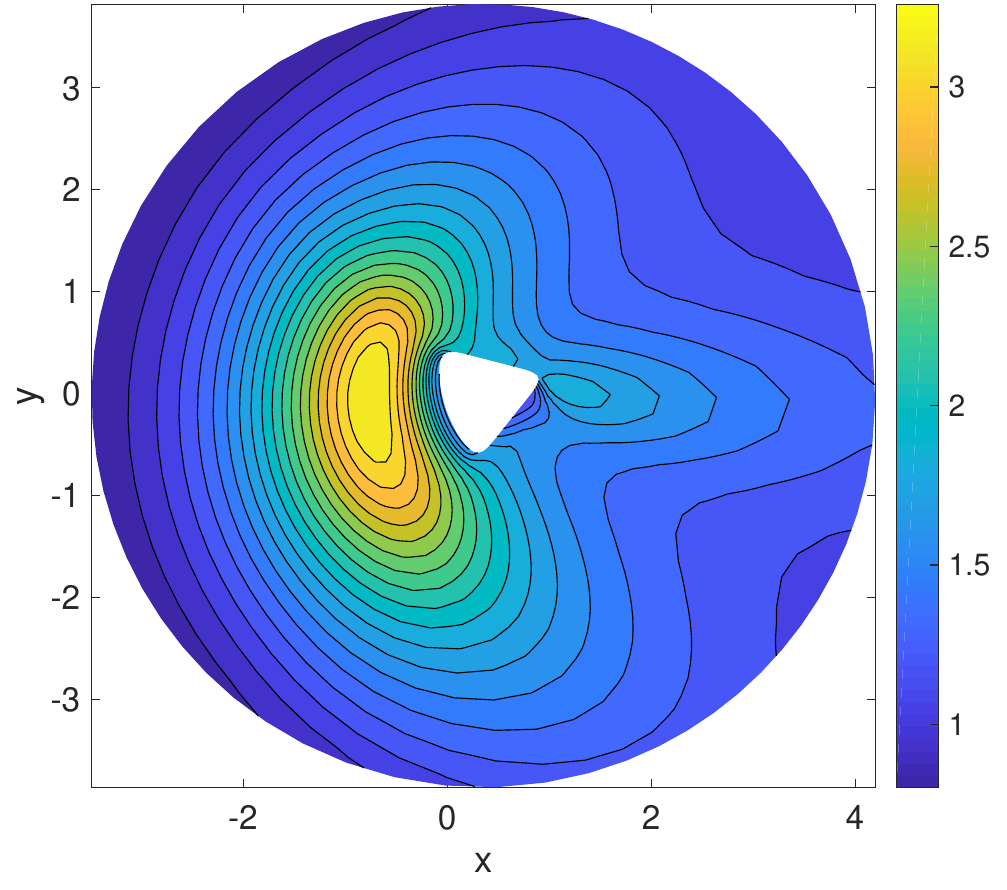}}
\caption{Temperature.}
\end{subfigure}\hspace{0.0200\figw}%
\begin{subfigure}[t]{0.4855\figw}\centering
\raisebox{0.0000\figw}{\includegraphics[width=0.4855\figw]{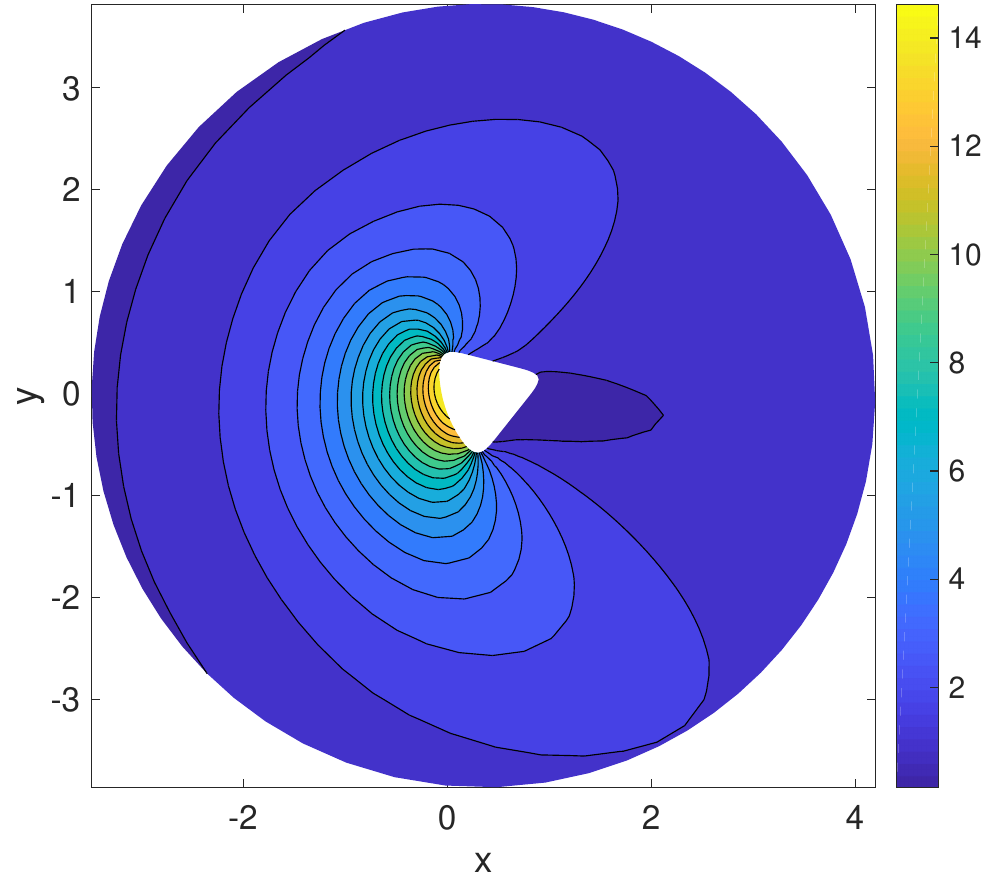}}
\caption{Pressure.}
\end{subfigure}
\caption{Apollo section at $\mathrm{Ma}=5$, $18.5^\circ$ angle of attack and $\varepsilon=1$: macroscopic fields of IWPD (filled contours) and CIS (lines) on common levels.}\label{fig:apollo_kn1_fields}
\end{figure}

\begin{figure}[H]
\centering\singlespacing\setlength{\figw}{.65\linewidth}
\begin{subfigure}[t]{0.4895\figw}\centering
\raisebox{0.0000\figw}{\includegraphics[width=0.4895\figw]{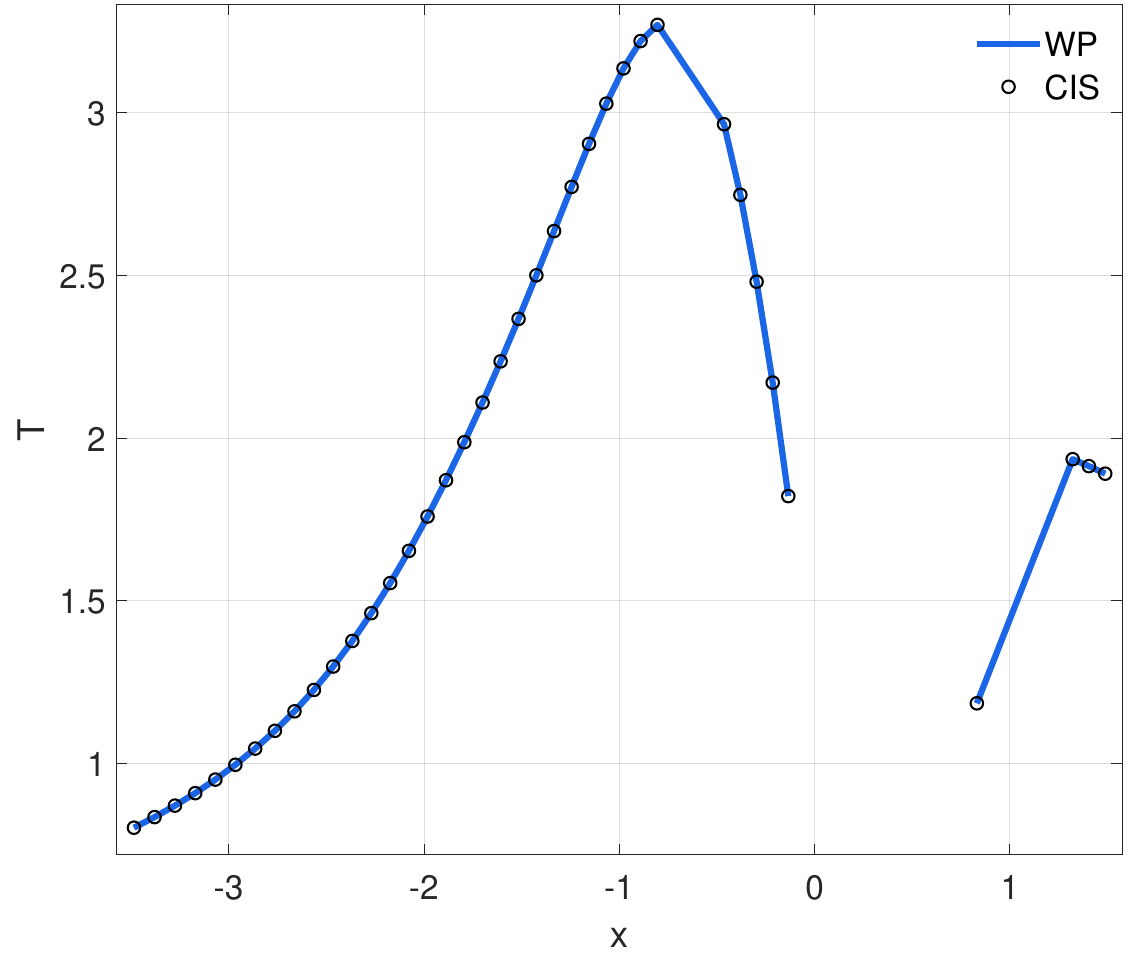}}
\caption{Stagnation-line temperature.}
\end{subfigure}\hspace{0.0200\figw}%
\begin{subfigure}[t]{0.4855\figw}\centering
\raisebox{0.0000\figw}{\includegraphics[width=0.4855\figw]{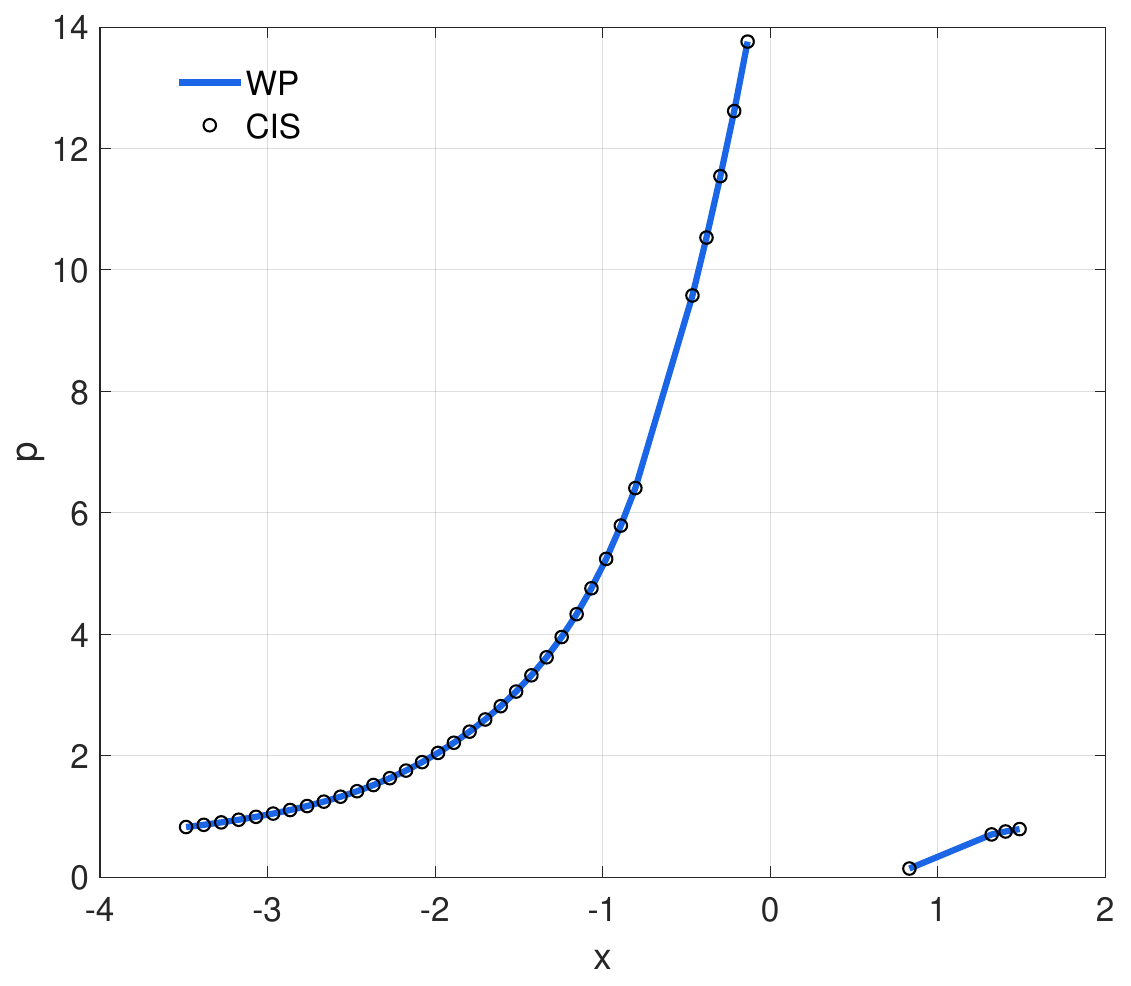}}
\caption{Stagnation-line pressure.}
\end{subfigure}
\par\smallskip
\begin{subfigure}[t]{0.4911\figw}\centering
\raisebox{0.0000\figw}{\includegraphics[width=0.4911\figw]{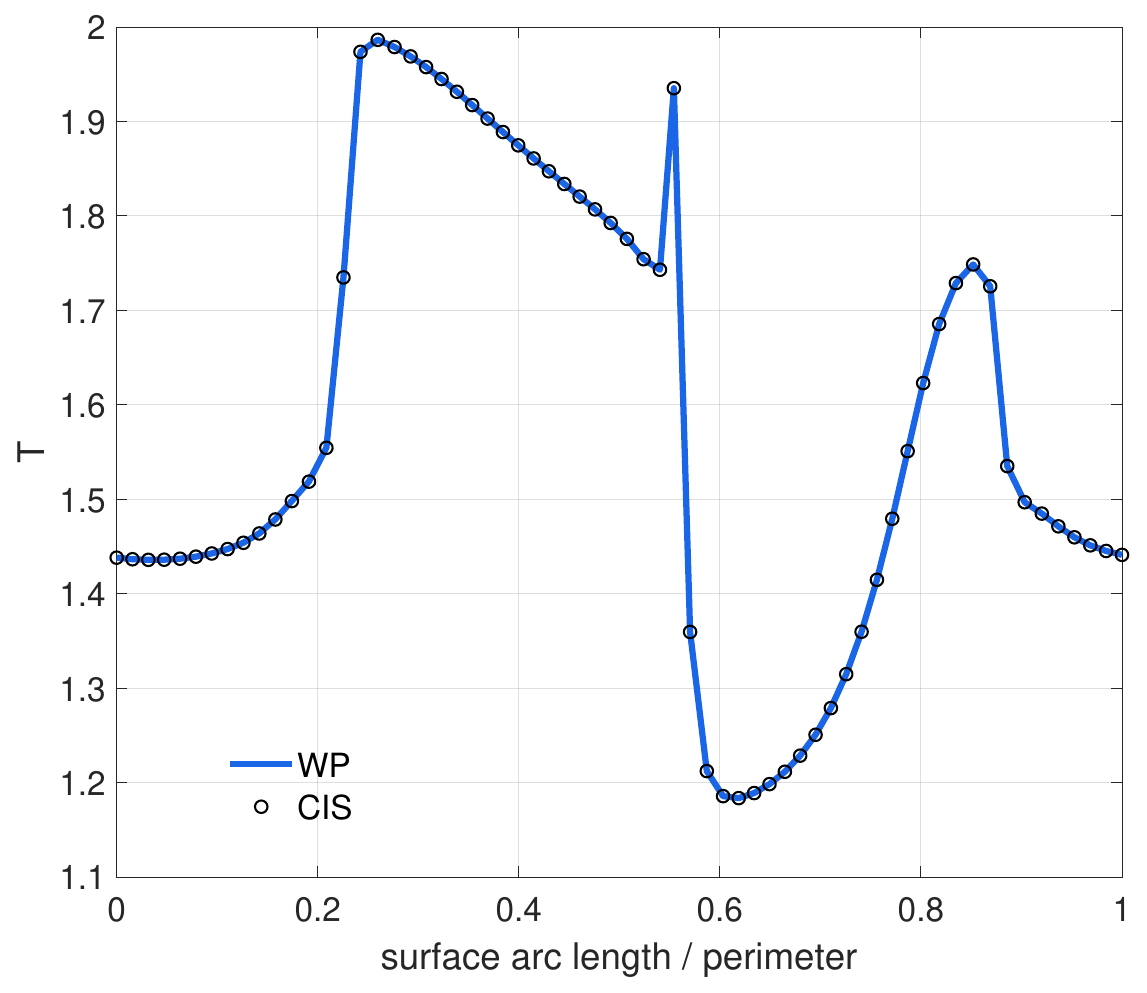}}
\caption{Wall temperature.}
\end{subfigure}\hspace{0.0200\figw}%
\begin{subfigure}[t]{0.4839\figw}\centering
\raisebox{0.0000\figw}{\includegraphics[width=0.4839\figw]{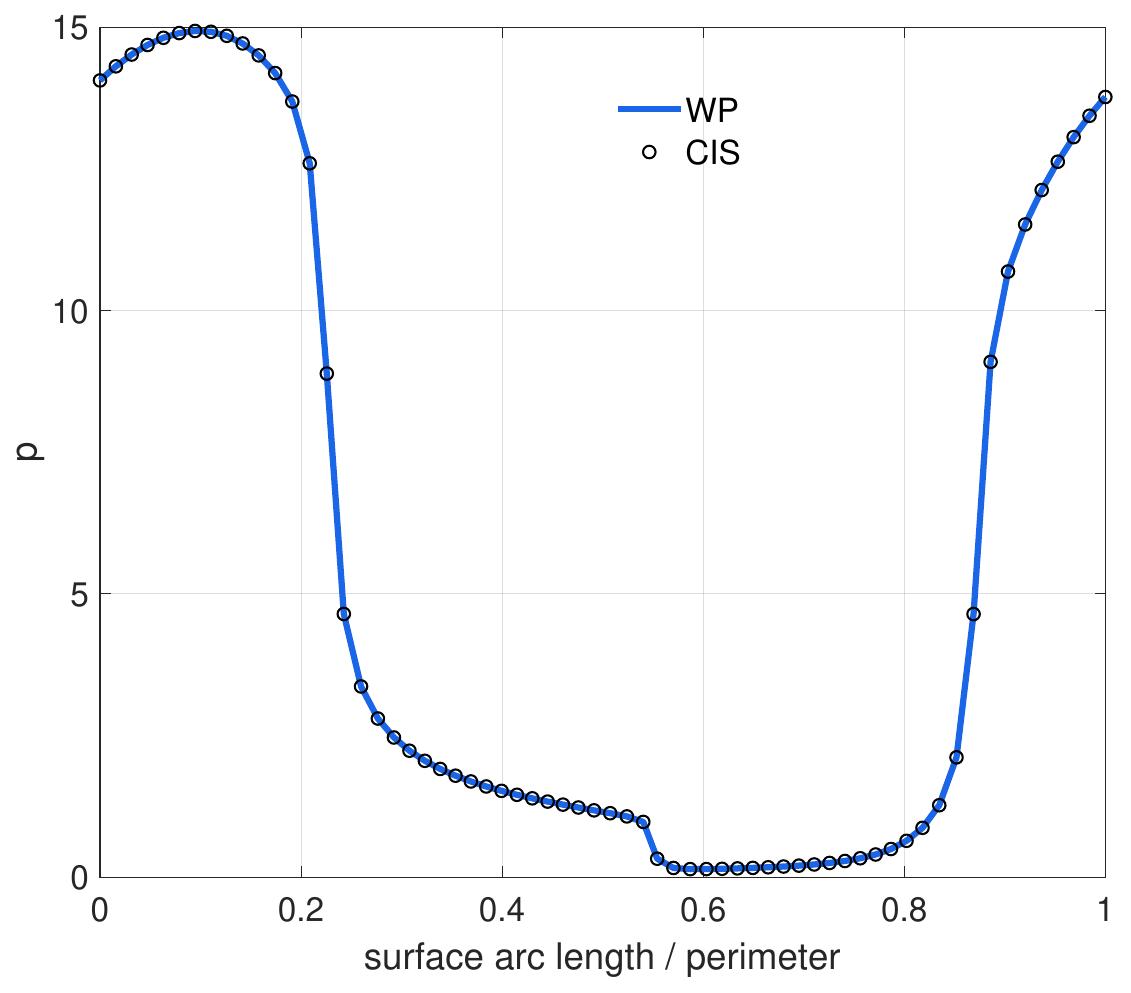}}
\caption{Wall pressure.}
\end{subfigure}
\caption{Apollo section at $\varepsilon=1$: temperature and pressure along the stagnation line (a,b) and on the wall versus the normalized arc length (c,d). Lines denote IWPD and symbols CIS.}\label{fig:apollo_kn1_profiles}
\end{figure}

\begin{figure}[H]
\centering\singlespacing\setlength{\figw}{.65\linewidth}
\begin{subfigure}[t]{0.4875\figw}\centering
\raisebox{0.0000\figw}{\includegraphics[width=0.4875\figw]{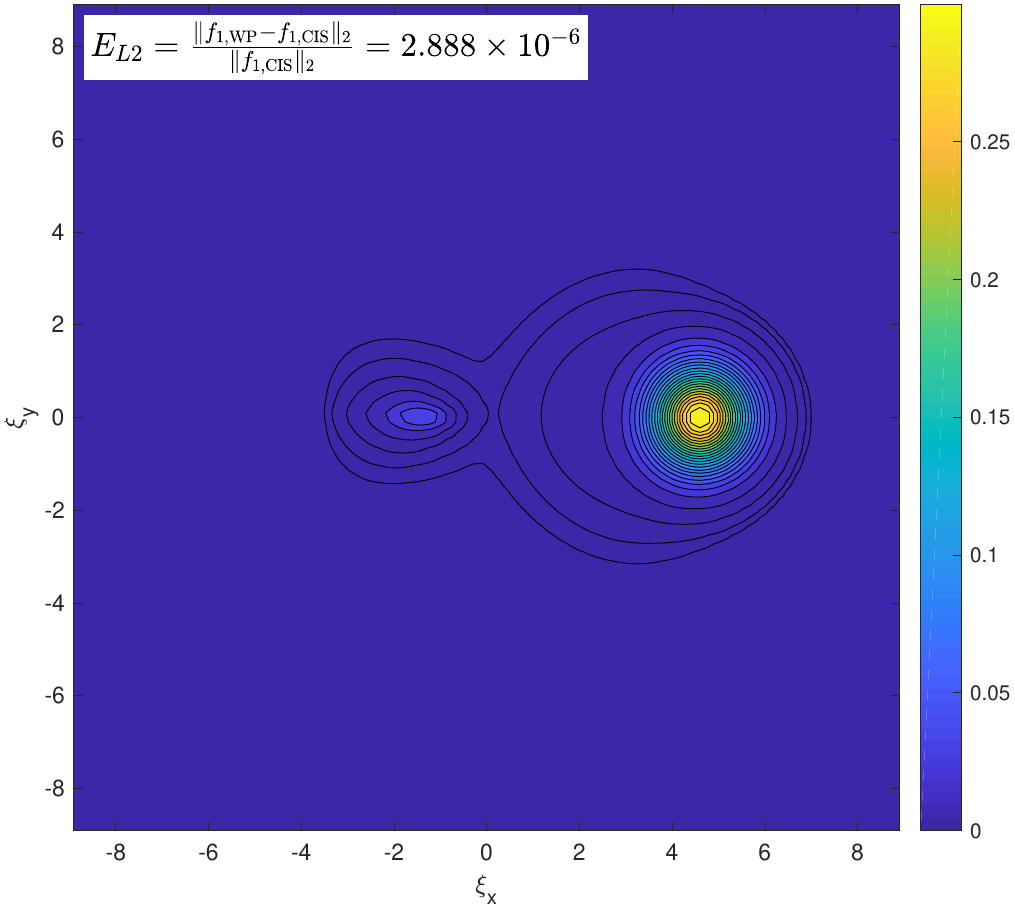}}
\caption{Sample 2, $(x,y)=(-2.66,0.05)$.}
\end{subfigure}\hspace{0.0200\figw}%
\begin{subfigure}[t]{0.4875\figw}\centering
\raisebox{0.0000\figw}{\includegraphics[width=0.4875\figw]{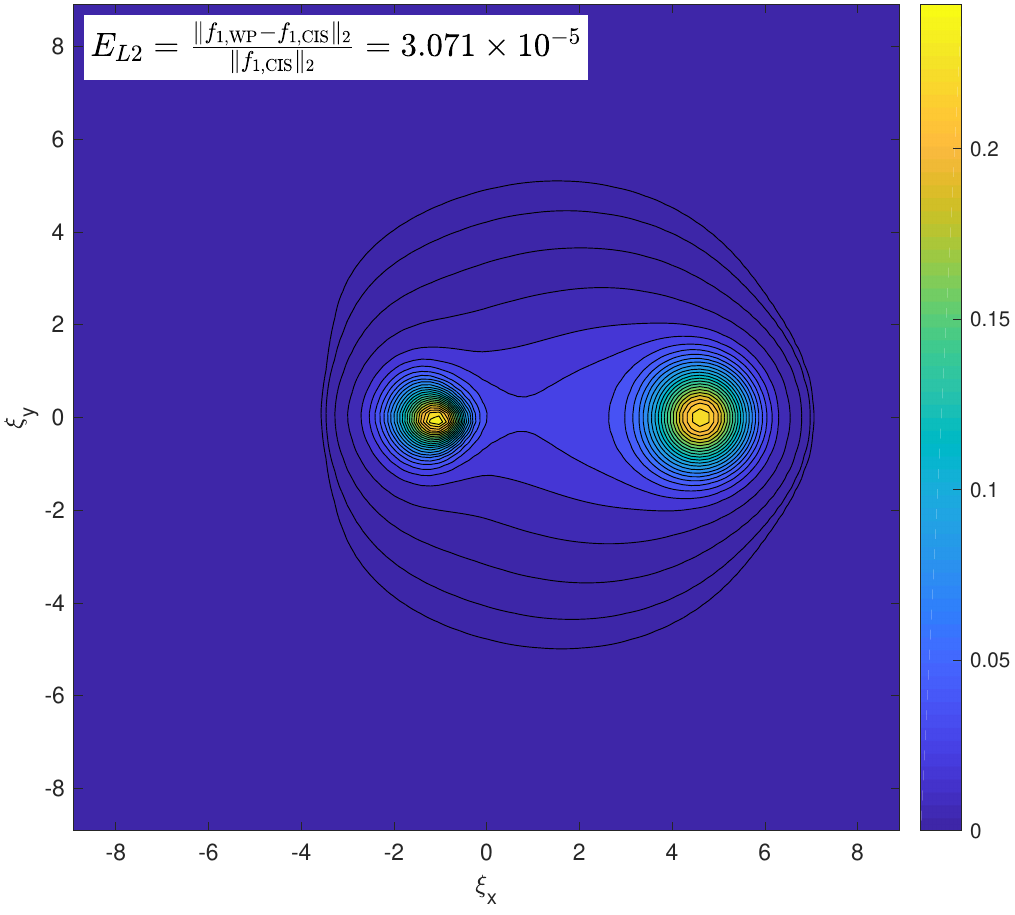}}
\caption{Sample 4, $(x,y)=(-0.89,0.01)$.}
\end{subfigure}
\par\smallskip
\begin{subfigure}[t]{0.4841\figw}\centering
\raisebox{0.0013\figw}{\includegraphics[width=0.4841\figw]{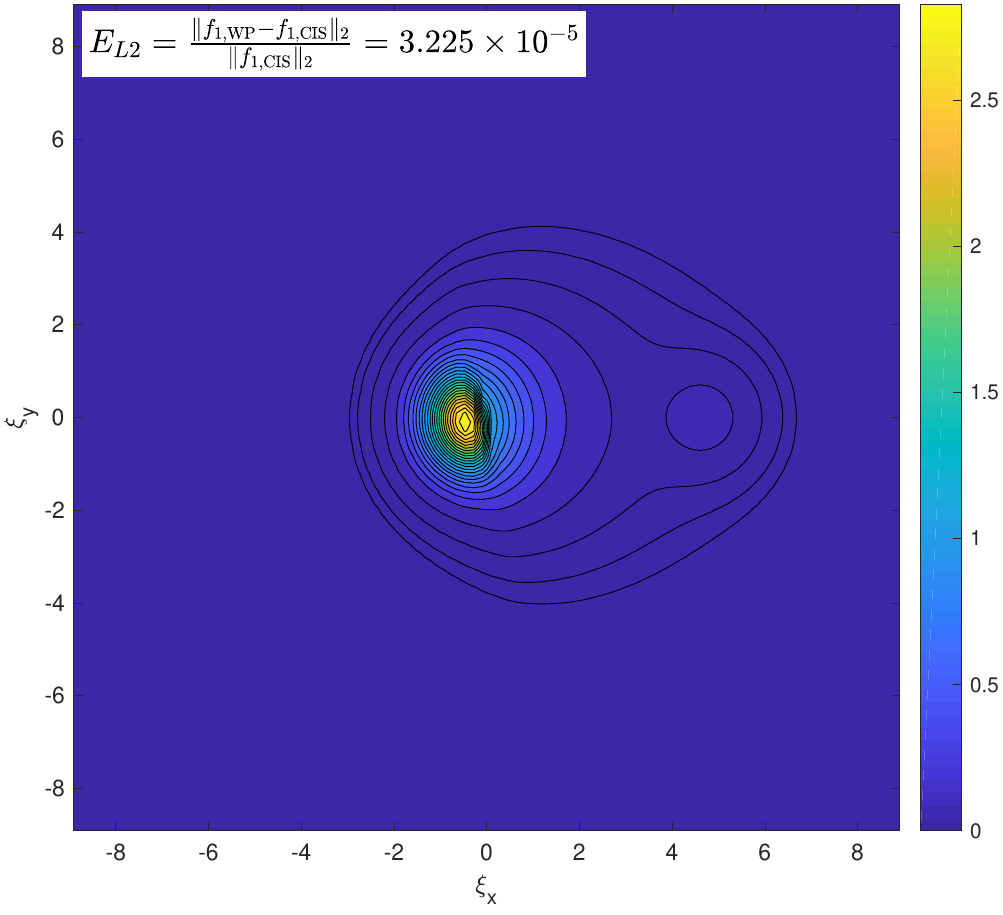}}
\caption{Sample 5, $(x,y)=(-0.06,0.01)$.}
\end{subfigure}\hspace{0.0200\figw}%
\begin{subfigure}[t]{0.4909\figw}\centering
\raisebox{0.0000\figw}{\includegraphics[width=0.4909\figw]{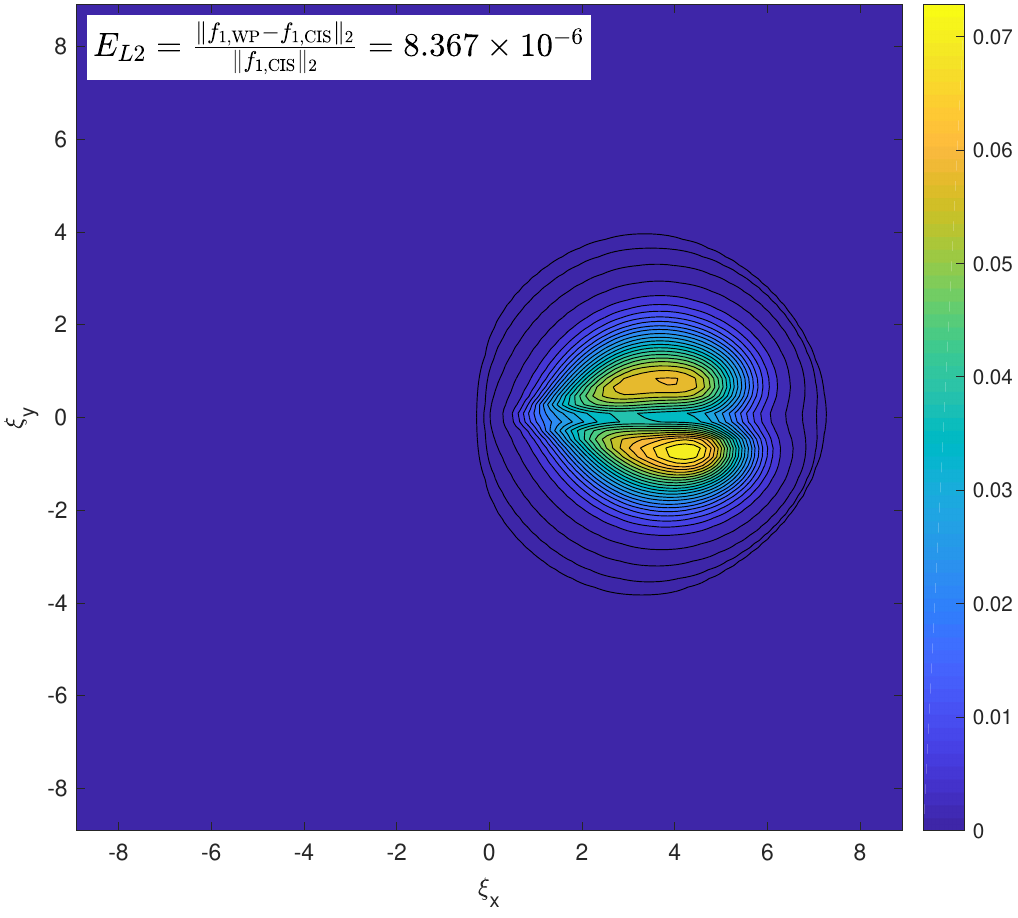}}
\caption{Sample 10, $(x,y)=(4.20,-0.08)$.}
\end{subfigure}
\caption{Apollo section at $\varepsilon=1$: reduced distributions $f_1(\xi_x,\xi_y)$ at three points ahead of the body (a--c) and one point in the wake (d). IWPD is shown by filled contours and CIS by lines.}\label{fig:apollo_kn1_dist}
\end{figure}

\begin{figure}[H]
\centering\singlespacing\setlength{\figw}{.65\linewidth}
\begin{subfigure}[t]{0.4855\figw}\centering
\raisebox{0.0000\figw}{\includegraphics[width=0.4855\figw]{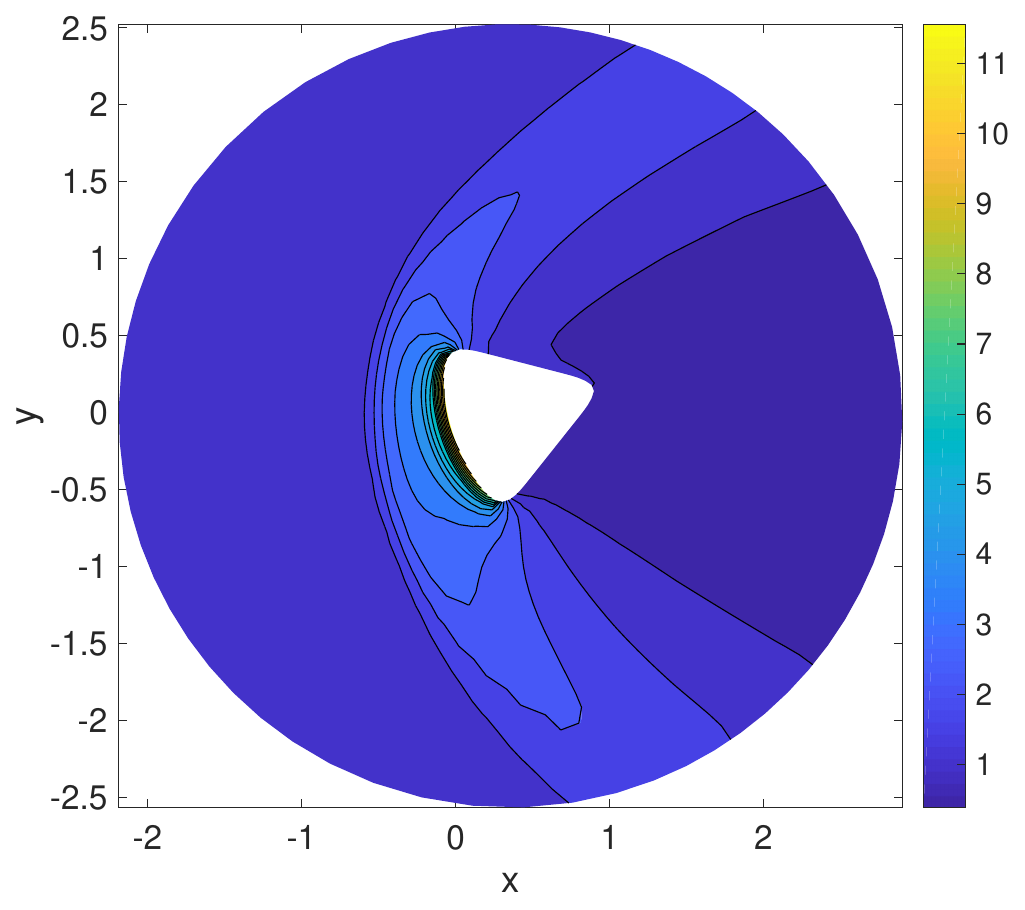}}
\caption{Density.}
\end{subfigure}\hspace{0.0200\figw}%
\begin{subfigure}[t]{0.4895\figw}\centering
\raisebox{0.0000\figw}{\includegraphics[width=0.4895\figw]{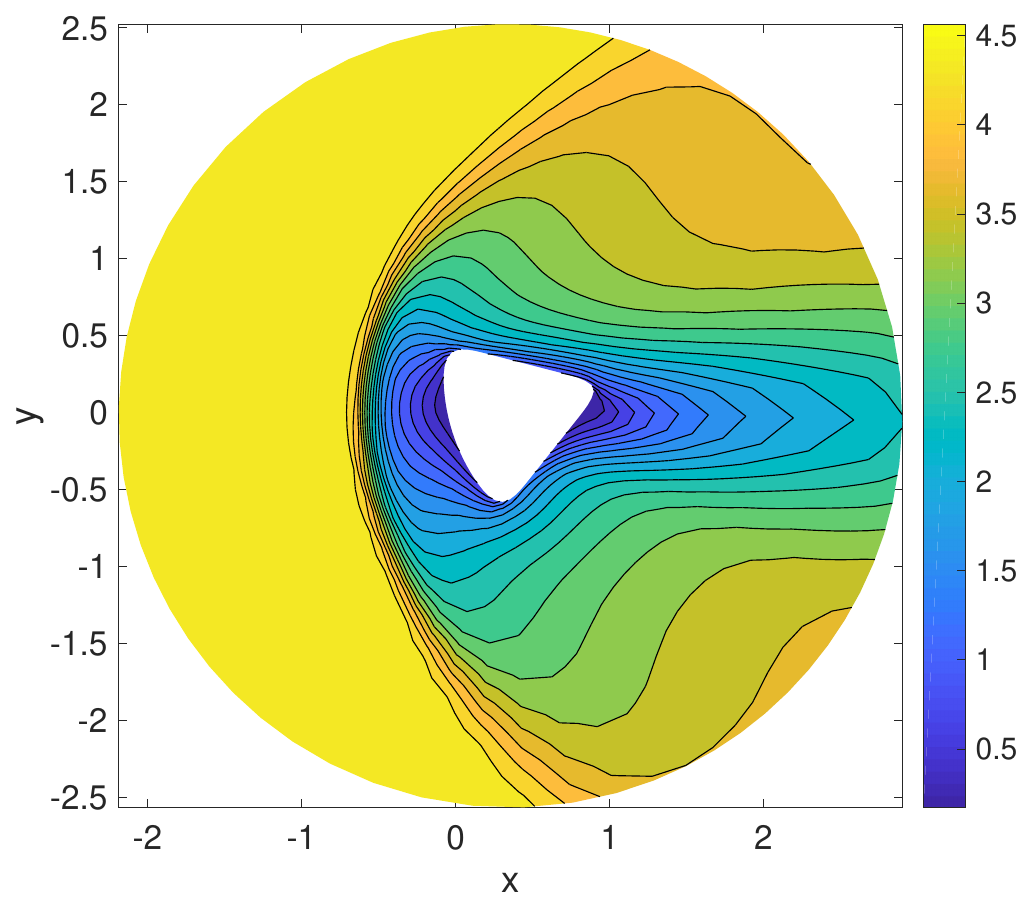}}
\caption{Speed.}
\end{subfigure}
\par\smallskip
\begin{subfigure}[t]{0.4792\figw}\centering
\raisebox{0.0116\figw}{\includegraphics[width=0.4792\figw]{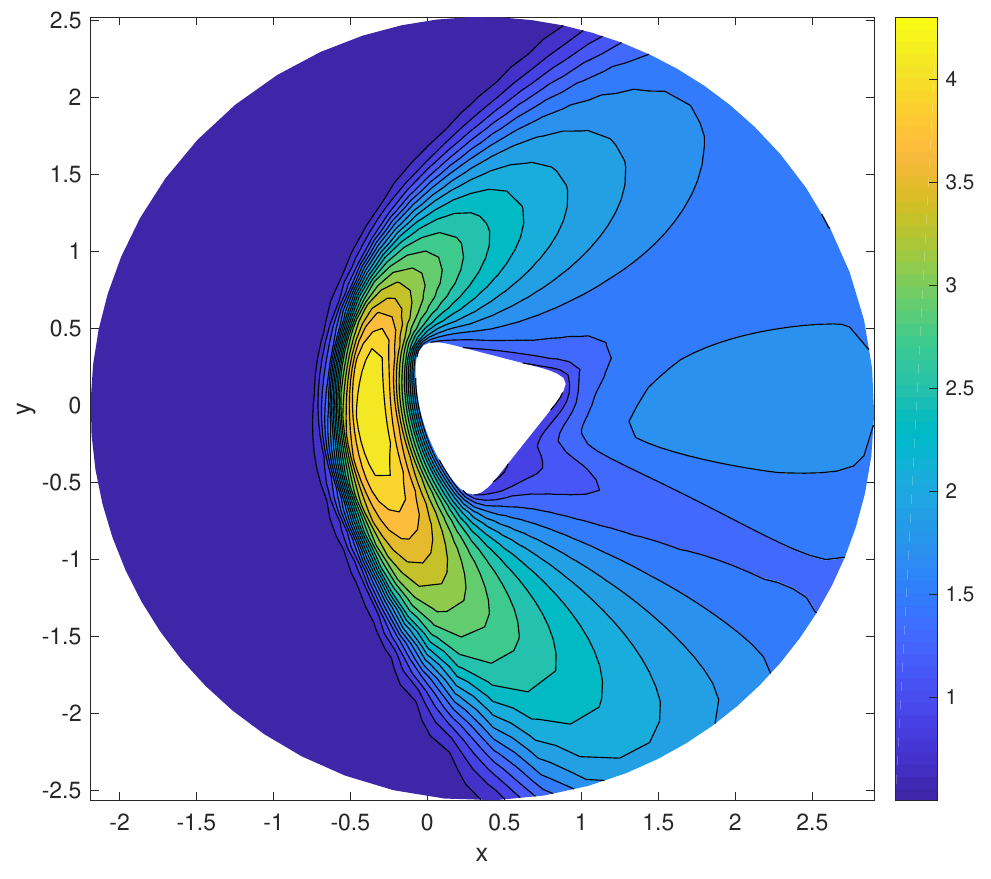}}
\caption{Temperature.}
\end{subfigure}\hspace{0.0200\figw}%
\begin{subfigure}[t]{0.4958\figw}\centering
\raisebox{0.0000\figw}{\includegraphics[width=0.4958\figw]{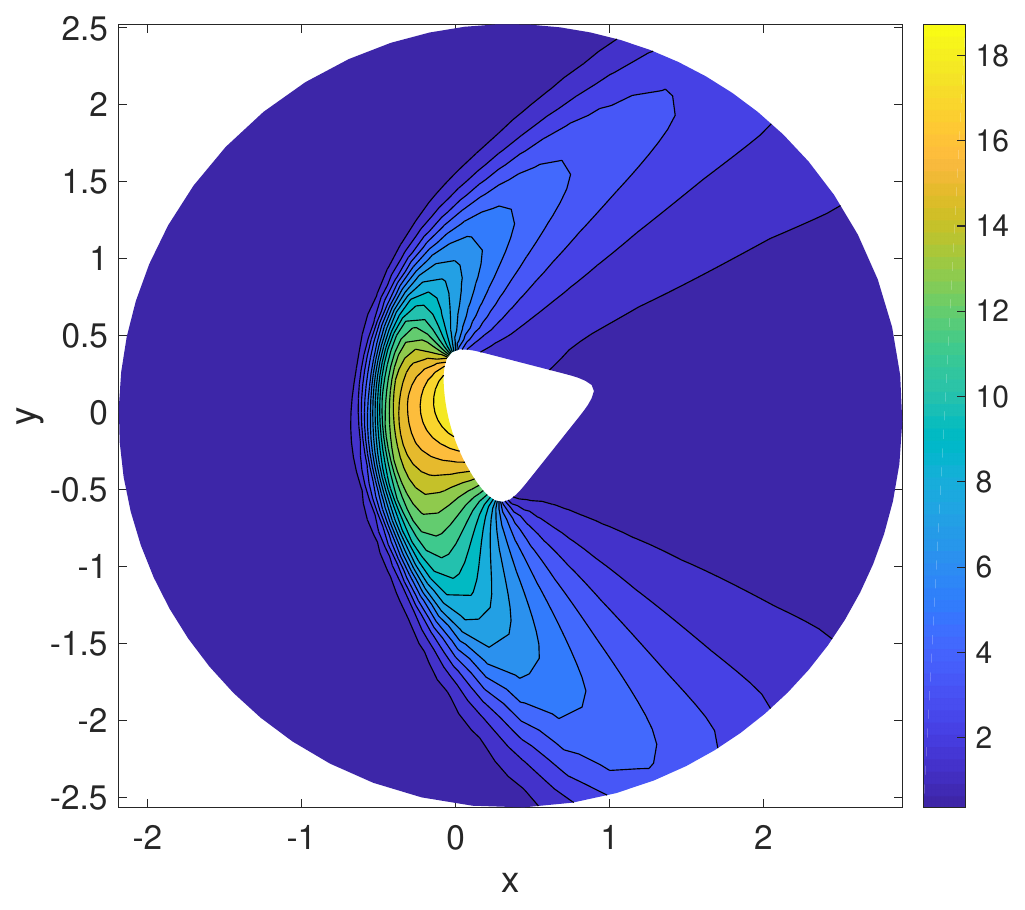}}
\caption{Pressure.}
\end{subfigure}
\caption{Apollo section at $\mathrm{Ma}=5$ and $\varepsilon=10^{-2}$: macroscopic fields of IWPD (filled contours) and CIS (lines) on common levels.}\label{fig:apollo_kn1em2_fields}
\end{figure}

\begin{figure}[H]
\centering\singlespacing\setlength{\figw}{.65\linewidth}
\begin{subfigure}[t]{0.4910\figw}\centering
\raisebox{0.0000\figw}{\includegraphics[width=0.4910\figw]{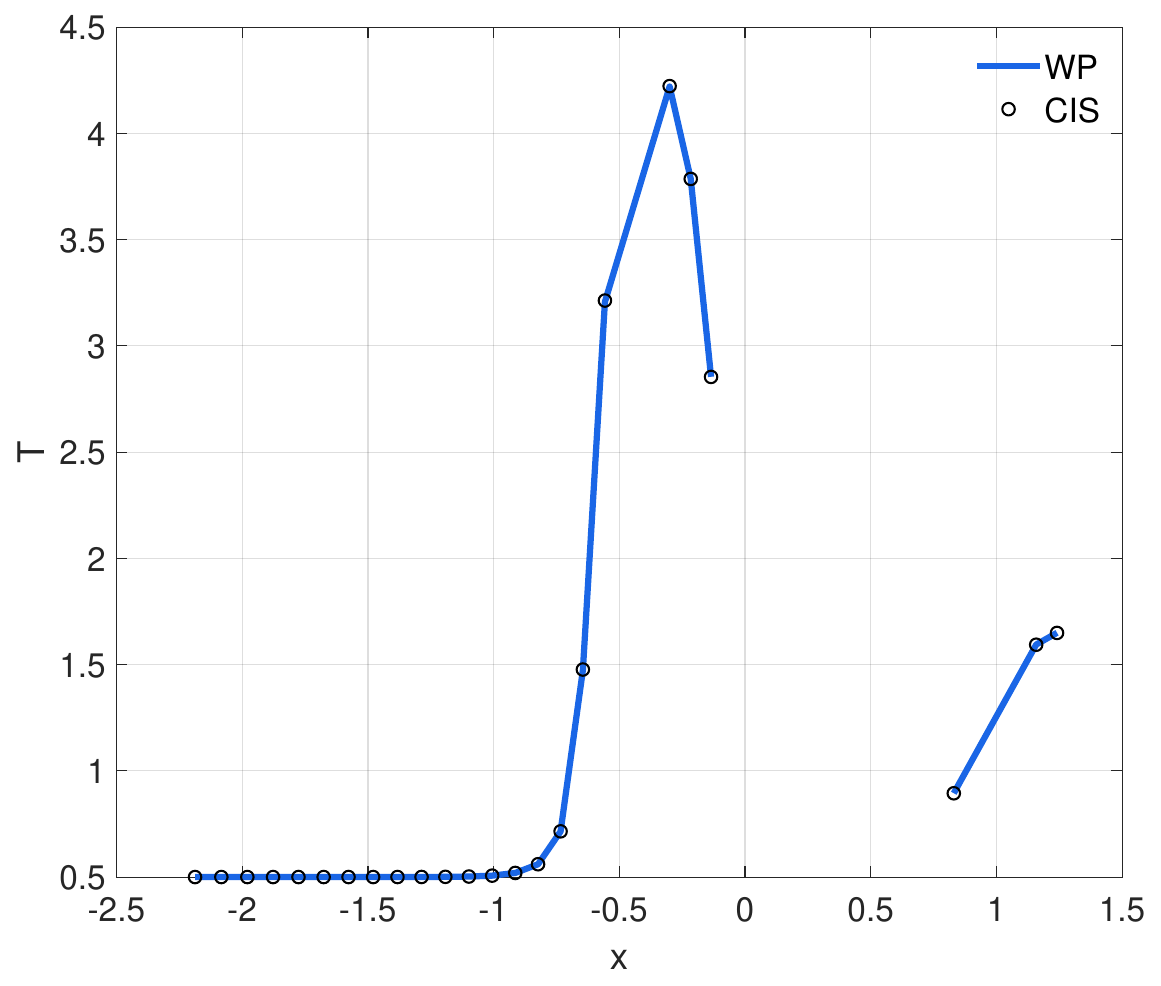}}
\caption{Stagnation-line temperature.}
\end{subfigure}\hspace{0.0200\figw}%
\begin{subfigure}[t]{0.4840\figw}\centering
\raisebox{0.0000\figw}{\includegraphics[width=0.4840\figw]{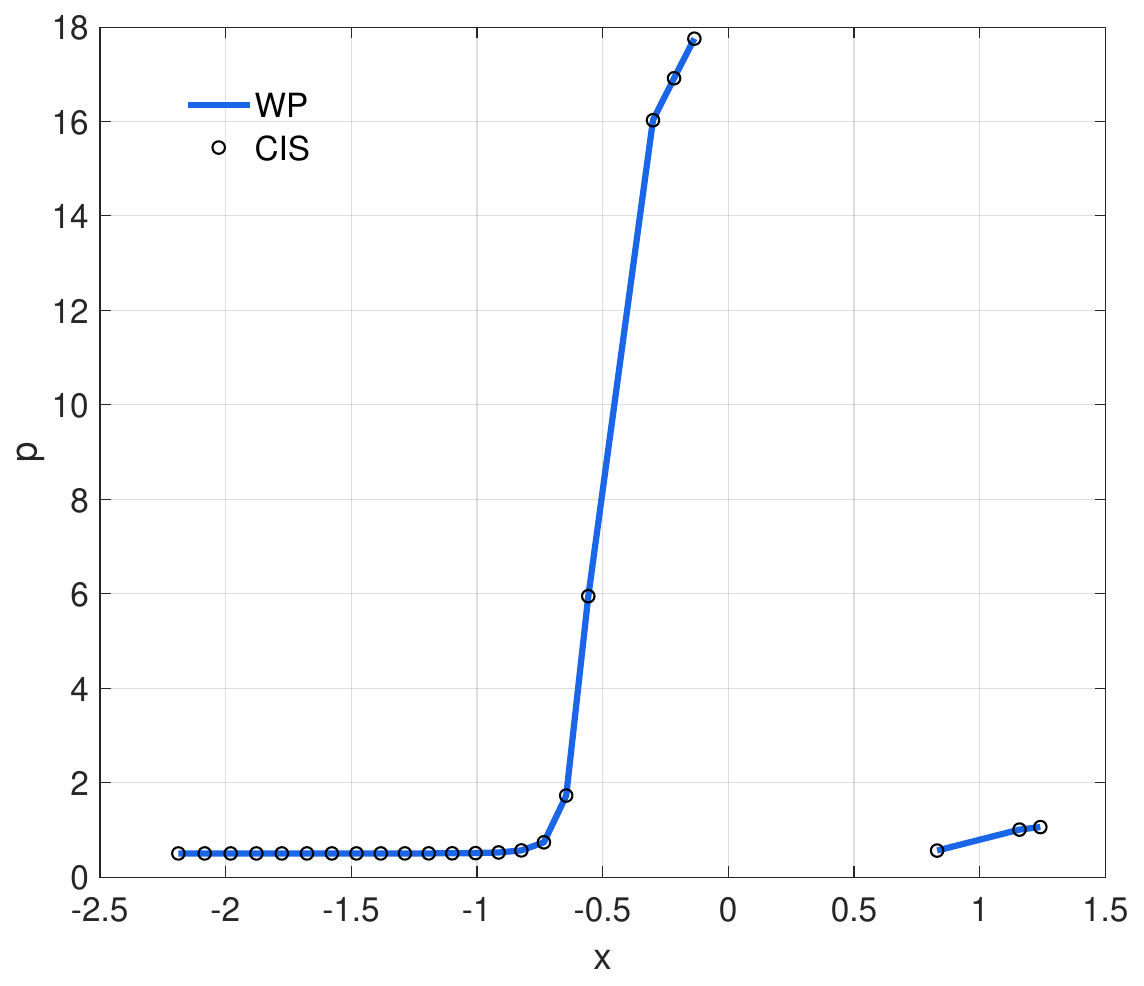}}
\caption{Stagnation-line pressure.}
\end{subfigure}
\par\smallskip
\begin{subfigure}[t]{0.4911\figw}\centering
\raisebox{0.0000\figw}{\includegraphics[width=0.4911\figw]{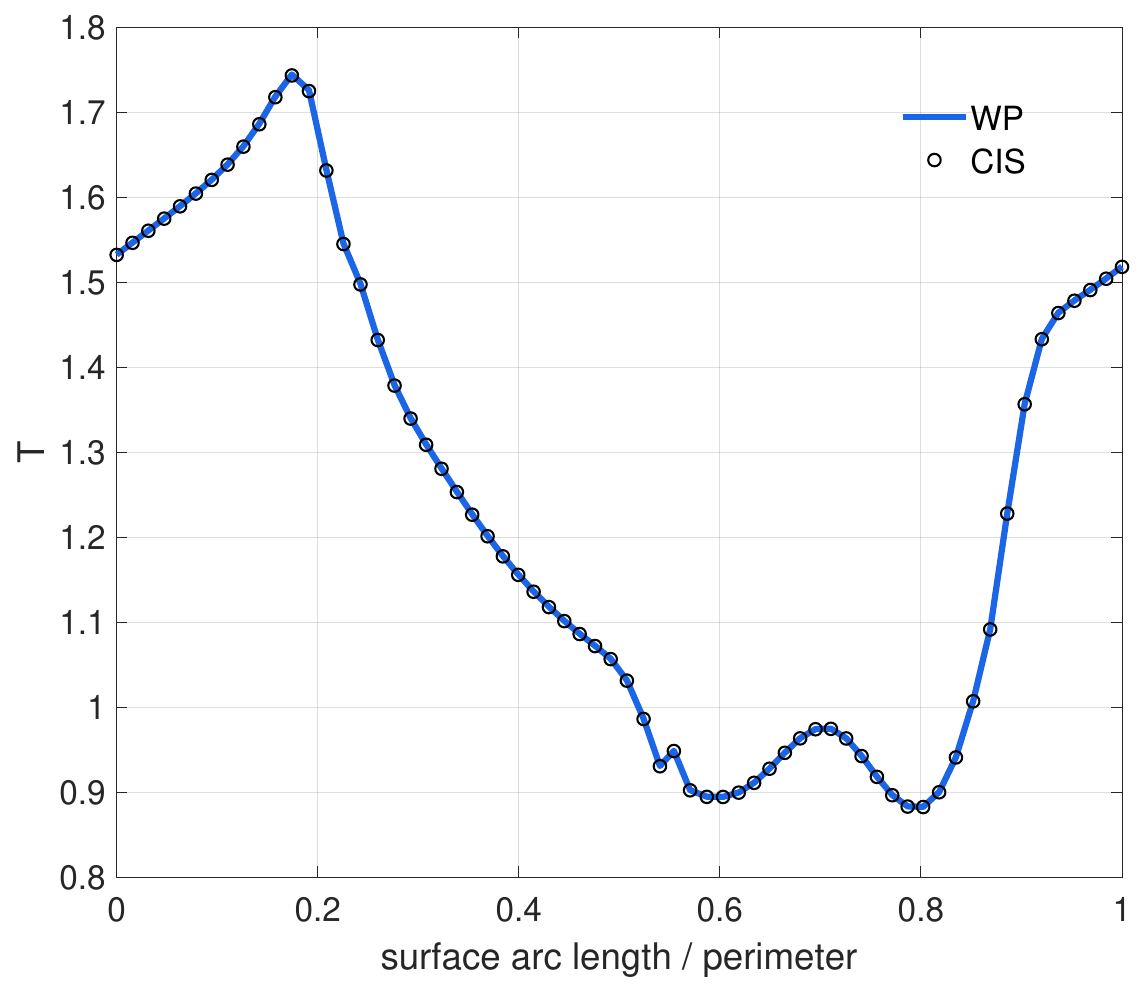}}
\caption{Wall temperature.}
\end{subfigure}\hspace{0.0200\figw}%
\begin{subfigure}[t]{0.4839\figw}\centering
\raisebox{0.0000\figw}{\includegraphics[width=0.4839\figw]{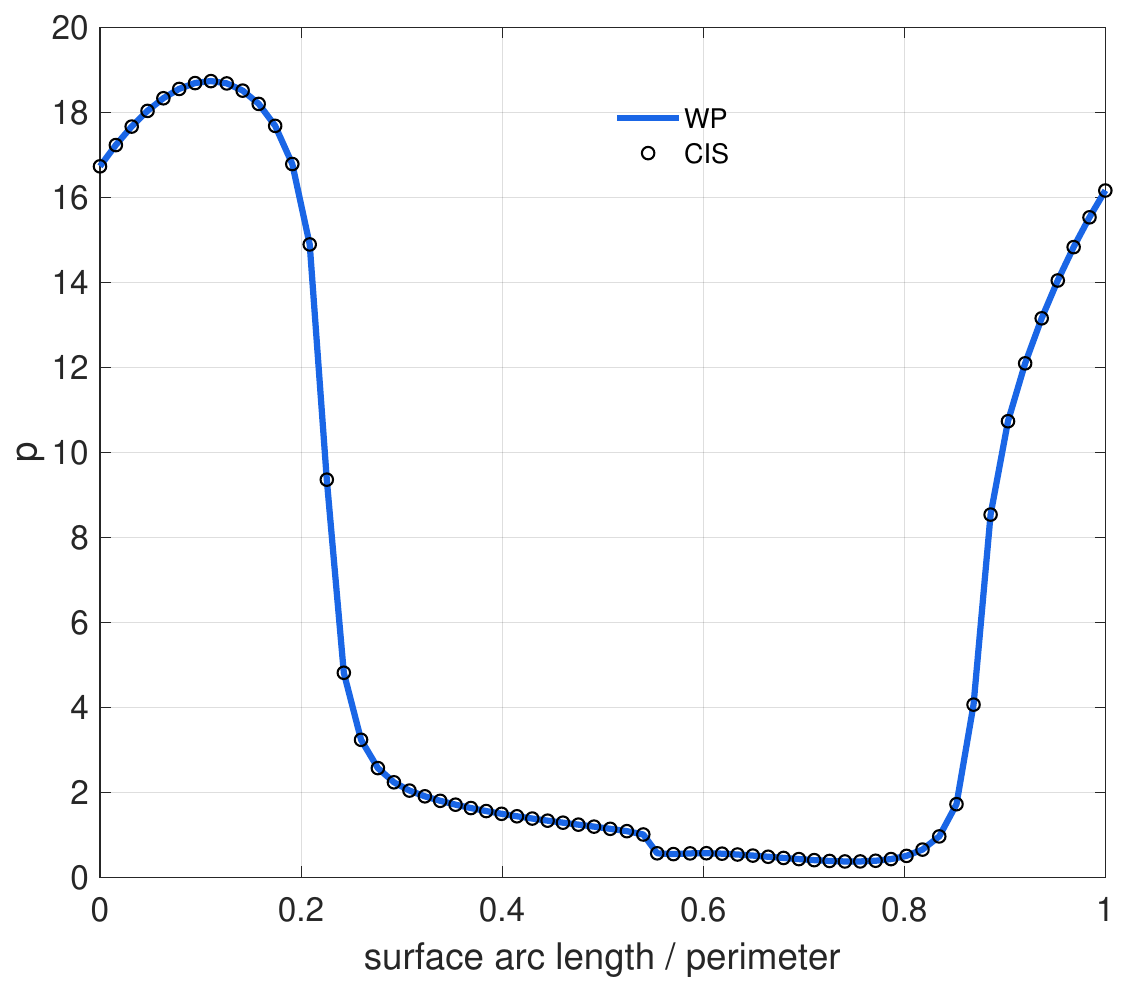}}
\caption{Wall pressure.}
\end{subfigure}
\caption{Apollo section at $\varepsilon=10^{-2}$: temperature and pressure along the stagnation line (a,b) and on the wall (c,d). Lines denote IWPD and symbols CIS.}\label{fig:apollo_kn1em2_profiles}
\end{figure}

\begin{figure}[H]
\centering\singlespacing\setlength{\figw}{.65\linewidth}
\begin{subfigure}[t]{0.4910\figw}\centering
\raisebox{0.0000\figw}{\includegraphics[width=0.4910\figw]{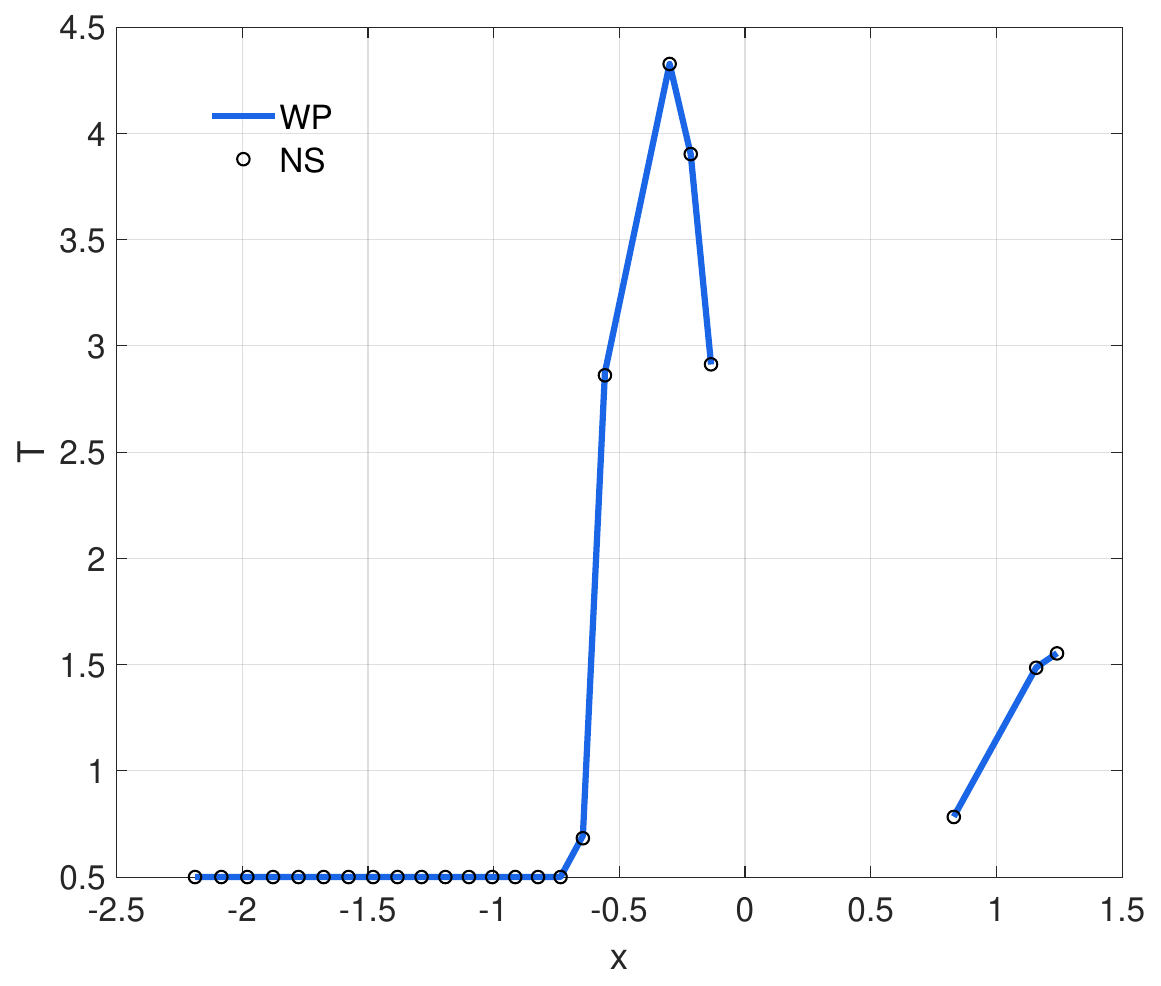}}
\caption{Stagnation-line temperature.}
\end{subfigure}\hspace{0.0200\figw}%
\begin{subfigure}[t]{0.4840\figw}\centering
\raisebox{0.0000\figw}{\includegraphics[width=0.4840\figw]{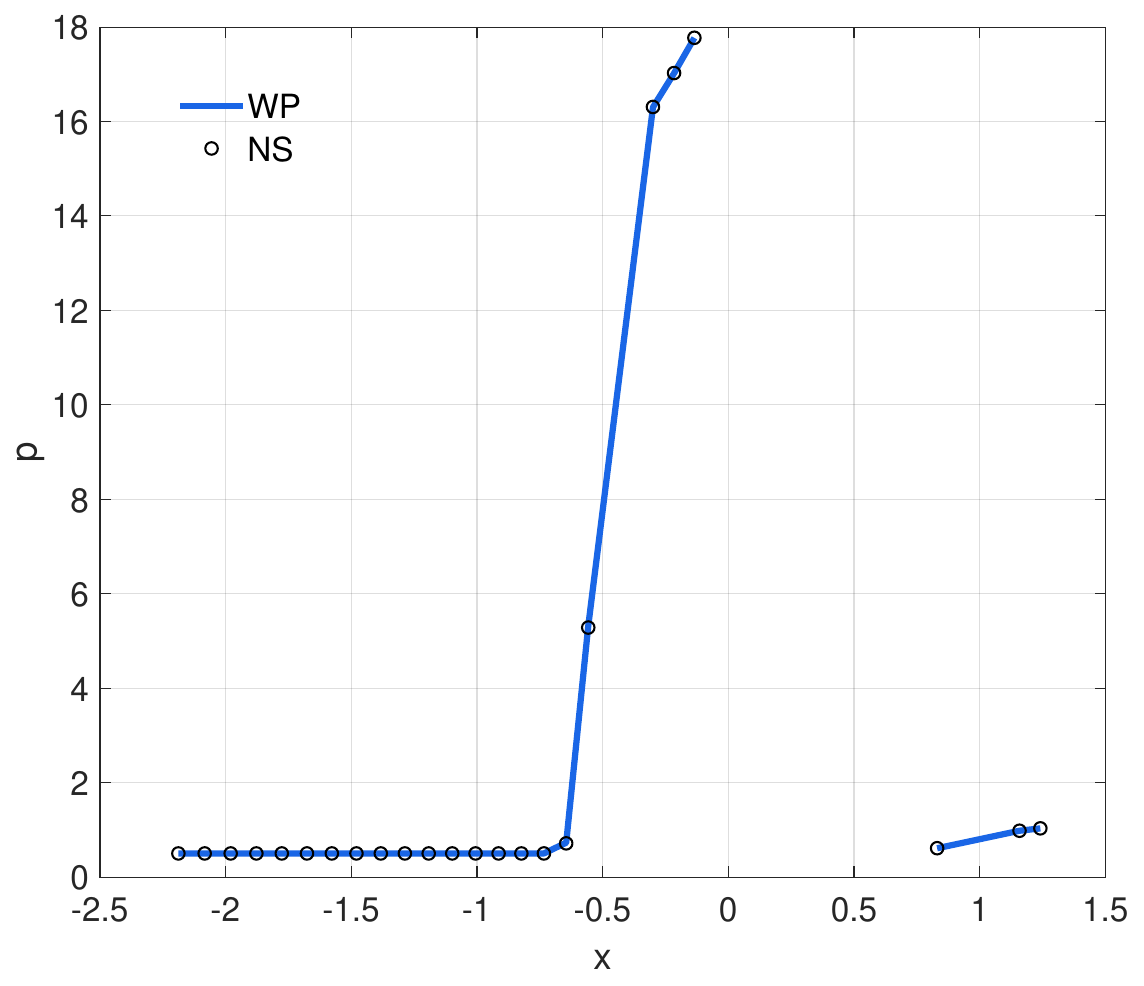}}
\caption{Stagnation-line pressure.}
\end{subfigure}
\par\smallskip
\begin{subfigure}[t]{0.4911\figw}\centering
\raisebox{0.0000\figw}{\includegraphics[width=0.4911\figw]{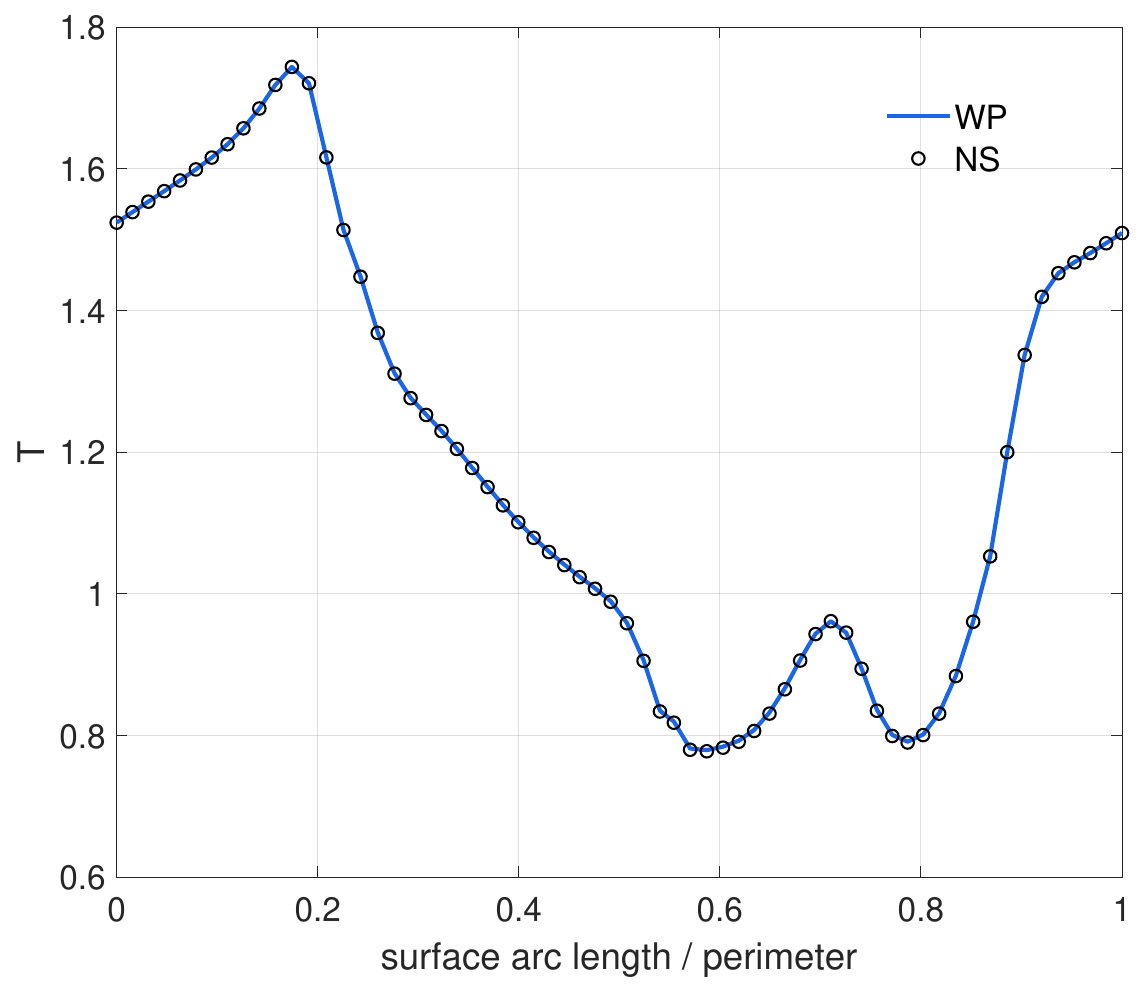}}
\caption{Wall temperature.}
\end{subfigure}\hspace{0.0200\figw}%
\begin{subfigure}[t]{0.4839\figw}\centering
\raisebox{0.0000\figw}{\includegraphics[width=0.4839\figw]{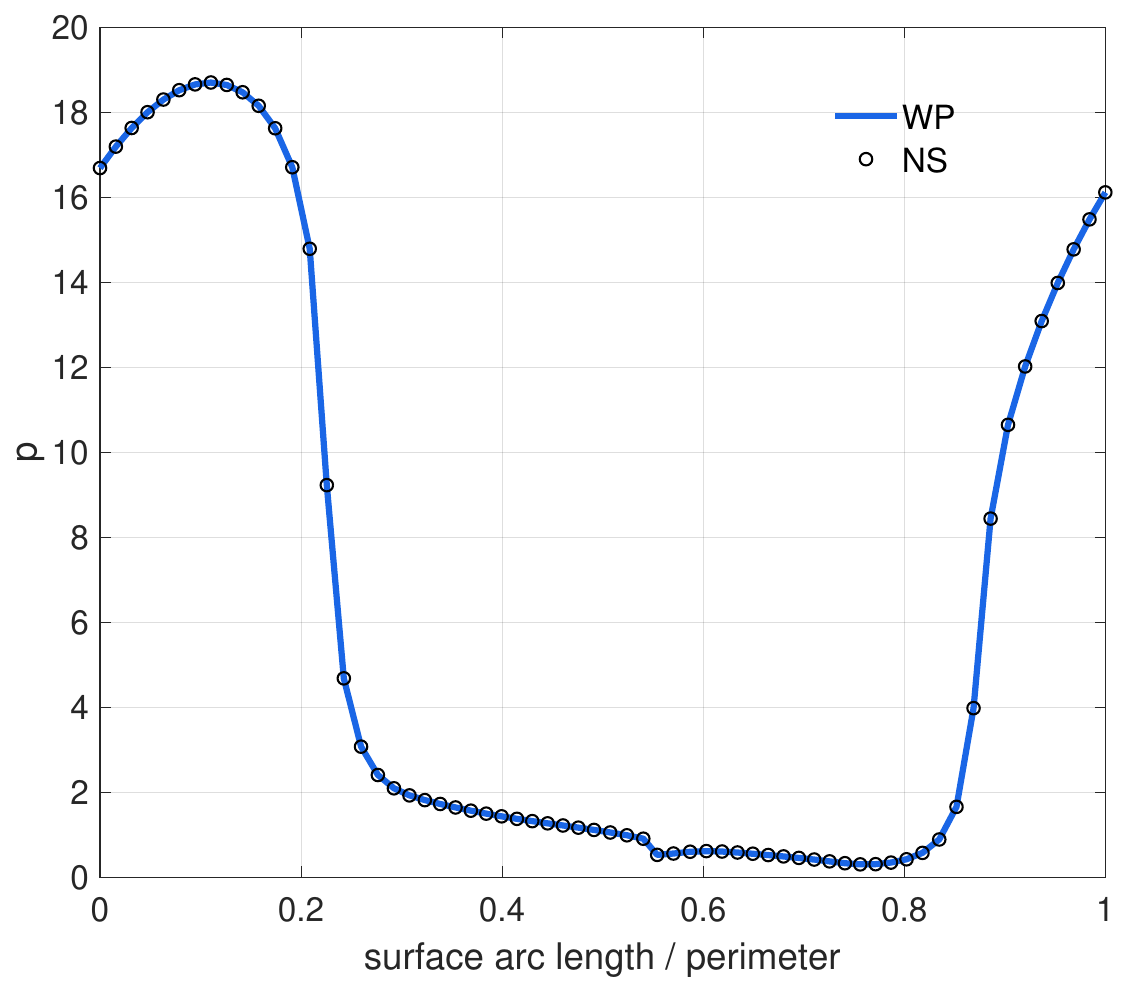}}
\caption{Wall pressure.}
\end{subfigure}
\caption{Apollo section at $\varepsilon=10^{-4}$: temperature and pressure along the stagnation line (a,b) and on the wall (c,d). Lines denote IWPD and symbols NS.}\label{fig:apollo_kn1em4_profiles}
\end{figure}

\begin{figure}[H]
\centering\singlespacing\setlength{\figw}{1.0\linewidth}
\begin{subfigure}[t]{0.4855\figw}\centering
\raisebox{0.0000\figw}{\includegraphics[width=0.4855\figw]{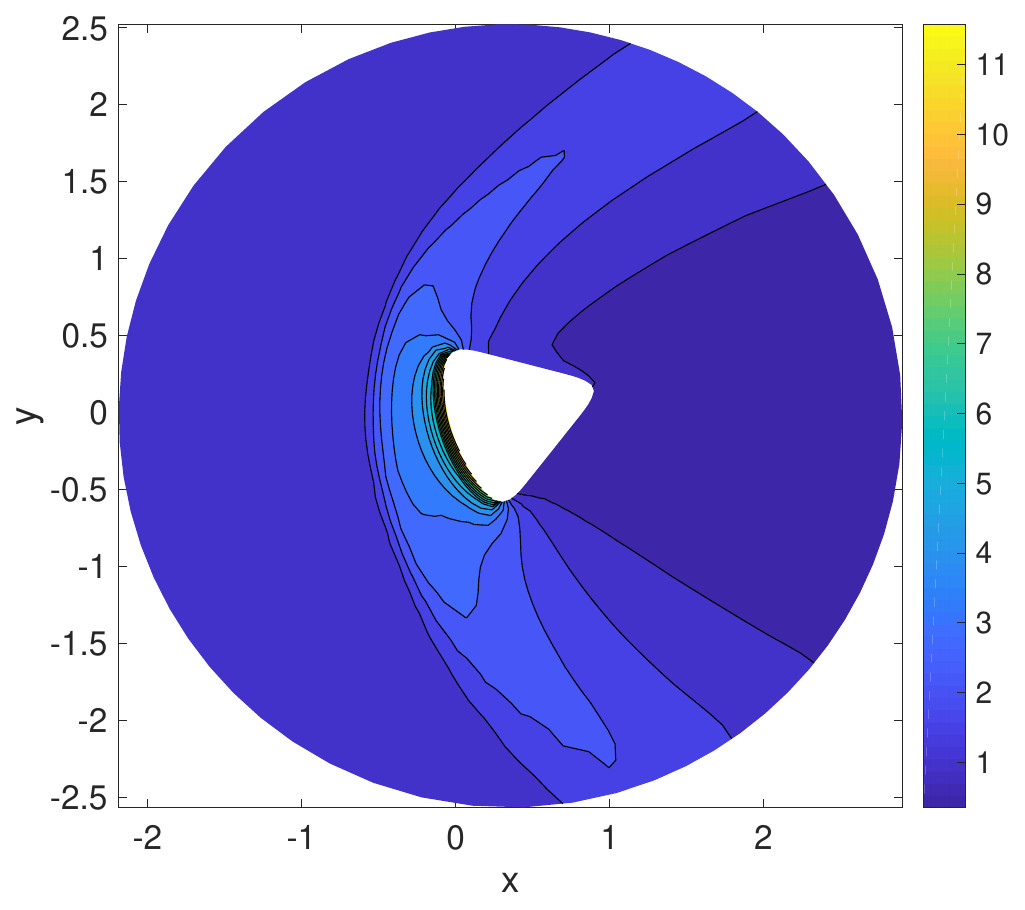}}
\caption{Density.}
\end{subfigure}\hspace{0.0200\figw}%
\begin{subfigure}[t]{0.4895\figw}\centering
\raisebox{0.0000\figw}{\includegraphics[width=0.4895\figw]{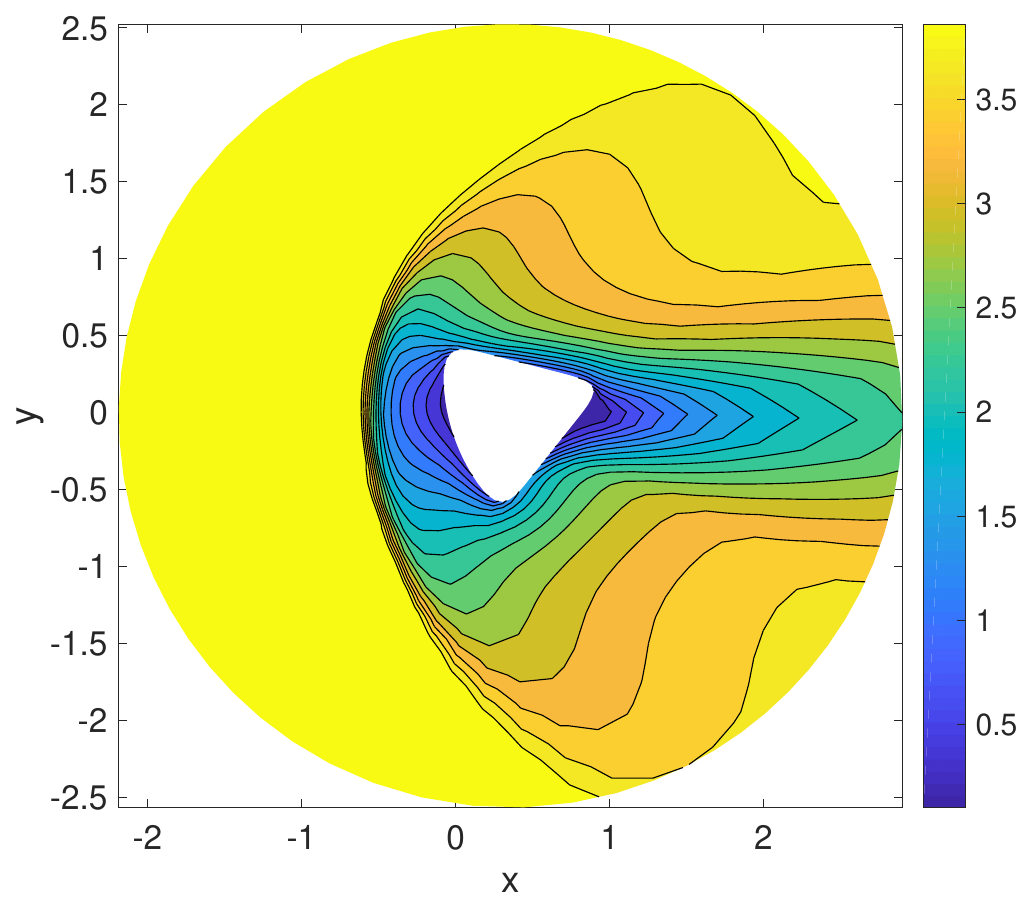}}
\caption{Speed.}
\end{subfigure}
\par\smallskip
\begin{subfigure}[t]{0.4792\figw}\centering
\raisebox{0.0116\figw}{\includegraphics[width=0.4792\figw]{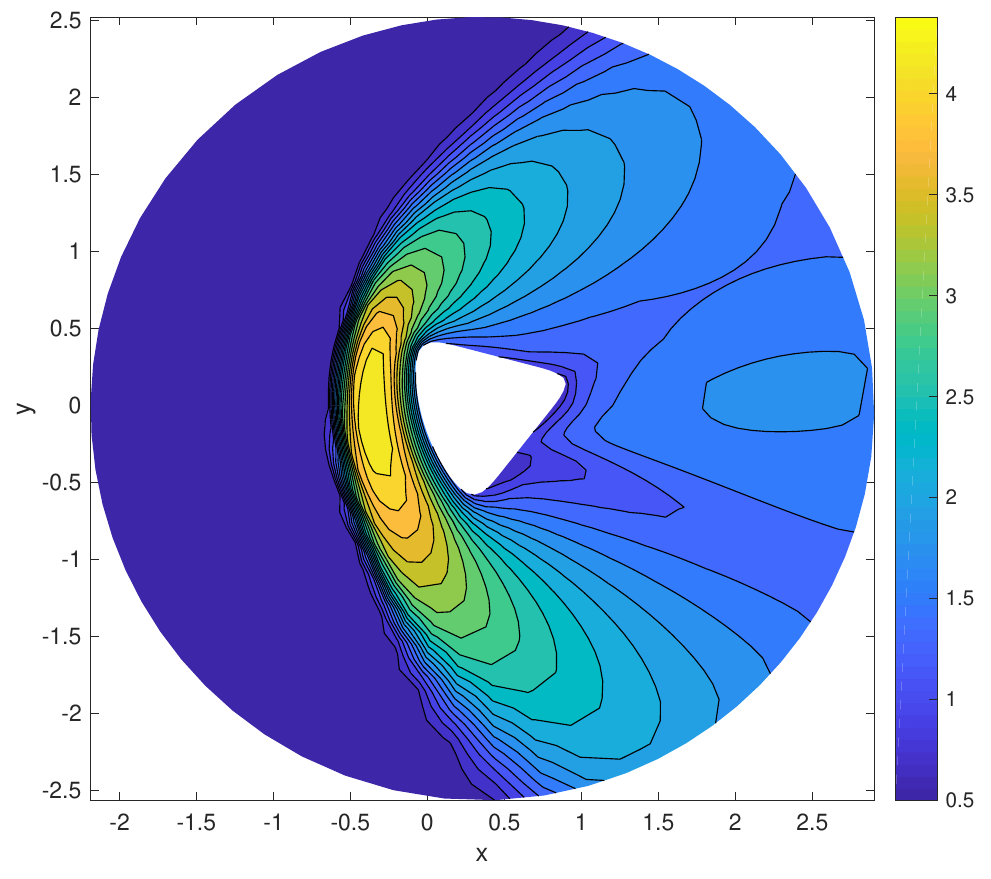}}
\caption{Temperature.}
\end{subfigure}\hspace{0.0200\figw}%
\begin{subfigure}[t]{0.4958\figw}\centering
\raisebox{0.0000\figw}{\includegraphics[width=0.4958\figw]{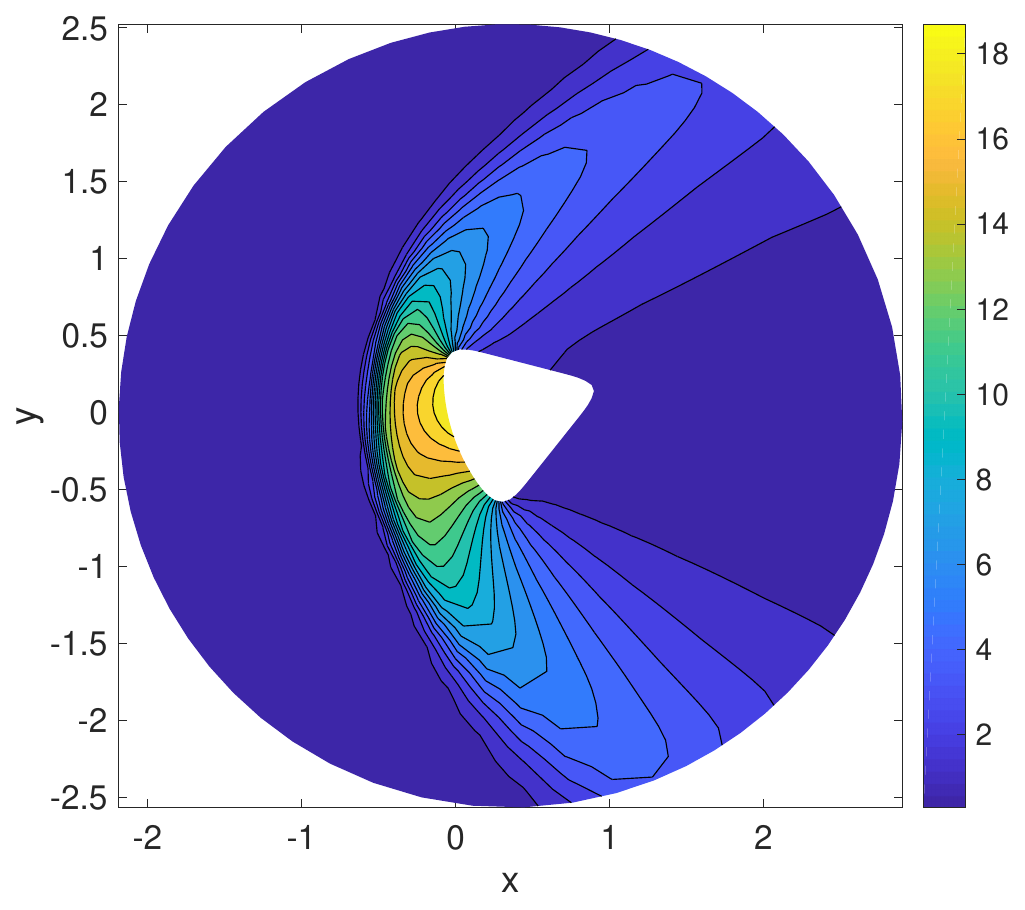}}
\caption{Pressure.}
\end{subfigure}
\caption{Apollo section at $\mathrm{Ma}=5$ and $\varepsilon=10^{-4}$: macroscopic fields of IWPD (filled contours) and NS (lines) on common levels.}\label{fig:apollo_kn1em4_fields}
\end{figure}
\begin{figure}[H]
\centering\singlespacing\setlength{\figw}{.90\linewidth}
\begin{subfigure}[t]{0.9950\figw}\centering
\makebox[\linewidth]{\raisebox{0.0000\figw}{\includegraphics[width=0.4964\figw]{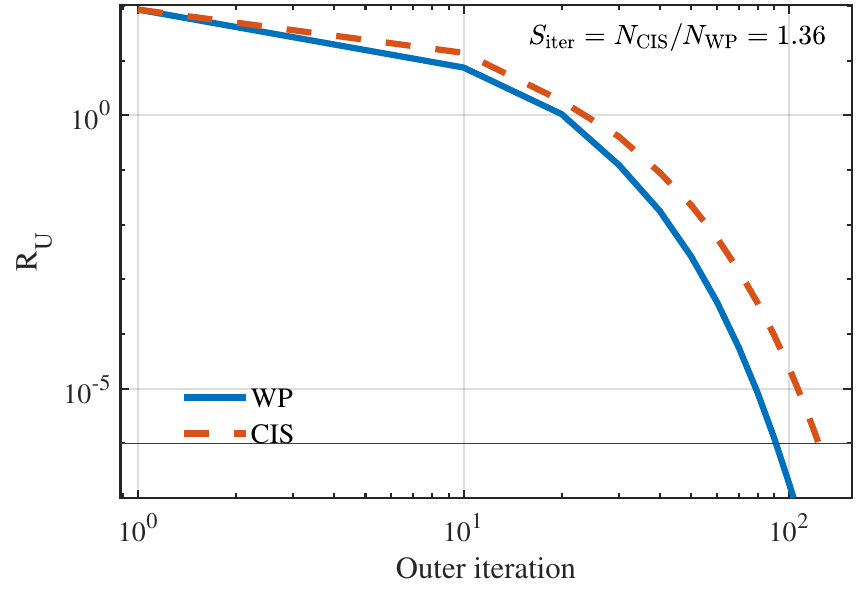}}\hspace{0.0200\figw}\raisebox{0.0031\figw}{\includegraphics[width=0.4786\figw]{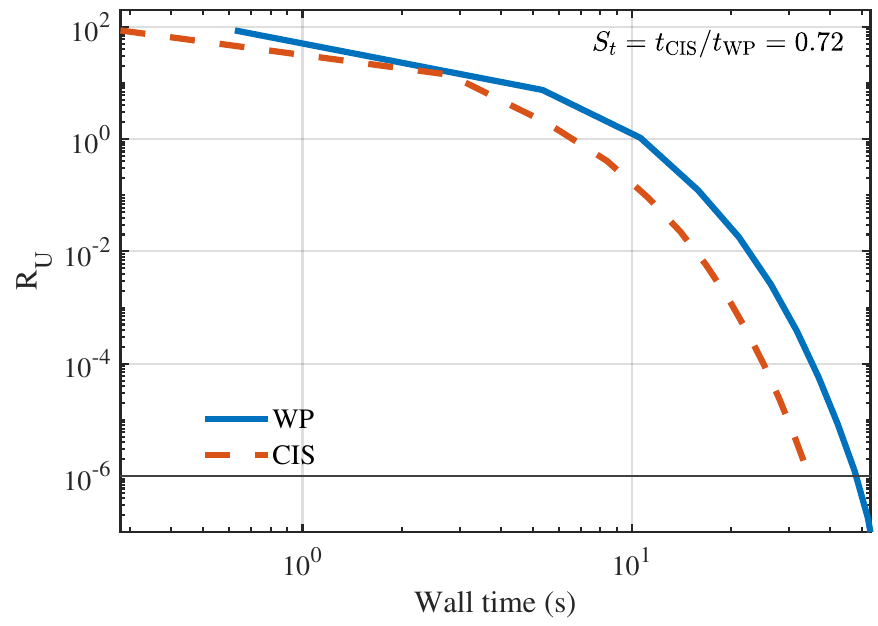}}}
\caption{$\varepsilon=1$.}
\end{subfigure}
\par\smallskip
\begin{subfigure}[t]{0.9950\figw}\centering
\makebox[\linewidth]{\raisebox{0.0000\figw}{\includegraphics[width=0.4944\figw]{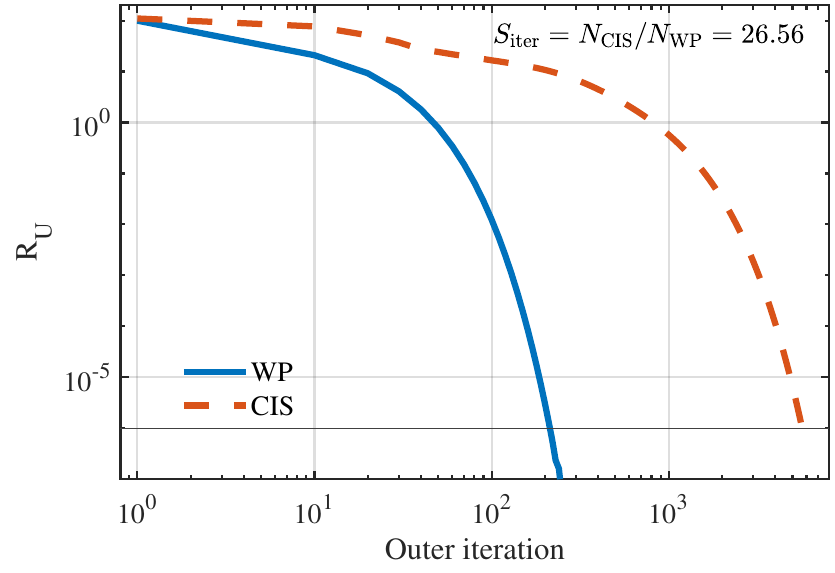}}\hspace{0.0200\figw}\raisebox{0.0029\figw}{\includegraphics[width=0.4806\figw]{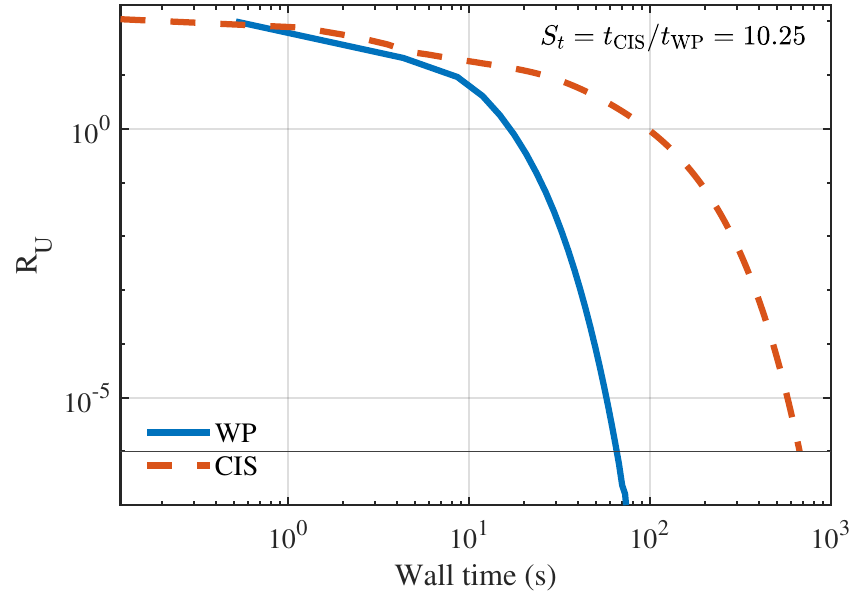}}}
\caption{$\varepsilon=10^{-2}$.}
\end{subfigure}
\par\smallskip
\begin{subfigure}[t]{0.9950\figw}\centering
\makebox[\linewidth]{\raisebox{0.0000\figw}{\includegraphics[width=0.4942\figw]{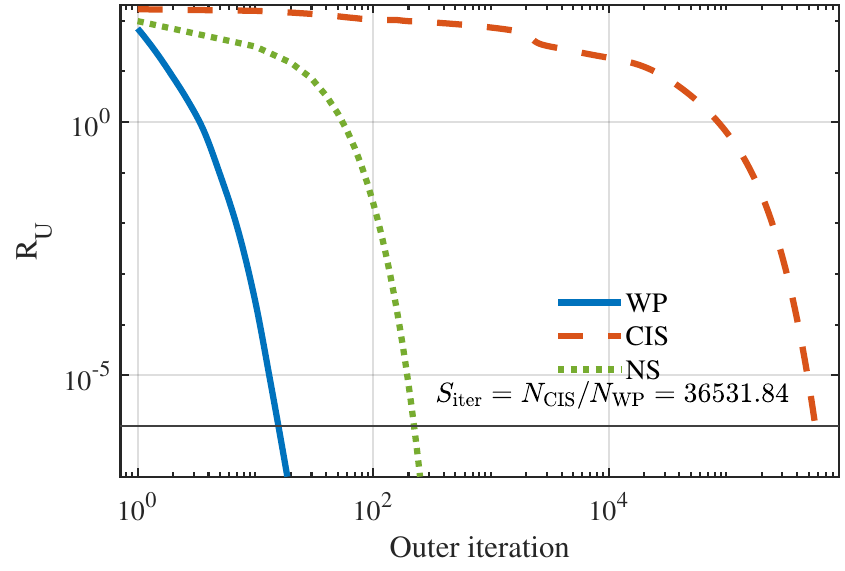}}\hspace{0.0200\figw}\raisebox{0.0071\figw}{\includegraphics[width=0.4808\figw]{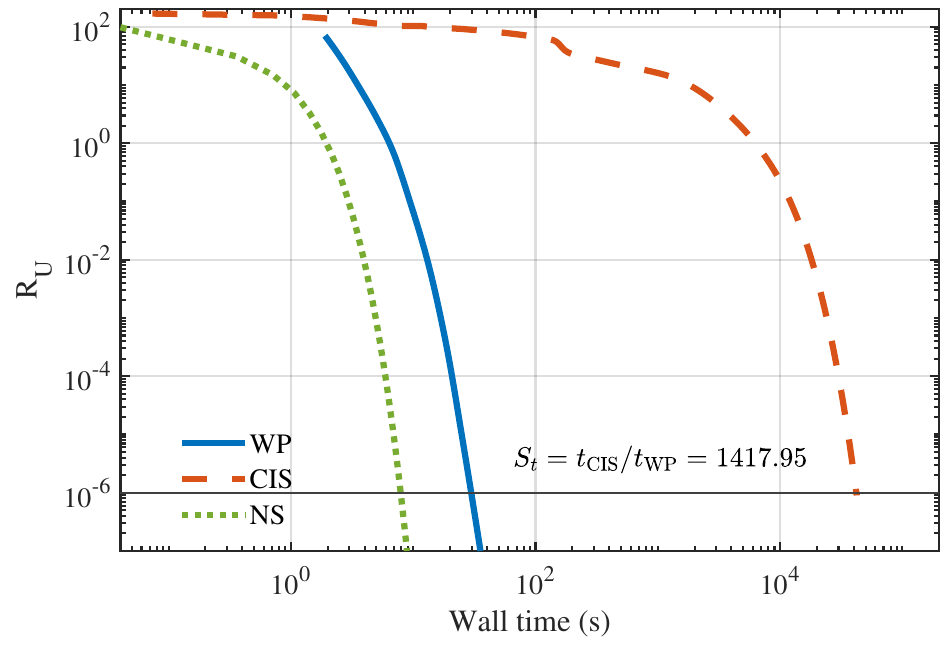}}}
\caption{$\varepsilon=10^{-4}$.}
\end{subfigure}
\caption{Apollo section at $\mathrm{Ma}=5$: macroscopic residual versus iteration (left) and wall time (right) for IWPD (solid), CIS (dashed) and, at $\varepsilon=10^{-4}$, NS (dotted). The grey line marks $\|R_U\|=10^{-6}$, at which $S_{\rm iter}$ and $S_t$ are evaluated.}\label{fig:apollo_residual}
\end{figure}